\documentclass[letterpaper,11pt,onecolumn,oneside]{article}

\usepackage{amsmath}
\usepackage{amsthm}
\usepackage{dsfont}
\usepackage{graphicx}
\usepackage{nicefrac}
\usepackage{tabularx}
\usepackage[boxed,noline,noend,ruled,linesnumbered]{algorithm2e}
\usepackage{caption}
\usepackage{paralist}
\usepackage{thm-restate}
\usepackage{multicol}
\usepackage{etoolbox}
\usepackage{comment}

\allowdisplaybreaks

\usepackage[UKenglish]{babel}
\usepackage[top=1in,bottom=1in,left=1in,right=1in]{geometry}
\usepackage[colorlinks=true, allcolors=blue]{hyperref}
\usepackage{tikz}
\usepackage[capitalize,noabbrev]{cleveref}
\newtheorem{theorem}{Theorem}

\newtheorem{lemma}{Lemma}
\newtheorem{corollary}{Corollary}

\newtheorem{definition}{Definition}

\author{Matthias Bentert \and Fedor V. Fomin \and Petr A. Golovach}
\date{}

\newcommand{\peb}{\ensuremath{\mathcal{M}}}

\newcommand{\pname}{\textsc}
\newcommand{\ProblemFormat}[1]{\pname{#1}}
\newcommand{\ProblemIndex}[1]{\index{problem!\ProblemFormat{#1}}}
\newcommand{\ProblemName}[1]{\ProblemFormat{#1}\ProblemIndex{#1}{}}

\usetikzlibrary{graphs, shapes, arrows.meta}

\newcommand{\probVC}{\ProblemName{Vertex Cover}}

\newcommand{\probFVS}{\ProblemName{Feedback Vertex Set}}

\newcommand{\probDS}{\ProblemName{Dominating Set}}

\newcommand{\probKPath}{\ProblemName{Longest Path}}

\newcommand{\Q}{\ProblemName{$\pi$-Request}}

\newcommand{\wpvc}{\ProblemName{Weighted Partial Vertex Cover}}

\newcommand{\im}{\ProblemName{Maximum-Weight Rooted Parallel Induced Minor}}

\newcommand{\minor}{\ProblemName{Minimum-Weight Rooted Simple Minor}}
\newcommand{\minorr}{\ProblemName{Minimum-Weight Rooted Simple Minor Request}}
\newcommand{\wsp}{\ProblemName{Steiner Partition}}
\newcommand{\wis}{\ProblemName{Maximum-Weight Independent Set}}
\newcommand{\wim}{\ProblemName{Maximum-Weight Induced Matching}}
\newcommand{\wif}{\ProblemName{Maximum-Weight Induced Forest}}
\newcommand{\wifr}{\ProblemName{Maximum-Weight Induced Forest Request}}
\newcommand{\wisi}{\ProblemName{Maximum-Weight Induced Subgraph Isomorphism}}
\newcommand{\wsi}{\ProblemName{Minimum-Weight Subgraph Isomorphism}}
\newcommand{\fragment}{\ProblemName{$k$-Subgraph Scoring}}
\newcommand{\fragmentrequest}{\ProblemName{$k$-Subgraph Scoring Request}}
\newcommand{\den}{\ProblemName{Densest $k$-Subgraph}}
\newcommand{\knkcut}{\ProblemName{Maximum~$(k,n-k)$-Cut}}

\newcommand{\NP}{$\mathsf{NP}$}

\newcommand{\Oh}{\ensuremath{\mathcal{O}}}

\DeclareMathOperator{\inte}{in}
\DeclareMathOperator{\ex}{ex}
\DeclareMathOperator{\he}{he}
\DeclareMathOperator{\li}{li}
\DeclareMathOperator{\di}{di}

\DeclareMathOperator{\res}{res}

\DeclareMathOperator{\SQ}{\texttt{solve-$\pi$-r}}
\DeclareMathOperator{\clean}{\texttt{red-portals}}
\DeclareMathOperator{\base}{\texttt{base-case}}

\crefname{algocf}{Algorithm}{Algorithms}
\crefname{algocfline}{Line}{Lines}

\newcommand{\probdef}[3]{
\medskip
\begin{minipage}{\textwidth}
    \medskip
      \normalsize\textsc{#1} \smallskip \\
      \begin{tabularx}{.94\textwidth}{@{ }l@{\hspace{5pt}}X}
        \normalsize\textbf{Input:}    & \normalsize#2 \\
        \normalsize\textbf{Question:} & \normalsize#3
      \end{tabularx}
\end{minipage}
}

\newcommand{\problemdef}[3]{
\medskip
\begin{minipage}{\textwidth}
    \medskip
      \normalsize\textsc{#1} \smallskip \\
      \begin{tabularx}{0.95\textwidth}{@{}l@{\hspace{3pt}}X}
        \normalsize\textbf{Input:}    & \normalsize#2 \\
        \normalsize\textbf{Task:} & \normalsize#3
      \end{tabularx}
\end{minipage}
}

\title{A Framework for Parameterized Subexponential-Subcubic-Time Algorithms for Weighted Problems in Planar Graphs}

\begin{document}
\maketitle
\begin{abstract}
    Many problems are known to be solvable in subexponential parameterized time when the input graph is planar.
    The bidimensionality framework of Demaine, Fomin, Hajiaghay, and Thilikos~[JACM~'05] and the treewidth-pattern-covering approach by Fomin, Lokshtanov, Marx, Pilipczuk, Pilipczuk, and Saurabh~[SICOMP~'22] give robust tools for designing such algorithms.
    However, there are still many problems for which we do not know whether subexponential parameterized algorithms exist.
    The bidimensionality framework is not able to handle weights or directed graphs and the treewidth-pattern-covering approach only works for finding connected solutions.
    Building on a result by Nederlof~[STOC~'20], we provide a framework that is able to solve a variety of problems in planar graphs in subexponential parameterized time for which this was previously not known (where the polynomial part of the running time is usually~$\Oh(n^{2.49})$).
    Our framework can handle weights, does not require solutions to contain only few connected components, and applies to cases where the number of potential patterns of a solution is exponential in the parameter.

    We then use the framework to show that various weighted problems like \wpvc, \wif, \minor, and \im{} allow for subexponential parameterized algorithms.
    This was previously not known for any of them.
    Moreover, we present a very easy-to-use fragment of our framework that is still powerful enough to be applicable to problems like \den, \wis, \knkcut, and \wpvc.
    This fragment allows for significantly simpler proofs in the case of \wis{} and \knkcut{} and is able to show a subexponential parameterized algorithm for weighted versions of \den.
    Even the unweighted version was not known before and is stated as an open problem in the existing literature.
\end{abstract}

\newpage
\tableofcontents
\newpage

\section{Introduction}
\label{sec:intro}

Over the past two decades, extensive research in parameterized complexity has focused on subexponential algorithms for planar graphs and, more broadly, on various classes of sparse graphs. While for most parameterized \NP-hard problems the best we can hope for under the Exponential Time Hypothesis (ETH) of Impagliazzo and Paturi~\cite{ImpagliazzoP01} is an exponential dependence on the parameter, there are nevertheless many problems that admit subexponential parameterized algorithms when restricted to planar graphs. This phenomenon was first demonstrated in 2000~\cite{AlberBFKN02}, where the authors gave a subexponential parameterized algorithm for deciding whether a given~$n$-vertex planar graph contains a dominating set of size $k$ in $2^{\mathcal{O}(\sqrt{k})} n^{\mathcal{O}(1)}$ time.

In 2005, Demaine, Fomin, Hajiaghayi, and Thilikos~\cite{DemaineFHT05jacm} introduced the \emph{bidimensionality} framework, providing elegant tools for designing subexponential parameterized algorithms for problems such as \probVC, \probFVS, \probDS, or \probKPath. Building on the grid minor theorem of Robertson, Seymour, and Thomas~\cite{RobSeymT94}, this framework shows that for a planar graph $G$ and an integer $k$, either $G$ is a trivial yes- or no-instance of the problem at hand, or the treewidth of $G$ is in $\mathcal{O}(\sqrt{k})$. In the latter case, applying standard dynamic programming techniques on graphs of bounded treewidth yields a running time of $2^{\mathcal{O}(\sqrt{k})} n^{\mathcal{O}(1)}$, where $n$ is the number of vertices. In many situations, the bidimensionality framework can be extended to subexponential parameterized algorithms on $H$-minor-free graphs. Moreover, for a large class of problems, the running time provided by bidimensionality is essentially optimal under the ETH: one cannot expect a $2^{o(\sqrt{k})} n^{\mathcal{O}(1)}$-time algorithm.
For an overview of bidimensionality and its applications, we refer to the standard text book on parameterized complexity~\cite{cygan2015parameterized}.

However, the bidimensionality framework does not apply to directed graphs or to weighted problems, because the existence of large grid minors alone does not address these settings. A step toward overcoming this limitation is the \emph{treewidth-pattern-covering} approach~\cite{FominLMPPS22}. Central to this method is a procedure that randomly samples a vertex subset~$S$ in polynomial time with the following properties. The induced subgraph~$G[S]$ has treewidth~$\mathcal{O}(\sqrt{k}\log k)$ and for every \emph{connected} subgraph~$P$ of $G$ on at most $k$ vertices, the probability that $S$ contains all vertices of~$P$ is at least~$\nicefrac{1}{2^{\mathcal{O}(\sqrt{k}\log^2 k)} n^{\mathcal{O}(1)}}$.
By applying dynamic programming techniques on $G[S]$, this yields a randomized algorithms running in~$2^{\mathcal{O}(\sqrt{k}\log^2 k)} n^{\mathcal{O}(1)}$~time for problems such as \textsc{Directed $k$-Path}, \textsc{Weighted $k$-Path}, or \textsc{Subgraph Isomorphism} (when the sought subgraph is connected) for planar input graphs.

A crucial requirement of the treewidth-pattern-covering approach is that the pattern $P$ must be connected. To handle cases where~$P$ is disconnected, Nederlof~\cite{Ned20} extended the treewidth-pattern-covering approach to find or counts (induced) copies of a $k$-vertex pattern~$P$ in a planar graph in $2^{\tilde{\mathcal{O}}(\sqrt{k})} \mathrm{poly}(\sigma(P)n)$ time, where $\sigma(P)$ denotes the number of non-isomorphic separations of~$P$ of size~$\tilde{\mathcal{O}}(\sqrt{k})$.\footnote{A separation of a graph is a pair~$(X,Y)$ of sets of vertices such that~$X \cup Y$ contains all vertices and there are no edges between~$X \setminus Y$ and~$Y \setminus X$. The size of separation~$(X,Y)$ is~$|X \cap Y|$ and two separations~$(X,Y)$ and~$(X',Y')$ are isomorphic if~$G[X]$ is isomorphic to~$G[X']$ and~$X \cap Y = X' \cap Y'$. See \cref{sec:prelim} for formal definitions.} In particular, Nederlof’s results yield an algorithm with running time~$2^{\tilde{\mathcal{O}}(\sqrt{k})} n^{\mathcal{O}(1)}$ when~$P$ is a matching, an independent set, or a connected bounded-degree graph. For any pattern~$P$, he obtained algorithms running in $2^{\mathcal{O}(k/\log k)} n^{\mathcal{O}(1)}$ time, which are essentially tight under the ETH.

None of the above frameworks provides subexponential parameterized algorithms for problems on \emph{weighted} graphs when the solution might be disconnected.
The bidimensionality framework fails as the existence of a large grid minor does not reveal any information about a weighted solution.
In particular, a solution for e.g. \wis{} may completely avoid the grid minor if the vertices in the grid minor are too light.
The treewidth-pattern-covering approach fails because the solution cannot be covered by a small number of connected patterns of size~$\mathcal{O}(k)$, and Nederlof's approach focuses on counting problems and not on weighted problems.
The additional use of efficient inclusion-exclusion prohibits the approach to generalize to weighted problems.
In fact, subexponential parameterized problems for weighted problems that might have solutions with many connected components seem to be very rare.
We are only aware of such algorithms for \wis~\cite{masterthesis} and \textsc{Weighted Vertex Cover}~\cite{LPSXZ25}.
This naturally raises the following question:
\begin{quote}
Is there a general framework for subexponential parameterized algorithms that find a disconnected $k$-vertex pattern $P$ of maximum/minimum weight?
\end{quote}
In this paper, we develop generic algorithmic tools that allow us not only to address this question, but also to tackle related problems involving (rooted/induced) minors.
The approach is deterministic and can also handle directed planar graphs.

\subsection{Our Results}
Our contribution is threefold.
First, we introduce a framework to generate subexponential parameterized algorithms (usually running in~$2^{\tilde{\Oh}(\sqrt{k})}n^{2.49}$~time) for problems on weighted planar graphs that search for a set of~$k$ vertices with a specified property.
The framework also applies to cases where the number of potential patterns of a solution is exponential in the parameter.
The framework is introduced in \cref{sec:framework} and requires some definitions which we set up in \cref{sec:prelim}.
Second, we show how to apply this framework to prove that the following problem \fragment{} can be solved in subexponential parameterized time when parameterized by~$k$.
This problem generalizes problems like \wis, \wpvc, \den, \knkcut, and more.

\probdef{\fragment}
{A directed graph~$G=(V,A)$, a vertex-cost function~$c \colon V \rightarrow \mathds{N}$, a weight function~$w \colon V \cup A \rightarrow \mathds{R}$, an integer~$k$, and values~$\alpha_1,\alpha_2,\alpha_3,\beta,W \in \mathds{R}$.}
{Is there a set~$S$ of cost exactly~$k$, that is, $\sum_{v\in S}c(v) = k$ and \[\hspace*{-2cm}\alpha_1 \big(\sum_{\substack{(u,v) \in A\\u,v \in S}} w((u,v))\big) + \alpha_2 \big(\sum_{\substack{(u,v) \in A\\u \in S, v \notin S}} w((u,v))\big) + \alpha_3 \big(\sum_{\substack{(u,v) \in A\\u \notin S,v \in S}} w((u,v))\big) + \beta \big(\sum_{v \in S} w(S)\big) \geq W?\]}

Note that when~$c(v) = 1$ for each vertex~$v$, then~$k$ is the number of vertices in the solution.
Third, we show more involved applications of the framework in the form of \wif{} and different notions of minors.
All applications of our framework are shown in \cref{sec:applications} and summarized in the following (see \cref{sec:prelim} for problem definitions).

\begin{theorem}
    The following problems can be solved in $2^{\tilde{\Oh}(\sqrt{k})}n^{2.49}$ time when parameterized by the size of the solution:
    \begin{itemize}
        \item \fragment
        \item \den
        \item \wsp
        \item \wpvc
        \item \wim
        \item \wif
    \end{itemize}
    The following problems can be solved in~$2^{\Oh(\nicefrac{k}{\log(k)})}n^{2.49}$ time when parameterized by the size of the model (or the size of the pattern graph in the case of \textsc{Subgraph Isomorphism}):
    \begin{itemize}
        \item \minor
        \item \im
        \item \wsi
        \item \wisi
    \end{itemize}
\end{theorem}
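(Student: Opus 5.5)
The plan is to derive the theorem as a collection of corollaries of one general framework theorem, proved in \cref{sec:framework}. That framework builds on Nederlof's separator-based approach, but replaces inclusion--exclusion counting by a deterministic divide-and-conquer over balanced separators of a planar graph.

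\textbf{The framework.} Given a planar (possibly directed) weighted graph $G$ and parameter $k$, I would first compute, in $n^{2.49}$ time, a small family of vertex sets $\mathcal{S}$ (as in the treewidth-pattern-covering approach, but derandomized in the style of Nederlof's separator enumeration). The family should have the following property: for every solution $X$ of size $k$, some $S\in\mathcal{S}$ contains $X$, and $G[S]$ admits a hierarchical decomposition by balanced separators of size $\tilde{\Oh}(\sqrt{k})$ \emph{relative to $X$}. The problem is then solved by recursion over this decomposition. A state is a separator $Z$ together with a ``request'': an abstract boundary description of how the partial solution interacts with $Z$. For each request we store the optimum weight, computed by combining the two sides.

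The framework will be stated abstractly. A problem $\pi$ fits it if three conditions hold. First, the number of possible requests at a separator of size $s$, the \emph{pattern count}, is bounded by some $\sigma$. Second, the objective is additive across a separation once the boundary behaviour is fixed. Third, base cases of constant size are solvable directly. The conclusion is a running time of $\sigma^{\Oh(1)}2^{\tilde{\Oh}(\sqrt{k})}n^{2.49}$. Weights cause no difficulty here, because each state records a maximum or minimum instead of a count.

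\textbf{Applications with small request counts.} For \fragment{} I would take as request the set $S\cap Z$, together with the cost budget on each side (at most $k+1$ values) and the direction labels of boundary arcs. This gives $2^{\tilde{\Oh}(\sqrt{k})}$ requests. The objective decomposes into arc sums and vertex sums, each attributed to exactly one side. Arcs with both ends in $Z$ are attributed by a fixed rule so that they are not counted twice. \den, \wpvc, \knkcut{} and \wis{} then follow by choosing $\alpha_i$, $\beta$ and $c$ appropriately. \wsp{} and \wim{} are handled with similar local requests; for \wim{} the request also records which boundary vertices are matched across the separator. For \wif{} the request must additionally record a partition of $S\cap Z$ into the forest components that meet $Z$. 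The number of such partitions is $|Z|^{\Oh(|Z|)}=2^{\tilde{\Oh}(\sqrt{k})}$, which the $\tilde{\Oh}$ absorbs. Acyclicity is checked when merging two sides, by verifying that the union of the two partitions, viewed as a forest on $Z$, contains no cycle.

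\textbf{Applications with large request counts.} For the minor and subgraph-isomorphism problems, a request must encode which part of the pattern (or of the rooted model) lies on each side. Following Nederlof, the number of non-isomorphic separations of a $k$-vertex pattern with separator size $\tilde{\Oh}(\sqrt{k})$ is bounded by $2^{\Oh(k/\log k)}$. For minors, the request additionally records which branch sets the boundary vertices belong to and how those branch sets are connected on each side. This stays within $2^{\Oh(k/\log k)}$ as long as the size of the model bounds the relevant pattern.

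\textbf{Expected main obstacle.} The hardest step is the framework itself: obtaining the covering family $\mathcal{S}$ and the separator hierarchy deterministically, with a polynomial factor of only $n^{2.49}$, while guaranteeing that a disconnected solution is balanced-separated at every level of the recursion. I would first try to adapt Nederlof's argument, using planar separators with respect to the solution weights (as opposed to vertex counts), to bound the recursion depth by $\Oh(\log k)$. The $n^{2.49}$ term I expect to come from fast matrix multiplication in combining states, and I would try that first for the merge step.
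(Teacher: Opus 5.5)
Your request tables and boundary data (pebble sets with cost budgets, pebble partitions, pattern separations with coloring and connectivity data) match the paper's applications. The substance of the theorem, however, is the framework (\cref{thm:main-fpt}), and there your proposal has a genuine gap. You posit an up-front family $\mathcal{S}$ covering every solution $X$, together with a separator hierarchy ``relative to $X$'', but give no construction. That is exactly what the treewidth-pattern-covering approach cannot supply for disconnected solutions, and the paper never builds such a family.

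The real difficulty is counting boundary patterns. Balanced separators of an $n$-vertex planar graph need $\Omega(\sqrt n)$ vertices in general. So even a separator meeting $X$ in $\tilde{\Oh}(\sqrt k)$ vertices admits $n^{\tilde{\Oh}(\sqrt k)}$ possible intersections. The paper fixes this with two ingredients you lack.
\begin{itemize}
\item A locality condition (Requirement R1, absent from your list) lets Baker's technique (\cref{lem:baker}) reduce to $\Oh(k)$-outerplanar graphs. There a BFS fundamental cycle is a balanced cycle separator with only $\Oh(k)$ vertices.
\item Nederlof's lemma (\cref{lem:unknownbalance}) turns such a cycle into $2^{\Oh(\log^2 k)}$ candidates. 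Each is split into heavy ($\Oh(\sqrt k)$ vertices), light ($\Oh(k\log k)$ vertices, of which a good candidate contains only $\Oh(\sqrt k\log k)$ solution vertices) and discarded (solution-free) parts. This is what makes the number of pebble sets $2^{\tilde{\Oh}(\sqrt k)}$.
\end{itemize}
Since the good candidate depends on the unknown $X$, separators cannot be fixed in advance; enumerating hierarchies would multiply the choices over all nodes. Instead, each call maximizes over candidates while filling its whole table.

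Your idea of balancing with respect to solution weights to get depth $\Oh(\log k)$ does not work as stated. $X$ is unknown, and the lemma only guarantees balance if its input cycle is already balanced for the unknown weight. Invoking it at every level of an $\Oh(\log n)$-deep recursion gives $n^{\Oh(\log^2 k)}$ calls.

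The missing idea is the case distinction in \cref{alg:Q}. Take a vertex-balanced cycle $C$ and guess whether $|X\cap C|\le 20\sqrt d\log d$.
\begin{itemize}
\item If so, recurse on both sides of $C$ itself: two calls, $C$ becomes light portals, and each side has roughly $3\nu/4$ vertices.
\item If not, $C$ is balanced for the uniform weight on $X$. Then some candidate meets $X$ in at most $17\sqrt d\log d$ vertices, each side holds at most $d-3\sqrt d$ solution vertices, and the parameter drops (Baker is reapplied).
\end{itemize}
You also need \cref{alg:clean} to keep portals inherited from many ancestor separators bounded (\cref{fig:branch}), and the potential-function analysis.

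This recursion is also where $n^{2.49}$ comes from. It solves $T(\nu)\le 2T(3\nu/4)+\Oh(\nu^{2+\varepsilon})$, and $\log_{4/3}2\approx 2.41$. The $\nu^2$ comes from triangulation, embedding and $\rho=2$; fast matrix multiplication plays no role.

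Two application details also fail as written.
\begin{itemize}
\item Your \wif{} test on the union of the two partitions rejects valid forests whenever two adjacent separator vertices lie in the solution, because that edge appears on both induced sides. The paper removes $E_{\peb'}$ from one side (\cref{lem:acyclic}).
\item For \im{} with multigraph patterns, the number of non-isomorphic separations is not $2^{\Oh(|U|/\log|U|)}$; pairs joined by $i$ parallel edges already give exponentially many. The paper handles this via $\psi(H)$, \cref{lem:psi}, and counting functions for edge multiplicities.
\end{itemize}
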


Before we give an overview of how the framework works, let us be explicit about our perspective here.
The main goal of our work is to improve the theoretical understanding of subexponential phenomena, and we focus exclusively on this aspect.
We do not attempt to optimize our results from the perspective of algorithm engineering, and reducing the hidden constant in the exponent is an interesting challenge beyond the scope of this work.
Moreover, the framework is technically involved and considerably less accessible than, for example, bidimensionality.
This is perhaps unsurprising, since problems seem to become substantially more complicated as soon as one moves to weighted or directed graphs.
Nonetheless, this raises the natural question of whether a simpler approach---perhaps based on a logic fragment such as MSO---could achieve similar results.
However, pursuing such a direction is hindered by the delicate boundary between problems that admit subexponential parameterized algorithms on planar graphs and those that do not.
In particular, while the above theorem shows that \wsp{} can be solved in subexponential parameterized time, even slight generalizations may break this property. For instance, the closely related \textsc{Steiner Forest} problem---a variant where different terminal sets are not necessarily contained in distinct connected components---does not admit a subexponential parameterized algorithm even on unweighted planar graphs, as shown by Pilipczuk, Pilipczuk, Sankowski, and van Leeuwen~\cite{PilipczukPSL14}.

\subsection{Our Methods}
We next give an overview of how the main results in this work are achieved.
We start with a high-level overview of the general strategy using \textsc{Independent Set} as an illustrative example and show some potential pitfalls.
This general strategy (avoiding the pitfalls) follows the algorithm by Nederlof~\cite{Ned20}.
In a second step, we give a few more technical details regarding the algorithm and highlight the key distinctions between our algorithm and the one by Nederlof.
Then, we show how to apply this general framework to different problems.
The application for \fragment{} will be quite simple and for \wif{} and different variants of graph minors, we will set up increasingly complicated notions.

We now give a high-level overview of the general strategy.
As a starting point, consider the classic divide-and-conquer strategy pioneered by Lipton and Tarjan in the context of planar separators~\cite{LiptonT80}.
Given a planar graph~$G=(V,E)$, an integer~$k$, and a balanced (cycle) separator~$C$ in~$G$, we \emph{guess}\footnote{Whenever we say we ``guess'' something, we actually iterate over all possibilities, but for presentation and proof, we focus on the iteration that leads to an optimal solution.} which vertices of $C$ appear in the solution.
We then remove $C$ and all neighbors of those chosen vertices from the graph, guess how many vertices of the solution lie in the interior and exterior of~$C$, and solve the resulting subproblems recursively.
Once each subproblem becomes sufficiently small, we solve them by applying a brute-force procedure.
Unfortunately, the general idea sketched above does not achieve the desired running time for three main reasons.

First, if we guess which vertices of $C$ belong to a solution and then make separate recursive calls for each guess, the running time takes the form $f(k)^{\log n}$ for some (super-polynomial) function~$f$.
Even setting $f(k) = k$ yields a running time of $k^{\log n}$, which is not contained in $2^{\tilde{\mathcal{O}}(\sqrt{k})} \mathrm{poly}(n)$.
To overcome this, we only branch on the separator~$C$ and solve a more general problem on the interior and exterior, where we output for \emph{any} set~$X$ of vertices in~$C$ the largest independent set containing exactly the vertices in~$X$ of~$C$.

The second issue is that the separator~$C$ could intersect the optimal solution in too many vertices. 
This leads to too many possible choices resulting in an exponential running time in~$k$.
We circumvent this by using a result due to Nederlof \cite{Ned20} (see \cref{lem:unknownbalance}) that roughly allows us for any (even unknown) weight function to replace~$C$ by a set~$\mathcal{S}$ of up to~$2^{\Oh(\log^2(k))}$ balanced cycle separators such that at least one of them is balanced for the weight function and does not intersect the solution in too many vertices.
However, since~$2^{\Oh(\log^2(k))}>k$, we again run into the problem that we cannot recursively solve all of them as this leads to a running time of~$\Oh(k^{\log n})$, which is too slow.
This time, we circumvent the issue by observing that if~$C$ contains many vertices from the solution (more than~$20 \sqrt{k} \log(k)$), then it is sufficiently balanced for the weight function that assigns a constant weight to each vertex in the (unknown) solution and zero to all other vertices.
Hence, at least one cycle separator~$C' \in \mathcal{S}$ is also balanced for this weight function and does not intersect the solution in too many vertices.
This allows us to reduce the parameter~$k$ by~$\Oh(\sqrt{k})$ in these cases, which can be shown to yield a sufficiently small number of recursive calls.

Finally, the third issue relates to both of the above.
Consider a recursive call for some separator~$C$ and for the sake of argument let us assume we are considering the interior of~$C$.
We then find a separator~$C'$ in the interior of~$C$ and branch into the interior and exterior of~$C'$.
Note that the exterior of~$C'$ now has two boundaries: both the vertices in~$C$ and in~$C'$.
Even if each of these two cycle separators are small enough and contains few enough vertices of an optimal solution, together they can be too large and/or contain too many vertices of an optimal solutiuon.
Technically, this does not happen at recursion depth one, but \cref{fig:branch} shows an example where one subgraph is incident to many cycle separators.
To overcome this issue, we again use the lemma by Nederlof to find/guess a cycle separator that contains few vertices of an optimal solution and is balanced for the weight function that assigns a constant weight to all vertices in previous cycle separators that the current subgraph is incident to and zero to all other vertices.
The blue line in \cref{fig:branch} shows an example of such a cycle separator.
We then branch on this new cycle separator and show that the number of possible guesses is again small enough.
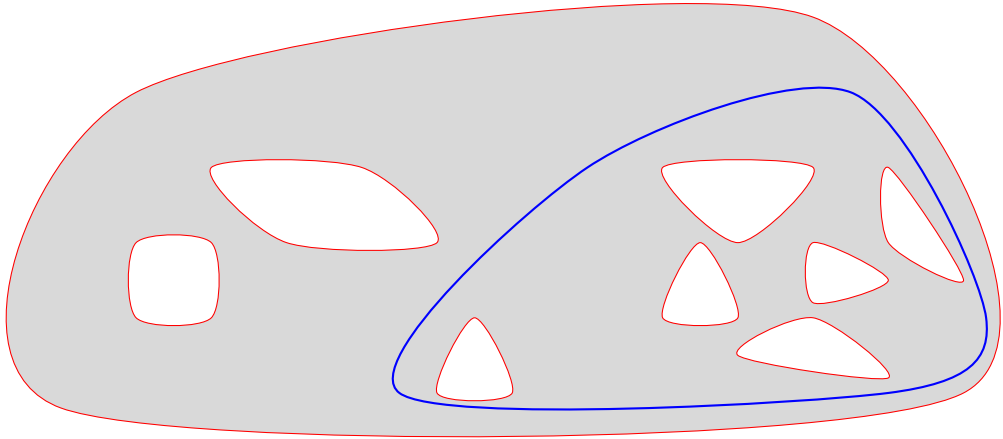
\begin{figure}
    \centering
    \begin{tikzpicture}
        \draw[red,fill=gray!30] plot[smooth cycle] coordinates {(0, -.2) (1, 4) (10, 5) (12, 0)};
        \draw[red,fill=white] plot[smooth cycle] coordinates {(3, 2) (2, 3) (4, 3) (5, 2)};
        \draw[red,fill=white] plot[smooth cycle] coordinates {(1, 1) (1, 2) (2, 2) (2, 1)};
        \draw[red,fill=white] plot[smooth cycle] coordinates {(9, .5) (10, 1) (11, .2)};
        \draw[red,fill=white] plot[smooth cycle] coordinates {(8, 1) (8.5, 2) (9, 1)};
        \draw[red,fill=white] plot[smooth cycle] coordinates {(11, 2) (11, 3) (12, 1.5)};
        \draw[red,fill=white] plot[smooth cycle] coordinates {(9, 2) (8, 3) (10, 3)};
        \draw[red,fill=white] plot[smooth cycle] coordinates {(10, 2) (10, 1.2) (11, 1.5)};
        \draw[red,fill=white] plot[smooth cycle] coordinates {(5, 0) (6, 0) (5.5, 1)};
        \draw[blue,thick] plot[smooth cycle] coordinates {(4.5, 0) (11, 0) (12.3, 1) (10.5,4) (7,3)};
    \end{tikzpicture}
    \caption{Illustration of a graph cut open by a set of cycle separators (red). The blue line shows a cycle separator that separates the red vertices in a balanced way (assuming that the number of vertices is roughly proportional to the length of the respective line).}
    \label{fig:branch}
\end{figure}

So overall, the general strategy looks as follows.
We want to solve the following generalization of \textsc{Independent Set}: Given a planar graph~$G$, a set~$B$ of vertices (the union of some cycle separators), and an integer~$k$, output a lookup table that reports for each sufficiently small set~$Z \subseteq B$ whether an independent set~$\mathcal{I}$ of size (at most)~$k$ exists in~$G$ such that~$\mathcal{I} \cap B = Z$.
Initially~$B = \emptyset$.
If the graph has some constant size, then solve it using brute force and otherwise, compute a balanced cycle separator~$C$.
Guess if an optimal solution intersects~$C$ in many or few vertices.
If it is in many, then find/guess a different cycle separator~$C'$ using the lemma by Nederlof and reduce the parameter~$k$ accordingly.
If it is few, then consider~$C' = C$ and do not reduce the parameter.
Let~$G_{\inte}$ and~$G_{\ex}$ be the graphs induced by the interior and exterior of~$C'$, respectively, and let~$B'_{\inte}$ and~$B'_{\ex}$ be the subsets of~$B \cup C'$ in either graph.
If~$B'_{\inte}$ and~$B'_{\ex}$ are small enough, then recursively solve the two subinstances and do the following.
For each sufficiently small set~$Z$, compute the optimal solution for~$Z$ by finding the optimal solutions~$Z_{\inte},Z_{\ex}$ for both subinstances where~$Z_{\inte} \cap C = Z$ and~$Z_{\ex} \cap C = Z$.
The union of~$Z_{\inte}$ and~$Z_{\ex}$ is then an optimal solution for~$Z$ (and the vertices in~$Z$ are counted twice, so an optimal solution has size~$|Z_{\inte}| + |Z_{\ex}| - |Z|$).
If~$B'_{\inte}$ and/or~$B'_{\ex}$ are too large, then find a separator that splits the set in a balanced way and solve the two subinstances instead.

We mention that the above is oversimplified in two aspects.
First, the lemma by Nederlof does not return a small cycle separator which only intersects any solution in sufficiently few vertices.
Rather, it returns a cycle separator and a 3-partition of its vertices into heavy, light, and discarded vertices.
The number of heavy vertices is small but a solution might contain any number of heavy vertices.
The number of light vertices is larger (but still bounded) and there is a guarantee that the solution only intersects it in sufficiently few vertices so that we can still afford to iterate over all subsets of the respective size in~$2^{\tilde{\Oh}(\sqrt{k})}$ time.
Lastly, the number of discarded vertices is unbounded, but we are guaranteed that the solution does not contain any discarded vertices.
The second oversimplification is the brute-force step in case the graph is sufficiently small.
In order to make progress when reducing the parameter, we also need an algorithm for the case where the parameter is a constant.

Let us now give a few more details on our main algorithm and highlight some of the details where our approach differs from that of Nederlof.
First and of least importance, Nederlof does not analyze the polynomial part of the running time.
In particular, the use of a black-box algorithm from the treewidth-pattern-covering approach~\cite{FominLMPPS22} incurs as stated a running time of~$\Oh(n^4)$ in each recursive call.
We overcome this issue by replacing costly flow and dual-LP computations in their algorithm by recently discovered near-linear-time algorithms~\cite{ChenKLPGS22}.
We then analyze the running time of both the algorithm from the treewidth-pattern-covering approach~\cite{FominLMPPS22} and the one by Nederlof including the polynomial part of the running time (for the first time).
Moreover, the base-case algorithm when the parameter becomes small enough is different in our algorithm.
While both perform dynamic programming over certain tree decompositions, we use balanced tree decompositions in order to improve the polynomial part of the running time.
Second, even ignoring the running time per recursive call, the analysis by Nederlof bounds the number of recursive calls by~$\Oh(n^4)$.
Fortunately, this issue can be resolved by a more careful analysis.
Unfortunately, we are not aware of a way to show it using statements of the form ``there exists a sufficiently large constant~$c$ such that $\ldots$''.
Instead, we carefully choose several constants in order for the analysis to work.
The third issue stems from the nature of our undertaking.
Since we wish to provide a general framework rather than solving a specific problem, some steps that can easily be shown for a specific problem like \textsc{Subgraph Isomorphism} require careful consideration in our setting.
Two examples of this are the definition of the problem variant we wish to solve and the algorithm for the base case when either the parameter or the number of vertices is sufficiently small.
Recall that for \textsc{Independent Set}, we wanted to solve a more general version where we return a lookup table.
In order for our framework to be applicable as often as possible, we do not give a strict recipe of how to generalize any particular problem, but instead give a list of minimal requirements that the problem has to fulfill in order for our framework to work.
For the algorithm for the base case when the parameter is constant, we show how to perform dynamic programming over a tree decomposition using only the minimal requirements.

Finally, we give an overview of the different applications of our framework.
We start by showing a general recipe of how to apply it and a subexponential parameterized algorithm for \fragment{} is a first simple-to-show corollary.
This one is simple as the way in which two partial solutions can be combined along a separator is quite trivial.
It is easy to see that \fragment{} generalizes \wis, \wpvc, \den, \knkcut, as well as weighted variants of the latter two problems and thus, we get subexponential parameterized algorithms for all of them.

We then turn our attention to \wif.
Here, we show how to encode connectivity and being acyclic using partitions and the join of two sets.

For \minor, the main new obstacles to overcome are the exponential number of possible models for a single node of~$H$ and how to encode information about which vertex in the separator plays which role (is part of the model of which node or edge of~$H$).
We deal with the latter issue in a similar way as Nederlof did for \textsc{Subgraph Isomorphism}, that is, we store for which subgraph of~$H$ we already have a model and also a separator of some fixed size for this subgraph to the rest of~$H$.
Since we assume that the current cycle separators only contain relatively few vertices from the solution and the separator in~$H$ also has a fixed size, we can show that the number of possible ways to assign the nodes in the separator to any sufficiently small subset of the vertices in the current cycle separator is small enough so that we can enumerate them all and find corresponding solutions in subexponential parameterized time.
For the former issue, we do not store the model for~$H$ but only which vertices in the model are already connected (plus information about which incident edges in~$H$ are already part of the model) and again rely on our encoding for connectivity for \wif.

Finally, we consider parallel minors (where the pattern graph~$H$ can have parallel edges and one wishes for a minor model with at least (or exactly in the case of induced minors) that many connections within the model).
Here, the main problem is that planar multigraphs can have an exponential number of non-isomorphic separations and hence the previous analysis cannot yield subexpontial running times.
We overcome this issue by defining a new parameter for the analysis and show that if this parameter is exponential, then no planar (simple) graph with~$k$ vertices can have~$H$ as a minor.
Thus, we have a simple win-win approach.
Either the parameter is large and we can immediately answer no (the parameter will be easy to compute) or the parameter is small (subexponential) and we can adapt the analysis to show subexponential parameterized algorithms.

\section{Preliminaries}
\label{sec:prelim}
In this section, we first give an overview of the notation and basic concepts we use.
Then, we give all formal problem definitions.
In a third step, we formalize \textbf{\emph{request variants}} of problems.
Recall that for \textsc{Independent Set}, we want to solve a more general version where we output a lookup table that answers for each set~$Z$ of certain size (and certain interaction with heavy, light, and discarded vertices of the cycle separators), whether an independent set of a certain size exists where exactly the vertices in~$Z$ are contained from the cycle separators.
This is the request variant for \textsc{Independent Set} and we give a definition (plus some intuition) of what type of generalized problem we can solve using our framework.
Finally, we give an overview of the different results from the known literature that our framework relies on.

We start with some notation and assume familiarity with standard notation from graph theory and theoretical computer science.
In particular, we denote by~$n$ and~$m$ the number of vertices and edges in a graph, respectively.
We use~$\mathds{N}$ to denote the set of \emph{\textbf{positive integers}} and~$\tilde{\Oh}$ to hide polylogarithmic factors in the Bachmann–Landau (big-O) notation.
For an integer~$\ell$, we use~$\mathbf{[\ell]}$ to denote the set~$\{1,2,\ldots, \ell\}$.
The base of all logarithmic functions in this work is~$2$ and we assume the word RAM model of computation (which means that operations like computing sums or minima over a constant number of real numbers take constant time independent of the size of the numbers).
Given a graph~$G=(V,E)$ and a vertex set~$V' \subseteq V$, we denote the graph \textbf{\emph{induced}} by~$V'$ by~\emph{$\mathbf{G[V']}$} and the graph obtained by deleting all vertices in~$V'$ (and all edges incident to those vertices) by~$\mathbf{G-V'}$.
For a set~$F \subseteq E$ of edges, we use~$\mathbf{G - F}$ as a shorthand for~$(V, E \setminus F)$.
Given a set~$W$ (not necessarily a subset of~$V$), we denote by~$W[G] = W \cap V$ the intersection of~$W$ and the vertex set of~$G$.

An embedding of a planar graph is 1-outerplanar if it is outerplanar, that is, all vertices lie on the outer face.
For $d\geq 2$, an embedding of a planar graph is~$d$-outerplanar if removing all vertices on the outer face results in a~$(d-1)$-outerplanar embedding.
A graph is~$\mathbf{d}$\emph{\textbf{-outerplanar}} if it has a~$d$-outerplanar embedding.
A planar graph is \textbf{\emph{triangulated}} if each face is bounded by three edges.
A \textbf{\emph{separation}} of $G$ is a pair~$(X,Y)$ of vertex subsets such that~$X \cup Y = V$ and there are no edges between vertices in~$X\setminus Y$ and~$Y \setminus X$.
The set~$X \cap Y$ is called the \emph{\textbf{separator}} and slightly overloading notation, we will sometimes also refer to some graph with vertex set~$X \cap Y$ as the separator.
The size of a separation~$(X,Y)$ is the number of vertices in its separator.
Two separations~$(X,Y)$ and~$(X',Y')$ are \emph{\textbf{isomorphic}} if there is an isomorphism~$f \colon V \rightarrow V$ such that~$\bigcup_{v \in X}f(v) = X'$ and~$f(v) = v$ for each~$v \in X \cap Y$.
In a planar graph together with an embedding, (the vertex set of) each simple cycle~$C$ is a separator that separates the strict \textbf{\emph{interior}} (the graph enclosed by~$C$) from the strict \textbf{\emph{exterior}}.
We therefore prefer to use the term cycle separator rather than simple cycle.
The union of the strict interior/exterior of~$C$ and~$C$ is denoted by the (non-strict) interior/exterior of~$C$.
For a cycle separator~$C$, we denote the interior by~$C_{\inte}$ and the exterior by~$C_{\ex}$.

A \emph{\textbf{tree decomposition}} of a graph~$G = (V,E)$ is a rooted tree~$\mathcal{T}=(U,A)$ in which each node~$u$ represents a subset~$X_u$ of vertices called the \emph{bag} of~$u$ such that

\begin{compactitem}
    \item each vertex in~$V$ is contained in at least one bag,
    \item for each edge~$\{v,w\} \in E$, there is a node~$u \in U$ with~$v,w \in X_u$, and
    \item for each~$v \in V$ it holds that all nodes~$u$ of~$\mathcal{T}$ with~$v \in X_u$ form a connected subtree of~$\mathcal{T}$.
\end{compactitem}
The \emph{\textbf{width}} of a tree decomposition is~$\max_{u \in U} |X_u| -1$ and the \textbf{\emph{treewidth}} of~$G$ is the minimum width of all possible tree decompositions of~$G$.
The depth of a tree decomposition is the length of a longest path from the root to any other node and a tree decomposition is binary if each node has at most two children.
Given a tree decomposition~$\mathcal{T} = (U,A)$ of a graph~$G$ and a node~$u \in U$, we denote by~$Y_u$ the union of all bags of~$u$ and descendants of~$u$ in~$\mathcal{T}$.
Note that for the root~$p$ of~$\mathcal{T}$ it holds that~$Y_p = V$. 

A \emph{\textbf{contraction}} of an edge~$\{u,v\}$ in a graph removes~$u$ and~$v$ from the graph, creates a new vertex~$uv$ and replaces each other edge~$\{x,y\}$ with~$x \in \{u,v\}$ with an edge~$\{uv,y\}$.
Depending on the context, one sometimes removes parallel edges and self loops and other times allows the resulting graph to be a multigraph.
To show the versatility of our framework, we will consider both versions and say that a contraction is simple if it removes parallel edges and self loops and it is a parallel contraction otherwise.
A (multi)graph~$H=(U,F)$ is a \emph{\textbf{simple/parallel minor}} of a simple graph~$G=(V,E)$ if there is a subgraph~$G'$ of~$G$ such that simple/parallel contractions of~$G'$ result in a graph isomorphic to~$H$.
We say that~$H$ is an \emph{\textbf{induced simple/parallel minor}} if there is an induced subgraph~$G'$ of~$G$ that can be contracted to obtain~$H$.
In all cases, the graph~$G'$ is called the \emph{\textbf{model}} of~$H$ and we also study rooted versions.
Here, additionally to the (multi)graph~$H$, we are given a function~$\eta \colon U \rightarrow 2^V$ that describes for each node\footnote{We call the vertices in the pattern graph~$H$ nodes in order to avoid confusion with the vertices in the graph~$G$.} of~$H$ a set of vertices of~$G$ that have to be contained in the desired model of~$H$.
More formally, for any~$u \in U$, all vertices in~$\eta(u)$ have to be contracted into a single vertex~$u'$ such that there is an isomorphism from the resulting graph to~$H$ mapping~$u'$ to~$u$.

A \emph{\textbf{proper weight assignment}}~$w$ is an assignment of non-negative weights at most~$\frac{1}{4}$ to vertices summing to~$1$.
For~$q \in [0,1]$, a \emph{$\mathbf{q}$\textbf{-balanced cycle separator}} for a proper weight assignment~$w$ is a cycle~$C$ such that the sum of weights of all vertices in the strict interior of~$C$ is at most~$q$ and the sum of weights of all vertices in the strict exterior of~$C$ is at most~$q$.

We say a pair~$(C,x)$ consisting of a set~$C$ of vertices and partitioned into heavy vertices~$C_{\he}$, light vertices~$C_{\li}$, and discarded vertices~$C_{\di}$ and an integer~$x$ is \emph{\textbf{almost measured}} with respect to two integers~$k$ and $\alpha$ if $x \leq \max(200 \log^7(k),1)$, ${|C_{\he}| \leq 40 x \alpha \sqrt{k}}$, and~${|C_{\li}| \leq 60 x \alpha k \log(k)}$.
It is \emph{\textbf{measured}} if it is almost measured and~$x \leq \max(40 \log^7(k),1)$ holds.
A \emph{\textbf{pebble set}} for~$(C,k,\alpha,x)$ is a set~$\peb \subseteq C_{\he} \cup C_{\li}$ with~$|\peb \cap C_{\li}| \leq 20x\alpha \sqrt{k} \log(k)$.
Intuitively, pebble sets describe the possible sets of vertices in a collection of cycle separators for which we wish to store a solution in a lookup table.

\subsection{Problem Definitions}
Here, we define the problems mentioned in the introduction.
We start with different variants of finding minors in a graph.
We deliberately chose to use both simple and parallel contractions, minors and induced minors, minimization and maximization variants to illustrate the versatility of our framework.
The other possible combinations can be solved similarly, but for the sake of conciseness, we do not explicitly study them.

\probdef{\minor}
{A graph~$G=(V,E)$, an edge-weight function~$w \colon E \rightarrow \mathds{R}^+$, a graph~$H=(U,F)$, a function~$\eta \colon U \rightarrow 2^V$, an integer~$k$, and a value~$W \in \mathds{R}$.}
{Is there a subgraph~$G'$ of~$G$ with at most~$k$ vertices and of total weight at most~$W$ such that~$G'$ is a model of~$H$, that is, it contains all vertices in~$\bigcup_{u \in U}\eta(u)$ and~$H$ is a rooted simple minor of~$G'$?}

\probdef{\im}
{A graph~$G=(V,E)$, an edge-weight function~$w \colon E \rightarrow \mathds{R}^+$, a multigraph~${H=(U,F)}$, a function~$\eta \colon U \rightarrow 2^V$, an integer~$k$, and a value~$W \in \mathds{R}$.}
{Is there an induced subgraph~$G'$ of~$G$ with at most~$k$ vertices and of total weight at least~$W$ such that~$G'$ contains all vertices in~$\bigcup_{u \in U}\eta(u)$ and~$H$ is a rooted parallel induced minor of~$G'$?}

The next problem we study is \wif.

\probdef{\wif}
{A graph~$G=(V,E)$, a vertex-weight function~$w \colon V \rightarrow \mathds{R}$, an integer~$k$, and a value~$W \in \mathds{R}$.}
{Is there a set~$S \subseteq V$ of at most~$k$ vertices of total weight at least~$W$ such that~$G[S]$ is a forest?}

The definition of \fragment{} was already shown in the introduction.
We repeat it here to have all problem definitions in one place.

\probdef{\fragment}
{A directed graph~$G=(V,A)$, a vertex-cost function~$c \colon V \rightarrow \mathds{N}$, a weight function~$w \colon V \cup A \rightarrow \mathds{R}$, an integer~$k$, and values~$\alpha_1,\alpha_2,\alpha_3,\beta, W \in \mathds{R}$.}
{Is there a set~$S$ of cost exactly~$k$, that is, $\sum_{v\in S}c(v) = k$ and \[\hspace*{-2cm}\alpha_1 \big(\sum_{\substack{(u,v) \in A\\u,v \in S}} w((u,v))\big) + \alpha_2 \big(\sum_{\substack{(u,v) \in A\\u \in S, v \notin S}} w((u,v))\big) + \alpha_3 \big(\sum_{\substack{(u,v) \in A\\u \notin S,v \in S}} w((u,v))\big) + \beta \big(\sum_{v \in S} w(S)\big) \geq W?\]}

When the parameters~$\alpha_1,\alpha_2,\alpha_3$ and~$\beta$ are clear from the context, then we call the value for a particular set~$S$ the \emph{\textbf{score}} of~$S$.
All other problems we study are simple corollaries of the above problems.
We hence only give informal problem statements for these.
In \wpvc, the input consists of a graph~$G$ with positive integer costs and real-valued edge weights, an integer~$k$, and a real value~$W$.
The question is whether there is a set of vertices of cost at most~$k$ such that the weight of covered edges is at least~$W$.
An edge is covered if at least one of its endpoints is contained in the solution.
Note that we require all costs to be integers as otherwise the problem is para-NP-hard when parameterized by~$k$ (start with an instance~$(G,k,W)$ of unweighted \textsc{Partial Vertex Cover} and set the weight of each vertex to~$\frac{1}{k}$).
In \wis, we are given a graph~$G$ with real-valued vertex weights, an integer~$k$, and a real value~$W$.
The questions is whether an independent set of size at most~$k$ and weight at least~$W$ exists.

In \den, we are given an undirected graph~$G$ and two integers~$k$ and~$W$.
The question is whether there is a set of exactly~$k$ vertices such that there are at least~$W$ edges with both endpoints in that set.
\knkcut{} can be defined on both undirected and directed graphs together with two integers~$k$ and~$W$.
In the undirected setting, the question is whether there is a set~$S$ of exactly~$k$ vertices such that there are at least~$W$ edges with exactly one endpoint in~$S$.
In the directed setting, the question is whether there is a set~$S$ of exactly~$k$ vertices such that there are at least~$W$ edges~$(u,v)$ with~$u \in S$ and~$v \notin S$.

In \wsi, the task is to find a subgraph of minimum weight that is isomorphic to a given pattern graph~$H$.
In \wisi, the task is to find an induced subgraph of maximum weight that is isomorphic to a given pattern graph~$H$.
Note that the original work by Nederlof~\cite{Ned20} showed how to solve the unweighted variants of the above two problems (and also how to count the number of solutions).

\wsp{} is a generalization of \textsc{Steiner Tree} where the solution is disconnected.
Here, we are given an undirected graph, a set~$\mathcal{T}=\{X_1,X_2,\ldots,X_t\}$ of sets of terminal vertices, an edge-weight function, an integer~$k$, and a real value~$W$.
The task is to find a subgraph~$G'$ of~$G$ with at most~$k$ vertices of total weight at most~$W$ such that for each~$i \in [t]$, all vertices in~$X_i$ are in the same connected component in~$G'$ and for any~$i \neq j$, the vertices in~$X_i$ and the vertices in~$X_j$ are in distinct connected components in~$G'$.
\wsp{} can be seen as the dual of \textsc{Cut-Uncut}, where the goal is to remove as few edges as possible to get to a subgraph satisfying the above.
The special case of \textsc{Cut-Uncut} with only two sets~$X_1$ and~$X_2$ was recently studied for planar graphs~\cite{BDFGK24}.

\subsection{Request Variants}
We now give a characterization of the type of problems our framework can solve.
We call those \emph{\textbf{request variants}} and for a regular problem~$\pi$, we call its request variant \Q.
In order to allow the framework to be applicable in as many situations as possible, we do not give a strict recipe of how to generalize a problem $\pi$ to \Q, but instead our main result (\cref{thm:main-fpt}) lists the requirements that the problem \Q{} has to satisfy in order for the framework to work.
Before we give a formal definition of requests variants, we will give some intuition regarding the different concepts we rely on and revisit the example \textsc{Independent Set} from the introduction.

The first requirement we have is a notion of \emph{locality}.
This stems from the use of the lemma by Nederlof and basically says that the famous approach by Baker is applicable.
In particular, we require that there is some constant~$\alpha_{\pi}$ depending on the problem such that the following holds.
If we are looking for a solution with at most~$k$ vertices, and if we compute a breadth-first search from an arbitrary vertex~$r$ and then partition all vertices depending on their distance from~$r$ modulo~$\alpha_{\pi}k+1$, then there exist a solution in the entire graph if and only if for at least one of the~$\alpha_{\pi}k+1$ parts, it holds that after removing all vertices from the respective part, there is a solution of size at most~$k$ in the remaining graph.
In the case of \textsc{Independent Set}, this trivially holds for~$\alpha = 1$.
The~$k$ vertices of a solution can be contained in at most~$k$ of the parts so one part does not contain any vertices from the solution.
Removing all vertices in this part does neither create new independent sets nor does it invalidate the solution we consider.
We say that the request variant of \textsc{Independent Set} has \emph{\textbf{locality}}~$\alpha = 1$.
This notion of locality will allow us to reduce the problem to a setting where the graph is~$\alpha_{\pi}k$-outerplanar.

The next ingredient for request variants are \emph{requests}.
As stated above, we wish to solve problems where the output is not a single solution but rather a lookup table.
We call the queries that this lookup table should be able to answer \textbf{\emph{requests}} and for a problem \Q, we denote the function that enumerates all possible requests by~$R_{\pi}$. 
These requests mainly depend on the following four objects: a parameter~$d$, a graph~$G$, a \emph{\textbf{portal set}}~${B = (B_{\he},B_{\li},B_{\di})}$, and a pebble set~$\peb$.

Let us give some intuition about these.
The graph~$G$ should be thought of as a subgraph of the input graph defined by cutting out some parts along certain cycle separators.
The parameter~$d$ describes an upper bound on both the number of vertices we allow in a solution and (together with~$\alpha_{\pi}$) the outerplanarity of~$G$.
We will achieve subexponential parameterized algorithms for this parameter~$d$.
As we will reduce the value of~$d$ in some recursive calls, we prefer to distinguish it from the input parameter~$k$.
The portal set~$B$ should be thought of as the boundary of~$G$ to the rest of the input graph.
For technical reasons, we always give the portal set as the union of three sets: heavy, light, and discarded vertices.
The intuition is that in order to keep the number of possible intersections of a solution with the portal set subexponential, we restrict ourselves to those solutions that do not contain any discarded vertices and sufficiently few light vertices.
On the flip side, the number of heavy vertices will be quite small and the number of light vertices will be somewhat small.
The pebble set describes which vertices in~$B$ should be contained in the solution.
This should be seen as the interface of a partial solution to the rest of the graph.
Each request variant will take an integer~$x$ as input which will limit how many vertices of the solution we allow to be contained in~$B$.
Intuitively, this simply counts how many cycle separators are contained in~$B$ but this intuition will not quite work due to technicalities when we compute cycle separators that separate other cycle separators in a balanced way.
We then consider pebble sets for the tuple~$(B,d,\alpha_{\pi},x)$.
Recall that these are sets~$\peb \subseteq B_{\he} \cup B_{\li}$ with~$|\peb \cap B_{\li}| \leq 20x\alpha_{\pi} \sqrt{d} \log(d)$.

We also allow requests to depend on one additional value, which we will use for problem specific information.
We denote these by~$S_{\pi}$ and we only have one minimal requirements regarding these: they should be maintainable in subexponential time.
In many cases, the set~$S_{\pi}$ simply contains weight functions and for \textsc{Independent Set Request}, this set will be empty.
In particular, a request for~\textsc{Independent Set Request} will only consist of a pebble set~$\peb$.
We can now formally define \textsc{Independent Set Request} and afterwards give a definition for request variants in general.

\problemdef{\textsc{Independent Set Request}}
{An integer~$d$, a~$d$-outerplanar graph~$G=(V,E)$, a portal set~$B=(B_{\he},B_{\li},B_{\di})$, and an integer~$x$.}
{Output a lookup table that returns for each request~$r = \peb$ for each pebble set~$\peb$ for~$(B,d,1,x)$, the maximum size of an independent set~$\mathcal{I}$ in~$G$ where~$\mathcal{I} \cap (B_{\he} \cup B_{\li} \cup B_{\di}) = \peb$ (or~$-\infty$ if no such set exists).}

Recall that a pair~$(B,x)$---consisting of a set~$B$ of vertices partitioned into heavy, light, and discarded vertices and an integer~$x$---is almost measured with respect to two integers~$d$ and $\alpha_{\pi}$ if~$x \leq \max(200 \log^7(d),1)$, ${|B_{\he}| \leq 40 x \alpha_{\pi} \sqrt{d}}$, and~${|B_{\li}| \leq 60 x \alpha_{\pi} d \log(d)}$.

\begin{definition}
    A \emph{request variant} \Q{} of locality~$\alpha_{\pi}$ takes a tuple~${(d,G,B,x,S_{\pi})}$ as input where~$B=(B_{\he},B_{\li},B_{\di})$ is partitioned into heavy, light, and discarded vertices, $G$ is a~$\alpha_{\pi}d$-outerplanar graph, and $(B,x)$ is almost measured with respect to~$d$ and~$\alpha_\pi$.
    It outputs a lookup table with an entry for each request~$r \in R_\pi(d,G,B,\peb,S_{\pi})$ where~$\peb$ is a pebble set for~$(B,d,\alpha_{\pi},x)$.
    The output represents some solution where the vertices in the solution contained in~$B_{\he} \cup B_{\li} \cup B_{\di}$ is exactly~$\peb$.
\end{definition}

\subsection{Results Used in our Algorithm}
Finally, we state a number of results from the literature, which we will subsequently rely on.
The first one is a well-known strategy due to Baker to transform a planar graph into several~$d$-outerplanar graphs.

\begin{lemma}[\cite{Bak94}]
    \label{lem:baker}
    Given a planar graph~$G$ and an integer~$d$, one can compute in linear time a partition of the vertices of~$G$ into sets~$A_1,A_2,\ldots,A_{d}$ such that for each~$i \leq d$, the graph~$G - A_i$ is~$(d-1)$-outerplanar. 
\end{lemma}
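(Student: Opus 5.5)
The plan is to use the layering of a breadth-first search (BFS). Assume first that $G$ is connected; otherwise we treat each connected component separately and take the union of the resulting classes. Pick an arbitrary root vertex $r$ and compute BFS layers $L_0,L_1,\ldots$ in linear time, where $L_j$ is the set of vertices at distance exactly $j$ from $r$. For $i\in[d]$ we define
\[
A_i=\bigcup_{j \equiv i \pmod d} L_j .
\]
The sets $A_1,\ldots,A_d$ clearly partition $V(G)$ and can be computed in linear time. It remains to show that $G-A_i$ is $(d-1)$-outerplanar for each $i$.

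Removing $A_i$ cuts the layer sequence into blocks. Each block consists of at most $d-1$ consecutive layers $L_{a},\ldots,L_{b}$ with $b-a\le d-2$. Since BFS edges only join equal or consecutive layers, no edge of $G-A_i$ runs between different blocks. It therefore suffices to show that each block $G[L_a\cup\cdots\cup L_b]$ has a $(b-a+1)$-outerplanar embedding, because a disjoint union of $(d-1)$-outerplanar graphs is $(d-1)$-outerplanar.

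The key step is this embedding claim. Fix a planar embedding of $G$. Contract the connected subgraph $G[L_0\cup\cdots\cup L_{a-1}]$ to a single vertex $r'$; this subgraph is connected because it contains $r$ and every vertex in it has a shortest path to $r$ staying inside it. Then delete all layers beyond $b$. The result is a planar embedding in which every vertex of $L_a$ is adjacent to $r'$. Re-embed so that $r'$ lies on the outer face and then delete $r'$. All neighbours of $r'$, that is all of $L_a$, now lie on the outer face. One then argues by induction on $t$ that every vertex of $L_{a+t}$ lies on the outer face after peeling off layers $L_a,\ldots,L_{a+t-1}$. The reason is that each such vertex has a neighbour in $L_{a+t-1}$, and that neighbour shared a face with it. So the peeling process removes all vertices within $b-a+1$ rounds, which gives $(b-a+1)$-outerplanarity.

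I expect this induction to be the main obstacle. One has to be careful that "lies on the outer face after removing the previous peels" really follows from having a neighbour in the previous layer: it is exactly the standard fact that a vertex at BFS distance $t$ from the outer face has outerplanarity level at most $t+1$. I would argue it through faces incident to the removed neighbour, which merge into the new outer face. Since the lemma is classical, it would be acceptable to cite the standard argument from \cite{Bak94} for this step, with only the layering construction and the linear running time spelled out.
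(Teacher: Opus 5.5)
Your proof is correct, but it does not follow the argument behind the citation. The paper gives no proof and simply cites Baker. Baker works with the peeling levels of a fixed plane embedding rather than BFS distance layers: level $1$ consists of the vertices on the outer face, and level $j$ of the vertices on the outer face once all levels below $j$ are deleted. With peeling levels, a block of $G-A_i$ spanning levels $j+1,\dots,j+d-1$ is already $(d-1)$-outerplanar in the \emph{restricted} embedding. Inductively, after $s-1$ peeling rounds what remains of the block is a subgraph of $G$ minus all levels below $j+s$, and the level-$(j+s)$ vertices lie on its outer face. So Baker needs no contraction and no change of outer face. On the other hand, his algorithm must first compute an embedding. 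It also needs the fact that adjacent vertices have levels differing by at most one, to ensure that no component of $G-A_i$ straddles a deleted level. Your version is algorithmically simpler: it needs only a BFS, and the embedding enters only the analysis. It also gets the absence of edges between blocks for free. You pay for this with the contraction/re-embedding step, and you use the same ``adjacent vertices differ by at most one level'' fact inside each block. Your version is also the one this paper actually relies on, since Requirement R1 of \cref{thm:main-fpt} and the discussion of locality in the preliminaries are formulated for BFS layerings.

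One point to tighten: do not phrase the induction as ``after peeling off layers $L_a,\dots,L_{a+t-1}$''. A peeling round may also remove vertices of deeper BFS layers, so rounds and BFS layers need not coincide. The right statement is that every vertex of $L_{a+t}$ is removed within the first $t+1$ rounds. For the step, let $u\in L_{a+t-1}$ be a neighbour of $v\in L_{a+t}$, and let $s\le t$ be the round in which $u$ is removed. Suppose $v$ survives round $s$. Then $v$ lies on the boundary of a face incident to $u$ (e.g.\ one bordering the edge $uv$), and this face merges into the new outer face when $u$ is deleted. Hence $v$ is removed in round $s+1\le t+1$. The block with $a=0$ needs no contraction; just put $r$ on the outer face.
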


The second result allows us to triangulate a~$d$-outerplanar graph.

\begin{lemma}[\cite{Bie15}, Corollary 1]
    \label{lem:tri}
    Any $d$-outerplanar graph~$G$ can be triangulated in~$\Oh(n^2)$ time such that the resulting graph has outer-planarity at most~$d + 1$.
\end{lemma}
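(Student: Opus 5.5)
The statement is cited from Biedl~\cite{Bie15}. The plan is to build a triangulation that adds no new vertices and only inserts edges inside faces. Each new edge is chosen so that every vertex's layer index (its peeling depth) grows by at most one.

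\textbf{Step 1: layers.} Fix a $d$-outerplanar embedding of $G$. Compute the layers $L_1,\dots,L_d$ by repeatedly peeling off the vertices on the outer face, so $L_1$ is the outer face, $L_2$ is the outer face after removing $L_1$, and so on. Naively this takes $\Oh(n)$ per layer and $\Oh(nd)\subseteq\Oh(n^2)$ overall. If $G$ is disconnected, first connect its components inside common faces by edges between vertices of the same or adjacent layers; this does not increase the outerplanarity. We may therefore assume $G$ is connected.

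\textbf{Step 2: triangulate face by face.} Take a face $f$ with facial walk $v_1,\dots,v_s$. By the definition of layers, the vertices on the boundary of an inner face lie in at most two consecutive layers $L_i$ and $L_{i+1}$. Triangulate $f$ with a zig-zag (or fan) pattern that adds only edges between vertices whose layer indices differ by at most one. Within the $L_i$-part of the walk, the new chords are between vertices of the same layer. Where repeated vertices on the walk could create multi-edges, first subdivide the walk conceptually, or use the standard trick of choosing a fan apex of minimal layer that is not already adjacent to the target vertices. The outer face is handled similarly, since it contains only $L_1$ vertices plus possibly bridges into deeper layers.

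\textbf{Step 3: bound the new outerplanarity.} Claim: in the triangulated graph $G'$, every vertex of $L_i$ is removed within the first $i+1$ peeling rounds. Argue by induction on $i$. After peeling $i$ rounds in $G'$, suppose all vertices of $L_1,\dots,L_{i-1}$ are gone and the remaining vertices of $L_i$ are exposed. Every vertex of $L_{i+1}$ was on the outer face of $G-(L_1\cup\dots\cup L_i)$. Because each added edge joins vertices of the same or adjacent layers, the faces of $G'$ restricted to $L_{\geq i+1}$ refine those of $G$ without enclosing any $L_{i+1}$ vertex behind a deeper cycle. The one-round slack absorbs the case where a new chord within $L_i$ temporarily shields some $L_{i+1}$ vertices. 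This gives outerplanarity at most $d+1$.

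\textbf{Main obstacle.} The hard part is Step 3: showing that chords added inside faces, especially same-layer chords, delay exposure by at most one round overall and not one round per layer. I would address this with an invariant rather than plain induction: after round $j$ of peeling $G'$, the remaining graph contains no vertex of $L_{\leq j-1}$. This invariant requires that any cycle of $G'$ using a new edge and enclosing a vertex of $L_{j+1}$ already contains a vertex of layer at most $j$. That property is guaranteed by the zig-zag pattern. The running time is dominated by the layer computation and the face traversals, which take $\Oh(n^2)$ in total.
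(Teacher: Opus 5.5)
\textbf{There is a genuine gap in Step~3.} The paper gives no proof of this lemma (it only cites Biedl), so your argument has to stand on its own.

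The only property of the new edges you use is that their endpoints lie in the same or in adjacent layers. But every inner face of $G$ meets at most two consecutive layers, and the outer face meets only $L_1$. So \emph{every} triangulation obtained by adding edges inside faces has this property, and it cannot be what keeps the outerplanarity at $d+1$. In fact not every such triangulation works:
\begin{itemize}
\item Take $d=1$ and $G=C_n$. Triangulate both the inner and the outer face by zig-zags, started at different vertices so no chord repeats. Each vertex gets at most two chords per side, so $G'$ has maximum degree at most~$6$.
\item In a triangulation, a vertex becomes exposed exactly when one of its neighbours has been removed. Hence peeling round $j+1$ removes precisely the vertices at distance $j$ from the outer triangle.
\item Consequently this $G'$ has outerplanarity $\Omega(\log n)$ in every embedding, not~$2$. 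Your invariant (after round $j$ no vertex of $L_{\le j-1}$ remains) already fails for $j=2$, so it is not ``guaranteed by the zig-zag pattern''.
\end{itemize}
Your claim that same-layer chords never hide an $L_{i+1}$ vertex is also false. In a face $a\,b_1b_2b_3$ with $a\in L_i$ and $b_1,b_2,b_3\in L_{i+1}$, the chord $b_1b_3$ encloses $b_2$ once $L_{\le i}$ is deleted, provided the other faces at $b_2$ avoid $L_{\le i}$.

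\textbf{The missing criterion} is exactly this distance description. If $T$ is the outer triangle of $G'$, the peeling index of $v$ is $1+\operatorname{dist}_{G'}(v,T)$. So it suffices that every vertex of $L_1$ lies on or next to $T$, and that every vertex of $L_{i+1}$ has a $G'$-neighbour in $L_i$. Each vertex of $L_{i+1}$ shares a face of $G$ with a vertex of $L_i$. Hence you achieve this by fanning every inner face from a vertex of its \emph{minimum} layer. Fan the outer face from a vertex $x$ and declare a triangle through $x$ to be the outer face; a zig-zag will not do.

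The real technical work is keeping $G'$ simple:
\begin{itemize}
\item Your device of choosing an apex not already adjacent to the targets is not enough, since such an apex need not exist in the minimum layer.
\item Instead, when the apex is already adjacent (outside the face) to some boundary vertices, skip those fan edges, since the needed adjacency is already present. Then triangulate each gap among its own vertices; an interleaving argument shows this creates no parallel edges.
\item Facial walks with repeated vertices need separate treatment.
\end{itemize}

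Finally, your time bound assumes the embedding is given. Finding a $d$-outerplanar embedding, e.g.\ by Kammer's algorithm~\cite{Kam07}, is itself an $\Oh(n^2)$-time step that your accounting omits.
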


The third result will be used to compute balanced cycle separators.
Therein, a fundamental cycle of an edge~$e$ with respect to a spanning tree~$T$ (which does not contain~$e$) is the unique simple cycle that contains~$e$ and all its other edges are contained in~$T$.

\begin{lemma}[\cite{KM}, Lemma 5.3.2.]
    \label{lem:cyclehelp}
    There is a linear-time algorithm that, given a triangulated plane graph~$G$, a spanning tree~$T$ of~$G$, and a proper weight assignment, returns a nontree edge~$e$ such that the fundamental cycle of~$e$ with respect to~$T$ is a \nicefrac{3}{4}-balanced cycle separator for~$G$.
\end{lemma}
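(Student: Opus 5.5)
The plan is the classical Lipton–Tarjan fundamental-cycle argument, specialised to triangulations, with the balance bookkeeping done in the dual tree. Since the statement is cited from \cite{KM}, I only sketch how I would reprove it.

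\emph{Setup and invariant.} Fix the plane embedding of~$G$ and root~$T$ anywhere. Let~$E'$ be the set of nontree edges. Because~$G$ is triangulated, the dual edges of~$E'$ form a spanning tree~$T^*$ of the dual graph~$G^*$ (the interdigitating-tree property). The dual vertices are the faces, and every face is a triangle, so each dual vertex has degree at most~$3$ in~$T^*$.

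For a nontree edge~$e$, write~$C_e$ for its fundamental cycle. Removing the dual edge~$e^*$ splits~$T^*$ into two components. The faces in one component are exactly the faces inside~$C_e$, and those in the other are exactly the faces outside~$C_e$. A vertex of~$G$ that is not on~$C_e$ lies strictly inside or strictly outside according to which component its incident faces belong to.

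\emph{Charging weights to faces.} To turn this into a statement about vertex weights, I assign each vertex~$v \notin C_e$ to the side containing all its incident faces. Operationally, I charge each vertex's weight to one incident face, fixed once. A vertex strictly inside~$C_e$ has all its incident faces inside, so it is correctly counted on the interior side; the same holds for the exterior. Vertices on~$C_e$ may be miscounted, but they do not matter, since they belong to neither strict side. Hence the weight strictly inside~$C_e$ is at most the total face-weight of the corresponding component of~$T^* - e^*$, and likewise for the exterior.

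\emph{Finding the edge.} Root~$T^*$ and compute all subtree face-weights in linear time, using a single DFS. Then walk from the root toward the heavy child, that is, any child whose subtree has weight greater than~$\nicefrac{3}{4}$, until no such child exists. At the stopping dual vertex~$f$, the subtree of~$f$ weighs more than~$\nicefrac{3}{4}$ (or~$f$ is the root), every child subtree weighs at most~$\nicefrac{3}{4}$, and~$f$ has at most three neighbours in~$T^*$.

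\emph{Main obstacle: the case split at~$f$.} The hardest step is showing that one of the at most three edges at~$f$ has both sides of weight at most~$\nicefrac{3}{4}$. My first idea was that the weight charged to~$f$ itself is at most~$\nicefrac{1}{4}$ because~$f$ has three vertices each of weight at most~$\nicefrac{1}{4}$. This does not work directly, since three such vertices could contribute~$\nicefrac{3}{4}$. So the charging step has to be refined. I would charge each vertex to an incident face in a way that the vertices on the boundary of~$f$ still count for one of the sides. Alternatively, I would use that the three vertices of~$f$ lie on the cycle of whichever edge of~$f$ is chosen, so they fall in neither strict side.

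With that refinement the weight at~$f$ is effectively zero, and the remaining parts are:
\begin{itemize}
\item the at most three child subtrees, each of weight at most~$\nicefrac{3}{4}$;
\item the part above~$f$, of weight less than~$\nicefrac{1}{4}$.
\end{itemize}
If some part has weight in~$[\nicefrac{1}{4},\nicefrac{3}{4}]$, cutting its edge to~$f$ gives the separator. Otherwise all parts are below~$\nicefrac{1}{4}$, and I cut an arbitrary edge at~$f$. The opposite side then consists of the other (at most two) parts together with vertices of~$f$, which lie on~$C_e$ and are excluded. I still need to check that these two parts total at most~$\nicefrac{3}{4}$; this follows because each is below~$\nicefrac{1}{4}$.

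Every step of the procedure is a DFS or a walk down~$T^*$, so the whole algorithm runs in linear time.
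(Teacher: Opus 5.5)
Your outline is the standard cotree argument: $T^*$ has maximum degree $3$, you charge each vertex to one incident face, and you walk down to the vertex $f$. That part is sound, and so is the case where some part at $f$ has weight in $[\nicefrac14,\nicefrac34]$. The paper only cites this lemma from Klein and Mozes without proof, so there is nothing further to compare against.

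The gap is exactly at the step you flagged. You claim that all three vertices of $f$ lie on $C_e$ for every nontree edge $e$ of $f$. This is false as soon as $f$ has two or three nontree edges. For $e=ab$, the third vertex $c$ lies on $C_{ab}$ only if $c$ is on the tree path from $a$ to $b$. A concrete counterexample is $K_4$ with $m$ inside the triangle $abc$ and $T$ the star at $m$. The outer face $f=abc$ has all three edges outside $T$, and $C_{ab}=a\,m\,b$ does not contain $c$. So the weight at $f$ is not ``effectively zero''. Your final check, that the two remaining parts total at most $\nicefrac34$, is also not the inequality you actually need.

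The repair is short and needs no refined charging, so you can drop your first alternative.
\begin{itemize}
\item If you cut the dual of $e=ab$, then $a$ and $b$ are endpoints of $e$ and lie on $C_e$. Only $c$ can be off the cycle, and if it is, it lies strictly on the side containing $f$.
\item In the remaining case every part at $f$ weighs less than $\nicefrac14$, and $f$ has at most two other neighbours in $T^*$.
\item The strict side containing $f$ therefore receives at most $w(c)\le\nicefrac14$ from the corners of $f$, plus less than $\nicefrac14$ from each of at most two other parts. This totals at most $\nicefrac34$. The side of the cut part has weight less than $\nicefrac14$.
\end{itemize}
The point is that the other parts must total less than $\nicefrac12$, not merely at most $\nicefrac34$, to leave room for $c$. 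With this correction the argument goes through, including the linear running time.
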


We use this lemma in the following way.

\begin{restatable}{corollary}{cycle}
    \label{lem:cycle}
    Given a~$d$-outerplanar triangulated graph~$G$ and a proper weight assignment, one can compute in~$\Oh(n^2)$ time a \nicefrac{3}{4}-balanced cycle separator of size at most~$2d+1$ for~$G$.
\end{restatable}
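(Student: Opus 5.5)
We apply \cref{lem:cyclehelp} to a spanning tree $T$ whose root-to-vertex paths are short. Since $G$ is $d$-outerplanar, every vertex has a short path to the outer face. The plan is to build a BFS tree from a vertex on the outer face, so that every vertex lies at depth at most roughly $d$. A fundamental cycle of a nontree edge $\{u,v\}$ consists of the two tree paths from $u$ and $v$ up to their lowest common ancestor, plus the edge itself. Its length is therefore at most $2\cdot\mathrm{depth}+1$, and the whole task reduces to proving that the depth is at most $d$.

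First we fix a $d$-outerplanar embedding of $G$ and view its layers: $L_1$ is the set of vertices on the outer face, $L_2$ the vertices on the outer face after deleting $L_1$, and so on up to $L_d$. Because $G$ is triangulated, each vertex of $L_i$ with $i\ge 2$ has a neighbor in $L_{i-1}$. To see this, take $v\in L_i$. Then $v$ lies on the outer face of $G-(L_1\cup\dots\cup L_{i-1})$ but not on the outer face of $G-(L_1\cup\dots\cup L_{i-2})$. Hence some face of the graph with $L_{i-1}$ removed that touches $v$ was formed by deleting vertices of $L_{i-1}$. In a triangulation, every face of the graph obtained by deleting a vertex set $D$ that is incident to $v$ but is not an original face must arise from a deleted vertex adjacent to $v$. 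So $v$ has a neighbor in $L_{i-1}$. This is the step I expect to need the most care, since it has to be argued from the embedding and not just asserted.

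To handle the outer layer we add an auxiliary root. The vertices of $L_1$ are pairwise connected along the outer boundary, but going around it could cost length $|L_1|$, so instead we place a new vertex $r$ in the outer face and join it to all of $L_1$. The result is still planar, and we re-triangulate the outer region only by these edges together with the existing boundary; since $G$ is triangulated, its outer face is a triangle, so in fact $|L_1|=3$. In that case it is simpler to root a BFS at one outer vertex: every vertex of $L_1$ is within distance $1$, and by induction every vertex of $L_i$ is within distance $i\le d$. The BFS tree $T$ thus has depth at most $d$. It is a spanning tree because a triangulated graph is connected.

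Finally, we take the given proper weight assignment and the tree $T$, and call \cref{lem:cyclehelp}, which returns in linear time a nontree edge $e=\{u,v\}$ whose fundamental cycle is \nicefrac{3}{4}-balanced. That cycle is the edge $e$ together with the tree paths from $u$ and $v$ to their lowest common ancestor, so it has at most $d+d+1=2d+1$ vertices. The running time is dominated by the other steps and stays within $\Oh(n^2)$: BFS is linear, the call to \cref{lem:cyclehelp} is linear, and computing an embedding together with its layers takes at most quadratic time.
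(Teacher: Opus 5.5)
Your proposal is correct and follows essentially the paper's route. The paper builds its spanning tree from two edges of the outer triangle and attaches each layer-$i$ vertex to a neighbor in layer $i-1$; your BFS tree from an outer vertex obeys the same depth bound $d$. It then applies \cref{lem:cyclehelp} and counts at most two cycle vertices per layer plus three on the outer layer, which matches your $2d+1$.

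The paper only asserts that a neighbor in the previous layer exists. If you write out your justification, add that $v\in L_i$ has no neighbor in $L_1\cup\dots\cup L_{i-2}$: otherwise $v$ would already lie on the outer face once that neighbor's layer is deleted. Without this, the faces around $v$ just before deleting $L_{i-1}$ need not be original triangles of $G$, and your triangulation argument does not apply.
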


\begin{proof}
    First, compute in~$\Oh(n^2)$ time a~$d$-outerplanar embedding of~$G$ \cite{Kam07}.
    Let~$L(v)$ be the layer of~$v$, that is, the vertices on the outer face have~$L(v) = 0$ and when removing all vertices with~${L(v) < i}$ for some~$i \in [d-1]$, the vertices on the new outer face receive~$L(v) = i$.
    Since~$G$ is triangulated, the outer face is a triangle.
    Build a spanning tree~$T$ for~$G$ in~$\Oh(n)$ time as follows.
    Start with any two edges on the outer face and then iteratively for increasing~$i \in [d-1]$ and each vertex~$v$ with~$L(v) = i$ add an edge between~$v$ and a vertex~$u$ with~$L(u) = i-1$ to~$T$.
    Note that such an edge exists as~$G$ is triangulated.
    Now apply \cref{lem:cyclehelp} to compute in~$\Oh(n)$ time an edge~$e$ not contained in~$T$ such that the fundamental cycle of~$e$ with respect to~$T$ is a \nicefrac{3}{4}-balanced cycle separator for~$G$.
    Note that the fundamental cycle of~$e$ can be computed in~$\Oh(n)$ time and it contains at most two vertices of each layer (except for the outermost, where it can contain three) and therefore at most~$2d+1$ vertices in total.
\end{proof}

Next, we use a result that bounds the treewidth of~$k$-outerplanar graphs.

\begin{lemma}[\cite{Bodlaender98}]
    \label{lem:bodlaender}
    Each~$k$-outerplanar graph has treewidth at most~$3k-1$.
\end{lemma}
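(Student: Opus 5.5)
The plan follows Bodlaender's original argument: reduce to a graph of maximum degree three, build a tree decomposition along a layered spanning tree, and transfer the bound back through minors. Throughout, fix a $k$-outerplanar embedding and let $L(v)\in\{1,\dots,k\}$ denote the layer of $v$, where layer $1$ is the outer face.

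\textbf{Step 1: degree reduction.} I would transform $G$ into a $k$-outerplanar graph $G'$ of maximum degree at most three that contains $G$ as a minor. To do this, replace every vertex $v$ of degree $\delta>3$ by a path $v_1,\dots,v_{\delta-2}$ drawn inside a small disc around $v$, attaching the edges of $v$ in their cyclic order. The path has to be placed so that every $v_i$ stays on the same face boundary as $v$ when the layers are peeled, so that $L(v_i)=L(v)$. The edges can be distributed so that each new vertex has degree at most three. Contracting the paths recovers $G$. Since treewidth is minor-monotone, it then suffices to prove $\mathrm{tw}(G')\le 3k-1$. The point needing care is the peeling claim: I would check that the new vertices are removed in exactly the round in which $v$ would have been removed. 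This should follow by choosing the path to lie along the face that witnesses $v$ being on the outer face of its round.

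\textbf{Step 2: a layered spanning forest.} In $G'$, build a spanning forest $T$ so that every tree edge either stays within a layer or goes from layer $i$ to layer $i-1$. For each non-tree edge $e$, take its fundamental cycle $C_e$. Say that a tree edge or tree vertex \emph{remembers} $e$ if it lies inside $C_e$ in the appropriate sense, namely on the fundamental path of $e$. The key intermediate claim, which is the technical heart of Bodlaender's proof, is that every tree edge is remembered by at most $2k$ non-tree edges. Every tree vertex is remembered by at most $3k$, the vertex remember number. The idea is that the non-tree edges whose fundamental cycles cross a fixed tree edge are nested. Each additional nesting level forces the cycle to descend one layer, and each layer contributes at most two such crossings.

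\textbf{Step 3: the tree decomposition.} Use $T$ itself as the decomposition tree, adding dummy nodes to connect the components of the forest. The bag of a vertex $v$ consists of $v$ together with one endpoint of every non-tree edge remembered by $v$. Checking the three axioms is routine. Each non-tree edge is covered at the end of its fundamental path. Connectivity of occurrences holds because the remembering tree vertices of $e$ form a path in $T$. With degree at most three and the remember bounds from Step 2, every bag has size at most $3k$, so the width is at most $3k-1$.

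I expect the main obstacle to be proving the remember-number bound of Step 2 with exactly the constant $2k$ (and hence $3k$). This requires a careful planarity argument that nested fundamental cycles through a single tree edge must each reach strictly deeper layers. I would first try an induction on $k$: delete the outer layer and show that at most two of the remembered non-tree edges of any tree edge lie in the outer layer.
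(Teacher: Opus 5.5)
The paper gives no proof of this lemma and only cites Bodlaender. Your outline follows Bodlaender's argument: reduce to maximum degree three, find a spanning forest with small edge and vertex remember numbers, and build the decomposition along it. However, Step~2 has a genuine gap, and Step~3 has two defects. You constrain the forest only by requiring that tree edges stay in a layer or go one layer inward, and then assert the remember bounds. For such forests the bounds are false. Take the ladder $P_2\times P_n$ with vertices $a_i,b_i$ and edges $a_ia_{i+1}$, $b_ib_{i+1}$, $a_ib_i$. It is outerplanar ($k=1$) and has maximum degree three. The Hamiltonian path $a_1a_2\cdots a_nb_nb_{n-1}\cdots b_1$ is a spanning tree lying entirely in layer~$1$. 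Yet the tree path of every rung $a_ib_i$ with $i<n$ uses $a_nb_n$, so this tree edge is remembered $n-1$ times. These fundamental cycles are nested but never leave layer~$1$. So the heuristic that each nesting level descends a layer is wrong, and your proposed inductive step (at most two remembered non-tree edges in the outer layer) fails as well.

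The missing idea is that the forest must be constructed on purpose, for example by peeling outer-face \emph{edges}. Let $G'$ be $G$ without the edges of its outer face. Every vertex of degree at most three on the outer face has at least two edges there, so it keeps at most one edge in $G'$. Hence $G'$ is $(k-1)$-outerplanar, or a matching if $k=1$. Take a forest $T'$ for $G'$ recursively and add outer-face edges greedily to obtain $T$. Let $e_1,\dots,e_m$ be the rejected outer-face edges. Then $T\cup\{e_1,\dots,e_m\}$ has exactly $m$ bounded faces, and each $e_i$ borders the outer face. So every bounded face carries exactly one $e_i$, and the fundamental cycle of $e_i$ lies on that face's boundary. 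Consequently, in each round a tree edge gains at most two remembered edges (one per side). A vertex gains at most three, and at most two in the round where it lies on the outer face. Over all rounds this gives $2k$ and $3k-1$.

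Step~3 does not work as written either. With bags $\{v\}$ plus representatives placed only at the nodes of $T$, a tree edge $uv$ lies in no bag in general. You need to subdivide each tree edge $f=uv$ by a node whose bag is $\{u,v\}$ plus the representatives of the non-tree edges remembered by $f$; this bag has size up to $2k+2$. Also, $v$ together with up to $3k$ representatives is $3k+1$ vertices, which gives width $3k$, not $3k-1$. To reach $3k-1$ you need two things:
\begin{itemize}
\item the sharper vertex bound $3k-1$ coming from the outer-face round;
\item $2k+2\le 3k$ for the edge bags, which holds only for $k\ge 2$.
\end{itemize}
For $k=1$ the edge bags can genuinely have size $4$; the ladder shows edge remember number $2$ is unavoidable. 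So the outerplanar case needs a separate argument, namely that outerplanar graphs have treewidth at most $2$. Step~1 is fine as you describe it, provided the path starts at a wedge of $v$ that lies in the outer face of the round in which $v$ is peeled.
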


We use this in combination with a linear-time algorithm to compute an approximate tree decomposition in the case where the treewidth is a constant and a result to transform a tree decomposition into a binary tree decomposition of logarithmic depth without increasing the width too much.

\begin{theorem}[\cite{Kor21}]
\label{thm:tuukka}
There is an algorithm that, given an $n$-vertex graph $G$ and an integer $k$, in~$2^{\Oh(k)}n$~time either outputs a tree decomposition of $G$ of width at most $2k + 1$ or determines that the treewidth of~$G$ is larger than $k$.
\end{theorem}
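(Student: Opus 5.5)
This theorem is imported from Korhonen~\cite{Kor21} and used as a black box, so nothing new needs to be proved here. Still, here is how I would reconstruct it.

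The approach is the classical compression scheme of Bodlaender (and of Reed's separator-based approximation), arranged so that each level costs only linear time.

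\textbf{Step 1: reduction to compression.} I would first reduce the problem to a \emph{compression} task. Suppose we are given $G$ together with a tree decomposition of width $\Oh(k)$ (say at most $4k+3$). The task is then to output, in $2^{\Oh(k)}n$ time, either a decomposition of width at most $2k+1$ or a certificate that $\mathrm{tw}(G)>k$.

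\textbf{Step 2: the recursion that supplies the input decomposition.} To get such an input decomposition, I would use the standard recursion.
\begin{itemize}
\item Either $G$ has a large matching, of size $\Omega(n/k^{\Oh(1)})$, or many vertices of degree $\Oh(k)$ with a large set of ``I-simplicial'' vertices. Otherwise the treewidth exceeds $k$, since treewidth-$k$ graphs have at most $kn$ edges.
\item Contract the matching, or remove the simplicial vertices, to obtain a graph with at most $(1-\varepsilon)n$ vertices.
\item Recurse on the smaller graph.
\item Uncontract the result, which at most doubles the width.
\item Compress the doubled decomposition back using Step 1.
\end{itemize}
The resulting running-time recurrence $T(n)=T((1-\varepsilon)n)+2^{\Oh(k)}n$ solves to $2^{\Oh(k)}n$.

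\textbf{Step 3: the compression procedure.} Given a decomposition of width $\Oh(k)$, I would build the new one top-down.
\begin{itemize}
\item Maintain a current bag-boundary set $W$ with $|W|\le 2k+2$.
\item Find a balanced separator of $W$ of size at most $k+1$ by dynamic programming over the given decomposition. If no such separator exists, this certifies that $\mathrm{tw}(G)>k$.
\item Put $W$ together with the separator into a bag, and recurse into each component.
\end{itemize}

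\textbf{Main obstacle.} The hard step is achieving linear rather than $n\log n$ total time for Step 3. Each separator search must cost time proportional to the size of the current piece only, and the recursion must not repeatedly rescan shared parts of the graph. I would handle this by:
\begin{itemize}
\item keeping the input decomposition in a dynamic data structure that supports separator queries in $2^{\Oh(k)}\log n$ time; and
\item charging the query costs amortizedly against vertices that get finalized into bags.
\end{itemize}
This is exactly the technical core of Korhonen's paper~\cite{Kor21}, so in the present paper I would simply cite it.
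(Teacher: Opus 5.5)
Citing the theorem as a black box is exactly what the paper does: it gives no proof and uses the result only inside $\base$, where $d$ is a constant. At that level your proposal matches the paper. Your reconstruction, however, does not prove the stated bound, because its core step cannot give width $2k+1$.

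\textbf{The gap in Step~3.} The bag you create is $W\cup S$ with $|W|\le 2k+2$ and $|S|\le k+1$. It may therefore contain $3k+3$ vertices, which gives width $3k+2$, not $2k+1$. Tightening the invariant does not help. The boundary passed to a component $C$ is $(W\cap C)\cup S$, which has up to $|W|/2+k+1$ vertices. This stays at most $|W|$ only when $|W|\ge 2k+2$, and nothing forces $S\subseteq W$. This top-down balanced-separator scheme underlies the factor-$3$ result of Bodlaender, Drange, Dregi, Fomin, Lokshtanov, and Pilipczuk (width $3k+4$, in $2^{\Oh(k)}n\log n$ time). The separator-query data structure is their tool as well. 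Your amortization also does not give linear time. The recursion makes $\Omega(n/k)$ separator queries, so with query time $2^{\Oh(k)}\log n$ the total is $2^{\Oh(k)}n\log n$ however the cost is charged.

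\textbf{The missing idea.} Korhonen does not build a decomposition top-down. He starts from an existing decomposition of width $\Oh(k)$ and improves it by local operations adapted from a proof of Bellenbaum and Diestel. (The linear-time $5k+4$ algorithm of the authors above already supplies such a decomposition, so your Step~2 recursion is not needed.)
\begin{itemize}
\item While some bag $W$ has $|W|\ge 2k+3$, he looks for a partition $(C_1,C_2,C_3,X)$ of $V(G)$ with no edges between distinct $C_i$, $|X|\le k+1$, and $|W\cap C_i|\le |W|/2$ for each $i$. If no such partition exists, then $\mathrm{tw}(G)>k$.
\item Otherwise the whole decomposition is replaced by three modified copies, one per $C_i$. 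In copy $i$, the bag derived from $W$ is contained in $(W\cap C_i)\cup X$, so it has at most $|W|/2+k+1<|W|$ vertices. That strict inequality holds precisely because $|W|>2k+2$, and this is where the factor~$2$ comes from.
\item The split is chosen suitably minimal. Menger's theorem then provides disjoint paths, which are used to rebuild the remaining bags so that a potential based on bag sizes strictly decreases.
\item Linear time comes from finding each split by dynamic programming over only the part of the decomposition it touches, and amortizing that work against the potential. It does not come from a separator-query structure.
\end{itemize}
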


\begin{lemma}[\cite{BH98}]
\label{lem:BH}
There is an algorithm that, given an $n$-vertex graph $G$ and a tree decomposition of width~$k$, computes a binary tree decomposition of~$G$ of width at most~$3k+2$ and depth~$\Oh(\log(n))$ in~$2^{\Oh(k)}n$~time.
\end{lemma}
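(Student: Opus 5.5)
Lemma~\ref{lem:BH} is a cited result of Bodlaender and Hagerup, so the plan is to reconstruct the standard argument rather than invent a new one. The approach is to convert the given decomposition into an ordinary tree, then take a logarithmic-depth hierarchical clustering of that tree (a balanced recursive decomposition via centroids or tree contraction), and finally build from it a new decomposition whose bags are bounded unions of old bags.

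\textbf{Step 1: normalise the input decomposition.} Given a tree decomposition $\mathcal{T}$ of width $k$, first make it binary by replacing each high-degree node with a path of copies of its bag. Then remove redundant nodes so that $\mathcal{T}$ has $\Oh(n)$ nodes. Both operations take linear time and keep the width at $k$.

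\textbf{Step 2: recursive clustering of $\mathcal{T}$.} Recursively decompose $\mathcal{T}$ into connected subtrees (clusters) with the following invariant: each cluster is attached to the rest of $\mathcal{T}$ through at most $2$ boundary tree-edges. To split a cluster, pick a node that balances it. If the cluster has at most one boundary edge, a centroid suffices. If it has two boundary edges, first split at the median node of the path between the two boundary edges, and then at centroids of the resulting pieces. In this way every constant number of levels shrinks the cluster size by a constant factor, so the hierarchy has depth $\Oh(\log n)$. Over all levels the hierarchy can be computed in $\Oh(n)$ time, using Miller--Reif style tree contraction (rake/compress); a direct centroid recursion costs $\Oh(n\log n)$.

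\textbf{Step 3: build the new decomposition.} The hierarchy is itself a tree of depth $\Oh(\log n)$, which we make binary by subdividing nodes that have constantly many children. Assign to each hierarchy node the union of the old bags at the chosen splitting node(s) together with the old bags at the $\le 2$ boundary nodes of its cluster. This gives at most $3$ old bags per new node, hence at most $3(k+1)$ vertices, i.e.\ width at most $3k+2$.

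To verify this is a tree decomposition, two of the axioms are immediate: every vertex and every edge lies in some original bag, and that bag is the splitting node of some cluster. For the connectivity axiom, suppose vertex $v$ appears in the new bag of a cluster. Then $v$'s original subtree meets that cluster's splitting or boundary nodes. Since $v$'s occurrences form a connected subtree of $\mathcal{T}$ and clusters only interact through boundary bags, $v$ occurs in the bag of every ancestor cluster down to the root of its occurrence set in the hierarchy.

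\textbf{Main obstacle.} The hard part is Step 2: keeping the boundary of each cluster at two edges while still achieving logarithmic depth, and doing so in linear rather than $n\log n$ time. For this I would follow Bodlaender--Hagerup's use of parallel tree contraction, simulated sequentially, where each compress step merges two adjacent bags. The $2^{\Oh(k)}$ factor in the running time is generous, since the bag manipulations cost only $\mathrm{poly}(k)$ per node.
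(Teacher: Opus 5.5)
The paper does not prove this lemma; it only cites Bodlaender and Hagerup, so your reconstruction has to stand on its own. Your architecture is the standard one. You build a logarithmic-depth hierarchy of connected clusters of the old tree, each attached to the rest through at most two tree edges, and make every new bag a union of at most three old bags, which gives width $3(k+1)-1=3k+2$. Your Step~3 verification can also be made rigorous if you take boundary nodes to be the endpoints inside the cluster. An occurrence subtree that leaves a cluster must use one of its boundary edges, and that edge is either the edge to the parent's splitting node or a boundary edge of the parent.

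The genuine gap is the balancing rule in Step~2, which is where the content of the lemma sits. Splitting a two-boundary cluster at the median of its boundary path need not shrink it. Suppose a subtree with $N\gg L$ nodes hangs off the second node of a boundary path of length $L$. Then the piece containing $a_1$ still has about $N+L/2$ nodes and again has two boundary edges. Iterating the median rule therefore burns $\Theta(\log L)$ levels without progress. Splitting that piece at its centroid instead, as your phrase ``then at centroids of the resulting pieces'' suggests, also fails. In this very example the centroid is the root of the hanging subtree, which lies off the boundary path. Removing it leaves a component with three boundary edges, so a bag would need four old bags and the invariant is lost. Hence your claim that a constant number of levels shrinks every cluster by a constant factor is false for the rule as stated.

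The missing idea is to project the centroid onto the boundary path. For a cluster $K$ with boundary nodes $a_1,a_2$ and centroid $c$, split at the node $s$ of the $a_1$--$a_2$ path nearest to $c$ (if $c=s$ this is just a centroid split). If $c\neq s$, the components of $K-s$ meeting the path form, together with $s$, a connected set avoiding $c$. So each of them has at most $|K|/2$ nodes and at most two boundary edges. The component containing $c$ has only its single edge to $s$ as a boundary edge, so it can be split at its own centroid on the next level. Sizes therefore halve every two levels.

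Even with this fix, the top-down recursion spends linear time per level, i.e.\ $\Oh(n\log n)$ in total, which misses the stated $2^{\Oh(k)}n$ bound. Linear time comes from the bottom-up rake/compress tree contraction on which the cited construction rests. There the record of contraction operations is itself the new tree, and a constant fraction of the nodes disappears in each round. You mention this, but it does not compute your centroid hierarchy. You would still have to specify which at most three old bags form the new bag of each rake and each compress, and redo the connectivity check for that tree.

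Also, in Step~1 you should shrink the decomposition to $\Oh(n)$ nodes before binarizing. Removing nodes whose bag is contained in a neighbor's afterwards would delete the path-of-copies gadgets and undo the binarization.
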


The main ingredient of our subexponential parameterized algorithm is a lemma due to Nederlof~\cite{Ned20}.
He mentions the lemma without formal statement or proof and he does not analyze the constants involved in the running time.
However, the proof follows directly from the proof of his Lemma 3.7 and we will give a proof sketch for analyzing the running time.

\begin{lemma}
    \label{lem:unknownbalance}
    There is an~$2^{\Oh(\log^2(k))} n^{1+o(1)}$-time algorithm that
    \begin{itemize}
        \item given a triangulated graph~$G=(V,E)$ and a cycle separator~$C$ on at most~$k$ vertices that is~$q$-balanced for a (possibly unknown) proper weight assignment~$w$,
        \item outputs a set~$\mathcal{C}$ of at most~$2^{44 \log^2(k)}$ cycle separators each partitioned into heavy, light, and discarded vertices such that for each~$C \in \mathcal{C}$, $(C,1)$ is measured for~$k$ and 1 and for each~$P \subseteq V$ with~$|P| \leq k$, at least one~$C \in \mathcal{C}$  is~$\max(\nicefrac{3}{4},q)$-balanced for~$w$, $P\cap C_{\di} = \emptyset$, and \[|P \cap C_{\li}| \leq \frac{11 |P| \log(k)}{2 \sqrt{k}}.\]
    \end{itemize}
\end{lemma}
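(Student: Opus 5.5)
The plan is to follow Nederlof's proof of his Lemma~3.7 and track the constants and the polynomial factor. The construction rests on the treewidth-pattern-covering idea of \cite{FominLMPPS22}. Given $C$ with $|C|\le k$, we enumerate ways to ``refine'' $C$ into a new cycle whose vertices are classified by how much of an (unknown) pattern can sit near them.

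Concretely, for each vertex $v$ of $C$ we guess a distance threshold from a geometrically spaced set of $\Oh(\log k)$ values, namely powers of two up to $k$. Vertices whose guessed neighbourhood is sparse in the pattern become heavy; vertices in dense regions become light; vertices that the pattern provably avoids become discarded. The light classification is where the $\log(k)/\sqrt{k}$ density bound comes from. Since each of the $\Oh(\log k)$ scales contributes $\Oh(\log k)$ bits of guessing, the number of guesses is $2^{\Oh(\log^2 k)}$. We fix the constant $44$ by counting carefully: there are $\log k$ scales, and at each scale we guess $\Oh(\log k)$ segment endpoints along $C$.

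The key steps, in order, are as follows.
\begin{enumerate}
\item Split $C$ into $\sqrt{k}$ consecutive segments of length $\sqrt{k}$. For each scale $2^i$, guess which segments contain at least $2^i$ pattern vertices. A Markov-type argument shows that only $\Oh(|P|/2^i)$ segments qualify, so this costs $\binom{\sqrt k}{|P|/2^i}$ choices. We restrict $i$ to the range where this count is $2^{\Oh(\log k)}$, and segments above the top scale are marked heavy; there are at most $\sqrt{k}\log k$ of them, which gives $|C_{\he}|\le 40\sqrt{k}$ after the constant bookkeeping.
\item Reroute the cycle locally around the remaining segments, using triangulatedness: each vertex has a cyclically ordered neighbourhood, so detours exist. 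The rerouting must keep the cycle $\max(\nicefrac34,q)$-balanced. Since the detours only move vertices between the strict interior and exterior by an amount we can bound, balance is preserved. Discarded vertices are those on detours that the guess certifies contain no pattern vertices.
\item Verify the light bound $|P\cap C_{\li}|\le \frac{11|P|\log k}{2\sqrt k}$ by summing over scales: each of the $\log k$ scales contributes at most $\frac{|P|}{\sqrt k}\cdot\Oh(1)$ pattern vertices.
\item Check that $|C_{\li}|\le 60k\log k$, which holds because each rerouting has length $\Oh(\sqrt{k})$ per segment per scale.
\end{enumerate}

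For the running time, each candidate is built by BFS/rerouting in $n^{1+o(1)}$ time. The $n^{o(1)}$ factor comes from the near-linear flow computations of \cite{ChenKLPGS22}, which replace the dual-LP step in \cite{FominLMPPS22}. Multiplying by the number of guesses gives the claimed bound.

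The main obstacle is the balance-preservation argument in step~2 together with the exact constants. The weight $w$ is unknown, so we cannot check balance directly; we must argue that some guess leaves balance intact. My first attempt will follow Nederlof's argument: among the enumerated reroutings, one never moves weight across the cycle at all. This holds because the detours can be chosen to stay within faces adjacent to $C$, whose strict-side membership is fixed. If that fails, I will fall back on including the original $C$, with all vertices light, as one candidate for the regime $|P\cap C|$ small.
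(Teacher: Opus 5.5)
\textbf{There is a genuine gap.} You say you follow Nederlof's Lemma~3.7, but the construction you describe is a different one, and it does not deliver the lemma's guarantees.

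\begin{itemize}
\item \textbf{Counting.} Guessing one of $\Oh(\log k)$ thresholds for \emph{every} vertex of $C$ gives $(\log k)^{\Theta(|C|)}$ candidates, not $2^{\Oh(\log^2 k)}$. In Step~1, the count $\binom{\sqrt{k}}{|P|/2^i}$ is $2^{\Oh(\log k)}$ only when $|P|/2^i=\Oh(1)$. Your restriction therefore keeps only a constant number of coarse scales and discards the fine ones that Step~3 needs for the $\log(k)/\sqrt{k}$ density.
\item \textbf{Heavy bound.} $\sqrt{k}\log k$ heavy segments, or even that many heavy vertices, exceed the bound $|C_{\he}|\le 40\sqrt{k}$ required for $(C,1)$ to be measured, for large $k$.
\item \textbf{Discarded vertices.} Step~2 gives no reason why, for a fixed $P$, some candidate has its discarded vertices disjoint from $P$. ``The guess certifies it'' only means something if you exhibit, among few options, one that is provably $P$-free, and you never construct such a structure.
\item \textbf{Balance.} The argument is self-defeating. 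If the new cycle still contains every vertex of $C$, it contains all of $P\cap C$ and nothing is gained. Once it bypasses a vertex of $C$, that vertex's weight moves to a strict side. This can break $\max(\nicefrac{3}{4},q)$-balance when $C$ is tight, since there is no slack to absorb a ``bounded'' shift.
\item \textbf{Fallback.} Taking $C$ with all vertices light covers only those $P$ with $|P\cap C|$ already small. The lemma must supply a good cycle for every $P$ with $|P|\le k$---in particular for those with many vertices on $C$, which is exactly the case in which the main algorithm invokes it.
\end{itemize}

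The missing ingredient is the Menger-type algorithm of \cite{FominLMPPS22}. The paper does not reprove correctness but inherits it from Nederlof's proof. His procedure iteratively replaces one cycle separator by four, considering $\Oh(2^{44\log^2 k})$ cycles in total. For each cycle it runs that algorithm: build a linear-size auxiliary graph, compute a minimum-cost flow, decompose it into walks and shortcut them to paths, solve the dual LP, and partition the vertices according to the dual solution. Which vertices become heavy, light, or discarded is decided by this outcome, and local detours through faces around $C$ are no substitute for it.

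This is also the only place \cite{ChenKLPGS22} enters. Each candidate needs one min-cost flow and its dual, computable in $m^{1+o(1)}=n^{1+o(1)}$ time, with all remaining steps in $\Oh(kn\log n)$. Your running-time paragraph cites these flows although your construction never computes one, so it does not analyze what you actually build. To match the paper, take correctness from Nederlof's argument and carry out only this running-time accounting. A self-contained proof would have to reconstruct the flow/dual-LP step rather than the rerouting.
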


\begin{proof}[Analysis of Running time.]
    The algorithm by Nederlof iteratively replaces one cycle separator by~$4$.
    The total number of cycle separators considered is in~$\Oh(2^{44 \log^2(k)})$ and for each cycle separator, he applies a menger-type algorithm by Fomin, Lokshtanov, Marx, Pilipczuk, Pilipczuk, and Saurabh~\cite{FominLMPPS22}.
    Based on the outcome, only some vertices are added to either the heavy, light, or discarded vertices.
    The time needed for the latter is dominated by the menger-type algorithm, so it only remains to analyze this algorithm.
    This algorithm first computes in~$\Oh(n+m)$ time an auxiliary graph~$H$ whose number of vertices and edges is linear in the number of vertices and edges in the original graph.
    Since the input graph is planar, this takes~$\Oh(n)$ time and~$H$ also has~$\Oh(n)$ edges (but is not necessarily planar).
    It then computes a minimum-cost flow in~$H$, partitions the flow into walks, shortcuts these walks to paths if necessary, computes a solution to the dual LP of the minimum-cost flow problem, and partition the vertices based on the solution for the dual.
    Using recent improvements for almost-linear time algorithms for flows and their duals~\cite{ChenKLPGS22}, we can compute an optimal integral solution to the minimum-cost flow problem and its dual LP in~$\Oh(m^{1+o(1)})=\Oh(n^{1+o(1)})$ time.
    All other steps can be performed in~$\Oh(k n \log n)$ time, so overall, the algorithm takes~$2^{\Oh(\log^2(k))}n^{1+o(1)}$ time.
\end{proof}

Finally, we summarize a couple of simple inequalities that we will use repeatedly.

\begin{lemma}
    \label{lem:help}
    If~$d \geq 16$, then~$\log^2(d) \leq d$.
    If~$d \geq 54$, then~$\lfloor d - 3\sqrt{d}\rfloor \geq \frac{4d}{7}$.
    If~$d \geq 54$, then~$\sqrt{D-3\sqrt{D}}+1-\sqrt{D} \leq -\nicefrac{1}{10}$.
\end{lemma}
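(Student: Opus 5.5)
The three inequalities are independent elementary estimates, so I would prove them one at a time. In each case the plan is to check the threshold value directly and then show monotonicity, or else to reduce to an algebraic identity. (In the third inequality I read the variable~$D$ as~$d$.)

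\emph{First inequality.} Taking square roots, $\log^2(d) \leq d$ is equivalent to $\log(d) \leq \sqrt{d}$. Substituting $t = \sqrt{d}$ (so $t \geq 4$), this becomes $2^{t} \geq t^2$. At $t = 4$ we have equality, $16 = 16$. For $t \geq 4$ the left side grows faster than the right, since
\[\frac{d}{dt}\ln\bigl(2^t\bigr) = \ln 2 \;\geq\; \frac{2}{t} = \frac{d}{dt}\ln\bigl(t^2\bigr)\]
holds whenever $t \geq 2/\ln 2 \approx 2.89$. Hence $2^t \geq t^2$ for all $t \geq 4$, which gives the claim.

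\emph{Second inequality.} I would use $\lfloor d - 3\sqrt{d}\rfloor \geq d - 3\sqrt{d} - 1$. It then suffices to show that
\[g(d) = \tfrac{3d}{7} - 3\sqrt{d} - 1 \geq 0 \quad\text{for } d \geq 54.\]
At $d = 54$ we have $g(54) = 162/7 - 3\sqrt{54} - 1 \approx 23.143 - 22.045 - 1 > 0$. Moreover $g'(d) = 3/7 - 3/(2\sqrt{d})$, which is positive for $\sqrt{d} > 3.5$, so $g$ is increasing on $[54,\infty)$. The only delicate point in the whole lemma is here: the margin at $d = 54$ is about $0.1$, so this numerical check must be carried out precisely enough, for instance via $\sqrt{54} < 7.35$.

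\emph{Third inequality.} Rationalizing gives
\[\sqrt{d - 3\sqrt{d}} - \sqrt{d} = \frac{-3\sqrt{d}}{\sqrt{d - 3\sqrt{d}} + \sqrt{d}}.\]
The denominator is at most $2\sqrt{d}$, so this quantity is at most $-3/2$. Therefore
\[\sqrt{d - 3\sqrt{d}} + 1 - \sqrt{d} \leq -\tfrac{1}{2} \leq -\tfrac{1}{10}\]
whenever $d - 3\sqrt{d} \geq 0$, that is, for all $d \geq 9$, and in particular for $d \geq 54$.
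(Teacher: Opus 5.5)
Your proof is correct and differs from the paper only in small ways.

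For the second inequality you use the same reduction as the paper: drop the floor at a cost of $1$ and show $\frac{3d}{7} \ge 3\sqrt{d}+1$. The paper finishes without calculus, rewriting this as $\sqrt{d} \ge 7 + \frac{7}{3\sqrt{d}}$ and using $\frac{7}{3\sqrt{d}} \le \frac{1}{3}$ for $d \ge 49$ together with $7+\frac{1}{3} \le \sqrt{54}$. It therefore faces the same tight margin at $d=54$ that you point out.

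For the third inequality the paper rearranges to $\sqrt{d}-\frac{11}{10} \ge \sqrt{d-3\sqrt{d}}$, checks that both sides are positive, and squares. Your rationalization is cleaner: it needs no positivity check, holds for all $d \ge 9$, and gives the stronger bound $-\frac{1}{2}$.

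The paper simply calls the first inequality trivial. Your reduction to $2^t \ge t^2$ for $t \ge 4$ supplies the justification it omits.
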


\begin{proof}
    Since the first statement is trivial, we will focus on the second and third.
    To this end, observe that~$\lfloor d - 3\sqrt{d} \rfloor \geq d - 3\sqrt{d} - 1$.
    As both sides of the inequality are positive for any~$d \geq 11$, we can square both sides and only show~$d-3\sqrt{d}-1 \geq \frac{4d}{7}$.
    Note that
    \begin{align*}
        &d - 3\sqrt{d} - 1 \geq (\nicefrac{4}{7})d\\
        \Leftrightarrow\ &(\nicefrac{3}{7})d \geq 3 \sqrt{d} + 1\\
        \Leftrightarrow\ &\sqrt{d} \geq 7 + \nicefrac{7}{(3\sqrt{d})},
    \end{align*}
    $\nicefrac{7}{(3\sqrt{d})} \leq \nicefrac{1}{3}$ for all~$d \geq 49$, and~$7+\nicefrac{1}{3} \leq \sqrt{54}$.

    Note that the third inequality is equivalent to~$\sqrt{D} - \nicefrac{11}{10} \geq \sqrt{D-3\sqrt{D}}$.
    Since both sides are positive for any~$d \geq 54$ (as~$\sqrt{D} - \nicefrac{11}{10} \geq \sqrt{54} - 2 > 5$ and~$\sqrt{D - 3\sqrt{D}} \geq \sqrt{54 - 3\sqrt{54}} > 5$), we can square both sides and only show~$(\sqrt{D} - \nicefrac{11}{10})^2 \geq D - 3\sqrt{D}$.
    This holds as
    \[(\sqrt{D} - \nicefrac{11}{10})^2 = D - \frac{22}{10}\sqrt{D} + \frac{121}{100} \geq D - \frac{30}{10} \sqrt{D} + 0 = D - 3\sqrt{D}.\qedhere\]
\end{proof}

\section{The Framework}
\label{sec:framework}
In this section, we show our main result, \cref{thm:main-fpt}.
It describes a general algorithmic framework for solving request variants on planar graphs in subexponential parameterized time.
The framework can handle non-bidimensional weighted problems and does not require the solution to consist of a small number of connected components.
It can also handle directed graphs, but in order to keep the presentation as concise as possible, we will focus on undirected graphs here and mention that just considering the underlying undirected graph allows to extend all results to directed graphs as well.
We next state our main theorem, but defer its proof to the end of this section.
Note that in the following we consider arbitrary separators, not necessarily cycle separators.
We still use the terms~$C_{\inte}$ and~$C_{\ex}$ to refer to the graphs induced by either side of the separation.
\begin{restatable}{theorem}{mainalgorithm}
    \label{thm:main-fpt}
    Let \Q{} be a computable request variant, $\rho \geq 1$ be a constant, and~$\gamma$ be a non-decreasing computable function such that
    \begin{itemize}
        \item[R1] \Q{} has constant locality~$\alpha_{\pi}$, that is, for any instance~${I=(d,G,B,x,S_{\pi})}$ of \Q, any request~${r \in R_{\pi}(d,G,B,\peb,S_{\pi})}$, and any BFS layering~$V_1,V_2,V_3,\ldots,V_i$ of~$G$ with~$i > \alpha_{\pi} d$, the following holds.
        Let~$V^j$ be the union of all sets~$V_\ell$ with~$\ell \equiv_{\alpha_{\pi}d+1} j$ for each~$j \in [\alpha_{\pi}d+1]$ and let~$I'_j = (d,G'=G[V \setminus V^j],B[G'],x,S_{\pi}[G'])$ be the instance where all vertices from~$V_j$ are removed.
        Then, the output for instance~$I$ and request~$r$ is equal to either the maximum or the minimum of the outputs for all~$I'_j$ and~$r$ containing all vertices in~$\peb$. \label{req:baker}
        \item[R2] $R_\pi(d,G,B,\peb,S_{\pi})$ can be computed in~$\Oh(\gamma(d) n^\rho)$ time. \label{req:r}
        \item[R3] Given a separator~$C$, $S_{\pi}[C_{\inte}]$ and~$S_{\pi}[C_{\ex}]$ can be computed in~$\Oh(\gamma(d) n^{\rho})$ time. \label{req:s}
        \item[R4] There is an algorithm that given a separator~$C=(C_{\he},C_{\li},C_{\di})$ such that~$(C,1)$ is measured for~$d$ and~$\alpha_{\pi}$, a pebble set~$\peb$ for $(B,d,\alpha_{\pi},x)$, and the solutions for instances
        \begin{align*}
            (d,C_{\inte},((B_{\he} \cup C_{\he})[C_{\inte}],(B_{\li} &\cup C_{\li})[C_{\inte}],(B_{\di} \cup C_{\di})[C_{\inte}]),x+1,S_{\pi}[C_{\inte}]) \text{ and}\\
            (d,C_{\ex},((B_{\he} \cup C_{\he})[C_{\ex}],(B_{\li} &\cup C_{\li})[C_{\ex}],(B_{\di} \cup C_{\di})[C_{\ex}]),x+1,S_{\pi}[C_{\ex}]),
        \end{align*}
        can compute the solutions for instance~$I$ for all requests~${r \in R_\pi(d,G,B,\peb,S_{\pi})}$ in~$\Oh(\gamma(d)n^\rho)$ time that do not intersect~$C_{\di}$ and intersect~$C_{\li}$ in at most~$20\alpha_{\pi}\sqrt{d}\log(d)$ vertices.
        \label{req:merge}
    \end{itemize}
    Then, \Q{} can be solved in~$2^{\tilde{\Oh}(\sqrt{d})} \gamma(d) n^{\max(2.49,\rho+\varepsilon)}$ time for any~$\varepsilon > 0$.
\end{restatable}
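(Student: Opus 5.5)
The plan is to design a recursive procedure, call it $\SQ(d,G,B,x,S_\pi)$, that follows the strategy from the introduction, and then to bound its correctness and running time separately.

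\textbf{Base cases.} If $d$ is at most a fixed constant ($d<54$, so that \cref{lem:help} applies), or $G$ has constant size, we use a base-case routine $\base$. It proceeds in four steps.
\begin{compactenum}
\item By \cref{lem:bodlaender}, $G$ is $\alpha_\pi d$-outerplanar and so has treewidth $\Oh(d)$.
\item We compute a tree decomposition of width $\Oh(d)$ with \cref{thm:tuukka}.
\item We make it binary and of depth $\Oh(\log n)$ with \cref{lem:BH}.
\item We run a dynamic program bottom-up. Each bag $X_u$ is treated as a separator between $Y_u$ and the rest, with $X_u$ added to the heavy portals, and the children are combined using the merge algorithm of R4.
\end{compactenum}
Since every bag has size $\Oh(d)$, all pebble sets are enumerable in $2^{\Oh(d)}$ time, which is a constant here. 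The logarithmic depth keeps the accumulated portal sets small, and each merge costs $\gamma(d)n^\rho$.

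\textbf{General step.} We first triangulate with \cref{lem:tri}, at a cost of $\Oh(n^2)$. If the portal set $B_{\he}\cup B_{\li}$ is too large, meaning $x$ is close to the almost-measured bound, we take the proper weight assignment uniform on $B$. Otherwise we take the uniform assignment on $V$. In either case we compute a $\nicefrac34$-balanced cycle separator $C$ of size $\Oh(d)$ via \cref{lem:cycle}. We then feed $C$ into \cref{lem:unknownbalance} with $k=\Theta(d)$, which yields a family $\mathcal{C}$ of $2^{\Oh(\log^2 d)}$ measured separators.

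We then branch in two ways.
\begin{compactitem}
\item The solution meets $C$ in few vertices. We use $C$ itself, with its vertices as heavy portals.
\item The solution meets $C$ in more than $20\sqrt{d}\log d$ vertices. In this case $C$ is balanced for the unknown weight assignment that is uniform on the solution $P$. Hence some $C'\in\mathcal{C}$ is balanced for that assignment and has $|P\cap C'_{\li}|$ small. Each side then holds at most $\lfloor d-3\sqrt d\rfloor$ solution vertices, so we recurse with the reduced parameter $d'=\lfloor d-3\sqrt{d}\rfloor$ on each side. In this branch we also guess how $d$ splits between the sides.
\end{compactitem}
For each candidate separator we do the following: compute $S_\pi$ on both sides (R3), recurse on $C_{\inte}$ and $C_{\ex}$ with $x+1$ and the portal sets extended as in R4, and combine the results with R4, taking max/min over all candidates. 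Before any of this, R1 is applied at the top level: by \cref{lem:baker}, the input is reduced to $\alpha_\pi d$-outerplanar pieces, losing only a factor $\Oh(d)$. Correctness is an induction showing that the optimal solution survives in at least one branch. The guarantees of \cref{lem:unknownbalance} give this: no discarded vertices, and few light vertices, matching exactly the restriction in R4.

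\textbf{Running time.} The main obstacle is the recurrence. Every time a separator is balanced for the vertex count, $n$ shrinks by a factor $\nicefrac34$. When instead we balance the portals, $x$ drops back from about $200\log^7 d$ toward $40\log^7 d$ within $\Oh(\log d)$ levels. I would argue that every $\Oh(\log d)$ consecutive levels contain at least one vertex-balanced step, so the depth is $\Oh(\log d\log n)$ until $d$ changes.

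The branching factor per level is $2^{\Oh(\log^2 d)}$, so naively this gives $2^{\Oh(\log^3 d\log n)}$ branches, which is not polynomial. To avoid this, only the branch that reduces $d$ pays for the large family $\mathcal{C}$; the ordinary branch uses the single separator $C$. Reducing $d$ to $d-3\sqrt d$ can happen only $\Oh(\sqrt d)$ times before $d$ becomes constant, and each reduction costs $2^{\Oh(\log^2 d)}\cdot d$ guesses, for a total of $2^{\tilde{\Oh}(\sqrt d)}$. The inequality $\sqrt{D-3\sqrt D}+1-\sqrt D\le -\nicefrac1{10}$ from \cref{lem:help} is what makes the potential $\sqrt d$ drop by a constant per reduction. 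In this way it absorbs the additive overhead.

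Along ordinary branches the recursion tree is a binary tree over a $\nicefrac34$-balanced split. Summing $n_i^{\max(2,\rho)}$ over a level gives about $n^{\max(2,\rho)}$. We must also charge the $\Oh(\log d)$ portal-balancing steps, together with the overlap of the separator vertices that are shared between the two sides. I expect these to inflate the exponent slightly, to $n^{\max(2.49,\rho+\varepsilon)}$. The constants in the measured bounds ($40$, $60$, $200\log^7$) would be chosen so that this per-level blowup stays below $n^{0.49}$, respectively $n^{\varepsilon}$. This calibration is the step I expect to be the hardest and would do last, by explicit numeric choice of constants.

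Finally, the number of pebble sets per call is
\[\binom{|B_{\li}|}{\le 20x\alpha_\pi\sqrt d\log d}\,2^{|B_{\he}|}=2^{\tilde{\Oh}(\sqrt d)},\]
since $x$ is polylogarithmic. This gives the claimed bound.
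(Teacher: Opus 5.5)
The main gap is how you use locality. Applying R1 ``at the top level'' is vacuous, because an instance of \Q{} is by definition already $\alpha_\pi d$-outerplanar. R1 is needed \emph{inside} the recursion, every time the parameter drops to $d'=\lfloor d-3\sqrt{d}\rfloor$. The two sides of $C'$ are only $\alpha_\pi d$-outerplanar, so your recursive calls are not valid instances of \Q{} for $d'$. Since nothing ever lowers the outerplanarity below that of the original input (parameter $d_0$), \cref{lem:cycle} only guarantees separators with $\Oh(d_0)$ vertices. For $d\log d\ll d_0$ these exceed the $60\alpha_\pi d\log d$ that one increment of $x$ adds to the light budget. When your base case is reached, the graph may still have treewidth $\Theta(d_0)$, so your step~1 (treewidth $\Oh(d)$ via \cref{lem:bodlaender}) fails and bags carry $2^{\Theta(d_0)}$ pebble sets. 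The paper instead applies \cref{lem:baker} with $\alpha_\pi d'$ to the interior and the exterior of every $C'\in\mathcal{S}$, branches over all pairs $(A_{\inte},A_{\ex})$, and uses R1 to certify that the max/min over these branches is correct.

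A second concrete error is in the ``few'' branch: $C$ must go into the \emph{light} portals, not the heavy ones. $|C|$ can be $2\alpha_\pi d+3\gg 40\alpha_\pi\sqrt{d}$, so as heavy portals it breaks almost-measuredness (R4 no longer applies) and gives $2^{\Theta(d)}$ pebble sets, contradicting your own final count. As light portals, the few-case bound of $20\sqrt{d}\log d$ solution vertices on $C$ is exactly what the step $x\mapsto x+1$ of the light-pebble limit accommodates.

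The bookkeeping of $x$ is also missing. Every call you describe passes $x+1$, so $x$ never decreases, and nothing in your algorithm makes it ``drop back within $\Oh(\log d)$ levels''. The paper's $\clean$ provides the mechanism:
\begin{itemize}
\item it keeps the old portals $X$ apart from newly added separators $Y$;
\item it shrinks $X$ by a factor $\nicefrac{3}{4}$ per round, using separators from \cref{lem:unknownbalance} balanced for the known uniform weight on $X$, so no reduction of $d$ is needed there;
\item after fewer than $22\log d$ rounds, once $|X_{\he}\cup X_{\li}|\le 20\sqrt{d}$, it restarts $\SQ$ with $x=2y+1=\Oh(\log d)$.
\end{itemize}
In the $d'$-branch, $x+1$ is too small. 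You need $x\sqrt{d}\log d\le x'\sqrt{d'}\log d'$ (the paper takes $x'=2x+2$), so that every pebble set for $(B,d,\alpha_\pi,x)$ remains one for $(B,d',\alpha_\pi,x')$ and R4 finds all the entries it needs.

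The running-time proof then amortizes each $\clean$ phase, with its $2^{\Oh(\log^3 d)}$ calls, against the roughly $40\log^7 d$ vertex-balanced steps needed before $x$ reaches $40\log^7 d$ again. This is done via the potential $\Phi=\min(x/(40\log^7 d),1)$. Most of the exponent $2.49$ is $\log_{4/3}2\approx 2.41$, which comes from requiring $2(\nicefrac{3}{4})^{\delta}<1$ for the two ordinary subcalls on $\tfrac{3}{4}\nu$ vertices. The induction needs that contraction because the $d$-reducing subcalls keep all $\nu$ vertices. The amortized $\clean$ overhead of $2^{\Oh(1/\log^4 d)}$ per step only uses up the remaining slack.

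Finally, your base case cannot invoke R4 with parameter $d$. Bags with up to $18\alpha_\pi d$ vertices exceed the heavy bound $40\alpha_\pi\sqrt{d}$ of a measured separator. The paper runs the DP with $D=4d^2$ instead. It obtains $\Oh(n^\rho\log n)$ from the per-node cost $\gamma(D)|Y_u|^\rho$ together with $\sum|Y_u|=\Oh(n)$ over the nodes of each fixed depth, not from ``each merge costs $\gamma(d)n^\rho$''.
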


The rest of this section is structured as follows.
First, we present three algorithms that recursively call one another to solve the problem.
One of them ($\SQ$) is the main algorithm, one ($\clean$) is for the special case in which the portal set~$B$ becomes too large, and the final ($\base$) handles the base cases when the parameter becomes small enough.
We then show that a number of invariants are maintained throughout all algorithm calls.
In the following three subsections, we analyze each of the three algorithms.
There, we give an informal description of the respective algorithm and prove that it works correctly.
This will yield the correctness proof of \cref{thm:main-fpt} in the end.
In the next section, we will show how to apply this theorem to different problems.

\subsection{Algorithms and Invariants}
We start by formally stating the three aforementioned algorithms: $\SQ$ (\cref{alg:Q}), $\clean$ (\cref{alg:clean}), and $\base$ (\cref{alg:base}).
We present the algorithms for the case where~$\Q$ is a maximization problem, but the minimization version can be solved by essentially the same algorithms.
\begin{algorithm}[tbp!]
    \DontPrintSemicolon
    \KwIn{${(d,G=(V,E),x,B_{\he},B_{\li},B_{\di},S_{\pi})}$ as specified in \cref{lem:invariants}.}
    \If{$|V| < 2^{7\alpha_{\pi}}$\label{lin:smallnif}}{
        \Return brute force solution for instance~$(d,G,B,x,S_{\pi})$ \label{lin:bruteforcesmalln}\;
    }
    \If{$d < 2^{7\alpha_{\pi}}$\label{lin:basecaseif}}{
        \Return $\base(d,G,x,B_{\he},B_{\li},B_{\di},S_{\pi})$\label{lin:basecase}\;
    }
    \If{$x \geq 40\log^7(d)$\label{lin:cleanif}}{
        \Return $\clean(d,G,x,0,B_{\he},\emptyset,B_{\li},\emptyset,B_{\di},\emptyset,S_\pi)$ \label{lin:clean}\;
    }
    $\res \gets \overrightarrow{\infty}$\; \label{lin:sqini}
    Use \cref{lem:tri} to compute a set~$\Delta$ of edges such that~$G^{\Delta} = G+\Delta$ is triangulated and~$(\alpha_{\pi}d+1)$-outerplanar\;
    Use \cref{lem:cycle} to find a~$\nicefrac{3}{4}$-balanced cycle separator~$C$ with a vertex set~$V_C$ of at most~$2\alpha_{\pi}d+3$ vertices in~$G^{\Delta}$ for the weight functions that assigns~$\nicefrac{1}{|V|}$ to each~$v \in V$\; \label{lin:cycle}
    \tcc{case where $C$ contains at most~$20\sqrt{d} \log(d)$ vertices of the solution}
    Compute $S_{\pi}[C_{\inte}]$ and $S_{\pi}[C_{\inte}]$ using Requirement R3\;
    $\res_1 \gets \SQ(d,C_{\inte}-\Delta,x+1,B_{\he}[C_{\inte}],(B_{\li} \cup C)[C_{\inte}],B_{\di}[C_{\inte}],S_\pi[C_{\inte}])$\; \label{lin:rec1}
    $\res_2 \gets \SQ(d,C_{\ex}-\Delta,x+1,B_{\he}[C_{\ex}],(B_{\li} \cup C)[C_{\ex}],B_{\di}[C_{\ex}],S_\pi[C_{\ex}])$\; \label{lin:rec2}
    \ForEach{pebble set~$\peb$ for~$(B,d,\alpha_{\pi},x)$}{
                    Compute~$R_\pi(d,G,B,\peb,S_{\pi})$ using Requirement R2\;
                    Use Requirement R4 to compute a solution~$s(r)$ for each~$r \in R_\pi(d,G,B,\peb,S_{\pi})$ based on~$\res_1$ and~$\res_2$\; \label{lin:upSQ}
                    $\res(r) = \max(\res(r),s(r))$ \label{lin:upSQ2}\;
            }
    \tcc{case where $C$ contains more than~$20\sqrt{d} \log(d)$ vertices of the solution}
    Use \cref{lem:unknownbalance} with input~$G^{\Delta}$ and~$C$ to compute a set~$\mathcal{S}$ of separators\; \label{lin:balance}
    $d' = \lfloor d - 3\sqrt{d} \rfloor$\;
    \ForEach{$C' \in \mathcal{S}$ with partition~$(C'_{\he},C'_{\li},C'_{\di})$}{
        Use \cref{lem:baker} with~$\alpha_{\pi}d'$ on the interior of~$C'$ to compute sets $A_{\inte}^1,A_{\inte}^2,\ldots,A_{\inte}^{\alpha_{\pi}d'+1}$\; \label{lin:baker1}
        Use \cref{lem:baker} with~$\alpha_{\pi}d'$ on the exterior of~$C'$ to compute sets $A_{\ex}^1,A_{\ex}^2,\ldots,A_{\ex}^{\alpha_{\pi} d'+1}$\; \label{lin:baker2}
        \ForEach{$A_{\inte}$, $A_{\ex}$ \label{lin:bakerfor}}{
            $\res_1 \gets \SQ(d',{C'_{\inte} - A_{\inte} - \Delta}, {2x+2}, {(B_{\he} \cup C'_{\he})[C'_{\inte}]}, {(B_{\li} \cup C'_{\li})[C'_{\inte}]}, \allowbreak (B_{\di} \cup C')[C'_{\inte}],S_{\pi}[C'_{\inte}])$\; \label{lin:rec3}
            $\res_2 \gets \SQ(d',C'_{\ex} - A_{\ex} - \Delta,2x+2,(B_{\he} \cup C'_{\he})[C'_{\ex}], (B_{\li} \cup C'_{\li})[C'_{\ex}], \allowbreak (B_{\di} \cup C')[C'_{\ex}], S_{\pi}[C'_{\ex}])$\; \label{lin:rec4}
            \ForEach{pebble set~$\peb$ for~$(B,d,\alpha_{\pi},x)$}{
                    Compute~$R_\pi(d,G,B,\peb,S_{\pi})$ using Requirement R2\;
                    Use Requirement R4 to compute a solution~$s(r)$ for each~$r \in R_\pi(d,G,B,\peb,S_{\pi})$ based on~$\res_1$ and~$\res_2$\; \label{lin:critical}
                    $\res(r) = \max(\res(r),s(r))$ \label{lin:critical2}\;
            }
        }
        \Return $\res$\;
    }
    \caption{$\SQ$ (maximization)}
    \label{alg:Q}
\end{algorithm}%
\begin{algorithm}[t]
    \DontPrintSemicolon
    \KwIn{$(d,G=(V,E),x,y,X_{\he},Y_{\he},X_{\li},Y_{\li},X_{\di},Y_{\di},S_{\pi})$  as specified in \cref{lem:invariants}.}
    $\res \gets \overrightarrow{\infty}$\;  \label{lin:cleanini}
    \eIf{$|X_{\he} \cup X_{\li}| \leq 20 \sqrt{d}$ \label{lin:solveif}}{
        \Return $\SQ(d,G,2y+1,X \cup Y)$\; \label{lin:solve}
    }{
        Use \cref{lem:tri} to compute a set~$\Delta$ of edges such that~$G^{\Delta} = G+\Delta$ is triangulated and~$(\alpha_{\pi}d+1)$-outerplanar\;
        Use \cref{lem:cycle} to find a~$\nicefrac{3}{4}$-balanced cycle separator~$C$ with at most~$2\alpha_{\pi} d+3$ vertices in~$G^{\Delta}$ for the weight functions that assigns~$\nicefrac{1}{|X|}$ to each vertex in~$X=X_{\he}\cup X_{\li}$.\;
        Use \cref{lem:unknownbalance} with input~$G^{\Delta}$ and~$C$ to compute a set~$\mathcal{S}$ of separators\; \label{lin:cleanbalance}
        \ForEach{$C' \in \mathcal{S}$ with partition~$(C'_{\he},C'_{\li},C'_{\di})$}{
            Compute $S_{\pi}[C'_{\inte}]$ and~$S_{\pi}[C'_{\ex}]$ using Requirement R3\;\label{lin:cleans}
            $\res_1 \gets \clean(d,C'_{\inte} - \Delta,x,y+1,X_{\he}[C'_{\inte}], (Y_{\he} \cup C'_{\he})[C'_{\inte}], \allowbreak X_{\li}[C'_{\inte}],(Y_{\li} \cup C'_{\li})[C'_{\inte}],X_{\di}[C'_{\inte}],(Y_{\di}\cup C'_{\di})[C'_{\inte}],S_{\pi}[C'_{\inte}])$\; \label{lin:rec5}
            $\res_2 \gets \clean(d,C'_{\ex} - \Delta,x,y+2,X_{\he}[C'_{\ex}], (Y_{\he} \cup C'_{\he})[C'_{\ex}], \allowbreak X_{\li}[C'_{\ex}],(Y_{\li} \cup C'_{\li})[C'_{\ex}],X_{\di}[C'_{\ex}],(Y_{\di}\cup C'_{\di})[C'_{\ex}],S_{\pi}[C'_{\ex}])$\; \label{lin:rec6}
            \ForEach{pebble set~$\peb$ for~$(B,d,\alpha_{\pi},x)$}{
                    Compute~$R_\pi(d,G,B,\peb,S_{\pi})$ using Requirement R2\; \label{lin:cleanr}
                    Use Requirement R4 to compute a solution~$s(r)$ for each~$r \in R_\pi(d,G,B,\peb,S_{\pi})$ based on~$\res_1$ and~$\res_2$\; \label{lin:upclean}
                    $\res(r) = \max(\res(r),s(r))$\;
            }
        }
        \Return $\res$\;
    }
    \caption{$\clean$ (maximization)}
    \label{alg:clean}
\end{algorithm}%
\begin{algorithm}[tp]
    \DontPrintSemicolon
    \KwIn{$(d,G=(V,E),x,B_{\he},B_{\li},B_{\di},S_{\pi})$ as specified in \cref{lem:invariants}.}
    Use \cref{thm:tuukka} to compute a tree decomposition of width at most~$6\alpha_{\pi}d-1$ \label{lin:tuukka}\;
    Use \cref{lem:BH} to modify the tree decomposition into a binary tree decomposition~$\mathcal{T}$ of width~$w \leq 18\alpha_{\pi}d-1$ and depth~$\Oh(\log(n))$ \label{lin:BH}\;
    $D \gets 4d^2$\;
    \tcc{Build relevant instances and brute force solution for pairs of bags}
    \ForEach{node~$u$ except the root of the tree decomposition\label{lin:for1}}{
        Let~$u'$ be the parent of~$u$ in~$\mathcal{T}$\;
        $Z_u \gets X_u \setminus (B_{\li} \cup B_{\di})$\;
        $X'_u \gets X_u \cup X_{u'}$\;
        $Z'_u \gets X'_u \setminus (B_{\li} \cup B_{\di})$\;
        $I_u \gets (D,G[Y_u],(Z_u, B_{\li} \cap X_u, B_{\di} \cap X_u),\infty,S_{\pi}[G[Y_u]])$\;
        $I'_u \gets (D,G[X'_u],(Z'_u, B_{\li} \cap X'_u, B_{\di} \cap X'_u),\infty,S_{\pi}[G[X'_u]])$\;
        $\res'_u \gets$ brute force solution for $I'_u$ \label{lin:bruteforcebase}\;
    }
    \tcc{Bottom-Up dynamic programming}
    \ForEach{node~$u$ of the tree decomposition with at least one child from bottom to top\label{lin:for2}}{
        Let~$v_1$ be the first child of~$u$ in~$\mathcal{T}$\;
        $I^1_u \gets (D,G_1 = G[X_u \cup Y_{v_1}],(B_{\he}[G_1] \cup Z_{v_1},B_{\li}[G_1],B_{\di}[G_1]),\infty,S_{\pi}[G_1])$\;
        Use Requirement R4 with separator~$(Z_{v_1},B_{\li} \cap X_{v_1},B_{\di} \cap X_{v_1})$ and solutions~$\res'_{v_1}$ (for instance~$I'_{v_1}$) and~$\res_{v_1}$ (for instance~$I_{v_1}$) to get solution~$\res_u^1$ for instance~$I^1_u$\;
        \If{$u$ has only one child}{
            $\res_u \gets \res_u^1$\;
        }\Else{
            Let~$v_2$ be the second child of~$u$ in~$\mathcal{T}$\;
            $J_u \gets (D,G' = G[X_u \cup Y_{v_2}],(B_{\he}[G'] \cup Z_{v_2},B_{\li}[G'],B_{\di}[G']),\infty,S_{\pi}[G'])$\;
            $I^2_u \gets (D,G_2 = G[Y_u],(B_{\he}[G_2] \cup Z_{v_2},B_{\li}[G_2],B_{\di}[G_2]),\infty,S_{\pi}[G_2])$\;
            Use Requirement R4 with separator~$(Z_{v_2},B_{\li} \cap X_{v_2},B_{\di} \cap X_{v_w})$ and solutions~$\res'_{v_2}$ (for instance~$I'_{v_2}$) and~$\res_{v_2}$ (for instance~$I_{v_2}$) to get solution~$s_u$ for instance~$J_u$\;
            Use Requirement R4 with separator~$(Z_{u},B_{\li} \cap X_{u},B_{\di} \cap X_{u})$ and solutions~$\res_u^1$ (for instance~$I^1_u$) and~$s_u$ (for instance~$J_u$) to get solution~$\res_u^2$ for instance~$I^2_u$ \label{lin:I2}\;
            $\res_u \gets \res_u^2$\;
        }
    }
    \tcc{Removing~$Z_p$ for root~$p$ from portal set}
    $\res'_r \gets$ brute force solution for instance~$I'_z=(D,G[X_p],(X_p \setminus (B_{\li} \cup B_{\di}),B_{\li} \cap X_p,B_{\di} \cap X_p),\infty,S_{\pi}[G[X_p]])$\;
    Use Requirement R4 with separator~$(X_p \setminus (B_{\li} \cup B_{\di}),\emptyset,\emptyset)$ and solutions~$\res'_p$ for instance~$I'_p$ and solution~$\res_p$ for~$(D,G,(B_{\he} \cup (X_p \setminus (B_{\li} \cup B_{\di})),B_{\li},B_{\di}),\infty,S_{\pi})$ to compute solution~$\res^*$ for~$(D,G,(B_{\he},B_{\li},B_{\di}),\infty,S_{\pi})$ \label{lin:remove}\;
    \tcc{Discarding solutions for too large pebble sets}
    \ForEach{pebble set~$\peb$ for~$(B,d,\alpha_{\pi},x)$\label{lin:pebblefor}}{
        Compute~$R=R_\pi(d,G,B,\peb,S_{\pi})$ using Requirement R2\;
        \ForEach{$r \in R$}{
            $\res(r) \gets \res^*(r)$\;
        }
    }
    \Return $\res$\;
    \caption{$\base$}
    \label{alg:base}
\end{algorithm}%
The aforementioned invariants are summarized in the following lemma.
\begin{restatable}{lemma}{invariants}
    \label{lem:invariants}
    Throughout all calls of $\SQ$, $\clean$, and~$\base$, the following invariants are maintained.
    \begin{itemize}
        \item The input to~$\SQ$ is a valid instance of~\Q{} and~$x\leq 200\log^7(d)-44\log(d)$.
        \item The input~$(d,G,x,y,X_{\he},Y_{\he},X_{\li},Y_{\li},X_{\di},Y_{\di},S_{\pi})$ given to $\clean$ satisfies the following: $(d,G,x+2y,X_{\he} \cup Y_{\he},X_{\li} \cup Y_{\li},X_{\di} \cup Y_{\di},S_{\pi})$ is a valid instance of \Q,~$d \geq 2^{7\alpha_{\pi}}$, $x \leq 200\log^7(d)-44\log(d)$, $y < 22 \log(d)$, ${|X_{\he} \cup X_{\li}| \leq (\nicefrac{3}{4})^y \cdot 50d^6}$, $|Y_{\he}| \leq 40 (2y) \alpha_{\pi} \sqrt{d}$, and~$|Y_{\li}| \leq 60 (2y) \alpha_{\pi} d \log(d)$.
        \item When $\SQ$ with values~$d_1$ and~$z_1 = x_1$ or $\clean$ with values~$d_1$ and~${z_1 = x_1 + 2y_1}$ calls~$\SQ$ with values~$d_2$ and~$z_2 = x_2$ or~$\clean$ with values~$d_2$ and~$z_2 = x_2 + 2y_2$ and the results are used to combine results, then~$z_1 \sqrt{d_1} \log(d_1) \leq z_2 \sqrt{d_2} \log_{d_2}$ holds.
        \item The input to~$\base$ is a valid instance of~\Q{} and~$d \leq 2^{7\alpha_{\pi}+2}$.
    \end{itemize}
\end{restatable}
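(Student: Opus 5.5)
We prove \cref{lem:invariants} by induction over the recursion tree. The initial call is to $\SQ$ with $B=(\emptyset,\emptyset,\emptyset)$ and $x=0$, so all invariants hold at the root. For each kind of recursive call we assume the invariants for the caller and verify them for the callee. Validity of an instance means three things: $G$ is $\alpha_{\pi}d$-outerplanar, $(B,x)$ is almost measured, and $S_\pi$ is the restriction produced by Requirement R3. We will check outerplanarity and the measure bounds separately.

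\emph{Calls made by $\SQ$.} When $\SQ$ reaches \cref{lin:rec1,lin:rec2}, the checks at \cref{lin:smallnif,lin:basecaseif,lin:cleanif} have failed, so $d\ge 2^{7\alpha_\pi}$ and $x<40\log^7(d)$.
\begin{itemize}
\item At \cref{lin:rec1,lin:rec2} the parameter $d$ is unchanged. Outerplanarity is inherited because we take induced subgraphs with $\Delta$ removed. The heavy portal set is unchanged. The light set grows by $|V_C|\le 2\alpha_\pi d+3\le 60\alpha_\pi d\log d$, which the increment $x\mapsto x+1$ pays for. The new value satisfies $x+1\le 40\log^7 d\le 200\log^7 d-44\log d$, using $d\ge 2^7$.
\item At \cref{lin:rec3,lin:rec4} the parameter drops to $d'=\lfloor d-3\sqrt d\rfloor\ge 4d/7$ by \cref{lem:help}. \cref{lem:baker} applied with $\alpha_\pi d'$ gives $\alpha_\pi d'$-outerplanarity. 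The new separator $C'$ is measured for $(d,1)$ by \cref{lem:unknownbalance}, so its heavy and light parts are bounded with $x=1$. We must express these bounds in terms of $d'$ with budget $2x+2$. This works because $\sqrt d\le \sqrt{7/4}\sqrt{d'}$, $\log d\le \log d'+1$, and $d\le (7/4)d'$. The resulting constant factors are absorbed by doubling $x$: we get $x\sqrt d\le 2x\sqrt{d'}$ and $x d\log d\le 2x d'\log d'$ for $d'\ge 2^7$, with the remaining term $+2$ absorbing $C'$. Finally, $2x+2\le 80\log^7 d+2\le 200\log^7(d')-44\log d'$ holds since $\log d'\ge \log d-1$ and $d$ is large.
\end{itemize}

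\emph{Calls made by $\clean$ and into $\base$.}
\begin{itemize}
\item Entry from \cref{lin:clean}: here $y=0$, $Y=\emptyset$, and $|X_{\he}\cup X_{\li}|\le 40x\alpha_\pi\sqrt d+60x\alpha_\pi d\log d\le 50d^6$, using $x\le 200\log^7 d$ and \cref{lem:help}.
\item Each recursive call of $\clean$ at \cref{lin:rec5,lin:rec6} uses a separator $C'$ that is $3/4$-balanced for the weight on $X$ (again by \cref{lem:unknownbalance}), so $|X|$ shrinks by a factor $3/4$ as $y$ increases. The $Y$-bounds grow by exactly one measured separator per unit of $2y$.
\item $y<22\log d$ holds because once $(3/4)^y\cdot 50d^6<20\sqrt d$ the test at \cref{lin:solveif} succeeds, and this happens for $y\le \log_{4/3}(50d^6)<22\log d$.
\item At \cref{lin:solve} the call $\SQ(d,G,2y+1,\dots)$ gets $x$-value $x+2y+1$ with all of $X$ absorbed into the budget. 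The bound $\le 200\log^7 d-44\log d$ is exactly why that slack of $44\log d$ was built into the invariant.
\item For $\base$, the call at \cref{lin:basecase} has $d<2^{7\alpha_\pi}$. The internal instances inside $\base$ use $D=4d^2\le 2^{7\alpha_\pi+2}$ only after squaring, so this inequality needs to be reconciled with the stated exponent; I expect a mild correction here. Their validity comes from the tree decomposition of width $O(\alpha_\pi d)$ given by \cref{lem:bodlaender,thm:tuukka,lem:BH}, and they use $x=\infty$, which lifts the light-vertex bound.
\end{itemize}

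\emph{The monotonicity invariant.} We must show that $z\sqrt d\log d$ never decreases along a call whose results are combined.
\begin{itemize}
\item If $d$ is fixed, this holds because $z$ never decreases: $x\mapsto x+1$, $y\mapsto y+1$ or $y+2$, and $x+2y\mapsto x+2y+1$.
\item In the $d\to d'$ step we need $(2x+2)\sqrt{d'}\log d'\ge x\sqrt d\log d$. This follows from $d'\ge 4d/7$, since $2\sqrt{4/7}>1.5$ and $\log d'\ge \log d-1$.
\end{itemize}
The main obstacle is the careful constant bookkeeping in the $d\to d'$ step of $\SQ$ and the upper bound on $x+2y+1$ at \cref{lin:solve}. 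Everything else is direct substitution.
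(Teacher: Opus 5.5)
Your overall structure matches the paper's proof: induction over the recursion tree, one case per call site, and the same estimates $d'\geq \nicefrac{4d}{7}$ and $y\le\log_{4/3}(50d^6)$. However, the step you yourself single out as the main obstacle is resolved incorrectly. At \cref{lin:solve} the call is $\SQ(d,G,2y+1,X\cup Y)$, so the new budget is $x'=2y+1$, not $x+2y+1$; the old budget $x\geq 40\log^7(d)$ is discarded.

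Under your reading the required bound does not follow. From $x\le 200\log^7(d)-44\log(d)$ and $2y+1\le 44\log(d)$ you only get $x+2y+1\le 200\log^7(d)$. Worse, the callee would have $x'\ge 40\log^7(d)$ and would immediately re-enter $\clean$ at \cref{lin:cleanif}. Then $\clean$ would never reduce the budget, which is its entire purpose (compare the oracle condition $x'\le 44\log(d)$ in \cref{lem:clean}). The slack $44\log(d)$ in the invariant serves a different role: it guarantees $x+2y<200\log^7(d)$, so that the combined instance $(d,G,x+2y,X\cup Y,S_\pi)$ maintained inside $\clean$ is almost measured.

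The missing argument goes as follows. When \cref{lin:solve} is reached, $|X_{\he}\cup X_{\li}|\le 20\sqrt d$. This fits into a budget of one, since $20\sqrt d\le 40\alpha_\pi\sqrt d$ and $20\sqrt d\le 60\alpha_\pi d\log(d)$. The $\clean$ invariant gives $|Y_{\he}|\le 40(2y)\alpha_\pi\sqrt d$ and $|Y_{\li}|\le 60(2y)\alpha_\pi d\log(d)$. Hence $((X_{\he}\cup Y_{\he},X_{\li}\cup Y_{\li},X_{\di}\cup Y_{\di}),2y+1)$ is almost measured, and $1\le 2y+1<44\log(d)+1\le 200\log^7(d)-44\log(d)$.

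The same misreading affects your monotonicity paragraph. Along this call $z$ drops from $x+2y$ to $2y+1$, so ``$z$ never decreases'' is false there. The paper exempts this call because its result is returned as is rather than combined via Requirement~R4.

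Two smaller points:
\begin{itemize}
\item In the $d\to d'$ step you invoke $d'\ge 2^7$, which is not guaranteed: $d\ge 2^7$ only gives $d'\ge \nicefrac{4d}{7}>73$. Use $\log d\le\log d'+\log(\nicefrac74)$ instead of $\log d'+1$; this yields $\tfrac74\bigl(1+\log(\nicefrac74)/\log d'\bigr)\le 2$ for $d'\ge 51$. Alternatively, use the paper's bound $\log d\le\tfrac87\log(\nicefrac{4d}{7})$, valid for $d\ge 88$.
\item For $\base$ there is nothing to reconcile. The fourth invariant concerns only the input to $\base$, which comes from \cref{lin:basecase} with $d<2^{7\alpha_\pi}\le 2^{7\alpha_\pi+2}$. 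The instances with $D=4d^2$ built inside $\base$ are never passed to $\base$ itself.
\end{itemize}
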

{
    \begin{proof}
    In the following, we show that if the invariant holds in a recursive call, then this recursive call will only invoke recursive calls for which the invariant also holds.
    Note that the initial call of $\SQ$ fulfills the requirement that the input is a valid instance of \Q{} by definition.
    Since $\base$ does not call any other algorithm, we only need to show that~$\SQ$ and~$\clean$ maintain the invariants. 
    
    \paragraph*{Main Algorithm ($\SQ$).}
    We start with recursive calls made in $\SQ$.
    To this end, assume that we are in a recursive call of $\SQ$ with input~$(d,G,x,B_{\he},B_{\li},B_{\di},S_{\pi})$, where~${x \leq 200\log^7(d)-22\log(d)}$ and~$(d,G,(B_{\he},B_{\li},B_{\di}),x,S_{\pi})$  corresponds to a valid instance of \Q.
    Note that~$\base$ is only called in \cref{lin:basecase} and only if the check in \cref{lin:basecaseif} confirms that~$G$ has at most~$d \leq 2^{7\alpha_{\pi}}$.
    It is called with the same input~$(d,G,x,B_{\he},B_{\li},B_{\di},S_{\pi})$, which is a valid instance of \Q{} by assumption.
    Hence, all requirements are met in this case.

    We next analyze the case where~$\SQ$ calls~$\clean$.
    Note that the only recursive call to $\clean$ is in \cref{lin:clean} with input~$(d,G,x,0,B_{\he},\emptyset,B_{\li},\emptyset,B_{\di},\emptyset,S_{\pi})$.
    In this case, all invariants of $\clean$ except~$d \geq 2^{7\alpha_{\pi}}$ and $|B_{\he} \cup B_{\li}| \leq (\nicefrac{3}{4})^0 \cdot 50d^6$ hold by the assumption that~$(d,G,(B_{\he},B_{\li},B_{\di}),x,S_{\pi})$ is a valid instance of \Q{}.
    Note that since we passed \cref{lin:basecaseif}, we may assume that~$d \geq 2^{7\alpha_{\pi}}$.
    Since~$(B,x)$ is almost measured for~$k$ and~$\alpha_{\pi}$, we have that~$x \leq 200\log^7(d)$ and therefore~$|B_{\he}| \leq 200\log^7(d)\cdot 40\alpha_{\pi}\sqrt{d} = 8000\alpha_{\pi}\sqrt{d}\log^7(d)$ and~${|B_{\li}| \leq 200\log^7(d)\cdot 60\alpha_{\pi}d\log(d) = 12000\alpha_{\pi}d\log^8{d}}$.
    These are at most~$20000\alpha_{\pi}d\log^7{d}$~vertices combined, which is upper bounded by~$31d^6$ by \cref{lem:help} as~$d>2^{7\alpha_{\pi}}\geq 100$.
    The third invariant also clearly holds here as~$z_1 = x < x+1 = x_2$ and~$d_1 = d = d_2$.
    
    Let next~$I'=(d',G',x',B'_{\he},B'_{\li},B'_{\di},S'_{\pi})$ be the input to a recursive call of $\SQ$ in \cref{lin:rec1} or~\ref{lin:rec2}.
    Recall that~$I'$ is a valid instance of \Q{} if
    \begin{multicols}{2}
        \begin{enumerate}
            \item $G'$ is a~$\alpha_{\pi}d'$-outerplanar graph,
            \item $1 \leq x' \leq \max(200 \log^7(d') - 44 \log(d),1)$,
            \item $|B_{\he}'| \leq 40x\alpha_{\pi}\sqrt{d}$, and
            \item $|B'_{\li}| \leq 60 x \alpha_{\pi}d\log(d)$.
        \end{enumerate}
    \end{multicols}
    Since~$d' = d$ and~$G'$ is a subgraph of~$G$, the first point trivially holds.
    Note that since we reached past \cref{lin:cleanif,lin:clean}, it holds that~${1 \leq x' = x+1 \leq 40 \log^7(d) + 1 \leq 200 \log^7(d)- 22 \log(d)}$.
    As we only potentially removed vertices from~$B_{\he}$, the third point also holds trivially.
    Finally, since we only added~$2\alpha_{\pi}d+3$ vertices to~$B_{\li}$, we have that~$|B_{\li}'| = |B_{\li} \cup C_{\he}| \leq |B_{\li}| + |C_{\li}| \leq 40x\alpha_{\pi}d\log(d) + 2\alpha_{\pi}d+3 \leq 40(x+1)\alpha_{\pi}d\log(d)$, which shows the fourth condition.
    The third invariant again trivially holds as~$z_1 < z_2$ and~$d_1 = d_2$.
    
    Let now~$I'=(d',G',x',B'_{\he},B'_{\li},B'_{\di},S'_{\pi})$ be the input to a recursive call in \cref{lin:rec3} or~\ref{lin:rec4}.
    Note that the application of \cref{lem:baker} in \cref{lin:baker1,lin:baker2} ensures that $C'_{\inte} - A_{\inte}$ and~$C'_{\ex}-A_{\ex}$ are~$d'\alpha_{\pi}$-outerplanar graphs.
    Moreover, since we passed \cref{lin:cleanif,lin:clean}, we have~$x \leq 40 \log^7(d)$ implying
    \begin{align*}
        &1 \leq x' = 2x+2 \leq 2 \cdot 40\log^7(d)+2\\
        &\leq 80\log^7((\nicefrac{7}{4})d') + 2\\
        &= 80 (\log(d') + \log(\nicefrac{7}{4}))^7 + 2 \\
        &< 80 (\log(d') + 1)^7 + 2\\
        &= 80 (\log^7(d') + 7\log^6(d') + 21\log^5(d') + 35\log^4(d') + 35\log^3(d') + 21\log^2(d') + 7\log(d') + 1) + 2\\
        &\leq 80 \log^7(d') (1 + \nicefrac{7}{7} + \nicefrac{21}{7^2} + \nicefrac{35}{7^3} + \nicefrac{35}{7^4}+\nicefrac{21}{7^5}+\nicefrac{7}{7^6}+\nicefrac{3}{7^7})\\
        &\leq 80\cdot 2.45 \log^7(d')\leq 200\log^7(d')- 44\log(d'),
    \end{align*}
    where the third and second from last inequality hold as~$\log(d) \geq 7$ as~$d \geq 2^{7\alpha_{\pi}} \geq 2^7$.
    Moreover, \cref{lem:unknownbalance,lem:help} yield
    \begin{align*}
        |B'_{\he}| &\leq |B_{\he} \cup C'_{\he}| \leq 40 \alpha_{\pi} x \sqrt{d} + 40 \sqrt{\alpha_{\pi}d} \leq 40(x+1)\alpha_{\pi}\sqrt{d}\\
        &\leq 40(2x+2)\alpha_{\pi}\sqrt{\frac{d}{4}} \leq 40x'\alpha_{\pi}\sqrt{\frac{4d}{7}}\\
        &\leq 40x'\alpha_{\pi}\sqrt{d'}
    \end{align*}
    We will next show that~$\frac{7\alpha_{\pi}}{4} \leq (\frac{4d}{7})^{\frac{1}{7}}$ or equivalently~$d \geq \alpha_{\pi}^7 (\frac{7}{4})^6$.
    Note that when~$\alpha_{\pi}=1$, then~${d \geq 2^7 \geq 32 \geq \frac{7}{4}^6 = \frac{7}{4}^6 \alpha_{\pi}^7}$.
    For all positive integers~$\alpha_{\pi} \geq 2$, it holds that~$\alpha_{\pi} \leq 2^{\alpha_{\pi}-1}$.
    Hence,~${\alpha_{\pi}^7 (\frac{7}{4})^6 \leq (2^{\alpha_{\pi}-1})^7 \cdot 32 = 2^{7\alpha_{\pi}-7+5} \leq d}$.
    Combined with \cref{lem:unknownbalance,lem:help}, this implies that
    \begin{align*}
        |B'_{\li}| &\leq |B_{\li} \cup C'_{\li}| \leq 60 \alpha_{\pi} x d \log(d) + 60 \alpha_{\pi} d \log(\alpha_{\pi}d)\\
        &\leq 60(x+1)\alpha_{\pi}d\log(\alpha_{\pi}d)\\
        &\leq 30(2x+2)\frac{7}{4}\alpha_{\pi}\frac{4d}{7}\log(\frac{7\alpha_{\pi}}{4}\frac{4d}{7})\\
        &\leq 30(2x+2)\frac{7}{4}\alpha_{\pi}\frac{4d}{7}(\log(\frac{4d}{7})^{\nicefrac{8}{7}})\\
        &\leq 30x'\frac{7\cdot 8}{4 \cdot 7}d'\log(d')\\
        &\leq 60 x' d' \log(d').
    \end{align*}
    To show that the third invariant is also maintained in this case, note that~$\frac{4d}{7} \geq d^{\nicefrac{3}{4}}$ for all~$d>10$.
    Hence, we have
    \begin{align*}
        x'\sqrt{d'}\log(d') &\geq (2x+2)\sqrt{\frac{4d}{7}}\log(\frac{4d}{7})\\
        &\geq 2x\sqrt{\frac{4}{7}}\sqrt{d}\log(d^{\nicefrac{3}{4}})\\
        &\geq x\sqrt{d}\log(d)(2\sqrt{\frac{4}{7}}\frac{3}{4})\\
        &\geq x \sqrt{d} \log(d).
    \end{align*}    
    
    \paragraph*{Reduce Portals ($\clean$).}
    We now consider a recursive call of $\clean$ with input~$(d,G,x,y,X_{\he},Y_{\he},X_{\li},Y_{\li},X_{\di},Y_{\di},S_{\pi})$ which satisfies our invariant and show that the input to each call of $\SQ$ and $\clean$ also satisfies the respective invariants.
    Observe that $\SQ$ is only called in \cref{lin:solve} and with input~$(d,G,y+1,X_{\he}\cup Y_{\he},X_{\li}\cup Y_{\li},X_{\di}\cup Y_{\di},S_{\pi})$, which is by assumption a valid instance of \Q{} since~$x \geq 1$ and~$|X_{\he} \cup X_{\li}| \leq 20 \sqrt{d}$.
    Moreover, the result is never used to combine solutions using Requirement R4, so the third invariant is not applicable here.
    
    We now consider the input~$(d',G',x',y',X'_{\he},Y'_{\he},X'_{\li},Y'_{\li},X'_{\di},Y'_{\di},S'_{\pi})$ given to $\clean$ in \cref{lin:rec5,lin:rec6}.
    We need to show that
    \begin{multicols}{2}
        \begin{enumerate}
            \item $G'$ is a~$\alpha_{\pi}d'$-outerplanar graph,
            \item $1 \leq x' \leq 200 \log^7(d')-22\log(d')$,
            \item $y' < 22 \log(d')$,
            \item $|X'_{\he}| \leq 40 x' \alpha_{\pi} \sqrt{d'}$,
            \item $|Y'_{\he}| \leq 40 (2y') \alpha_{\pi} \sqrt{d'}$,
            \item $|X'_{\li}| \leq 60 x'\alpha_{\pi} d' \log(d')$,
            \item $|Y'_{\li}| \leq 60 (2y')\alpha_{\pi} d' \log(d')$,
            \item $d' \geq 2^{7\alpha_{\pi}}$,
            \item $|X'_{\he} \cup X_{\li}| \leq (\nicefrac{3}{4})^{y'} \cdot 50d'^6$, and
            \item $(x+2y) \sqrt{d}\log(d) \leq (x'+2y') \sqrt{d'} \log(d')$.
        \end{enumerate}
    \end{multicols}
    Note that points~$1,2,4,6,8,$ and~$10$ hold as~$d'=d, x' = x, y'=y+1, {X'_{\he} \subseteq X_{\he}}, {X'_{\li} \subseteq X_{\li}}$, and~$G'$~is a subgraph of~$G$.
    For point 9, note that~$|X_{\he}' \cup X'_{\li}| \leq \nicefrac{3}{4}|X_{\he} \cup X_{\li}|$ as~$C'$ is $\nicefrac{3}{4}$-balanced with respect to the weight function that assign weight~$\nicefrac{1}{|X_{\he} \cup X_{\li}|}$ to each vertex in~$X_{\he} \cup X_{\li}$ and zero to all other vertices.
    Hence, \[|X'_{\he} \cup X'_{\li}| \leq (\nicefrac{3}{4})|X_{\he} \cup X_{\li}| \leq \nicefrac{3}{4} \cdot (\nicefrac{3}{4})^{y} \cdot 50d^6 = (\nicefrac{3}{4})^{y+1} \cdot 50d^6 = (\nicefrac{3}{4})^{y'} \cdot 50d'^6.\]
    For the third point, notice that since~$|X_{\he} \cup X_{\li}| \leq (\nicefrac{3}{4})^{y} \cdot 50d^6$ by assumption and~$|X_{\he} \cup X_{\li}| \geq 1$ as we passed \cref{lin:solveif}, it holds that~$(\nicefrac{3}{4})^y\cdot 50d^6 \geq 1$, which implies~${y \leq \frac{-\log(50d^6)}{\log(\nicefrac{3}{4})} \leq \frac{-\log(d^7)}{-\nicefrac{1}{3}} \leq 21 \log(d)}$.
    Thus, it holds that~${y' = y+1 < 22 \log(d)}$.
    Using \cref{lem:unknownbalance}, we next get that
    \begin{align*}
        |Y_{\he}'| &\leq |Y_{\he} \cup C'_{\he}|\\
        &\leq 40(2y)\alpha_{\pi}\sqrt{d}+40 \sqrt{\alpha_{\pi}d}\\
        &\leq 40(2y+1)\alpha_{\pi}\sqrt{d} < 40(2y')\alpha_{\pi}\sqrt{d} \text{ and}
    \end{align*}
    \begin{align*}
        |Y_{\li}'| &\leq |Y_{\li} \cup C'_{\li}| \leq 60 (2y)\alpha_{\pi} d \log(d) + 60 \alpha_{\pi} d \log(\alpha_{\pi}d)\\
        &\leq 60 (2y)\alpha_{\pi} d \log(d) + 60 \alpha_{\pi}d \log(d^2)\\
        &\leq 60 (2y+2)\alpha_{\pi} d \log(d)\\
        &\leq 60 (2y')\alpha_{\pi} d \log(d).
    \end{align*}
    This concludes the proof.
\end{proof}
}

\subsection{Base Case}
Here, we prove that $\base$ works as intended.
Before doing so, let us give an intuitive description of what the algorithm does.
As stated in \cref{lem:invariants}, the algorithm is only called when~$d$ is constant (because~$\alpha_{\pi}$ is a constant).
Recall that for a node~$u$ of the tree decomposition, $X_u$ denotes the vertices in the bag of~$u$ and~$Y_u$ denotes the union of~$X_{v}$ for all descendants of~$v$ (including~$u$).
We use standard dynamic programming ideas on the tree decomposition to solve request for pebble sets \emph{of any size} where we consider all vertices in the bag of the current node as heavy vertices (in addition to the input portal set~$B$); this can be viewed as setting~$x = \infty = d$.
In order to improve the polynomial part of the running time, we will consider balanced binary tree decompositions, that is, binary tree decompositions of depth~$\Oh(\log n)$.

For the dynamic program over the balanced binary tree decomposition, we first solve certain small instances using brute force and then use Requirement R4 to combine solutions.
The small instances are those induced by vertices in~$X_u \cup X_{u'}$ for any node~$u$ (except the root) in the tree decomposition and where~$u'$ is the parent of~$u$.
Since each bag has constant size and we assume that \Q{} is computable, this results in a constant-time algorithm for each edge in the tree decomposition.
We then combine solutions as follows.
For each node~$u$ and each child~$v_i$ of~$u$, we combine the solutions for~$G[Y_{v_i}]$ and~$G[X_u \cup X_{v_i}]$ using~$X_{v_i}$ as a separator consisting of heavy vertices using Requirement R4.
We call these instances~$I^1_u$ (for~$v_1$) and~$J_u$ (for~$v_2$).
Therein, we keep the vertices in~$X_{u}$ as heavy portal vertices.
If~$u$ has two children, then  we combine the solutions for both instances to get a solution for~$G[Y_{u}]$.
After this is done for all nodes, we are left with two minor inconveniences.
On the one hand, the vertices in~$X_p$ ($p$ is the root of the tree decomposition) are currently treated as heavy portal vertices.
However, using Requirement R4, we can view~$X_p \setminus B$ as a separator for the separation~$(G,X_p \setminus B)$, that is, no vertices are separated from~$(V \setminus X_p) \cup B$.
Applying Requirement R4 then allows us to drop the vertices in~$X_p \setminus B$ from the portal set.
A key step in the proof will be to show that~$((X_u \cup X_{u'},\emptyset,\emptyset),1)$ is measured for~$4d^2$ and~$\alpha_{\pi}$ and hence Requirement R4 can be applied.
On the other hand, we now have a lookup table storing \emph{too much} information.
We did not yet consider the size of possible pebble sets as we imagined~$x = \infty = d$.
However, this is trivial to fix.
We simple iterate over each pebble set for~$(B,d,\alpha,x)$, compute all requests for it and copy the solution for it.
We will show that all of the above can be done in~$\Oh(n^{\rho}\log(n))$ time assuming~$d$ is a constant.

\begin{lemma}
    \label{lem:basecase}
    Let~$(d,G=(V,E),x,B_{\he},B_{\li},B_{\di},S_{\pi})$ be the input given to~$\base$ satisfying the requirements of \cref{lem:invariants} and let~\Q{} satisfy all requirements of \cref{thm:main-fpt}.
    Then, $\base$ solves the instance~$(d,G,(B_{\he},B_{\li},B_{\di}),x,S_{\pi})$ of \Q{} in~$\Oh(n^{\rho}\log(n))$ time.
\end{lemma}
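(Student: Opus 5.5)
The plan is to prove correctness and running time of $\base$ separately, both by following its three phases: the brute-force preprocessing, the bottom-up dynamic program, and the final clean-up. Throughout, $d \leq 2^{7\alpha_{\pi}+2}$ by \cref{lem:invariants}, so $d$ is a constant.

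\textbf{Tree decomposition and brute force.} The input $G$ is $\alpha_{\pi}d$-outerplanar, so by \cref{lem:bodlaender} it has treewidth at most $3\alpha_{\pi}d-1$. Hence \cref{thm:tuukka} with $k=3\alpha_{\pi}d-1$ returns, in $\Oh(n)$ time, a decomposition of width at most $6\alpha_{\pi}d-1$. \cref{lem:BH} then converts it in $\Oh(n)$ time into a binary decomposition $\mathcal{T}$ of width $w\leq 18\alpha_{\pi}d-1$ and depth $\Oh(\log n)$; we may assume $\mathcal{T}$ has $\Oh(n)$ nodes. Every instance $I'_u$ lives on a graph $G[X_u\cup X_{u'}]$ with at most $36\alpha_{\pi}d$ vertices, which is constant. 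Since \Q{} is computable, brute force takes constant time per node, so this phase costs $\Oh(n)$.

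\textbf{Validity of the instances.} I must check that every auxiliary instance ($I_u$, $I'_u$, $I^1_u$, $J_u$, $I^2_u$, and the two root instances) is a valid instance of \Q{} with parameter $D=4d^2$ and $x=\infty$, so that R4 applies. The points to verify are:
\begin{itemize}
\item Each graph is a subgraph of $G$, hence $\alpha_{\pi}d$-outerplanar and therefore $\alpha_{\pi}D$-outerplanar.
\item The heavy portal set is the original $B_{\he}$ (whose size is constant, since $d$ and $x$ are constant by almost-measuredness) together with at most two bags, so it has at most $36\alpha_{\pi}d+|B_{\he}|$ vertices.
\item For R4, the separator $(Z_v, B_{\li}\cap X_v, B_{\di}\cap X_v)$ must satisfy that $(\cdot,1)$ is measured for $D$ and $\alpha_{\pi}$. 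This requires $|Z_v|\le 40\alpha_{\pi}\sqrt{D}=80\alpha_{\pi}d$ and $|B_{\li}\cap X_v|\le 60\alpha_{\pi}D\log D$, both of which hold since $|X_v|\le 18\alpha_{\pi}d$.
\end{itemize}
The choice $D=4d^2$ is exactly what makes $\sqrt{D}=2d$ dominate the bag sizes. Because $x=\infty$, the pebble sets place no size restriction, and the light-vertex restriction of R4 is vacuous, since $20\alpha_\pi\sqrt{D}\log D\ge |X_v|$.

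\textbf{Correctness of the dynamic program.} I then argue by bottom-up induction that $\res_u$ is a correct solution for $I_u$, the instance on $G[Y_u]$ with $X_u$ treated as heavy portals.
\begin{itemize}
\item \emph{One child $v_1$.} The pair $(G[X_u\cup X_{v_1}],G[Y_{v_1}])$ is a separation with separator $X_{v_1}$, by the tree-decomposition properties. The two sides are exactly $I'_{v_1}$ and $I_{v_1}$, whose portal sets are those of $I^1_u$ augmented by the separator. R4 therefore yields $I^1_u$; when $u$ has one child, $Y_u=X_u\cup Y_{v_1}$.
\item \emph{Two children.} $J_u$ is obtained the same way from $v_2$. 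Then $(X_u\cup Y_{v_1}, X_u\cup Y_{v_2})$ is a separation with separator $X_u$, so R4 combines $I^1_u$ and $J_u$ into $I^2_u$ on $G[Y_u]$.
\end{itemize}
At the root, $(X_p, V)$ is a trivial separation with separator $X_p\setminus(B_{\li}\cup B_{\di})$, and R4 removes these vertices from the heavy portals, giving $\res^*$ for $(D,G,B,\infty,S_\pi)$. Finally, every pebble set for $(B,d,\alpha_\pi,x)$ is also one for $x=\infty$, and I take the requests to be the same, so copying entries is correct.

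\textbf{Running time and main obstacle.} Each R4 call costs $\Oh(\gamma(D)n^\rho)=\Oh(n^\rho)$, and each R2 call costs $\Oh(n^{\rho})$. The difficulty is that $\Oh(n)$ nodes times $\Oh(n^\rho)$ would give $n^{\rho+1}$, not the claimed $\Oh(n^{\rho}\log n)$. The fix I would try is to measure each call by the size of the graph it acts on, $|Y_u|$ (plus constant-size bags). By the subtree-connectivity of bags, each vertex of $G$ lies in $Y_u$ only for nodes on the path from its topmost bag to the root, so $\sum_u |Y_u|=\Oh(n\log n)$ by the depth bound. Convexity of $t\mapsto t^\rho$ for $\rho\ge1$, together with $|Y_u|\le n$, then gives $\sum_u |Y_u|^\rho\le n^{\rho-1}\sum_u|Y_u|=\Oh(n^\rho\log n)$. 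For R3 I compute $S_\pi[\cdot]$ along the tree in the same way. The final pebble loop iterates over a constant number of pebble sets, since $|B_{\he}\cup B_{\li}|$ is constant, each costing $\Oh(n^\rho)$. Making this $|Y_u|$-based accounting rigorous, that is, confirming that R2–R4 scale with the current instance rather than the original $n$, is the step I expect to require the most care.
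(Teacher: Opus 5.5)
Your proposal is correct and follows the paper's proof essentially step by step. It uses the same decomposition via \cref{lem:bodlaender}, \cref{thm:tuukka} and \cref{lem:BH}, the same measuredness check with $\sqrt{D}=2d$ (which makes the light-vertex restriction vacuous), the same child-case induction, root step and final copying, and a running-time count that differs only cosmetically from the paper's per-depth bound $\sum_{u\in U_t}|Y_u|\le n+\sum_{u\in U_t}|X_u|$.

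One small slip: a vertex $v$ lies in $Y_u$ not only for nodes on the path from its topmost bag to the root, but also for every node whose bag contains $v$. This adds only $\sum_u|X_u|=\Oh(n)$, so $\sum_u|Y_u|=\Oh(n\log n)$ and your conclusion still hold.
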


\begin{proof}
    We will use the same notation as in \cref{alg:base}.
    We first prove that for each separator~${C=(C_{\he},C_{\li},C_{\di})}$ chosen by $\base$, it holds that~$(C,1)$ is almost measured with respect to~$D = (2\alpha_{\pi}d)^2$ and~$\alpha_\pi \geq 1$.
    Note that the largest separator consists of the vertices of two bags in the binary tree decomposition~$\mathcal{T}$.
    We next show that each bag contains at most~$18\alpha_{\pi}d$ vertices and hence, the largest separator contains~$36\alpha_{\pi}d$ vertices.
    By \cref{lem:invariants}, the graph~$G$ is~$\alpha_{\pi}d$-outerplanar.
    \cref{lem:bodlaender} then states that it has treewidth at most~$3\alpha_{\pi}d-1$.
    \cref{thm:tuukka} then allows us to compute a tree decomposition of width at most~$2(3\alpha_{\pi}d-1)+1 = 6\alpha_{\pi}d-1$ and \cref{lem:BH} allows us to compute a binary tree decomposition of width at most~$3(6\alpha_{\pi}d-1)+2 = 18 \alpha_{\pi}d -1$ and depth~$\Oh(n)$.
    Hence, each bag contains at most~$18\alpha_{\pi}d$ vertices.
    Recall that~$(C,1)$ is almost measured for~$D$ and~$\alpha_{\pi}$ if~$1 \leq \max(1,200\log^7(d))$, $|C_{\he}| \leq 40 \alpha_{\pi}\sqrt{D}$, and~$|C_{\li}| 60 \alpha_{\pi} \sqrt{D}$.
    The first point holds trivially and since~$|C_{\he} \cup C_{\li} \cup C_{\di}| \leq 36 \alpha_{\pi} d < 40 \alpha_{\pi} \sqrt{D} = 80 \alpha_{\pi} d$, the other two requirements are also met and~$(C,1)$ is measured with respect to~$D$ and~$\alpha_{\pi}$ and Requirement R4 can be applied in all cases.

    Requirement R4 then guarantees that all solutions for pebble sets that do not intersect~$C_{\di}$ and that intersect~$C_{\li}$ in at most~$20\alpha_{\pi}\sqrt{D}\log(D)$ vertices are computed correctly.
    We next show that these are all subsets of~$C_{\he} \cup C_{\li}$.
    To this end, we show~$|C_{\he} \cup C_{\li}| \leq 20\alpha_{\pi}\sqrt{D}\log(D)$.
    Since~${|C_{\he} \cup C_{\li}| \leq |C_{\he} \cup C_{\li} \cup C_{\di}|}$, we can reuse the above argument to conclude the following: $|C_{\he} \cup C_{\li}| < 36\alpha_{\pi}d < 40 \alpha_{\pi} d = 20 \alpha_{\pi} \sqrt{D} < 20 \alpha_{\pi} \sqrt{D} \log(D)$.
    Hence, in each case where we apply Requirement R4, it correctly returns solutions for all subsets of heavy and light vertices of the separator independent of their size.

    We next show that $\base$ correctly computes the result for the instance of \Q{} it is given as input and afterwards analyze the running time.
    The correctness of the brute-force step in \cref{lin:bruteforcebase} follows from the fact that \Q{} is computable.
    Since the correctness of each single combination of results follows from Requirement R4, it only remains to show that Requirement R4 is applied in each step so that it gives the result for the instance it is supposed to solve.
    To this end, we make a case distinction whether a node~$u$ has 0, 1, or two children (it cannot have more as~$\mathcal{T}$ is binary).
    If it has 0 children, then we do not apply Requirement R4, so nothing needs to be shown.
    If~$u$ has exactly one child~$v_1$, then we already computed the solutions for instances~$I_{v_1}$ and~$I'_{v_1}$.
    Note that~$I_{v_1}$ is the instance for~$G[Y_{v_1}]$ and~$I'_{v_1}$ is the instance for~$G[X_{v_1} \cup X_{u}]$.
    Since these two are separated by~$X_{v_1}$ since $\mathcal{T}$ is a tree decomposition, we correctly solve the instance for~$G[Y_{v_1} \cup X_u] = G[Y_u]$ as~$u$ only has~$v_1$ as a child, which is precisely~$I_u = I_u^1$.
    If~$u$ has two children~$v_1$ and~$v_2$, then the same argument as above shows that we correctly compute solutions for instances~$I_u^1$ (for~$G[Y_{v_1} \cup X_u]$) and~$J_u$ (for~$G[Y_{v_2} \cup X_u]$).
    As these two are separated by~$X_u$, we correctly compute the solution for the instance~$I_u^2 = I_u$ for~$G[Y_{v_1} \cup X_u \cup Y_{v_2}] = G[Y_u]$ in \cref{lin:I2}.
    In \cref{lin:remove}, we only remove some heavy vertices from the portal set by using a trivial separator~$(X_u \setminus (B_{\li} \cup B_{\di}), \emptyset,\emptyset)$.
    Finally, we output all solutions for pebble sets for~$(B,d,\alpha_{\pi},x)$.
    As argued above, all of these were already correctly computed before, so the output of $\base$ is correct.

    It remains to analyze the running time.
    We will do so in three steps.
    First, we will analyze the running time outside the for-loops in \cref{lin:for1,lin:for2}.
    Second, we will show that the time spend for each node~$u$ in the tree decomposition~$\mathcal{T}$ computed in \cref{lin:BH} is in~$\Oh(|Y_u|^{\rho})$.
    Third, we will argue that summed over all nodes in~$\mathcal{T}$, this is still contained in~$\Oh(n^\rho \log(n))$.
    For the first step, note that \cref{lem:invariants} states that~$d$ is a constant (as~$\alpha_{\pi}$ is a constant).
    This implies that~$D$ is a constant and by the definition of request variants, $x$ is at most~$200 \log^7(d)$ and is therefore also a constant.
    Finally, this yields that~$|B_{\he}|$ and~$|B_{\li}|$ are constant.
    Hence, the number of iterations in \cref{lin:pebblefor} is a constant.
    It is now easy to verify that each single line outside the for-loops in \cref{lin:for1,lin:for2} take at most~$\Oh(\gamma(D)n^{\rho}) = \Oh(n^{\rho})$ time.

    For the second step, note that in an iteration of the two for-loops for a node~$u$ in~$\mathcal{T}$, each single line takes at most~$\Oh(\gamma(D) |Y_u|^{\rho})$ time as subgraphs can be computed in linear time and updating~$S_{\pi}$ and computing~$R_{\pi}$ take~$\Oh(\gamma(D) |Y_u|^{\rho})$ time by requirements R2 and R3.
    The use of Requirement~R4 also takes~$\Oh(\gamma(D) |Y_u|^{\rho})$ time and brute-forcing the solutions in \cref{lin:bruteforcebase} takes constant time as each bag contains a constant number of vertices.

    It remains to analyze the running time for nodes in~$\mathcal{T}$.
    To this end, we consider all nodes at a given depth~$t$ together.
    Let~$U_t$ be the set of all these nodes and note that we can assume without loss of generality that the number of nodes in~$\mathcal{T}$ is at most~$n$ and thus~$|U_t|\leq n$.
    Note that since~$\mathcal{T}$ is a tree decomposition, it holds that~${\sum_{u \in U_t} |Y_u| \leq n + \sum_{u \in U_t} |X_u| \leq n (1+18\alpha_{\pi}d) \in \Oh(n)}$.
    Hence, the running time for all nodes at depth~$t$ combined is in~$\Oh(n^{\rho})$ and summed over the~$\Oh(\log(n))$ different depths, this yields the final running time of~$\Oh(n^{\rho}\log(n))$.    
\end{proof}

\subsection{Reduce Portals}

In this section, we show that $\clean$ works as intended, that it, it reduces an instance of~\Q{} with~$x \geq 40\log^7(d)$ to~$2^{\Oh(\log^3(d))}$~instances of~\Q{} where~$x \leq 22\log(d)$ holds in each instance.
We start with a description of the main idea of the algorithm.
We are given a partition of the portal sets~$(B_{\he},B_{\li},B_{\di})$ into sets~$(X_{\he},X_{\li},X_{\di})$ and~$(Y_{\he},Y_{\li},Y_{\di})$.
Initially~$X_i = B_i$ and~$Y_i = \emptyset$ for each~$i \in \{\he,\li,\di\}$ and the goal is to call~$\SQ$ with a small set~$B_{\he} \cup B_{\li}$.
We do this by iteratively finding a~$\nicefrac{3}{4}$-balanced cycle separator for the weight function that assigns weight~$\nicefrac{1}{|X_{\he} \cup X_{\li}|}$ to each vertex in~$X=X_{\he} \cup X_{\li}$ and zero to all other vertices.
For this cycle separator, we apply \cref{lem:unknownbalance} to find a set of cycle separators that are balanced for the mentioned weight function.
We then guess the cycle separator~$C'$ that does not intersect the solution in too many vertices, add its heavy vertices~$C'_{\he}$ to~$Y_{\he}$, its light vertices~$C'_{\li}$ to~$Y_{\li}$, and its discarded vertices~$C'_{\di}$ to~$Y_{\di}$ and call~$\clean$ on the interior and the exterior.
Clearly, the set~$X$ shrinks by a factor of~$\nicefrac{3}{4}$ in each iteration.
Hence, after calling~$\clean$ for~$\Oh(\log(|X|))$ times, the set~$X$ will have constant size.
We also show that the set~$Y$ does not become too large in the process.
Hence, we can then call~$\SQ$ with~$B_i = X_i \cup Y_i$ for each~$i \in \{\he,\li,\di\}$ (modeled as an oracle in the following lemma).

\begin{lemma}
    \label{lem:clean}
    Let~$(d,G=(V,E),x,y,X_{\he},Y_{\he},X_{\li},Y_{\li},X_{\di},Y_{\di},S_{\pi})$ be given to~$\clean$ as input, and let it satisfy the requirements of \cref{lem:invariants}. Let also \Q{} satisfy the requirements of \cref{thm:main-fpt}.
    Given access to an oracle for instances~${(d,G'=(V',E'),B',x',S'_{\pi})}$ with~${|V'| < |V|}$ and~${x' \leq 44\log(d)}$, $\clean$ solves the instance~$(d,G,(X_{\he}\cup Y_{\he},X_{\li}\cup Y_{\li},X_{\di}\cup Y_{\di}),x+2y,S_{\pi})$ of \Q{} with at most $2^{2000\log^3(d)}$ oracle calls in~$2^{\tilde{\Oh}(\sqrt{d})} \gamma(d) n^{\max(\rho,2)}$~time.
\end{lemma}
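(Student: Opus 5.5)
I will argue by induction on the recursion of $\clean$, with the potential $|X_{\he}\cup X_{\li}|$ together with the counter $y$. I prove correctness, the bound on oracle calls, and the running time separately.

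\emph{Correctness.} In the base branch (\cref{lin:solveif}) we have $|X_{\he}\cup X_{\li}|\le 20\sqrt d$, and $\clean$ calls $\SQ$ once on a graph with the same vertex set. I count this call (and the calls it spawns on strictly smaller graphs) as the oracle. The instance passed is exactly the instance to be solved, with portal set $X\cup Y$. By \cref{lem:invariants} it is valid, so the oracle returns the correct table.

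In the recursive branch, fix a request $r$ for a pebble set $\peb$ and an optimal solution $P$ realizing it. By \cref{lem:invariants}, $x+2y\le 200\log^7(d)$, so $|P|$ is bounded in terms of $d$ (through the locality/outerplanarity bound). The cycle $C$ from \cref{lem:cycle} is $\nicefrac34$-balanced for the proper weight assignment that is uniform on $X$. If $|X|<4$ this assignment is not proper, but then we are already in the base branch. \cref{lem:unknownbalance} now yields some $C'\in\mathcal S$ that is $\nicefrac34$-balanced for this weight, avoids $P$ on $C'_{\di}$, and satisfies $|P\cap C'_{\li}|\le 20\alpha_\pi\sqrt d\log d$. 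This is exactly the hypothesis under which Requirement R4 returns correct values.

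The two recursive $\clean$ calls are made on $C'_{\inte}-\Delta$ and $C'_{\ex}-\Delta$ with $C'$ added to $Y$. By induction they are correct for the combined portal sets, and their invariants are guaranteed by \cref{lem:invariants}. R4 therefore recovers the correct value of $r$ for this $C'$. Every other $C'$ only produces values of feasible solutions, so taking the maximum over $\mathcal S$ is correct. One point needs care: R4 is stated for separators $C$ with $(C,1)$ measured, and \cref{lem:unknownbalance} guarantees exactly this.

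\emph{Counting oracle calls.} Every recursive step multiplies $|X|$ by at most $\nicefrac34$ and increases $y$. The invariant gives $|X|\le(\nicefrac34)^y50d^6$, so the recursion depth is at most $22\log d$. Each node of the recursion has at most $2|\mathcal S|\le 2\cdot2^{44\log^2(d')}$ children, where $d'$ bounds the cycle length, which is $\Oh(\alpha_\pi d)$. The number of leaves, and hence of oracle calls, is therefore at most $(2^{1+44\log^2(O(d))})^{22\log d}\le 2^{2000\log^3 d}$ for $d\ge 2^{7\alpha_\pi}$. This arithmetic is the step I expect to be the main obstacle. Since the cycle has $\Oh(\alpha_\pi d)$ vertices rather than $\le d$, I must absorb $\log^2(\alpha_\pi d)\le(1+o(1))\log^2 d$ using $d\ge 2^{7\alpha_\pi}$, as in the proof of \cref{lem:invariants}, so that $22\cdot 44\cdot(\text{slack})\le2000$ holds.

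\emph{Running time per node.} A single node does the following work:
\begin{itemize}
\item triangulation in $\Oh(n^2)$ by \cref{lem:tri} and the cycle in $\Oh(n^2)$ by \cref{lem:cycle};
\item \cref{lem:unknownbalance} in $2^{\Oh(\log^2 d)}n^{1+o(1)}$;
\item computing $S_\pi$ by R3;
\item for each $C'$, one pass over all pebble sets for $(B,d,\alpha_\pi,x+2y)$.
\end{itemize}
The number of pebble sets is at most $2^{|B_{\he}|}\binom{|B_{\li}|}{\le 20x\alpha_\pi\sqrt d\log d}=2^{\tilde\Oh(\sqrt d)}$, because $x+2y$ is polylogarithmic in $d$. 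Each pebble set costs $\Oh(\gamma(d)n^\rho)$ by R2 and R4. Multiplying by $|\mathcal S|=2^{\Oh(\log^2 d)}$ and by the $2^{\Oh(\log^3 d)}$ nodes of the recursion gives total time $2^{\tilde\Oh(\sqrt d)}\gamma(d)n^{\max(\rho,2)}$, with oracle time excluded.
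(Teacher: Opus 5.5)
Your overall route is the paper's. You induct along the $\clean$ recursion (the paper inducts on $22\log(d)-y$), get correctness of the recursive branch from \cref{lem:unknownbalance} plus Requirement R4, count oracle calls as the leaves of a tree of depth at most $22\log(d)$ and branching at most $2|\mathcal S|$, and bound the per-node cost by $2^{\tilde{\Oh}(\sqrt d)}\gamma(d)n^{\max(\rho,2)}$. The arithmetic you flagged goes through. For $d\ge 2^{7\alpha_\pi}$ one has $\log(2\alpha_\pi d+3)\le 1.2\log(d)$, hence $1+44\log^2(2\alpha_\pi d+3)\le 88\log^2(d)$, and $88\cdot 22\le 2000$. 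The paper writes $22$ instead of $44$ in this exponent but reaches the same $2^{88\log^2(d)}$.

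The step that fails as written is the base branch. $\clean$ calls $\SQ(d,G,2y+1,X\cup Y)$, not $\SQ$ on the instance $(d,G,X\cup Y,x+2y,S_\pi)$ you must solve, so your claim that the passed instance is ``exactly the instance to be solved'' is false. When $\clean$ is actually used, $x\ge 40\log^7(d)$. The oracle accepts only $x'\le 44\log(d)$, so if the passed instance really were the $(x+2y)$-instance, the oracle could not be invoked at all.

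Lowering the parameter to $2y+1$ is precisely what makes the oracle applicable. The paper checks $2y+1\le 44\log(d)$ at this point; you never do. The returned table, however, only has entries for pebble sets meeting the light portals in at most $20(2y+1)\alpha_\pi\sqrt d\log(d)$ vertices. You therefore owe an argument that nothing relevant is lost:
\begin{itemize}
\item Along the branch that follows an optimal solution $P$, every separator $C'$ moved into $Y$ was chosen with $|P\cap C'_{\li}|\le 20\alpha_\pi\sqrt d\log(d)$, which is all R4 needs.
\item At a leaf, $|X_{\he}\cup X_{\li}|\le 20\sqrt d$.
\item Hence the pebble set induced by $P$ has at most $20\sqrt d+20y\alpha_\pi\sqrt d\log(d)\le 20(2y+1)\alpha_\pi\sqrt d\log(d)$ light vertices, so it is a pebble set for $(X\cup Y,d,\alpha_\pi,2y+1)$.
\end{itemize}
The paper only records the oracle condition here, so you have to supply this justification yourself.

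A minor point: you need $|P|\le d$ to apply \cref{lem:unknownbalance} with $k=2\alpha_\pi d+3$. That bound comes from $d$ bounding the solution size, not from $x+2y\le 200\log^7(d)$.
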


\begin{proof}
    We will prove the statement by induction over~$z = 22\log(d) - y$.
    If~$z = 1$, then we have~${y= 22\log(d) - 1}$ and \cref{lem:invariants} yields~$2y+1 \leq 44\log(d)$ and 
    $$|X_{\he} \cup X_{\li}| \leq (\nicefrac{3}{4})^{22 \log(d) - 1} \cdot 50d^6 \leq \nicefrac{4}{3} \cdot 2^{22\log(\nicefrac{3}{4})\log(d)+7\log(d)} \leq \nicefrac{4}{3} \cdot 2^{- 2\log(d)} \leq 1$$ and \cref{lin:solve} is therefore reached.
    Note that if \cref{lin:solve} is reached, then the oracle calls give the correct solution for the subinstances by assumption as~$2y+1 \leq 44\log(d)$.
    
    If \cref{lin:solve} is not reached, then we let~$\mathcal{S}$ be the set of cycle separators computed in \cref{lin:cleanbalance}.
    We call $\clean$ for each separator in~$\mathcal{S}$ and in each call, the value of~$y$ is increased by~$1$, that is, the value of~$z$ is reduced by one.
    Hence, the induction hypothesis states that the recursive calls in \cref{lin:rec5,lin:rec6} are computed correctly.
    For each cycle separator~$C \in \mathcal{S}$, it holds by \cref{lem:unknownbalance} that~$(C,1)$ is measured for~$d$ and~$1$ and therefore also for~$d$ and~$\alpha_\pi$.
    Thus, we can apply Requirement~R4 to guarantee that we return the correct solution.
    This concludes the proof of correctness.
    
    We next analyze the running time and the number of oracle calls.
    Consider any recursive call of~$\clean$.
    By \cref{lem:unknownbalance}, this calls~$\clean$ at most~$2 \cdot 2^{22\log^2(2\alpha_{\pi}d+3)} \leq 2^{88\log^2(d)}$ times and in each such call, the value of~$y$ is increased by one.
    Hence, the number of recursive calls of~$\clean$ corresponds to the number of vertices in a tree of depth~$22 \log(d) - 1$ and degree~$2^{88\log^2(d)}$.
    Hence, the number of oracle calls (calls of $\SQ$ and the number of leaves in the tree) is at most
    \begin{align*}
        ({2^{88\log^2(d)}})^{(22\log(d)-1+1)} &\leq 2^{88\log^2(d) \cdot 22\log(d)}
        \leq 2^{2000\log^3(d)}
    \end{align*}
    and the number of recursive calls of~$\clean$ (the number of internal vertices in the tree) is at most~$2^{2000\log^3(d)}-1$.

    It remains to analyze the running time of each recursive call.
    Note that each single line except for \cref{lin:cleanini,lin:cleanbalance,lin:cleans,lin:cleanr,lin:upclean} can be executed in~$\Oh(n^2)$ time.
    
    \cref{lin:cleanbalance} takes~${2^{\Oh(\log^2(\alpha_{\pi} d))} n^{1+o(1)}}\subseteq 2^{\Oh(\log^2(d))} n^2$ time by \cref{lem:unknownbalance}.
    Since~$|V'|\leq n$ and~$d'=d$ for each recursive call, Requirements R2, R3, and~R4 state that \cref{lin:cleans,lin:cleanr,lin:upclean} each take~$\Oh(\gamma(d)n^{\rho})$ time.
    Since the number of separators in~$\mathcal{S}$ is at most~$2^{44\log^2(2d\alpha_{\pi}+3)} \in 2^{\tilde{\Oh}(1)}$ and the number of possible pebble sets for~$((X_{\he} \cup Y_{\he},X_{\he} \cup Y_{\he},X_{\he} \cup Y_{\he}),x+2y)$ is by \cref{lem:invariants} at most
    \begin{align*}
        2^{40(200\log^7(d))\alpha_{\pi}\sqrt{d}} \cdot \binom{60(200\log^7(d))\alpha_{\pi}d\log(d)}{20(200\log^7(d))\alpha_{\pi}\sqrt{d}\log(d)} &\in 2^{\tilde{\Oh}(\sqrt{d})}\\(\text{since }|X_{\he} \cup Y_{\he}| \leq 40(200\log^7(d))\alpha_{\pi}\sqrt{d} \text{ and }|X_{\li} \cup Y_{\li}| &\leq 60(200\log^7(d))\alpha_{\pi}d\log(d)),
    \end{align*}
    we have that the running time for one recursive call is in~$2^{\tilde{\Oh}(\sqrt{d})} \gamma(d) n^{\max(2,\rho)}$.
\end{proof}

\subsection{Main Algorithm}
We are finally in a position to prove our main theorem by showing that $\SQ$ solves \Q.
We start with a description of the algorithm and then proceed with the formal proof.

Following the intuition presented in the introduction, $\SQ$ works by first checking whether we can solve the base case.
If~$|V| \leq 2^{7\alpha_{\pi}}$, then it can afford to brute-force the solution and if~$d \leq 2^{7\alpha_{\pi}}$, then it calls~$\base$.
It then checks whether~$(B,x)$ is measured for~$d$ and~$\alpha_\pi$, that is, whether the size of~$B$ and the value of~$x$ are small enough to proceed or whether $\clean$ has to be called.
If $(B,x)$ is measured, then $\SQ$ finds a cycle separator~$C$ and guesses whether~$C$ contains at most~$20\sqrt{d}\log(d)$~vertices of the (unknown) solution~$Z$ or more.
In the first case, it solves the corresponding instances in the interior and exterior of~$C$ and combines the results (Requirement R4 ensures that this can be done).
Otherwise, it uses the fact that~$C$ contains many vertices of the solution to reduce the value of~$d$ as follows.
Since~$C$ contains at least~$20\sqrt{d}\log(d)$~vertices of~$Z$, both the strict interior and the strict exterior of~$C$ contain at most~$|Z|-20 \sqrt{d} \log(d)$~vertices.
Hence~$C$ is~$\frac{|Z|-20 \sqrt{d} \log(d)}{|Z|}$-balanced for the weight function that assigns weight~$\nicefrac{1}{|Z|}$ to each vertex in~$Z$ and zero to all other vertices.
Using \cref{lem:unknownbalance}, the algorithm computes a set of cycle separators where one of those is also~$\frac{|Z|-20 \sqrt{d} \log(d)}{|Z|}$-balanced for the mentioned weight function and does not intersect the solution in more than~$17 \sqrt{d} \log(d)$~vertices.
Thus, the algorithm applies \cref{lem:baker} (which is permitted by Requirement R1) to compute a~$\lfloor d - 3\sqrt{d} \rfloor$-outerplanar subgraph~$G'$ of both the interior and the exterior of each computed cycle separator, solves the problem recursively, and combines the solutions; again Requirement R4 allows us to combine solutions.
Intuitively, we make progress in each recursive call as we either reduce the number of vertices by a constant factor or reduce the parameter~$d$ by roughly~$3\sqrt{d}$.
We will prove that this is sufficient for the claimed running time and now proceed with the formal proof of \cref{thm:main-fpt}, which we restate here for convenience. 
\mainalgorithm*
\begin{proof}
    We first show that $\SQ$ solves the instance~$(d,G,x,B=(B_{\he},B_{\li},B_{\di}),S_{\pi})$ of~\Q{} it is given as input.
    We will later show that $\SQ$ terminates after a finite number of recursive calls.
    We can therefore show the claim by induction and assume that all recursive subcalls of $\SQ$ indeed solve the corresponding instance of \Q{} correctly.
    For the sake of notational convenience, we will assume that~$\rho \geq 2$.

    If \cref{lin:bruteforcesmalln} is reached, then the computation is correct as we assume \Q{} to be computable.
    If \cref{lin:basecase} is reached, then the computation is correct by \cref{lem:basecase}.
    If \cref{lin:clean} is reached, then \cref{lem:clean} shows that the solution is correctly computed.

    So assume that \cref{lin:sqini} is reached and let~$C$ be the cycle separator computed in \cref{lin:cycle} and let~$\mathcal{S}$ be the set of cycle separators computed in \cref{lin:balance}.
    Let~$\peb$ be a pebble set for~$(B,x)$ and let~$r \in R_\pi(d,G,B,\peb,S_{\pi})$ be a request.
    If the (unknown) solution~$Z$ that determines the optimal value for request~$r$ intersects~$C$ in at most~$20\sqrt{d}\log(d)$ vertices, then Requirement R4 guarantees that the solution is computed correctly in \cref{lin:upSQ,lin:upSQ2} as the solutions~$\res_1$ and~$\res_2$ are correct by the induction hypothesis.
    Moreover, as we only update~$\res$ with valid solutions, the result is not overwritten in a later iteration of \cref{lin:upSQ2}.
    If~$Z$ and~$C$ intersect in more than~$20\sqrt{d}\log(d)$ vertices, then both the strict interior and the strict exterior of~$C$ contain at most~$d - 20 \sqrt{d} \log(d)$ vertices of~$Z$, that is,~$C$ is~$\frac{|Z|-20 \sqrt{d} \log(d)}{|Z|}$-balanced with respect to the weight function that assigns weight~$\nicefrac{1}{|Z|}$ to each vertex in~$Z$ and zero to all other vertices.
    By \cref{lem:unknownbalance}, there exists a~$\frac{|Z|-20 \sqrt{d} \log(d)}{|Z|}$-balanced cycle separator~$C' \in \mathcal{S}$ (computed in \cref{lin:balance}) whose intersection with~$Z$ is of size at most
    \begin{align*}
        40 \sqrt{2\alpha_{\pi}d+3} + \frac{11|Z|\log(\alpha_{\pi} d)}{2\sqrt{2\alpha_{\pi}d+3}}
        \leq&\ 40 \sqrt{\frac{5d\alpha_{\pi}}{2}} + \frac{11 d \log(d^2)}{2 \sqrt{2d}}\\
        \leq&\ 40 \sqrt{\frac{5d\log(d)}{14}} + \frac{11 d \log(d)}{\sqrt{2d}}\\
        \leq&\ (40 \sqrt{\frac{5}{14\log(d)}} + \frac{11}{\sqrt{2}}) \sqrt{d} \log(d)\\
        \leq&\ (\frac{40}{7}\sqrt{\frac{5}{2}}+\frac{11}{\sqrt{2}})\leq 17 \sqrt{d} \log(d).
    \end{align*}
    Here, the second inequality is due to the fact that~$\alpha_{\pi} \leq \frac{\log(d)}{7}$ as~$d \geq 2^{7\alpha_{\pi}}$ and the second to last inequality is due to the fact that~$\log(d)\geq 7$ as~$d>2^7$.
    The (non-strict) interior and exterior of~$C'$ hence contain at most 
    \[\lfloor |Z| - 20 \sqrt{d} \log(d) + 17 \sqrt{d} \log(d) \rfloor \leq \lfloor d - 3\sqrt{d}\log(d) \rfloor \leq \lfloor d-3\sqrt{d} \rfloor \text{ vertices from~$Z$.}\]
    Due to the induction hypothesis, the solutions~$\res_1$ and~$\res_2$ in \cref{lin:rec3,lin:rec4} are correct.
    It only remains to show that the algorithm behind Requirement R4 can combine them.
    This is slightly non-trivial in this case as the value for~$d$ is reduced in the recursive calls, so we need to argue that all pebble sets for~$(B,d,\alpha_\pi,x)$ are also pebble sets for~$(B,d'=\lfloor d - 3\sqrt{d} \rfloor,\alpha_\pi,x'=2x+1)$, that is, the algorithm can access all required results.
    By definition, each pebble set for~$(B,d,\alpha_\pi,x)$ contains at most~$20x\alpha_{\pi}\sqrt{d}\log(d)$ vertices in~$B_{\li}$.
    By \cref{lem:invariants}, we have that~$x \sqrt{d} \log(d) \leq x' \sqrt{d'} \log(d')$ and each such pebble set therefore contains at most~$20x'\alpha_{\pi}\sqrt{d'}\log(d')$ vertices from~$B_{\li}$.
    Thus, each such pebble set is also a pebble set for~$(B,d',\alpha_{\pi},x')$ and Requirement R4 applies.
    This concludes the proof of correctness.

    It remains to analyze the running time of $\SQ$.
    To this end, let~$\varepsilon > 0$ be any constant.
    We start with the running time of one recursive call.
    In the following, we will again use the fact that~$\alpha_{\pi} \leq \frac{\log(d)}{7}$.
    First, note that the number of possible pebble sets is upper bounded by
    \begin{align*}
    2^{|B_{\he}|}\cdot \binom{|B_{\li}|}{20x\alpha_{\pi}\sqrt{d}\log(d)} &\leq 2^{40(200\log^7(d))\alpha_{\pi}\sqrt{d}} \cdot 2^{(20(200\log^7(d)\alpha_{\pi}\sqrt{d}\log(d))\cdot \log(60(200\log^7(d))\alpha_{\pi}d\log(d))}\\
    &\leq 2^{\frac{8000}{7}\sqrt{d}\log^8(d)+\frac{3200}{7}\sqrt{d}\log^8(d)\log(d^8)}\\
    &\leq 2^{(\frac{8000}{7\log^2(d)}+\frac{8\cdot 3200}{7})\sqrt{d}\log^9(d)}\\
    &\leq 2^{3681\sqrt{d}\log^6(d)}.
    \end{align*}
    The number of separators in~$\mathcal{S}$ is upper-bounded by~$2^{44\log^2(2\alpha_{\pi}d+3)} \leq 2^{44\log^2(d^{1.5})} = 2^{99\log^2(d)}$ by \cref{lem:unknownbalance}.
    This bounds the number of iterations of the for-loops in one recursive call.
    If \cref{lin:bruteforcesmalln} is reached, then the number of operations is constant.
    By \cref{lem:baker,lem:unknownbalance}, \cref{lem:cycle}, and Requirements~R2, R3, and~R4, all lines can be computed in~$\Oh(\gamma(d) n^{\max(2,\rho)}\log(n))\subset\Oh(\gamma(d) n^{\rho+\varepsilon})$ time each.
    Thus, the running time of a single recursive call is in \[\Oh(2^{3681\sqrt{d}\log^7(d)+99\log^2(d)}\gamma(d)n^{\rho+\varepsilon}) \subseteq \Oh(2^{3682\sqrt{d}\log^7(d)}\gamma(d)n^{\rho+\varepsilon}).\]
    
    We next analyze the running time of all calls of $\SQ$ and~$\clean$ combined.
    To this end, we define three functions~$\phi$, $T$, and~$\xi$.
    The function~$\Phi(d,x) = \min(\frac{x}{40\log^{7}(d)},1)$ is a potential function measuring how close a recursive call is to calling $\clean$.
    
    The function~$T[n,D,\phi]$ upper-bounds the number of operations performed by the current recursive call and all future recursive calls of~$\SQ$ and~$\clean$ when the current recursive call solves an instance with~$\nu$~vertices,~$d= D$, and~$\Phi$ evaluating to~$\phi$.
    Let~$\kappa$ be a constant such that each recursive call of~$\SQ, \clean$ and~$\base$ performs at most~$\kappa 2^{3682\sqrt{D}\log^7(D)}\gamma(D)\nu^{\rho+\varepsilon}$ operations and let~$\lambda = 2^{2000\log^3(D)} \kappa$.
    We will show that
    \[T[\nu,D,\phi] \leq \begin{cases}
        2\kappa 2^{3682\sqrt{D}\log^7(D)}\gamma(D)\nu^{\rho+\varepsilon} & \text{ if \cref{lin:basecase} is reached}\\
        2^{2000\log^3(D)}T[\nu,D,\nicefrac{1}{\log^{6}(D)}] + \lambda 2^{3682\sqrt{D}\log^7(D)}\gamma(D)\nu^{\rho+\varepsilon} & \text{ if \cref{lin:clean} is reached}\\
        2 T[\frac{3}{4}\nu,D,\phi+\frac{1}{40\log^7(D)}] + 2^{99 \log^2(D)}T[\nu,\lfloor D-3\sqrt{D} \rfloor,1] &\\
        \quad\quad\quad\quad\quad\quad\quad\quad\quad\quad\ \ + \kappa 2^{3682\sqrt{D}\log^7(D)}\gamma(D)\nu^{\rho+\varepsilon}& \text{ else.}
    \end{cases}\]
    Finally,~$\xi(\nu,D,\phi) = \lambda 2^{4000\log^7(D)(\sqrt{D}+\frac{21\phi}{40\log^4(D)}))} \gamma(D) \nu^{\delta}$, where~$\delta = \max(\rho+\varepsilon,2.49)$, will be an upper bound for~$T$ in closed form.

    Let us start by showing that~$T$ is indeed an upper bound on the number of operations performed by all future recursive calls (including the current one) combined.
    The number of operations performed by the current recursive call was analyzed above.
    If \cref{lin:basecase} is reached, then only one further call to~$\base$ is made.
    If \cref{lin:clean} is reached, then $\clean$ is called and the proof of~\cref{lem:clean} yields that this leads to at most~$2^{2000 \log^3(D)}-1$ calls of~$\clean$, these calls combined call~$\SQ$ at most~$2^{2000\log^3(D)}$ times, and in each of these calls, the value of~$x$ is at most~$44\log(D)$ by \cref{lem:invariants}.
    This implies that~$\phi$ is at most~$\frac{44}{40\log^7(D)} \leq \frac{1}{\log^6(D)}$.
    Moreover, the number of operations performed by the current recursive call of~$\SQ$ and the at most~$2^{2000\log^3(D)}$ recursive calls of~$\clean$ perform at most
    $$2^{2000\log^3(D)} \kappa 2^{3682\sqrt{D}\log^7(D)}\gamma(D)\nu^{\rho+\varepsilon} \leq \lambda 2^{3682\sqrt{D}\log^7(D)}\gamma(D)\nu^{\delta}.$$
    This shows that~$T$ is indeed an upper bound in the second case.
    If neither of \cref{lin:basecase,lin:clean} are reached, then we call $\SQ$ once each in \cref{lin:rec1,lin:rec2} with the same value of~$D$, the value of~$\nu$ shrunk by a factor of~$\nicefrac{3}{4}$, and~$x$ increased by one.
    Hence, the value of~$\phi$ increases by at most~$\frac{1}{40\log^7(D)}$.
    Moreover, we call~$\SQ$ in \cref{lin:rec3,lin:rec4} once for each cycle in~$\mathcal{S}$.
    As analyzed above, there are at most~$2^{99 \log^2(D)}$ cycles in~$\mathcal{S}$ and in each of these recursive calls, the number~$\nu$ of vertices does not increase, the value of~$D$ is decreased by~$\lceil 3\sqrt{D} \rceil$, and in the worst case the value of~$\phi$ increases to~$1$.
    The number of operation in the current recursive call is again at most~$\kappa 2^{3682\sqrt{D}\log^7(D)}\gamma(D)\nu^{\rho+\varepsilon}$ and hence~$T$ is an upper bound as claimed in all cases.

    To conclude the proof, we show that~$T \leq \xi$.
    We prove this by induction and therefore assume that~$T \leq \xi$ in each recursive call.
    Note that if \cref{lin:basecase} is reached, then only one recursive call is made to $\base$ which does not incur any other recursive calls.
    Hence, it holds trivially that~$T[\nu,D,\phi] = 2\kappa 2^{3682 \sqrt{D} \log^7(D)} \gamma(D) \nu^{\rho+\varepsilon} \leq \lambda 2^{4000\sqrt{D}\log^7(D)}\gamma(D)n^{\delta} \leq \xi(\nu,D,\phi)$ (note that~$\frac{21\Phi}{40\log^4(D)}$ might be very small but it is non-negative).
    If \cref{lin:clean} is reached, then we have that~$T[\nu,D,\phi] = 2^{2000 \log^3(D)} T[\nu,D,\nicefrac{1}{\log^{6}(D)}]$ and~$\phi = 1$.
    The induction hypothesis then yields
    \begin{align*}
        T[\nu,D,\phi] &= 2^{2000 \log^3(D)} T[\nu,D,\nicefrac{1}{\log^{6}(D)}] + \lambda 2^{3682\sqrt{D}\log^7(D)}\gamma(D)\nu^{\rho+\varepsilon}\\
        &\leq 2^{2000 \log^3(D)} \cdot \lambda 2^{4000\log^7(D)(\sqrt{D}+\frac{1}{2\log^{10}(D)})} \gamma(D)\nu^{\delta} + \lambda 2^{3682\sqrt{D}\log^7(D)}\gamma(D)\nu^{\delta}\\
        &\leq \lambda \gamma(D) \nu^{\delta}(2^{4000\log^7(D)(\sqrt{D}+\frac{21}{40\log^{10}(D)}+\frac{2000\log^3(D)}{4000\log^7(D)})}+2^{3682 \sqrt{D}\log^7(D)})\\
        &\leq \lambda (2\cdot2^{4000\log^7(D)(\sqrt{D}+\frac{21}{40\log^{10}(D)}+\frac{20}{40\log^4(D)})})\gamma(D) \nu^{\delta}\\
        &\leq \lambda 2^{4000\log^7(D)(\sqrt{D}+\frac{\frac{21}{\log^6(D)}+20+\frac{1}{100\log^3(D)}}{40\log^4(D)})}\gamma(D) \nu^{\delta}\\
        &\leq \lambda 2^{4000\log^7(D)(\sqrt{D}+\frac{21}{40\log^4(D)})} \gamma(D) \nu^{\delta} = \xi(\nu,D,1) = \xi(\nu,D,\phi).
    \end{align*}
    Recall that $\log(D) \geq 7$ and \cref{lem:help} states that~$\sqrt{D - 3\sqrt{D}}+1 \leq \sqrt{D}-\frac{1}{10}$.
    This yields for the last case
    \begin{align*}
        T[\nu,D,\phi] &\leq 2 T[\frac{3}{4}\nu,D,\phi+\frac{1}{40\log^7(D)}] + 2^{99 \log^2(D)}T[\nu,\lfloor D-3\sqrt{D} \rfloor,1] + \lambda 2^{3682\sqrt{D}\log^7(D)}\gamma(D)\nu^{\rho+\varepsilon}\\
        &\leq 2 \cdot \lambda 2^{4000\log^7(D)(\sqrt{D}+\frac{21(\phi+\frac{1}{40\log^7(D)})}{40 \log^4(D)})} \gamma(D)(\frac{3}{4}\nu)^{\delta} \\&\quad + 2^{99 \log^2(D)} \cdot \lambda 2^{4000\log^7(D-3\sqrt{D})(\sqrt{D-3\sqrt{D}}+\frac{21}{40\log^4(D)})} \gamma(D-3\sqrt{D})\nu^{\delta} \\
        &\quad + \kappa 2^{3682\sqrt{D}\log^7(D)}\gamma(D)\nu^{\delta}\\
        &\leq 2 \cdot (\frac{3}{4})^\delta \cdot \lambda 2^{4000\log^7(D)(\sqrt{D}+\frac{21(\phi+\frac{1}{40\log^7(D)})}{40 \log^4(D)})} \gamma(D)\nu^{\delta}
        \\&\quad + 2^{99 \log^2(D)} \cdot \lambda 2^{4000\log^7(D)(\sqrt{D-3\sqrt{D}}+1)} \gamma(D)\nu^{\delta} \\
        &\quad + \kappa 2^{3682\sqrt{D}\log^7(D)}\gamma(D)\nu^{\delta}\\
        &\leq 2 \cdot (\frac{3}{4})^\delta \cdot \lambda 2^{4000\log^7(D)(\sqrt{D}+\frac{21\phi}{40\log^4(D)}+\frac{21}{1600\log^{11}(D)})} \gamma(D)\nu^{\delta}
        \\&\quad + 2^{99 \log^2(D)} \cdot \lambda 2^{4000\log^7(D)(\sqrt{D}-\frac{1}{10})} \gamma(D)\nu^{\delta} \\
        &\quad + \kappa 2^{3682\sqrt{D}\log^7(D)}\gamma(D)\nu^{\delta}\\
        &\leq \lambda 2^{4000\log^7(D) (\sqrt{D}+\frac{21\phi}{40\log^4(D)})} \gamma(D) \nu^{\delta} \cdot \\
        &\quad \quad \quad \quad \quad \quad \Big (2 \cdot (\nicefrac{3}{4})^{\delta} \cdot 2^{\frac{21 \cdot 4000}{1600 \log^4(D)}} + 2^{99 \log^2(D) + 4000 \log^7(D) (-\frac{1}{10})} + \frac{\kappa}{\lambda} \Big)\\
        &\leq \lambda 2^{4000 \log^7(D)(\sqrt{D}+\frac{21\phi}{40\log^4(D)})} \gamma(D) \nu^{\delta}  \cdot \Big(2 \cdot (\nicefrac{3}{4})^{2.49} \cdot 2^{\frac{210}{4\cdot 7^4}} + 2^{-300\log^7(D)} + 2^{-2000\log^3(D)}\Big)\\
        &\leq \lambda 2^{4000 \log^7(D)(\sqrt{D}+\frac{21\phi}{40\log^4(D)})} \gamma(D) \nu^{\delta} = \xi(\nu,D,\phi).
    \end{align*}
    This concludes the proof.
\end{proof}

Note that \cref{thm:main-fpt} only solves the decision version whether a set of at most~$k$ vertices with a certain property exists.
However, the above algorithm can also easily be modified to output a set of vertices of an optimal solution by simply storing one representative of each partial solution.

\section{Applications}
\label{sec:applications}
In this section, we apply \cref{thm:main-fpt} to different problems to achieve subexponential-in-$k$ and subcubic-in-$n$ time algorithms for planar input graphs.
We start with \fragment, which acts both as an easy example of to how to use the framework and a convenient way to avoid using the framework.
In particular, we will list the different steps required to apply the framework and then show how this is done in the case of \fragment.
We will then show that problems like \wis, \wpvc, \den, and \knkcut{} are all special cases of \fragment.
Hence, in order to show subexponential parameterized algorithms for these (as well as weighted and directed variant of the last two), one does not need to apply the framework but can just show how to use an algorithm for \fragment{} instead.
Afterwards, we apply the framework to \wif, \minor, and \im.
These examples will get progressively more technical and we recommend to read them in order.
Our aim is to provide the necessary ingredients one after another so that each step on its own is as simple as possible to follow.
Most of the aforementioned problems also have relevant special cases which we mentioned in the introduction.
We will discuss those along the way.

\subsection{A Simple-to-use Fragment}
\label{sec:fragment}
Here, we give a general recipe for how to apply \cref{thm:main-fpt}.
We then show how to use this recipe for showing that \fragment{} can be solved in~$2^{\tilde{\Oh}(\sqrt{k})}n^{2.49}$~time.
Finally, we show how to use this fact to show that problems like \wis, \wpvc, \den, and \knkcut{} can be solved in the same time.
We note that subexponential parameterized algorithms were already known for \wis~\cite{masterthesis} and \knkcut{} for undirected planar graphs~\cite{koana}, but was asked as a specific open question for \den~\cite{koana} and was not known for \wpvc{} or \knkcut{} for directed planar graphs.
This result also generalizes in a straight-forward way to weighted variants of \knkcut{} and \den{} and we suspect that there are many more applications of \fragment.

We now present the recipe for any problem~$\pi$ under consideration.
\begin{enumerate}
    \item Decide for an appropriate function~$\gamma$ and constants~$\rho,\alpha_{\pi} \geq 1$.
    \item Define~$S_{\pi}$ and~$R_{\pi}$ and show that they can be computed/maintained in~$\Oh(\gamma(k)n^{\rho})$ time (Requirements R2 and R3 in \cref{thm:main-fpt}).
    \item Define an appropriate request version~$\Q$.
    \item Show that Requirement R1 of \cref{thm:main-fpt} holds, that is, there is an index~$i$ such that we can remove each layer~$i+j\cdot(\alpha_{\pi}k+1)$ of a BFS layering for any~$j$.
    \item Show that Requirement R4 of \cref{thm:main-fpt} is fulfilled, that is, given a separator~$C$ and solutions to subproblems on either side of the separator, we can compute in~$\Oh(\gamma(k)n^\rho)$ time all partial solutions that do not intersect~$C$ too much.
    \item Design an algorithm that given access to an oracle for \Q{} solves~$\pi$ and analyze its running time and how many oracle calls it makes.
\end{enumerate}
One also has to show that \Q{} is computable but since this is obvious in all examples studied in this paper, we will ignore this step.
We next show how to use this recipe and \cref{thm:main-fpt} to show a subexponential parameterized algorithm for \fragment.
Before doing so, we prove a small lemma that will be useful for all applications of \cref{thm:main-fpt}.

\begin{lemma}
    \label{lem:numberofpebbles}
    Let~$(C,1)$ be a pair that is measured with respect to two integers~$d$ and~$\alpha$, where~$\alpha$ is a constant.
    Then, the number of possible pebble sets~$\peb$ for~$(C,1)$ is in~$\Oh(2^{80\alpha \sqrt{d}\log^2(d)})$ and all pebble sets can be computed in~$\Oh(2^{80\alpha \sqrt{d}\log^2(d)}d)$ time.
\end{lemma}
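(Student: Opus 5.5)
The plan is to count pebble sets directly from the definitions and then bound the enumeration time by the count times the cost of writing one set down. Since $(C,1)$ is measured with respect to $d$ and $\alpha$, we have $|C_{\he}| \leq 40\alpha\sqrt{d}$ and $|C_{\li}| \leq 60\alpha d\log(d)$. A pebble set for $(C,d,\alpha,1)$ is any set $\peb \subseteq C_{\he} \cup C_{\li}$ with $|\peb \cap C_{\li}| \leq t := 20\alpha\sqrt{d}\log(d)$.

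\textbf{Counting.} Such a set is determined by two independent choices:
\begin{itemize}
    \item an arbitrary subset of $C_{\he}$, giving at most $2^{40\alpha\sqrt{d}}$ options;
    \item a subset of $C_{\li}$ of size at most $t$.
\end{itemize}
For the second choice, I would use the standard estimate $\sum_{i\le t}\binom{N}{i} \leq (N+1)^t$ with $N = 60\alpha d\log(d)$. This gives
\[
2^{t\log(N+1)} = 2^{20\alpha\sqrt{d}\log(d)\log(60\alpha d\log(d)+1)}.
\]
Because $\alpha$ is a constant, $\log(60\alpha d\log d+1) \leq 3\log(d)$ once $d$ is larger than some constant depending on $\alpha$. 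Finitely many small values of $d$ are absorbed into the $\Oh$. So the light part contributes at most $2^{60\alpha\sqrt{d}\log^2(d)}$.

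Multiplying the two factors gives at most $2^{40\alpha\sqrt{d}+60\alpha\sqrt{d}\log^2(d)}$. For $d\geq 4$ we have $40\alpha\sqrt{d}\leq 20\alpha\sqrt{d}\log^2(d)$, so the product is at most $2^{80\alpha\sqrt{d}\log^2(d)}$.

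\textbf{Enumeration.} I would iterate over all subsets of $C_{\he}$ and, nested inside, over all subsets of $C_{\li}$ of size at most $t$, for example by recursive generation in lexicographic order. Each produced set has size at most $|C_{\he}|+t = \Oh(\sqrt{d}\log d)\subseteq \Oh(d)$, so it is written in $\Oh(d)$ time. Generating successive subsets also costs amortized $\Oh(d)$. The total time is therefore $\Oh(2^{80\alpha\sqrt{d}\log^2(d)}d)$.

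\textbf{Main obstacle.} The only delicate point is the constant-factor bookkeeping in the exponent. The factor $\log(60\alpha d\log d)$ must be absorbed into $3\log d$ using that $\alpha$ is constant, possibly after restricting to $d\ge d_0(\alpha)$ and hiding smaller cases in the $\Oh$. I expect no conceptual difficulty beyond this.
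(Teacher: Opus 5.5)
Your proposal is correct and follows essentially the same route as the paper. You bound the count by $2^{|C_{\he}|}$ times the number of small subsets of $C_{\li}$, absorb $\log(60\alpha d\log d)$ into a constant multiple of $\log d$ beyond a threshold depending on $\alpha$ (the paper uses $\log(d^2)$ once $d>(60\alpha)^2$), and enumerate by nesting subsets of $C_{\li}$ inside subsets of $C_{\he}$ at $\Oh(d)$ cost per set. Your estimate $\sum_{i\le t}\binom{N}{i}\le (N+1)^t$ is in fact slightly more careful than the paper's single binomial coefficient $\binom{|C_{\li}|}{t}$, which does not literally count subsets of size less than $t$.
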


\begin{proof}
    Note that by definition,~$|C_{\he}| \leq 40\alpha\sqrt{d}$, $|C_{\li}| \leq 60\alpha d \log(d)$, and the number of possible pebble sets for~$(C,1)$ is at most~$2^{|C_{\he}|} \binom{|C_{\li}|}{20\alpha\sqrt{d}\log(d)}$.
    If~$d \leq (60\alpha)^2$, that is, $d$ is constant, we get a constant number of possible pebble sets.
    Otherwise, it holds that~$60 \alpha \log(d) < d$ because~${\log(d) < \sqrt{d}}$ for any~$d \geq 16$.
    Thus, the number of pebble sets is upper bounded by
    \[2^{40 \alpha \sqrt{d}} (60 \alpha d \log(d))^{20 \alpha \sqrt{d} \log(d)} \leq 2^{40 \alpha \sqrt{d} + 20 \alpha \sqrt{d} \log(d) \log(d^2)} \leq 2^{80\alpha\sqrt{d}\log^2(d)}.\]

    We can compute all such sets by iterating over all subsets~$S_1$ of~$C_{\he}$ and all subsets~$S_2$ of~$C_{\li}$ of size at most~$60\alpha d \log(d)$ in~$\Oh(2^{|C_{\he}|} \binom{|C_{\li}|}{20\alpha\sqrt{d}\log(d)}) \subseteq \Oh(2^{80\alpha\sqrt{d}\log^2(d)})$ time and computing~$S_1 \cup S_2$ in~$\Oh(d)$ time.
\end{proof}

We are finally in a position to prove our result for \fragment.
To this end, recall that a request variant \Q{} of locality~$\alpha_{\pi}$ takes a tuple~${(d,G,B,x,S_{\pi})}$ as input where~${B=(B_{\he},B_{\li},B_{\di})}$ is partitioned into heavy, light, and discarded vertices, $G$ is a~$\alpha_{\pi}d$-outerplanar graph, and $(B,x)$ is almost measured with respect to~$d$ and~$\alpha_\pi$.
The output is a lookup table with an entry for each request~$r \in R_\pi(d,G,B,\peb,S_{\pi})$, where~$\peb$ is a pebble set for~$(B,d,\alpha_{\pi},x)$ and the output represents some solution where the vertices in the solution contained in~$B_{\he} \cup B_{\li} \cup B_{\di}$ is exactly~$\peb$.

\begin{theorem}
    \fragment{} can be solved in~$2^{\tilde{\Oh}(\sqrt{k})}n^{2.49}$~time for planar input graphs.
\end{theorem}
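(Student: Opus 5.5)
We follow the six-step recipe and apply \cref{thm:main-fpt} with $\rho=1$ (any $\rho<2.49$ suffices), $\alpha_{\pi}=1$, and $\gamma(d)=2^{\Oh(\sqrt{d}\log^9(d))}$. We assume positive costs, so a solution $S$ has at most $k$ vertices. We set $S_{\pi}=\emptyset$. The defining observation is that the score of $S$ is a sum of local terms: $\beta w(v)$ for each $v\in S$, plus one term per arc that depends only on which endpoints lie in $S$.

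\emph{Request variant.} A request for a pebble set $\peb$ is a pair $(\peb,k')$ with $k'\in\{0,\dots,d\}$. We let $R_\pi(d,G,B,\peb,S_\pi)=\{\peb\}\times\{0,\dots,d\}$, which is computable in $\Oh(d)$ time; this gives R2, and R3 is trivial. The table entry for $(\peb,k')$ is the maximum, over all $S\subseteq V(G)$ with $S\cap(B_{\he}\cup B_{\li}\cup B_{\di})=\peb$ and $c(S)=k'$, of a \emph{partial score}. The partial score counts the vertex terms of $S\setminus B$ and the arc terms of all arcs with at least one endpoint in $V(G)\setminus B$. Arcs with both endpoints in $B$ are deliberately excluded, so that after a merge no arc is counted twice. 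Arcs leaving $G$ need no special handling: every vertex of $G$ adjacent to the outside lies in $B$, and vertices of $B$ have their arc terms charged elsewhere. Once the full graph $G_0$ is given with $B=\emptyset$, the entry $(\emptyset,k)$ compared against $W$ answers \fragment. Step 6 therefore needs only one oracle call.

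\emph{Locality (R1).} Among the $d+1$ residue classes of BFS layers, at least one misses $S$, since $|S|\le d$. Deleting that class $V^j$ would remove arcs with endpoints in $V^j$ whose terms $\alpha_2,\alpha_3$ still count for vertices of $S$. This is the step I expect to be delicate. I would handle it by folding these boundary contributions into the vertex weights of the remaining graph. Concretely, $S_\pi$ stores a modified vertex weight: each $v$ gets $\beta w(v)$ plus $\alpha_2 w(v,u)$ for every deleted out-neighbour $u$ and $\alpha_3 w(u,v)$ for every deleted in-neighbour $u$ (which are then guaranteed to lie outside $S$). With this bookkeeping the output for $I$ is the maximum over the $I'_j$, and updating these weights is linear time, so R3 still holds. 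The same trick is used when we delete $A_{\inte}$ or $A_{\ex}$ or the triangulation edges $\Delta$ (the latter are simply not real arcs).

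\emph{Merge (R4).} Given $C$, for each pebble set $\peb$ and each $k'$, we enumerate the pebble sets $Q\subseteq C_{\he}\cup C_{\li}$ that are compatible with $\peb$ on $B\cap C$ and satisfy $|Q\cap C_{\li}|\le 20\sqrt{d}\log(d)$. By \cref{lem:numberofpebbles} there are $2^{\Oh(\sqrt{d}\log^2 d)}$ such $Q$. For each $Q$ we consider splits $k'=k_1+k_2-c(Q\setminus\ldots)$, correcting for the costs of $C$-vertices counted on both sides. We then set
\[
s=\mathrm{res}_1(\peb_1,k_1)+\mathrm{res}_2(\peb_2,k_2)+\sigma(Q\cup\peb),
\]
where $\peb_i=(\peb\cup Q)\cap V(C_i)$. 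The term $\sigma$ adds the vertex terms of $Q\setminus B$ together with the arc terms of arcs with both ends in $C$ and at least one end outside $B$, all evaluated directly from $Q\cup\peb$. Every arc of $G$ is counted exactly once: an arc with an endpoint strictly inside one side is charged to that side, and the remaining arcs have both ends in $C\cup B$. The main technical step is verifying this exact-once accounting under the shrinking $B$. The running time is $2^{\Oh(\sqrt d\log^2 d)}\cdot d^2\cdot n$, well within $\gamma(d)n^\rho$. Plugging into \cref{thm:main-fpt} gives $2^{\tilde{\Oh}(\sqrt{k})}n^{2.49}$.
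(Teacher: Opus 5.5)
Your route differs from the paper's. The paper takes $\alpha_\pi=3$, so that some deleted residue class misses the closed neighbourhood of $S$. It stores in each table entry the full score inside the current subgraph, and the merge subtracts what both sides count ($\beta w(v)$ for $v\in\peb'$ and the arcs with both ends in $C$). At the top level it applies \cref{lem:baker} with $3k+1$ classes. Folding arc contributions into vertex weights is a good idea in principle. It even covers sets that would \emph{gain} score when an arc of negative contribution is deleted, which the paper's locality argument does not address explicitly. However, the way you book it breaks R1:
\begin{itemize}
\item You add the term of an arc $(v,u)$ with deleted $u$ to the vertex term of $v$, yet your partial score never counts vertex terms of portal vertices.
\item A portal $v$ is charged only in $\sigma$ at the ancestor where $v\in C\setminus B$, using that ancestor's weights. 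Those weights do not contain folds made further down the recursion.
\item The Baker layers $A_{\inte}$ deleted for \cref{lin:rec3,lin:rec4} may contain a non-portal neighbour $u$ of a pebble $v\in\peb$. With $\alpha_\pi=1$ you cannot choose $j$ so that $V^j$ avoids $N(S)$.
\end{itemize}
For such $j$ the term $\alpha_2 w((v,u))$ or $\alpha_3 w((u,v))$ is simply lost, so the maximum over $j$ can be too large or too small depending on its sign. A repair:
\begin{itemize}
\item Keep the folded amount $\phi(v)$ separate from $\beta w(v)$, and fold only arcs not contained in $B$.
\item Count $\phi(v)$ for \emph{every} $v\in S$, portals included.
\item Subtract $\sum_{v\in Q}\phi(v)$ once in the merge, since both children inherit $\phi$ on $C$.
\end{itemize}

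Second, Step 6 cannot be a single call on $G_0$. Inputs of the request variant must be $\alpha_\pi d$-outerplanar, which \cref{lem:cycle} and the bounds in \cref{lem:invariants} rely on, and $G_0$ is merely planar. You have to apply \cref{lem:baker} with $k+1$ classes, fold as above, solve the $k+1$ instances, and take the maximum of the entries $(\emptyset,k)$. Finally, $\sigma$ must range over all arcs with both ends in $C\cup B$ and at least one end outside $B$, not just arcs inside $C$. An arc between $v\in C\setminus B$ and $u\in B\setminus C$ lies inside the portal set of the child containing $u$ and is otherwise never counted; your following sentence names the right set. With these changes the exact-once accounting is consistent.
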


\begin{proof}
    We follow the recipe from above and first choose~$\gamma(d) = 2^{240\sqrt{d}\log^2(d)}d^2$, $\rho=2$, and~$\alpha_{\pi} = 3$.
    The set~$S_{\pi}$ will contain the weight function~$w$ and the cost function~$c$ for all edges and vertices in the currently considered graph as well as the values~$\alpha_1,\alpha_2,\alpha_3$, and~$\beta$.
    A request~$r \in R_{\pi}(d,G,B,\peb,S_{\pi})$ will be a pair~$(\peb,\ell)$ where~$\ell \leq d$.
    The answer for such a request should correspond to the maximum score achieved by a set of vertices intersecting~$B$ in exactly~$\peb$ and of cost exactly~$\ell$.
    It is easy to verify that~$S_{\pi}$ can be maintained in~$\Oh(n^2)$ time and~$R_{\pi}(d,G,B,\peb,S_{\pi})$ can be computed in~$\Oh(nk)$ time by iterating over all~$\ell \leq d$ and copying the vertices in~$\peb$.
    For the third step of our recipe, we define \fragmentrequest{} as follows.
    
    \problemdef{\fragmentrequest}
    {An integer~$d$, a directed~$3d$-outerplanar graph~$G=(V,A)$, a portal set~${B=(B_{\he},B_{\li},B_{\di})}$, an integer~$x$, a vertex-cost function~$c \colon V \rightarrow \mathds{N}$, an arc-weight function~$w: A \rightarrow \mathds{R}$, and values~$\alpha_1,\alpha_2,\alpha_3,\beta \in \mathds{R}$ such that~$(B,x)$ is almost measured with respect to~$d$ and~$3$.}
    {Output a lookup table that returns for each request~$r = (\peb,\ell)$ for each pebble set~$\peb$ for~$(B,d,3,x)$ and each~$\ell \leq d$, the maximum value
    \[\alpha_1 \big(\sum_{\substack{(u,v) \in A\\u,v \in S}} w((u,v))\big) + \alpha_2 \big(\sum_{\substack{(u,v) \in A\\u \in S, v \notin S}} w((u,v))\big) + \alpha_3 \big(\sum_{\substack{(u,v) \in A\\u \notin S,v \in S}} w((u,v))\big) + \beta \big(\sum_{v \in S} w(S)\big)\] of any set~$S$ of cost exactly~$\ell$ and where~$S \cap (B_{\he} \cup B_{\li} \cup B_{\di}) = \peb$ (or~$-\infty$ if no such set exists).}

    Next, we show that Requirement R1 of \cref{thm:main-fpt} holds, that is, there is an index~$i$ such that we can remove the vertices from each layer~$i+j\cdot(3k+1)$ for any~$j$ of any given BFS layering of the underlying undirected graph~$G_{\downarrow}$.
    To this end, note that the score of a set~$S$ only depends on the vertices in~$S$ and the arcs incident to~$S$.
    For each vertex~$v \in S$ and any BFS-layering of~$G_{\downarrow}$, it holds that all arcs incident to~$v$ have all their endpoints in at most~$3$ (consecutive) layers: the layer of~$u$, the one before, and the one after.
    Thus, all arcs relevant for a set~$S$ of size at most~$k$ are incident to at most~$3k$ layers and if there are more than~$3k$ layers, then there is an index~$i$ such that removing all layers~$i + j(3k+1)$ for any~$j$ does not invalidate~$S$.
    Moreover, no new solutions are created in the process, so Requirement~R1 holds for~$\alpha_{\pi} = 3$ for \fragmentrequest{} (and \fragment).

    The penultimate step is to show that Requirement R4 holds, that is, there is an algorithm that given a separator~$C=(C_{\he},C_{\li},C_{\di})$ such that~$(C,1)$ is measured for~$d$ and~$3$, a pebble set~$\peb$ for~$(B,d,3,x)$, and the solutions for instances
        \begin{align*}
            (d,C_{\inte},((B_{\he} \cup C_{\he})[C_{\inte}],(B_{\li} &\cup C_{\li})[C_{\inte}],(B_{\di} \cup C_{\di})[C_{\inte}]),x+1,S_{\pi}[C_{\inte}]) \text{ and}\\
            (d,C_{\ex},((B_{\he} \cup C_{\he})[C_{\ex}],(B_{\li} &\cup C_{\li})[C_{\ex}],(B_{\di} \cup C_{\di})[C_{\ex}]),x+1,S_{\pi}[C_{\ex}]),
        \end{align*}
        can compute the solutions for all requests~${r \in R_\pi(d,G,B,\peb,S_{\pi})}$ that do not intersect~$C_{\di}$ and intersect~$C_{\li}$ in at most~$60\sqrt{d}\log(d)$ vertices in~$\Oh(\gamma(d)n^2)$ time.
    So assume that we are given~$C$, solutions~$\res_1$ for the interior instance and~$\res_2$ for the exterior instance, and a pebble set~$\peb$ for~$(B,d,3,x)$.
    Let~$\peb_{\inte}$ and~$\peb_{\ex}$ be the subsets of~$\peb$ which appear in~$C_{\inte}$ and~$C_{\ex}$, respectively.
    We iterate over all possible pebble sets~$\peb'$ for~$(C,d,3,1)$, that is, sets which do not contain any vertices in~$C_{\di}$ and at most~$60 \sqrt{d} \log(d)$ vertices in~$C_{\li}$.
    Let~$A_C$ be the set of arcs with both endpoints in~$C_{\he} \cup C_{\li} \cup C_{\di}$.
    Let~$A_1, A_2, A_3 \subseteq A_C$ be the sets of arcs~$(u,v) \in A_C$ such that (1) both $u$ and~$v$ belong to~$\peb'$, (2) $u \in \peb'$ and~$v \notin \peb'$, and (3) $u \notin \peb'$ and~$v \in \peb'$.
    Then, for each such set~$\peb'$, we iterate over all~$\ell' \leq \ell$ and lookup the result for the interior for request~$(\peb_{\inte} \cup \peb',\ell')$ and for the exterior for~$(\peb_{\ex} \cup \peb',\ell-\ell'+\sum_{v \in \peb'}c(v))$.
    Let~$s_1$ and~$s_2$ be these scores.
    We then compute~$s = s_1 + s_2 - (\alpha_1 (\sum_{a \in A_1}w(a)) + \alpha_2 (\sum_{a \in A_2}w(a)) + \alpha_3 (\sum_{a \in A_3}w(a)) + \beta (\sum_{v \in \peb'}w(v)))$ and return the highest computed~$s$ as a solution for~$r$.

    We next prove that our computation is correct, that is, (i) we show for each request and any solution for that request that we store a value that is at least as large as the score of the solution and (ii) we show that each time we update a request, there is a partial solution of that score.
    Before doing so, note that~${((B_{\he} \cup C_{\he},B_{\li} \cup C_{\li},B_{\di} \cup C_{\di}),x+1)}$ is measured for~$d$ and~$3$ as~$(B,x)$ and~$(C,1)$ are both measured for~$d$ and~$3$.
    Thus, the solutions~$\res_1$ and~$\res_2$ have entries for~$(\peb_{\inte} \cup \peb',\ell')$ and~$(\peb_{\ex} \cup \peb',\ell')$ for each~$\peb'$ with~${\peb' \cap C_{\di} = \emptyset}$ and~${|\peb' \cap C_{\li}| \leq 60\sqrt{d}\log(d)}$ and for each~$\ell' \leq d$, respectively.

    Now, consider a set~$Z$ of vertices corresponding to a solution for~$r$ with no vertices in~$C_{\di}$ and at most~$60\sqrt{d}\log(d)$ vertices in~$C_{\li}$.
    Let~$Z_{\inte}$ and~$Z_{\ex}$ be the subsets of~$Z$ in~$C_{\inte}$ and~$C_{\ex}$, respectively, and let~$s_{\inte}$ and~$s_{\ex}$ be the scores achieved by~$Z_{\inte}$ in~$C_{\inte}$ and~$Z_{\ex}$ in~$C_{\ex}$, respectively, where~$\peb' = Z_{\inte} \cap Z_{\ex}$.
    Note that the queries~$(\peb_{\inte} \cup \peb',\sum_{v\in Z_{\inte}}c(v))$ and~$(\peb_{\ex} \cup \peb',\sum_{v\in Z_{\ex}}c(v))$ return values at least~$s_{\inte}$ and~$s_{\ex}$, respectively.
    Moreover, all arcs in~$A_1 \cup A_2 \cup A_3$ as well as~$w(v)$ for all~$v \in \peb'$ are counted in both~$s_{\inte}$ and~$s_{\ex}$ and no other arc is counted in both solutions as~$C$ is a separator.
    Thus, the score of~$Z$ is~${s_{\inte} + s_{\ex} - (\alpha_1 (\sum_{a \in A_1}w(a)) + \alpha_2 (\sum_{a \in A_2}w(a)) + \alpha_3 (\sum_{a \in A_3}w(a)) + \beta (\sum_{v \in \peb'}w(v)))}$, which is precisely what we computed.
   
    In the other direction, note that any value we compute corresponds to a set of vertices of cost precisely~$\ell$ which does not intersect~$C_{\di}$ and which intersects~$C_{\li}$ in at most~$60\sqrt{d}\log(d)$ vertices.
    This is true as both~$\res_1$ and~$\res_2$ guarantee this property for the two solutions we combine and both solutions contain precisely the same vertices in~$C$.
    The value we subtract from the sum of the two scores is precisely the amount we counted twice (once in each of the two scores~$s_{1}$ and~$s_2$).
    Thus, the computation is correct.

    We next analyze the running time of our algorithm for Requirement R4.
    The number of possible pebble sets~$\peb'$ for~$(C,1)$ is by \cref{lem:numberofpebbles} in~$\Oh(2^{240\sqrt{d}\log^2(d)})$ and they can be enumerated in~$\Oh(2^{240\sqrt{d}\log^2(d)}d)$ time.
    We then iterate over all possible values of~$\ell \leq d$ and all other computations can be performed in~$\Oh(n^2)$ time.
    The overall running time is therefore bounded by~$\Oh(2^{240 \sqrt{d}\log^2(d)}d^2 n^2) = \Oh(\gamma(d)n^2)$ and Requirement R4 thus holds for \fragmentrequest.
   
    Finally, we present a reduction from \fragment{} to \fragmentrequest{} and analyze the running time.
    To this end, we apply \cref{lem:baker} with~$d=3k+1$ to compute~$3k+1$ $3k$-outerplanar graphs.
    We then solve \fragmentrequest{} for each of these graphs using \cref{thm:main-fpt} with~$d=k$, portal set~$B=(\emptyset,\emptyset,\emptyset)$, $x=1$, the input weight and cost functions, and input values~$\alpha_1,\alpha_2,\alpha_3$ and~$\beta$.
    For each such iteration, we query the optimal solution for request~$(\emptyset,k)$, which corresponds to an optimal solution consisting of vertices of cost exactly~$k$.
    If any iteration finds a solution with score at least~$W$, then we output yes and otherwise we output no.
    Since the correctness can easily be verified, it remains to analyze the running time.
    Using \cref{thm:main-fpt} and the arguments above, solving each instance of \fragmentrequest{} takes~$2^{\tilde{\Oh}(\sqrt{3k})}\gamma(k)n^{2.49}$ time.
    The time for computing all inputs is in~$\Oh(kn)$ yielding an overall running time in~$\Oh((3k+1) 2^{\tilde{\Oh}(\sqrt{k})} 2^{240\sqrt{k}\log^2(k)}k^2n^{2.49}) + \Oh(nk) = 2^{\tilde{\Oh}(\sqrt{k})}n^{2.49}$, concluding the proof.
\end{proof}

We next show that \wis, \wpvc, \den, and \knkcut{} are all special cases of \fragment{} and hence we can show subexponential parameterized algorithms for all of them without needing to apply the framework for each of them.
Subexponential parameterized algorithms for planar graphs were already known for \wis~\cite{masterthesis}, \knkcut{} in undirected planar graphs~\cite{koana}, and (unweighted) \textsc{Partial Vertex Cover}~\cite{koana} (even for apex-minor free graphs).
They were not known for the directed variant of \knkcut{} and \wpvc.
Moreover, the existence of such an algorithm was asked as an open problem for (unweighted) \den~\cite{koana}, which we resolve here in the affirmative.
We mention is passing that an optimal solution for \den{} might contain many connected components and hence the treewidth-pattern-covering approach does not work.

\begin{corollary}
    \wis, \wpvc, \den, and \knkcut{} can all be solved in~$2^{\tilde{\Oh}(\sqrt{k})}n^{2.49}$ time for planar input graphs.
\end{corollary}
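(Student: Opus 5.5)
The plan is to reduce each of the four problems to \fragment{} by choosing suitable costs, weights, coefficients~$\alpha_1,\alpha_2,\alpha_3,\beta$ and threshold~$W$. The running time then follows from the theorem just proved. Two points need care throughout. First, \fragment{} asks for cost \emph{exactly}~$k$, while some problems ask for ``at most~$k$''; we handle this by solving the \fragment{} instance for every~$k' \le k$ and answering yes if any of these instances is a yes-instance, which costs only a factor of~$k$. Second, all reductions keep the graph and~$k$ unchanged (up to orienting edges), so planarity and the parameter are preserved.

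The individual reductions are as follows.

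\textbf{\den{} and \knkcut.} Set~$c \equiv 1$, orient each undirected edge arbitrarily, and give every arc weight~$1$ and every vertex weight~$0$.
\begin{itemize}
\item For \den{}, take~$\alpha_1=1$ and~$\alpha_2=\alpha_3=\beta=0$; the score then counts the edges inside~$S$.
\item For undirected \knkcut{}, take~$\alpha_2=\alpha_3=1$ and all other coefficients~$0$; the score counts edges with exactly one endpoint in~$S$.
\item For directed \knkcut{}, use the given arcs and take only~$\alpha_2=1$.
\end{itemize}
Weighted variants follow by using the given edge weights in place of~$1$.

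\textbf{\wpvc.} Take the given integer costs as~$c$ and orient edges arbitrarily. An edge is covered exactly when it is inside~$S$, leaves~$S$, or enters~$S$, so we set~$\alpha_1=\alpha_2=\alpha_3=1$ and~$\beta=0$. Costs at most~$k$ are handled by iterating over all~$k'\le k$. Since costs are positive integers, this gives at most~$k$ calls.

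\textbf{\wis.} This reduction is the main obstacle, because independence is a hard constraint while the score is only a linear objective. We would encode the constraint with a penalty: set~$c\equiv 1$, keep the vertex weights~$w(v)$, and take~$\beta=1$. Give every arc the weight~$-M$ with~$\alpha_1=1$ and~$\alpha_2=\alpha_3=0$, where~$M$ exceeds~$\sum_v |w(v)|$ plus~$|W|$. Then any set spanning an edge has score below~$W$, so a set of size~$k'$ has score at least~$W$ if and only if it is independent with weight at least~$W$. We then iterate over~$k' \le k$. If one prefers to avoid a large~$M$, the word-RAM assumption in the paper still makes all arithmetic constant time.

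In every case there are at most~$k$ calls, each taking~$2^{\tilde{\Oh}(\sqrt{k})}n^{2.49}$ time, which gives the claimed bound.
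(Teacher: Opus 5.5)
Your proposal is correct and follows the paper's proof essentially verbatim: it uses the same reductions to \fragment{} with the same choices of $c,\alpha_1,\alpha_2,\alpha_3,\beta$, and the same iteration over all cost values up to $k$ for \wis{} and \wpvc{}. The only deviation is the penalty for \wis{}, where your arc weights $-M$ with $M>\sum_v|w(v)|+|W|$ are more robust than the paper's $\alpha_1=-nw$ ($w$ the largest vertex weight). The paper's choice only bounds the score of non-independent sets by $0$, and it even rewards edges when all vertex weights are negative.
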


\begin{proof}
    For the directed variant of \knkcut, we simply set~$c(v) = 1$ for all vertices~$v$, $\alpha_1 = \alpha_3 = \beta = 0$, and~$\alpha_2 = 1$.
    Then, a set~$S$ of cost exactly~$k$ is a set of size exactly~$k$ and the score of~$S$ is precisely the sum of weights of edges from~$S$ to the rest of the graph.
    
    For the undirected variant of \knkcut{} and for all remaining problems, we first arbitrarily direct all edges of the input graph (without changing the weight in case the problem has edge weights).
    For \knkcut, \den, and \wis, we again set~$c(v) = 1$ for all vertices.
    For \knkcut, we set~${\alpha_1 = \beta = 0}$ and~${\alpha_2 = \alpha_3 = 1}$.
    For \den, we set~$\alpha_1 = 1$ and~$\alpha_2 = \alpha_ 3 = \beta = 0$.
    It is easy to verify that \fragment{} for the chosen values of~$\alpha_1,\alpha_2,\alpha_3$ and~$\beta$ corresponds to \knkcut{} and \den, respectively.

    For \wpvc, we set the cost of a vertex to its input weight and set~${\alpha_1 = \alpha_2 = \alpha_3 = 1}$ and~${\beta = 0}$.
    We then solve for each~$\ell \leq k$ an instance of \fragment{} to find whether there is a set of cost exactly~$\ell$ that covers edges of weight at least~$W$.
    The additional factor of~$k$ in the running time is subsumed in~$2^{\tilde{\Oh}(\sqrt{k})}$.

    Finally, we consider \wis.
    Here, we set~$\alpha_2 = \alpha_3 = 0$, $\beta = 1$, and~$\alpha_1 = -n w$, where~$w$ is the largest weight of any vertex.
    This ensures that any set which induces at least one edge, that is, it is not an independent set, has a negative score.
    Again, we iterate over all~$\ell \leq k$ and compute the maximum weight of any independent set of size exactly~$\ell$.
    The maximum over all of those gives the maximum-weight independent set of size at most~$k$ and the additional factor of~$k$ in the running time is again subsumed in~$2^{\tilde{\Oh}(\sqrt{k})}$.
    This concludes the proof.
\end{proof}

\subsection{Maximum-Weight Induced Forest}

We next study \wif.
Before giving the formal proof, let us first give a high-level overview.
When combining solutions along a separator, we want to verify that the two partial solution (induced forests) are together also a forest.
To this end, consider a cycle in the union of two partial solutions and let~$\peb'$ be the set of vertices from the separator that are part of the cycle.
Each consecutive pair is connected in (at least) one of the two partial solutions.
Thus, the main idea will be to keep track of (i) which vertices of the portal set are contained in the partial solution (a pebble set) and (ii) which of these vertices are in the same connected component in the partial solution.
We encode this information in a \emph{\textbf{partition}} of a pebble set.
To this end, we denote the set of all possible partitions of a set~$S$ by~$\mathbf{\mathcal{B}(S)}$ and each set in~$S$ a \emph{\textbf{block}}.
For a given subset~$A$ of vertices and a given partition~$P \in \mathcal{B}(A)$, we say that a graph~$G'$ has connected components for~$A$ as described by~$P$ if for each pair~$u,v \in A$ of vertices in~$A$, it holds that~$u$ and~$v$ belong to the same connected component in~$G'$ if and only if~$u$ and~$v$ are contained in a common block~$p \in P$.
We also say that~$P$ describes the connected components of~$G'$ for~$A$.
We will then be interested in keeping track of the connectivity of the union of two partial solutions.
We will show that this is equivalent to the \emph{\textbf{join}} of two partitions~$P$ and~$Q$ in case~$P$ and~$Q$ partition the same set of vertices.
For notational convenience, we will also define the join for partitions of different sets of vertices and show that this generalized notion of join precisely describes the connectivity of the union of two partial solutions we desire.

We now give a formal definition of joins of partitions.
We denote the join of~$P$ and~$Q$ by~$P \sqcup Q$.
If~$P$ and~$Q$ partition the same set of vertices, then the join of~$P$ and~$Q$ is the finest common coarsening of~$P$ and~$Q$ and is obtained by starting with~$P$ and iteratively merging two sets~$p_1,p_2 \in P$ if there is a set~$q \in Q$ with~$p_1 \cap q \neq \emptyset$ and~$p_2 \cap q \neq \emptyset$.
If~$P$ partitions a set~$V_1$ and~$Q$ partitions a different set~$V_2$, then we define the join of~$P$ and~$Q$ as the join of~$P'$ and~$Q'$, where~${P' = P \cup \{\{x\} \mid x \in V_2 \setminus V_1\}}$ and~${Q' = Q \cup \{\{x\} \mid x \in V_1 \setminus V_2\}}$, that is, we first add all elements which are not partitioned by~$P$ or~$Q$ as singletons.
The join of two partitions of sets of size~$n$ can be computed in~$\Oh(n\alpha(n))$ time, where~$\alpha(n)$ is the inverse Ackermann's function~\cite{TL84}.
We next show that the join operation accurately describes the connectivity of the union of two partial solutions when these are joined along a common separator.
This result is folklore but since we could not find a formal proof anywhere, we provide a proof for the sake of completeness.

\begin{lemma}[folklore]
    \label{lem:connectivity}
    Let~$G=(V,E)$ be an undirected graph and~$(X,Y)$ be a separation of~$G$.
    Let~$Z = X \cap Y$, $\peb \subseteq Z$, and let~$B \subseteq V$ be a set of vertices with~$B \cap Z = \peb$.
    Let~${B_X = B \cap X}$ and~${B_Y = B \cap Y}$.
    Let~$P_X \in \mathcal{B}(B_X)$ and~$P_Y \in \mathcal{B}(B_Y)$ be partitions of~$B_x$ and~$B_Y$, respectively.
    Let~$G_X=(V_X,E_X)$ and~$G_Y=(V_Y,E_Y)$ be subgraphs of~$G[X]$ and~$G[Y]$, respectively, such that~$B_X \subseteq V_X$, $V_X \cap Z = \peb$, $G_X$ has connected components for~$B_X$ as described by~$P_X$, $B_Y \subseteq V_Y$, $V_Y \cap Z = \peb$, and $G_Y$ has connected components for~$B_Y$ as described by~$P_Y$.
    Then, the graph~$G'=(V_X \cup V_Y, E_X \cup E_Y)$ has connected components for~$B$ as described by~$P_X \sqcup P_Y$.
\end{lemma}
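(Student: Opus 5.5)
We prove both directions of the claim "$u,v\in B$ lie in the same component of $G'$ iff they lie in a common block of $P_X\sqcup P_Y$". Throughout, let $P'_X$ and $P'_Y$ be the extensions of $P_X$ and $P_Y$ to $B$ by singletons, as in the definition of the join. Recall that $P_X\sqcup P_Y$ is the finest common coarsening of $P'_X$ and $P'_Y$. Equivalently, $u$ and $v$ share a block iff there is a chain $u=b_0,b_1,\dots,b_t=v$ of vertices of $B$ in which each consecutive pair lies in a common block of $P'_X$ or of $P'_Y$. We first verify this chain characterization from the iterative merging definition; this is a routine induction on the merges.

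\emph{Easy direction.} Suppose $b_i,b_{i+1}$ lie in a common block of $P'_X$. A block of $P'_X$ with two elements is a block of $P_X$, so both vertices lie in $B_X$ and are connected in $G_X$. Since $G_X$ is a subgraph of $G'$, they are connected in $G'$. The case of $P'_Y$ is symmetric. Concatenating along the chain shows that $u$ and $v$ are connected in $G'$.

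\emph{Hard direction (main obstacle).} Take a path $Q$ in $G'$ from $u$ to $v$. Every edge of $G'$ belongs to $E_X\subseteq E(G[X])$ or to $E_Y\subseteq E(G[Y])$. We decompose $Q$ into maximal segments whose edges all come from the same side. Consider a breakpoint between an $E_X$-edge and an $E_Y$-edge. Its vertex lies in $V_X\cap V_Y\subseteq X\cap Y=Z$, and it is a vertex of $G_X$, so it lies in $V_X\cap Z=\peb\subseteq B$. Hence every breakpoint lies in $B$, and $u,v\in B$ by assumption.

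The key subtlety is that an $X$-segment runs between two $B$-vertices, but we must check that both endpoints lie in $B_X$. Any breakpoint lies in $\peb\subseteq B_X\cap B_Y$. The endpoint $u$ (or $v$) of an $X$-segment is incident to an $E_X$-edge, so it lies in $V_X\subseteq X$, and being in $B$ it lies in $B_X$. So each $X$-segment connects two vertices of $B_X$ inside $G_X$, and by the hypothesis on $P_X$ they share a block of $P_X$. The $Y$-segments are handled in the same way. Listing $u$, the breakpoints and $v$ in order therefore gives the required chain, so $u$ and $v$ lie in the same block of $P_X\sqcup P_Y$.

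The degenerate case $u=v$ is trivial, and a path consisting of a single segment is covered by the same argument. Note that we never need the assumption that there are no edges between $X\setminus Y$ and $Y\setminus X$ beyond the fact that $E_X$ and $E_Y$ lie inside $X$ and $Y$; only $V_X\cap Z=V_Y\cap Z=\peb$ is used at the crossings.
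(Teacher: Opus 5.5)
Your proof is correct and follows the same route as the paper. The easy direction concatenates paths along a chain witnessing the join. The hard direction cuts a $u$--$v$ path in $G'$ at vertices of $\peb$ and reads off consecutive pairs lying in a common block of $P_X$ or $P_Y$.

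The only difference is cosmetic. You cut at changes of side, so the separation property is never needed, and you check explicitly that segment endpoints lie in $B_X$ or $B_Y$. The paper instead cuts at every vertex of $Z$ on the path and invokes the separation to keep each piece on one side.
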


\begin{proof}
    First, we show that if two vertices~$u,v \in B$ are in the same block of~$P_X \sqcup P_Y$, then they are in the same connected component in~$G'$.
    By definition, there is a sequence~$u=w_1, w_2, \dots, w_{\ell}=v$ in~$B$ such that for each~$i \in [\ell-1]$, $w_i$ and~$w_{i+1}$ are contained in a common block of~$P_X$ or in a common block of~$P_Y$.
    If $w_i$ and~$w_{i+1}$ are in a common block of~$P_X$, then note that~$w_i,w_{i+1} \in B_X$ and there exists a path in~$G_X$ connecting them, since $G_X$ has connected components for~$B_X$ as described by~$P_X$.  
    Similarly, if they are in a common block of~$P_Y$, then there exists a path in~$G_Y$ connecting them.  
    Since~$G_X$ and~$G_Y$ are both subgraphs of~$G'$ and connectivity is transitive, this shows that~$u$ and~$v$ are in the same connected component of~$G'$.
    
    Conversely, we next show that if~$u$ and~$v$ are in the same connected component of~$G'$, then they are contained in a common block of~$P_X \sqcup P_Y$.
    So assume that~$u$ and~$v$ are in the same connected component in~$G'$.
    Then there exists a path~$P = (u=p_1, p_2, \dots, p_q=v)$ in $G'$.
    Let~$w_1,w_2,\ldots,w_\ell$ be the vertices in~$Z$ that are contained in~$P$ (in the same order) and let~$w_0 = u$ and~$w_{\ell+1}=v$.
    Each edge of $P$ is contained in~$E_X$ or in~$E_Y$ as~$G'$ is the union of~$G_X$ and~$G_Y$ and since~$(X,Y)$ is a separation of~$G$, all edges between~$w_{i-1}$ and~$w_{i}$ are contained in the same set for each~$i \in [\ell+1]$.
    Since~$P_X$ and~$P_Y$ describe the connected components of~$G_X$ and~$G_Y$ for~$B_X$ and~$B_Y$, respectively, it holds for each~$i \in [\ell+1]$ that~$w_{i-1}$ and~$w_{i}$ are contained in a common block in~$P_X$ or in~$P_Y$ (or both).
    Thus, $u$ and~$v$ are contained in a common block of $P_X \sqcup P_Y$ by definition of joins.
    This concludes the proof.
\end{proof}

The next lemma will help us decide when the union of two forests~$G_1$ and~$G_2$ is acyclic.
Intuitively, this is the case if there is no sequence~$(w_1,w_2,\ldots,w_{2\ell})$ of vertices such that~$w_{2i-1}$ and~$w_{2i}$ are connected in~$G_1$ for each~$i \in [\ell]$ and~$w_1$ and~$w_{2\ell}$ as well as~$w_{2i}$ and~$w_{2i+1}$ for each~$i \in [\ell-1]$ are connected in~$G_2$.
We will next define a notion of acyclic pairs of partitions that captures this intuition.
Unfortunately, the intuition does not quite work when~$G_1$ and~$G_2$ are not edge-disjoint.
As a simple example, consider just two vertices~$u$ and~$v$.
If they are connected by the same path in~$G_1$ and~$G_2$, then the union of the two is just a path and therefore acyclic.
If however, they are connected by two paths that differ by at least one edge, then the union is not acyclic.
We circumvent this issue by keeping track of the set of edges in the intersection and ignoring those for one of the two graphs.
We next define our notion of acyclic pairs of partitions.
A pair~$(P,Q)$ of partitions is \emph{\textbf{acyclic}} if the following graph~$F_{P,Q}$ is acyclic.
Start with a vertex~$v_x$ for each element~$x$ and a vertex~$u_p$ for each block~$p \in P \cup Q$.
Then, add an edge between~$v_x$ and~$u_p$ if~$x \in p$.
Note that deciding whether a pair~$(P,Q)$ is acyclic can be done in~$\Oh(n)$ time, where~$n$ is the number of elements partitioned by~$P$ and~$Q$, as we can build the bipartite graph in~$\Oh(n)$ time and then check for a cycle in linear time. Since each element vertex is incident to at most~$2$ edges, the number of edges is at most~$2n$ and thus this can also be done in~$\Oh(n)$ time.
We next show that acyclic partitions accurately describes when the union of two partial solutions is acyclic.

\begin{lemma}
    \label{lem:acyclic}
    Let~$G=(V,E)$ be an undirected graph and~$(X,Y)$ be a separation of~$G$ with~$Z = X \cap Y$.
    Let~$\peb \subseteq Z$ and~$B \subseteq V$ be sets of vertices such that~$B \cap Z = \peb$.
    Let~$B_X = B \cap X$ and~$B_Y = B \cap Y$.
    Let~$P_X \in \mathcal{B}(B_X)$ and~$P_Y \in \mathcal{B}(B_Y)$ be partitions of~$B_x$ and~$B_Y$, respectively.
    Let~$E_{\peb}$ be the set of edges of~$G[\peb]$.
    Let~$G_X=(V_X,E_X)$ and~$G_Y=(V_Y,E_Y)$ be acyclic induced subgraphs of~$G[X]$ and~$G[Y]$, respectively, such that~$B_X \subseteq V_X$, $V_X \cap Z = \peb$, $G_X$ has connected components for~$B_X$ as described by~$P_X$, $B_Y \subseteq V_Y$, $V_Y \cap Z = \peb$, and $G_Y$ has connected components for~$B_Y$ as described by~$P_Y$.
    Let~$G'_Y=(V_Y,E_Y \setminus E_{\peb})$ and let~$P'_Y$ be the partition that describes the connected components of~$G'_Y$ for~$B_Y$.
    Then, the graph induced by~$V_X \cup V_Y$ is an acyclic induced subgraph of~$G$ with connected components for~$B$ as described by~$P_X \sqcup P_Y$ if and only if the pair~$(P_x,P'_Y)$ is acyclic.
\end{lemma}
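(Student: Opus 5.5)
The plan is to first identify the graph induced by $V_X \cup V_Y$, then prove the two directions about cycles separately. The connectivity clause is handled by the folklore lemma.

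\emph{Step 1: the induced graph and its components.} Since $(X,Y)$ is a separation, every edge of $G[V_X \cup V_Y]$ lies inside $X$ or inside $Y$. Because $G_X$ and $G_Y$ are induced and $V_X \cap Z = V_Y \cap Z = \peb$, every such edge is in $E_X \cup E_Y$. So $G[V_X \cup V_Y] = (V_X \cup V_Y, E_X \cup E_Y)$. By \cref{lem:connectivity}, its connected components for $B$ are always described by $P_X \sqcup P_Y$. The statement therefore reduces to: $H := (V_X \cup V_Y, E_X \cup E_Y)$ is acyclic if and only if $(P_X, P'_Y)$ is acyclic.

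Since $G_X$ and $G_Y$ are induced, $E_{\peb} \subseteq E_X \cap E_Y$. Moreover, an edge in $E_X \cap E_Y$ has both endpoints in $X \cap Y$, hence in $\peb$. Thus $H = (V_X \cup V_Y, E_X \,\dot\cup\, (E_Y \setminus E_{\peb}))$, a union of $G_X$ and $G'_Y$ with disjoint edge sets. Both graphs are forests, and they share only the vertices of $\peb$. Also note that an element vertex of $F_{P_X,P'_Y}$ has degree $2$ exactly when the element lies in $B_X \cap B_Y = \peb$.

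\emph{Step 2: a cycle in $H$ yields a cycle in $F_{P_X,P'_Y}$.} Let $K$ be a cycle in $H$. It cannot use only edges of $E_X$, nor only edges of $E_Y \setminus E_{\peb}$, since both are forests. Cut $K$ at its vertices in $\peb$ and group the resulting segments into maximal runs whose edges all come from the same side. This gives an alternating cyclic sequence of at least two runs, with endpoints $w_1, \dots, w_{2\ell} \in \peb$. Consecutive endpoints of an $X$-run are connected in $G_X$, so they lie in a common block of $P_X$; similarly, the endpoints of a $Y$-run lie in a common block of $P'_Y$. This gives a closed walk $v_{w_1}, u_{p_1}, v_{w_2}, u_{q_1}, \dots$ in $F_{P_X,P'_Y}$. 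It never immediately backtracks: each run joins two distinct vertices because $K$ is simple, and the walk alternates between $P_X$-blocks and $P'_Y$-blocks at every element vertex. A closed walk without immediate backtracking contains a cycle, so $F_{P_X,P'_Y}$ is not acyclic.

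\emph{Step 3: a cycle in $F_{P_X,P'_Y}$ yields a cycle in $H$.} Let $v_{x_1}, u_{p_1}, v_{x_2}, u_{q_1}, \dots$ be a cycle in $F_{P_X,P'_Y}$. Every element on it has degree $2$, so it lies in $\peb$, and the blocks alternate between $P_X$ and $P'_Y$. For each $P_X$-block $p_i$ on the cycle, choose a path in $G_X$ between its two cycle-neighbours; for each $P'_Y$-block, choose a path in $G'_Y$. Distinct $P_X$-blocks correspond to distinct components of $G_X$, so these paths are edge-disjoint, and likewise on the $Y$ side. Because $E_X$ and $E_Y \setminus E_{\peb}$ are disjoint, the concatenation is a closed trail with pairwise distinct edges and at least two edges. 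Such a trail contains a cycle of $H$.

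The main obstacle is the bookkeeping in Step 2. One must argue that the projected walk does not degenerate, and that merging same-side segments into runs is legitimate, i.e.\ that endpoints of a run really stay in one component. The decomposition $H = G_X \,\dot\cup\, G'_Y$ from Step 1 is what makes both arguments work.
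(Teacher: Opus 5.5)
Your proof is correct and follows the same route as the paper. The connectivity clause comes from \cref{lem:connectivity}. A cycle in $F_{P_X,P'_Y}$ is turned into a closed walk without repeated edges in $G[V_X\cup V_Y]$, using that $E_X$ and $E_Y\setminus E_{\peb}$ are disjoint. Conversely, a cycle in $G[V_X\cup V_Y]$ is cut into maximal one-sided runs with endpoints in $\peb$ to produce a cycle in $F_{P_X,P'_Y}$. Your version is slightly more careful than the paper's: you verify explicitly that $G[V_X\cup V_Y]=(V_X\cup V_Y,E_X\cup E_Y)$ with $E_X\cap E_Y=E_{\peb}$, and in Step~2 you handle repeated blocks via the non-backtracking closed-walk argument, where the paper simply asserts that the resulting sequence is a cycle.
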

    
\begin{proof}
    Let~$H = G[V_X \cup V_Y]$ be the graph induced by~$V_X \cup V_Y$.
    For the first direction, assume that~$H$ is acyclic and has connected components for~$B$ as described by~$P_X \sqcup P_Y$.
    We will show that~$(P_X, P'_Y)$ is acyclic.
    To this end, assume towards a contradiction that the graph~$F_{P_X,P'_Y}$ contains a cycle~$(v_{x_1},u_{p_1},v_{x_2},u_{p_2},\ldots,u_{p_{\ell}})$, where~$x_1,x_2,\ldots,x_\ell \in B$ and~$p_1,p_2,\ldots,p_{\ell} \in P_X \cup P'_Y$.
    Since we may assume without loss of generality that the cycle we consider is simple and each vertex~$v_{x_i}$ is only incident to two nodes~$u_{p_j}$ (one for a block in~$P_X$ and one for a block in~$P'_Y$), it holds that the nodes~$u_{p_j}$ alternate between being blocks for~$P_X$ and~$P_{Y'}$.
    By construction, each consecutive pair $(v_{x_i}, v_{x_{i+1}})$ lies within the same connected component of~$G_X$ or of~$G'_Y$, that is, there are paths in~$G_X$ or~$G'_Y$ connecting~$x_i$ and~$x_{i+1}$.
    Moreover, these paths are pairwise edge-disjoint as~$E_X \cap (E_{Y} \setminus E_{\peb}) = \emptyset$ and two paths in the same graph are edge disjoint as they belong to different connected components (we may assume that each node~$u_{p_j}$ appears at most once in the cycle).
    Hence, we can concatenate all paths to get a closed walk in~$H$ that does not repeat any edges, contradicting the fact that~$H$ is acyclic.

    In the other direction, assume that~$(P_X, P'_Y)$ is acyclic.
    We will show that~$H$ is acyclic and that it has connected components for~$B$ as described by~$P_X \sqcup P_Y$.
    Note that the latter is shown by \cref{lem:connectivity}.
    Hence, assume towards a contradiction that $H$ contains a cycle~$C$.
    Since~$G_X$ and~$G_Y$ are each acyclic, $C$ must contain edges from both~$G_X$ and~$G_Y$.
    Let~$C_X$ and~$C'_Y$ denote the subset of edges in~$E_X$ and~$E_Y \setminus E_{\peb}$ and let~$w_1,w_2,\ldots,w_{\ell}$ be the vertices at the end of maximal subpaths of~$C$ in~$C_X$ and/or~$C'_Y$.
    Note that~$w_i \in \peb$ for each~$i \in [\ell]$ as~$Z$ is a separator and~$G_X$ and~$G_Y$ only contain the vertices in~$\peb$ in~$Z$.
    By definition, $w_i$ and~$w_{i+1}$ (where~$w_{\ell+1} = w_1$) are contained in the same block~$p_i$ in~$P_X$ or~$P'_Y$, respectively.
    Note that~$(w_1,p_1,w_2,\ldots,w_\ell,p_{\ell})$ forms a cycle in~$F_{P_X,P'_Y}$, a contradiction to~$(P_X,P'_Y)$ being acyclic.
    Thus, $H$ is acyclic.
    This concludes the proof.
\end{proof}

As a last step before showing how to solve \wif{} in subexponential parameterized time, we analyze the number of possible partitions for a given pebble set.

\begin{lemma}
    \label{lem:numberofpartitions}
    Let~$(B,x)$ be a pair that is almost measured with respect to two integers~$d$ and~$\alpha$, where~$\alpha$ is a constant.
    Let~$\peb$ be a pebble set for~$(B,d,\alpha,x)$.
    Then, the number of possible partitions of~$\peb$ is in~$\Oh(2^{66000 \sqrt{d}\log^{10}(d)})$ and all partitions can be enumerated in the same time.
\end{lemma}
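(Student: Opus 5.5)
The plan is to bound $|\peb|$ first and then use that the number of partitions of an $m$-element set (the Bell number $B_m$) is at most $m^m = 2^{m\log m}$.

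Since $(B,x)$ is almost measured with respect to $d$ and $\alpha$, we have $x \leq \max(200\log^7(d),1)$ and $|B_{\he}| \leq 40x\alpha\sqrt{d}$. By the definition of pebble sets, $\peb \subseteq B_{\he}\cup B_{\li}$ and $|\peb\cap B_{\li}| \leq 20x\alpha\sqrt{d}\log(d)$. Hence
\[|\peb| \leq 40x\alpha\sqrt{d} + 20x\alpha\sqrt{d}\log(d) \leq 60x\alpha\sqrt{d}\log(d) \leq 12000\,\alpha\sqrt{d}\log^8(d)\]
for $d\geq 2$; small $d$ gives only a constant.

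Next, with $m=|\peb|$, the number of partitions is at most $m^m$, since each element picks a block index in $[m]$. For $d$ at least a suitable constant, $\log(m) \leq \log(12000\alpha)+\tfrac12\log(d)+8\log\log(d)$, which is at most $c_\alpha\log(d)$. This gives a bound of $2^{12000\alpha c_\alpha\sqrt{d}\log^9(d)}$. Because $\alpha$ is a constant, the factor $12000\alpha c_\alpha$ is at most $66000\log(d)$ once $d$ exceeds a constant depending on $\alpha$, and smaller $d$ is absorbed into the $\Oh$. So the count lies in $\Oh(2^{66000\sqrt{d}\log^{10}(d)})$. The extra $\log(d)$ factor in the stated exponent gives exactly the slack needed to swallow the constant $\alpha$ and the lower-order terms, so the only care needed is this bookkeeping. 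I expect it to be the main, if mild, obstacle: make sure $\alpha$ and the additive constants in $\log(m)$ are absorbed for all sufficiently large $d$.

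For the enumeration, generate all partitions by restricted growth strings: assign elements one at a time to an existing block or a new block. This produces each partition exactly once. With the per-partition overhead of $\Oh(m)$, which is polylogarithmic times $\sqrt{d}$, the overhead is absorbed into the same exponential bound after adjusting constants, just as above. The total time is therefore again $\Oh(2^{66000\sqrt{d}\log^{10}(d)})$.
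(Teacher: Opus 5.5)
Your proposal is correct and follows the paper's proof essentially step for step: you bound $|\peb| \le 12000\alpha\sqrt{d}\log^8(d)$ from the almost-measured and pebble-set conditions, apply the Bell-number bound $m^m$, absorb $\alpha$ into the spare $\log(d)$ factor once $d$ exceeds a constant, and enumerate by placing elements one at a time into an existing or a new block. The paper makes the absorption explicit via the case $d \ge 256\cdot 2^{\alpha}$, the estimate $12000\alpha\sqrt{d}\log^8(d) \le d^{5.5}$, and $\alpha \le \log(d)$, and it cites constant amortized delay for the enumeration where you instead let the leftover slack swallow the $\Oh(m)$ per-partition overhead; both choices are equally valid.
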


\begin{proof}
    Since~$(B,x)$ is almost measured with respect to~$d$ and~$\alpha$, it holds that~$x \leq 200\log^7(d)$.
    Moreover, $\peb \subseteq (B_{\he} \cup B_{\li})$, $|B_{\he}| \leq 40x\alpha\sqrt{d}$ and~$|\peb \cap B_{\li}| \leq 20x\alpha \sqrt{d} \log(d)$ because~$\peb$ is a pebble set for~$(B,d,\alpha,x)$.
    Thus, $|\peb| \leq 60 (200 \log^7(d)) \alpha \sqrt{d} \log(d) = 12000 \alpha \sqrt{d} \log^8(d)$.

    The number of partitions for a set of size~$n$ is known as the $n$\textsuperscript{th} Bell's number and is known to be at most~$n^n$.
    There are even slightly better upper bounds~\cite{Bell}, but this simple bound will be sufficient for our purpose.

    If~$d < 256 \cdot 2^{\alpha}$, then the number of partitions is constant, that is, in~$\Oh(1)$.
    So assume that~${d \geq 256 \cdot 2^{\alpha}}$.
    Then, the number of possible partitions of~$\peb$ is upper-bounded by
    \begin{align*}
        |\peb|^{|\peb|} &\leq (12000 \alpha \sqrt{d} \log^8(d))^{12000 \alpha \sqrt{d} \log^8(d)}\\
        &\leq (d^2 \sqrt{d} \ d^3)^{12000 \alpha \sqrt{d} \log^8(d)}\\
        &= 2^{12000 \alpha \sqrt{d} \log^8(d) \log(d^{5.5})}\\
        &\leq 2^{66000 \sqrt{d} \log^{10}(d)}.
    \end{align*}
    Here, the second inequality is due to the fact that~$(256\alpha)^2 > 12000\alpha$ and~$\log^8(d) \leq d^3$ for all~$d \geq 256$.
    The last inequality is due to the fact that~$\alpha \leq \log(d)$.

    Note that we can enumerate all partitions of~$\peb$ efficiently, that is, with constant delay on average by iteratively deciding for each vertex~$v \in \peb$ to which already existing block it is added or whether a new block consisting only of~$v$ is added.
    This concludes the proof.
\end{proof}

With the above three lemma at hand, we can now show how to apply \cref{thm:main-fpt} to \wif.
A request will consist of a pebble set, a partition of this pebble set, and an integer~$\ell$.
The output describes the maximum weight of a forest with at most~$\ell$ vertices containing exactly the pebble set in the portal set and having connected components for the pebbles as described by the partition.
That is, we basically keep track of how many vertices are contained in the partial solution, which vertices in the portal set make up the interface to the rest of the graph, and which of these vertices are connected in the partial solution.

\begin{theorem}
    \wif{} can be solved in~$2^{\tilde{\Oh}(\sqrt{k})}n^{2.49}$ time for planar input graphs.
\end{theorem}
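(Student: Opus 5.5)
We follow the six-step recipe from \cref{sec:fragment}. We choose $\alpha_{\pi}=1$, $\rho=2$, and $\gamma(d)=2^{c\sqrt{d}\log^{10}(d)}d^2$ for a sufficiently large constant $c$ (enough to absorb \cref{lem:numberofpebbles} and the square of the bound of \cref{lem:numberofpartitions}). The set $S_{\pi}$ holds the vertex weights restricted to the current graph, so Requirement R3 is trivial. A request is a triple $(\peb,P,\ell)$ with $\peb$ a pebble set for $(B,d,1,x)$, $P\in\mathcal{B}(\peb)$, and $\ell\le d$. The request variant \wifr{} asks, for each request, for the maximum weight of a set $S$ with $|S|=\ell$, $S\cap(B_{\he}\cup B_{\li}\cup B_{\di})=\peb$, $G[S]$ a forest, and $G[S]$ having connected components for $\peb$ as described by $P$ ($-\infty$ if no such $S$ exists). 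By \cref{lem:numberofpartitions}, $R_{\pi}$ can be enumerated in $\Oh(\gamma(d)n)$ time, which gives R2.

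For R1 we use locality $\alpha_{\pi}=1$. A solution with at most $d$ vertices meets at most $d$ BFS layers, so some residue class modulo $d+1$ avoids it. Deleting that class keeps $S$ feasible, and since the components and acyclicity of $G[S]$ are unchanged, it also keeps the same partition. Conversely, every solution in a subgraph is an induced forest of $G$: deleting vertices cannot create edges inside $S$, and $G[S]$ is the same graph in either case. Hence the output is the maximum over the $d+1$ instances.

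R4 is the core step. Fix $\peb$ and the separator $C$. We iterate over every pebble set $\peb'$ of $(C,1)$ and over every $\ell'\le\ell$. We split $\peb$ into $\peb_{\inte},\peb_{\ex}$ and iterate over all partitions $P_1$ of $\peb_{\inte}\cup\peb'$ and $P_2$ of $\peb_{\ex}\cup\peb'$. We then look up $s_1=\res_1(\peb_{\inte}\cup\peb',P_1,\ell')$ and $s_2=\res_2(\peb_{\ex}\cup\peb',P_2,\ell-\ell'+|\peb'|)$.

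The obstacle is \cref{lem:acyclic}, which needs the partition $P_2'$ of the exterior solution with the edges of $G[\peb']$ removed, and this partition is not stored. To fix this we enrich the request with a second partition $P^{-}$ describing the components after deleting all edges inside the pebble set; the pebble set already contains all portal and separator vertices that appear in the solution. Requests thus become $(\peb,P,P^{-},\ell)$. The number of requests is then the square of the partition bound, which is still $2^{\tilde\Oh(\sqrt d)}$. When combining, we accept $(P_1,P_2)$ only if $(P_1,P_2^{-})$ is acyclic. The output partition is $P=(P_1\sqcup P_2)$ restricted to $\peb$, which is correct by \cref{lem:connectivity}. The combined value is $s_1+s_2-w(\peb')$.

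We must also maintain $P^{-}$ under combination. We obtain it as the join of $P_1^{-}$ and $P_2^{-}$ restricted to $\peb$; this is valid because removing edges inside $\peb\cup\peb'$ commutes with gluing along $\peb'$. Removing edges inside the new pebble set rather than the old one is handled by noting that pebble vertices of $\peb'$ lie in both pebble sets. I expect getting this bookkeeping exactly right to be the fiddly part.

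For correctness we argue both directions as in the proof for \fragment. Induced subgraphs on the two sides glue to $G[V_X\cup V_Y]$ because $C$ is a separator, and \cref{lem:acyclic} then certifies that the union is a forest. For the running time, the number of tuples $(\peb',P_1,P_1^{-},P_2,P_2^{-},\ell')$ is $2^{\tilde\Oh(\sqrt d)}$, and each check costs $\Oh(n)$, so R4 runs in $\Oh(\gamma(d)n^2)$. Finally, as in the proof for \fragment, we apply \cref{lem:baker} with $k+1$ and solve \wifr{} on each of the resulting $k$-outerplanar graphs via \cref{thm:main-fpt} with $d=k$, $B=(\emptyset,\emptyset,\emptyset)$, and $x=1$. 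We query $(\emptyset,\emptyset,\emptyset,\ell)$ for every $\ell\le k$ and take the maximum, which gives total time $2^{\tilde\Oh(\sqrt k)}n^{2.49}$.
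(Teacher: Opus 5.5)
You correctly observe that the partition $P_2'$ required by \cref{lem:acyclic} (the exterior solution with the edges of $G[\peb']$ removed) is not determined by $P_2$ and $E(G[\peb'])$. However, the repair you propose makes R4 unsound as stated. Your $P_2^-$ deletes \emph{all} edges inside the exterior pebble set $\peb'_{\ex}=\peb_{\ex}\cup\peb'$. \cref{lem:acyclic} may only delete $E(G[\peb'])$, the edges that also lie in the interior forest. Edges of $G[\peb'_{\ex}]$ with an endpoint in $\peb_{\ex}\setminus C$ belong to the exterior forest only, so the test on $(P_1,P_2^-)$ never sees them.

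Concretely, take $a\in\peb_{\ex}\setminus C$ and $b,c\in\peb'$ with $ab,ac\in E(G)$ and $bc\notin E(G)$. Let the exterior part be $\{a,b,c\}$, which induces a path, so its entry in $\res_2$ is finite. Let the interior part be a path from $b$ to $c$. Then $P_2^-=\{\{a\},\{b\},\{c\}\}$ and $P_1=\{\{b,c\}\}$, so $F_{P_1,P_2^-}$ is acyclic. Yet $G[Z]$ contains a cycle through $a$, and your merge reports a non-forest.

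Your update rule for $P^-$ fails for the same reason. $(P_1^-\sqcup P_2^-)$ restricted to $\peb$ describes $G[Z]$ minus all edges inside $\peb\cup\peb'$. The new $P^-$ must instead keep the edges inside $\peb'$ and between $\peb$ and $\peb'$. For example, if $a_1,a_2\in\peb_{\ex}$ are connected in $G[Z]$ only via $a_1ba_2$ with $b\in\peb'$, the true $P^-$ puts them in one block, but your rule separates them. Entries then get filed under a too-fine $P^-$, and later acyclicity tests become unsound. So the problem is not fiddly bookkeeping: the two stated rules are wrong.

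Both defects can be patched, because every missing edge has both endpoints in $\peb\cup\peb'$ and is known explicitly.
\begin{itemize}
\item Before testing acyclicity, add $E(G[\peb'_{\ex}])\setminus E(G[\peb'])$ as edges between element nodes of $F_{P_1,P_2^-}$.
\item Before restricting to $\peb$, also join with the partition induced by $E(G[\peb\cup\peb'])\setminus E(G[\peb])$.
\end{itemize}

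The enrichment is unnecessary, though. The edges $E(G[\peb'])$ lie in both induced forests. Contracting them does not change whether the union has a cycle. It leaves two edge-disjoint forests glued on the $E(G[\peb'])$-components of $\peb'$. Hence the verdict depends only on $P_1$, $P_2$, and $E(G[\peb'])$.

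The paper uses exactly this. It keeps requests of the form $(\peb,P,\ell)$ and builds an auxiliary forest $G'$ that realizes $P_2$ and contains $E_{\peb'}$. It then reads a surrogate $P_2'$ off $G'-E_{\peb'}$. This surrogate may differ from the true one, but the acyclicity test gives the same answer. With that change, the rest of your outline matches the paper: R1--R3, the value $s_1+s_2-w(\peb')$, the join for $P$, and the final Baker reduction.
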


\begin{proof}
    We again follow the recipe presented at the start of this section.
    To this end, we first choose~$\gamma(d) = 2^{80\sqrt{d}\log^2(d)} \big(2^{66000 \sqrt{d}\log^{10}(d)}\big)^2 d$, $\rho=2$, and~$\alpha_{\pi} = 1$.
    The set~$S_{\pi}$ will contain the weight function~$w$.
    A request~$r \in R_{\pi}(d,G,B,\peb,S_{\pi})$ will be a triple~$(\peb,P,\ell)$, where~$P \in \mathcal{\peb}$ is a partition of~$\peb$ and~$\ell \leq d$.
    The output for a request is the maximum weight of an induced forest with at most~$\ell$ vertices and which contains exactly the vertices in~$\peb$ from~$B$.
    It is easy to verify that~$S_{\pi}$ can be maintained in~$\Oh(n^2)$ time.
    We next show that~$R_{\pi}(d,G,B,\peb,S_{\pi})$ can be computed in~$\Oh(\gamma(d)n^{\rho})$ time.
    By \cref{lem:numberofpartitions}, the number of possible partitions for a given pebble set is in~$\Oh(2^{66000 \sqrt{d}\log^{10}(d)})$ and all these partitions can be enumerated in~$\Oh(2^{66000 \sqrt{d}\log^{10}(d)}d)$ time.
    The number of possible values for~$\ell \leq d$ is~$d$, yielding a running time of~$\Oh(2^{66000 \sqrt{d} \log^{10}(d)}d^2)$ for computing~$R_{\pi}(d,G,B,\peb,S_{\pi})$.

    For the third step of our recipe, we define \wifr{} as follows.

    \problemdef{\wifr}
    {An integer~$d$, a $d$-outerplanar graph~$G=(V,E)$, a portal set~${B= (B_{\he},B_{\li},B_{\di})}$, an integer~$x$, and a vertex-weight function~$w \colon V \rightarrow \mathds{R}$, such that~$(B,x)$ is almost measured with respect to~$d$ and~$1$.}
    {Output a lookup table that returns for each request~$r=(\peb,P,\ell)$, where~$P \in \mathcal{B}(\peb)$ is a partition of~$\peb$, the maximum weight of a set~$S$ with~$S \cap (B_{\he} \cup B_{\li} \cup B_{\di}) = \peb$, $|S| \leq \ell$, and~$G[S]$ is a forest that has connected components for~$\peb$ as described by~$P$ (or~$-\infty$ if no such set exists).}
    
    Next, we show that Requirement R1 of \cref{thm:main-fpt} holds, that is, there is an index~$i$ such that we can remove each layer~$i+j(k+1)$ of a BFS layering for any~$j$.
    To this end, note that whether a set~$S$ is a partial solution or not only depends on the vertices in~$S$ and edges between these vertices.
    Thus, all edges relevant for a set~$S$ of size at most~$k$ are incident to at most~$k$ layers and if there are more than~$k$ layers, then there is an index~$i$ such that removing all layers~$i + j(k+1)$ for any~$j$ does not invalidate~$S$.
    Moreover, no new solutions are created in the process, so Requirement R1 holds for~$\alpha_{\pi} = 1$ for \wifr{} (and \wif).

    The penultimate step is to show that Requirement R4 holds, that is, there is an algorithm that given a separator~$C=(C_{\he},C_{\li},C_{\di})$ such that~$(C,1)$ is measured for~$d$ and~$1$, a pebble set~$\peb$ for~$(B,d,1,x)$, and the solutions for instances
    \begin{align*}
        (d,C_{\inte},((B_{\he} \cup C_{\he})[C_{\inte}],(B_{\li} &\cup C_{\li})[C_{\inte}],(B_{\di} \cup C_{\di})[C_{\inte}]),x+1,S_{\pi}[C_{\inte}]) \text{ and}\\
        (d,C_{\ex},((B_{\he} \cup C_{\he})[C_{\ex}],(B_{\li} &\cup C_{\li})[C_{\ex}],(B_{\di} \cup C_{\di})[C_{\ex}]),x+1,S_{\pi}[C_{\ex}]),
    \end{align*}
    can compute the solutions for all requests~${r \in R_\pi(d,G,B,\peb,S_{\pi})}$ that do not intersect~$C_{\di}$ and intersect~$C_{\li}$ in at most~$20\sqrt{d}\log(d)$ vertices in~$\Oh(\gamma(d)n^2)$ time.
    So assume that we are given~$C$, solutions~$\res_1$ for the interior instance and~$\res_2$ for the exterior instance, and a pebble set~$\peb$ for~$(B,d,1,x)$.
    
    We iterate over all values~$\ell \leq d$ and possible partitions~$P \in \mathcal{B}(\peb)$ using \cref{lem:numberofpartitions}.
    We then compute the optimal solution for request~$r = (\peb,P,\ell)$ as follows.
    Let~$\peb_{\inte}$ and~$\peb_{\ex}$ be the subsets of~$\peb$ which appear in~$C_{\inte}$ and~$C_{\ex}$, respectively.
    We iterate over all possible pebble sets~$\peb'$ for~$(C,d,1,1)$, that is, sets which do not contain any vertices in~$C_{\di}$ and at most~$20 \sqrt{d} \log(d)$ vertices in~$C_{\li}$.
    We also iterate over all possible values~$|\peb'| \leq \ell' \leq \ell$.
    Let~${\peb'_{\inte} = \peb_{\inte} \cup \peb'}$ and~${\peb'_{\ex} = \peb_{\ex} \cup \peb'}$.
    We then iterate over all possible partitions~$P_1 \in \mathcal{B}(\peb'_{\inte})$ and~${P_2 \in \mathcal{B}(\peb'_{\ex})}$ of these two sets, respectively.
    Let~$E_{\peb'}$ be the set of edges in~$G[\peb']$, that is, edges in the input graph between two vertices in~$\peb'$.
    Let~$G'=(V',E')$ be any graph with~$\peb'_{\ex} \subseteq V'$, ${V' \cap (C_{\he} \cup C_{\li} \cup C_{\di}) = \peb'}$, $E_{\peb'} \subseteq E'$, and~$G'$ has connected components for~$\peb'_{\ex}$ as described by~$P_2$.
    Let~$G'' = (V', E' \setminus E_{\peb'})$ and let~$P'_2$ be the partition that describes the connected components of~$G''$ for~$\peb'_{\ex}$.
    If~$P_1 \sqcup P_2 = P$ and~$(P_1,P'_2)$ is acyclic, then let~$W_1 = \res_1((\peb'_{\inte},P_1,\ell'))$ and~${W_2 = \res_2((\peb'_{\ex}),P_2,\ell-\ell'+|\peb'|)}$. 
    We update the output for~$r$ with~${W_1 + W_2 - \sum_{v \in \peb'}w(v)}$ in case this value is larger than the current value.
    This concludes the construction of our algorithm for Requirement R4.

    We next prove that our computation is correct and afterwards analyze the running time of the above algorithm.
    In order to prove the correctness, we first show that for any solution~$Z$ for a request~$r$, we update the output for request~$r$ with the weight of~$Z$ unless a larger value was already found.
    Afterwards, we show that any value we update the output with corresponds to a solution of that weight.
    Before we show the correctness, first note that~${((B_{\he} \cup C_{\he},B_{\li} \cup C_{\li},B_{\di} \cup C_{\di}),x+1)}$ is measured for~$d$ and~$1$ as~$(B,x)$ and~$(C,1)$ are both measured for~$d$ and~$1$.
    Thus, the solutions~$\res_1$ and~$\res_2$ have entries for~$(\peb'_{\inte},P_1,\ell')$ and~$(\peb'_{\ex},P_2,\ell-\ell'+|\peb'|)$ for each~$|\peb'| \leq \ell' \leq \ell \leq d$, respectively.

    We next show that any valid solution is considered.
    To this end, consider a set~$Z$ of vertices corresponding to a solution for~$r$.
    Let~$Z_{\inte}$ and~$Z_{\ex}$ be the subsets of~$Z$ in~$C_{\inte}$ and~$C_{\ex}$, respectively, and let~$W_{\inte}$ and~$W_{\ex}$ be the total weights of vertices in~$Z_{\inte}$ and~$Z_{\ex}$, respectively.
    Let~$P_{\inte}$ and~$P_{\ex}$ be the partitions of~$\peb'_{\inte}$ and~$\peb'_{\ex}$ describing the connected components of~$G[Z_{\inte}]$ and~$G[Z_{\ex}]$ for~$\peb'_{\inte}$ and~$\peb'_{\ex}$, respectively.
    Note that the queries for requests~$(\peb'_{\inte},P_{\inte},|Z_{\inte}|)$ and~$(\peb'_{\ex},P_{\ex},|Z_{\ex}|)$ return values which are at least~$W_{\inte}$ and~$W_{\ex}$, respectively.
    Moreover, as~$Z_{\inte}$ and~$Z_{\ex}$ intersect precisely in~$\peb'$, it holds that~$|\peb'| \leq |Z_{\inte}| \leq \ell'$ and~$|Z_{\ex}| \leq \ell - |Z_{\inte}| + |\peb'|$.
    We will next show that~($P_{\inte}$,$P'_{\ex}$) (where~$P'_{\ex}$ is defined analogously to~$P'_2$ above) is acyclic.
    This will conclude the first direction of the proof of correctness as follows.
    By \cref{lem:acyclic}, $P_{\inte} \sqcup P_{\ex} = P$.
    Moreover, as~$P_1 = P_{\inte}$ and~$P_2 = P_{\ex}$ is then considered by our algorithm, we store the value~$W_{\inte} + W_{\ex} - \sum_{v \in \peb'}w(v)$.
    Note that~$Z_{\inte}$ and~$Z_{\ex}$ intersect precisely in~$\peb'$ and hence
    \[\sum_{v \in Z_1 \cup Z_2} = (\sum_{v\in Z_1}w(v)) + (\sum_{v\in Z_2}w(v)) - (\sum_{v\in \peb'}w(v)) = W_{1} + W_2 - \sum_{v\in \peb'}w(v).\]
    This is both the weight of all vertices in~$Z$ and the value we compute to update the output for request~$r$.
    Hence, each solution~$Z$ is considered.
    
    So it remains to show that~($P_{\inte}$,$P'_{\ex}$) is acyclic.
    Assume towards a contradiction that~$F_{P_{\inte},P'_{\ex}}$ contains a cycle~$(v_{x_1},u_{p_1},v_{x_2},u_{p_2},\ldots,u_{p_{\ell}})$, where~$x_1,x_2,\ldots,x_\ell \in \peb'$ and~${p_1,p_2,\ldots,p_{\ell} \in P_{\inte} \cup P'_{\ex}}$.
    Since each vertex~$v_{x_i}$ is only incident to two nodes~$u_{p_j}$ (one for a block in~$P_{\inte}$ and one for a block in~$P'_{\ex}$), it holds that the nodes~$u_{p_j}$ alternate between being blocks for~$P_X$ and~$P_{Y'}$.
    By construction, each consecutive pair~$(v_{x_i}, v_{x_{i+1}})$ lies within the same connected component of~$G[Z_1]$ or of~$G[Z_2] - E_{\peb'}$, where~$E_{\peb'}$ is the set of all edges between two vertices in~$\peb'$.
    That is, there are paths in~$G[Z_1]$ or~$G[Z_2] - E_{\peb'}$ connecting~$x_i$ and~$x_{i+1}$.
    Moreover, these paths are pairwise edge-disjoint as the two graphs are edge disjoint and two paths in the same graph are edge disjoint as they belong to different connected components.
    Hence, the concatenation of all these paths yields a closed walk in~$G[Z]$ (the union of~$G[Z_1]$ and~$G[Z_2] - E_{\peb'}$), a contradiction to~$G[Z]$ being acyclic.
    
    In the other direction, we show that whenever we update a solution for a request~$r$, this corresponds to an actual solution.
    So assume there is an~$\ell'$ and partitions~$P_1$ and~$P_2$ such that~$P_1 \sqcup P_2 = P$, ($P_1$,$P'_2$) is acyclic, and we update the result for request~$r$.
    This means that neither~$W_1$ nor~$W_2$ are~$-\infty$.
    Hence, there are vertex sets~$Z_1$ and~$Z_2$ corresponding to solutions for~$(\peb'_{\inte},P_1,\ell')$ and~$(\peb'_{\ex},P_2,\ell-\ell'+|Z_1 \cap Z_2|)$.
    By \cref{lem:acyclic}, $G[Z_1 \cup Z_2]$ is then acyclic and has connected components for~$\peb$ as described by~$P_1 \sqcup P_2 = P$.
    Moreover, the weight of the union is exactly the sum of weights of~$Z_1$ and~$Z_2$ minus the weight of the intersection, that is, the weight of~$\peb'$.
    Since this is precisely what we compute, this shows that every update we do is valid, concluding the proof of correctness.
    
    We next analyze the running time of our algorithm for Requirement R4.
    The number of possible pebble sets~$\peb'$ for~$(C,1)$ is by \cref{lem:numberofpebbles} in~$\Oh(2^{80\sqrt{d}\log^2(d)})$ and the pebble sets can be enumerated in~$\Oh(2^{80\sqrt{d}\log^2(d)}d)$~time.
    Iterating over all~$\ell \leq d$ adds an additional factor of~$d$.
    Computing~$\peb_{\inte}, \peb_{\ex}, \peb'_{\inte}, \peb'_{\ex}$ can all easily be done in~$\Oh(n^2)$ time.
    We then iterate over all possible partitions~$P_1 \in \mathcal{B}(\peb'_{\inte})$ and~${P_2 \in \mathcal{B}(\peb'_{\ex})}$.
    The number of possibilities is in~$\Oh((2^{66000 \sqrt{d}\log^{10}(d)})^2)$ by \cref{lem:numberofpartitions} and all partitions can be computed in~$\Oh(2^{66000 \sqrt{d}\log^{10}(d)}d)$~time.
    Computing~$E_{\peb'}$ can be done in~$\Oh(n^2)$ time.
    Computing~$G'$ and~$G''$ can also be done in the same time as follows.
    For~$G'$, start with a graph which has a vertex for each~$v \in \peb'_{\ex}$ and each~$p \in P_2$.
    Then, add all edges in~$E_{\peb'}$.
    For each~$p \in P_2$ and each~$v \in p$, add an edge between the two corresponding vertices if (and only if) this edge does not create a cycle.
    Using union-find data structures, each such incidence can be checked in~$O(\alpha(n))$ time, where~$\alpha$ is the inverse Ackermann's function.
    Since each vertex appears in exactly one set in~$P_2$, this results in an overall running time of~$\Oh(n\alpha(n)) \subseteq \Oh(n^2)$.
    For~$G''$, simply remove the edges in~$E_{\peb'}$.
    Computing~$P'_2$ can then also be done in the same time by computing the connected components in~$G''$.
    Finally, checking whether~$P_1 \sqcup P_2 = P$ and whether~($P_1$,$P'_2$) is acyclic also take~$\Oh(n\alpha(n)) \subseteq \Oh(n^2)$ time.
    Recall that~$\gamma(d) = 2^{80\sqrt{d}\log^2(d)} \big(2^{66000 \sqrt{d}\log^{10}(d)}\big)^2 d$ and~$\rho=2$.
    The overall running time for our algorithm for Requirement R4 is in
    \[\Oh(2^{80\sqrt{d}\log^2(d)} d + 2^{80\sqrt{d}\log^2(d)} d (n^2 + 2^{66000 \sqrt{d}\log^{10}(d)}d + (2^{66000 \sqrt{d}\log^{10}(d)})^2 n^2)) \subseteq \Oh(\gamma(d)n^{\rho}).\]
        
    The final step is to show a reduction from \wif{} to \wifr{} and analyze the running time.
    To this end, we apply \cref{lem:baker} with~$d=k+1$ to compute~$k+1$ $k$-outerplanar graphs.
    We then solve \wifr{} for each of these graphs using \cref{thm:main-fpt} with~$d=k$, portal set~${B=(\emptyset,\emptyset,\emptyset)}$, $x=1$, and the input weight function.
    For each such iteration, we query the optimal solution for request~$(\emptyset,\emptyset,k)$, which corresponds to an optimal solution consisting of at most~$k$ vertices.
    If any iteration finds a solution with weight at least~$W$, then we output yes and otherwise we output no.
    Since the correctness can easily be verified, it remains to analyze the running time.
    Using \cref{thm:main-fpt} and the arguments above, solving each instance of \wifr{} takes time
    \[2^{\tilde{\Oh}(\sqrt{k})} 2^{80\sqrt{k}\log^2(k)} \big(2^{66000 \sqrt{k}\log^{10}(k)}\big)^2 k n^{2.49} \subseteq 2^{\tilde{\Oh}(\sqrt{k})} 2^{132081 \sqrt{d}\log^{10}(k)} n^{2.49} \subseteq 2^{\tilde{\Oh}(\sqrt{k})}n^{2.49}.\]
    The time for computing all inputs is in~$\Oh(kn)$ by \cref{lem:baker} and because induced subgraphs can be computed in~$O(n+m)$ time and since~$m \in \Oh(n)$ for planar graphs.
    The overall running time is therefore in~${\Oh((k+1) 2^{\tilde{\Oh}(\sqrt{k})} n^{2.49}) + \Oh(nk) = 2^{\tilde{\Oh}(\sqrt{k})}n^{2.49}}$.
    This concludes the proof.
\end{proof}

\subsection{Simple Minors}
In this section, we study \minor.
We again start with an intuitive description of the rough idea.
To this end, we first briefly sketch how the algorithm works for \textsc{Subgraph Isomorphism}.
This proof sketch follows the algorithm by Nederlof but ignores the use of efficient inclusion-exclusion which is needed to also count the number of solutions.
Let~$H=(U,F)$ be the (simple) pattern graph we try to find and let~$k = |U|$.
Note that since planar graphs are closed under taking minors, we can assume without loss of generality that~$H$ is planar (otherwise the answer is no).
Following the proof by Bodlaender, Nederlof, and van der Zanden~\cite{BNZ16}, Nederlof~\cite{Ned20} observes that the number of non-isomorphic separations of~$H$ of size~$\sqrt{k}$ is in~$2^{\Oh(\nicefrac{k}{\log(k)})}$.
A request (or subproblem in the language of Nederlof) is then a triple consisting of a pebble set~$\peb$, a separation~$(X,Y)$, and a bijection~$c$ mapping the vertices in~$\peb$ to the nodes in~$X \cap Y$.
The output indicates whether there exists a subgraph~$G'=(V',E')$ and an isomorphism~$f$ that maps the vertices in~$G'$ to the nodes in~$X$ such that~$f(v) = c(v)$ for each~$v \in \peb$.
Two partial solutions for requests~$(\peb_1,(X_1,Y_1),c_1)$ and~$(\peb_2,(X_2,Y_2),c_2)$ and a separator~$C$ can then be combined to a solution for a request~$(\peb,(X,Y),c)$ if~$\peb_1 \cup \peb_2 = \peb$, $\peb_1 \cap C = \peb_2 \cap C$, $c_1(v) = c(v)$ for each~$v \in \peb_1$, $c_2(v) = c(v)$ for each~$v \in \peb_2$, and~$H[X_1 \cup X_2] = H[X]$. 

Let us now describe how to generalize the above approach for \minor.
We also keep track of non-isomorphic separations~$(X,Y)$ of the pattern graph~$H$, but instead of a yes/no-answer, we store the minimum weight of a subgraph that is a model of~$X$.
Additionally, there are three facts that complicate the algorithm slightly when considering (rooted) minors rather than subgraphs.
First and of least importance, we can no longer assume that~$c$ is a bijection and we need to make sure that~$c$ maps each root~$v$ to the correct node in~$U$.
To this end, we consider \emph{coloring functions}, which we introduce after the high-level overview.
Second, we need to ensure that the model for a single node in~$H$ is connected.
We handle this issue similar to how we handled connectivity-related issues in the previous subsection, that is, by including a partition of the pebble set in a request.
Lastly, some edges in~$H[X \cap Y]$ might be part of the current partial solution while others might be implemented later.
To keep track of this information, we store for a given separation~$(X,Y)$ a subset of the edges in~$H[X \cap Y]$ as part of each request and allow a solution to not model these.
An additional technicality is that pebble sets will have size~$\tilde{\Oh}(\sqrt{|U|})$, so the lemma by Nederlof does not apply and we will need a slightly stronger version of his statement that also works for slightly larger separations.
For this reason and because Nederlof only states that the result follows from a proof by Bodlaender, Nederlof, and van der Zanden~\cite{BNZ16} (which is slightly non-trivial), we will include a proof for the sake of completeness.

\begin{restatable}{corollary}{sigmasimple}
    \label{lem:sigmasimple}
    Given a (simple) planar graph~$H=(U,F)$ and a non-negative integer~$t \in \Oh(|U|^{\frac{2}{3}})$, the number~$\sigma_t(H)$ of non-isomorphic separations of size at most~$t$ of~$H$ is in~$2^{\Oh(\nicefrac{|U|}{\log(|U|)})}$.
    The separations can be enumerated in~$\Oh(|U|^3\sigma_t(H))$~time.
\end{restatable}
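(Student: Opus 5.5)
Writing $k=|U|$, the plan is to bound $\sigma_t(H)$ by counting a separation $(X,Y)$ through its separator $S=X\cap Y$ together with a description of which side each component of $H-S$ lies on. Two separations with the same separator and the same isomorphism type of $H[X]$ relative to $S$ are identified. So it suffices to count, over all separators $S$ of size at most $t$, the number of pairwise non-isomorphic subgraphs $H[X]$ with $X\supseteq S$ and $X\setminus S$ a union of components of $H-S$.

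The first step fixes $S$ and groups the components of $H-S$ by their isomorphism type relative to $S$: a component $K$ together with its attachment pattern to the vertices of $S$. Choosing $X$ amounts to choosing, for each such type, how many components of that type go into $X$. If there are $m_\tau$ components of type $\tau$, the number of choices is $\prod_\tau (m_\tau+1)$.

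The second step follows Bodlaender, Nederlof, and van der Zanden~\cite{BNZ16} and splits the types into large and small. Components with more than $s=c\log k$ vertices (for a small constant $c$) number at most $k/s$, so they contribute at most $2^{k/s}=2^{\Oh(k/\log k)}$ choices. For small components, the number of distinct types is controlled by planarity. A small component attaches to at most $|K|$ vertices of $S$, and in a planar graph the number of distinct neighbourhood sets in $S$ of size at least $3$ is $\Oh(|S|)$, while those of size at most $2$ number $\Oh(|S|^2)$. The number of isomorphism classes of planar graphs on at most $s$ vertices is $2^{\Oh(s)}$, which is $k^{\Oh(c)}$. Altogether the number of small types is at most $|S|^{\Oh(1)}\cdot k^{\Oh(c)}$. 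Choosing $c$ small and using $t\in\Oh(k^{2/3})$, this becomes $\Oh(k^{1-\epsilon'})$ or at worst $\Oh(k/\log^2 k)$. Each small type contributes a factor of at most $k+1$, so the product is $2^{\Oh(k^{1-\epsilon'}\log k)}\subseteq 2^{\Oh(k/\log k)}$.

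The third step bounds the number of separators, which is at most $\binom{k}{\le t}\le 2^{\Oh(t\log k)}=2^{\Oh(k^{2/3}\log k)}$. Multiplying gives $\sigma_t(H)\in 2^{\Oh(k/\log k)}$. For the enumeration, I would iterate over the separators $S$, compute the components of $H-S$, and classify their types. This classification should use canonical forms of the small components: a planar isomorphism test is linear, or one can brute-force it since the components have size $\Oh(\log k)$. After that, I enumerate the count vectors and output one representative per vector, spending polynomial time, roughly $\Oh(k^3)$, per output.

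The main obstacle is the planarity bound on small attachment types when $|S|$ is as large as $k^{2/3}$. The $\Oh(|S|^2)$ pairs term gives $k^{4/3}$, which is too many unless handled. I would try to handle it by noting that components attaching to at most two separator vertices of the same type are interchangeable, so only their count matters, and that there are at most $k$ components. That tightens the count of nonempty types to $\min(k, |S|^{\Oh(1)}k^{\Oh(c)})$. The fallback is the finer argument of~\cite{BNZ16}: charge the types with at most two attachments to edges of an auxiliary planar graph on $S$ and use the log-concavity of the product $\prod(m_\tau+1)$ under the constraint $\sum m_\tau\le k$. That still gives $2^{\Oh(k/\log k)}$ once there are only $\Oh(k/\log^2k)$ distinct types weighted by size.
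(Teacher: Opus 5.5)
Your outline is the argument of \cref{lem:sigma}, which the paper specializes to simple graphs via $\psi(H)=|U|$: enumerate separators, use that large components are few, group the remaining components into few types, and bound $\prod_\tau(m_\tau+1)$. The step that bounds the number of small types, however, has a genuine gap.

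First, you claim that a small component $K$ attaches to at most $|K|$ vertices of $S$. This is false: a single vertex of $H-S$ may be adjacent to all of $S$.

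Second, your bound $|S|^{\Oh(1)}\cdot k^{\Oh(c)}$ does not count the attachment pattern, i.e., which vertex of $K$ is adjacent to which vertex of $S$. This pattern must be part of the type, because isomorphisms fix $S$ pointwise. Without a \emph{constant} bound on the number of attachment vertices, the number of possible types is not $k^{\Oh(c)}$. Even inside your (false) regime with at most $|K|$ attachments, attaching the vertices of an $s$-vertex path bijectively to $s$ vertices of $S$ already gives $s!/2=2^{\Omega(s\log s)}$ types for $s=c\log k$.

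Capping by ``at most $k$ nonempty types'' does not rescue this. $\Theta(k)$ types with one component each already give $\prod_\tau(m_\tau+1)=2^{\Theta(k)}$.

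The missing idea is the paper's class of \emph{highly connected} components. Contract each component of $H-S$ to a single vertex and delete the edges inside $S$; the result is a simple bipartite planar graph. By Euler's formula, at most $2|S|$ components have three or more neighbours in $S$ (the paper uses threshold $6$ and the bound $5t$). You can decide the side of each such component individually, at a cost of only a factor $2^{\Oh(t)}$.

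Every other small component has at most two attachment vertices. Its type is then determined by three things: its neighbourhood set, an unlabeled planar graph on at most $c\log k$ vertices, and a $4$-colouring of its vertices recording adjacency to the two attachment vertices. That is $2^{\Oh(c\log k)}$ options per neighbourhood set.

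This also removes your ``main obstacle''. The realized $2$-element neighbourhoods are edges of a planar minor on $S$: contract each such component into one of its attachment vertices. So there are at most $3|S|$ of them, and in total only $\Oh(|S|)$ neighbourhood sets occur (the paper uses the bound $37|Z|$ from~\cite{Gaj13}). Hence there are $\Oh(k^{2/3+\Oh(c)})$ small types. For small enough $c$ this gives a factor $2^{\Oh(k^{1-\varepsilon}\log k)}$, and no log-concavity argument is needed.
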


Unfortunately \cref{lem:sigmasimple} does not hold for multigraphs.
We show in the next subsection that there are simple examples where even the number of separations of size zero are exponential in~$|U|$.
Instead, we will introduce a parameter~$\psi$ that bounds the number of non-isomorphic separations of planar multigraphs and show that it leads to subexponential parameterized algorithms.
Since \cref{lem:sigmasimple} is a simple corollary of the statement for multigraphs, we defer its proof to the next subsection.

We now define \emph{\textbf{coloring functions}}.
Given two sets~$U$ and~$V$, a coloring function~$c \colon U \rightarrow 2^V$ maps each element in~$U$ to a non-empty subset of~$V$ such that each~$v \in V$ appears in exactly one set~$c(u)$.
Equivalently, each~$v \in V$ is assigned one ``color''~$u \in U$.
As suggested by the variable names, we will use coloring functions from nodes in~$H$ to subsets of vertices in~$G$.
However, we will usually restrict the vertex sets in both~$G$ and~$H$ to subsets~$\peb \subseteq V$ and~$Z \subseteq U$.
We then consider coloring functions from~$Z$ to subsets of~$\peb$.
These should be thought of as extensions of the root function~$\mu \colon U \rightarrow 2^V$, that is, we assign each vertex~$v \in \peb$ to a node~$u \in Z$ and assume that~$v$ is part of the model for node~$u$.
Whenever~$\mu$ requires that~$v$ is part of the model for some node~$u'$, then we also want to enforce~$v \in c(u')$.
Formally, we say that~$c$ is an \emph{\textbf{extension}} of~$\mu$ on the vertex set~$\peb$ if for all~$v \in \peb$ and all~$u \in U$ it holds that if~$v \in \mu(u)$, then~$v \in c(u)$.
In particular if~$v \in \mu(u'')$ for some~$u'' \in U \setminus Z$, then there is no extension of~$\mu$ as~$v$ must be assigned to~$u''$ so it cannot be assigned to any node in~$Z$.
We next analyze the number of possible coloring functions and show how to enumerate them.
In order to apply the lemma more conveniently later, we will focus on the case where~$\peb$ is a pebble set.
We mention that the proof is basically the same as for \cref{lem:numberofpartitions}.

\begin{lemma}
    \label{lem:coloringfunctions}
    Let~$(B,x)$ be a pair that is almost measured with respect to two integers~$d$ and~$\alpha$ and let~$\peb$ be a pebble set for~$(B,d,\alpha,x)$.
    Let~$V, U,$ and $Z$ be sets such that~$\peb \subseteq V$ and~${Z \subseteq U}$.
    Let~$\mu \colon U \rightarrow 2^V$ be a function.
    Then, the number of possible coloring functions~$c \colon Z \rightarrow 2^{\peb}$ that are extensions of~$\mu$ is in~$\Oh(2^{66000\sqrt{d}\log^{10}(d)})$.
    All such coloring functions can be enumerated in~$\Oh(2^{66000\sqrt{d}\log^{10}(d)}d)$ time.
\end{lemma}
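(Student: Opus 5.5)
The plan is to reduce the count to the same estimate as in \cref{lem:numberofpartitions}. A coloring function $c \colon Z \rightarrow 2^{\peb}$ is determined by assigning to every $v \in \peb$ its unique color $u \in Z$ with $v \in c(u)$. Hence $c$ is the same thing as a map $f \colon \peb \rightarrow Z$ that is surjective, because every $c(u)$ must be non-empty.

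First I dispose of the degenerate cases. If $|Z| > |\peb|$, then no surjection exists and there are zero coloring functions. Otherwise $|Z| \leq |\peb|$, so the number of maps $\peb \rightarrow Z$, and thus the number of coloring functions (extensions or not), is at most $|Z|^{|\peb|} \leq |\peb|^{|\peb|}$. Now I reuse the estimate from the proof of \cref{lem:numberofpartitions} verbatim. Since $(B,x)$ is almost measured, $x \leq 200\log^7(d)$. Since $\peb$ is a pebble set, $|\peb| \leq |B_{\he}| + 20x\alpha\sqrt{d}\log(d) \leq 12000\alpha\sqrt{d}\log^8(d)$. Then for $d \geq 256\cdot 2^{\alpha}$ we get $|\peb|^{|\peb|} \leq 2^{66000\sqrt{d}\log^{10}(d)}$, and for smaller $d$ the count is constant. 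The extension condition only discards functions, so the bound also holds for extensions of $\mu$.

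For the enumeration, I first preprocess $\mu$ on $\peb$. For each $v \in \peb$, I compute the set of nodes $u \in U$ with $v \in \mu(u)$.
\begin{itemize}
\item If this set contains two distinct nodes, or a node outside $Z$, then no extension exists and I output nothing.
\item If it is a single node $u \in Z$, then $v$ is forced to color $u$.
\item Otherwise $v$ is free.
\end{itemize}
I then enumerate all maps $f$ that respect the forced colors. I do this by iterating over the free vertices and trying every color in $Z$, in the same way as in the enumeration of partitions in \cref{lem:numberofpartitions}. For each resulting map I check surjectivity onto $Z$ with a counter array in $\Oh(|\peb| + |Z|) \subseteq \Oh(d)$ time. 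Each surviving map is output as $c(u) = f^{-1}(u)$.

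The number of maps tried is at most $|\peb|^{|\peb|}$ (after the early exit when $|Z| > |\peb|$). This gives $\Oh(2^{66000\sqrt{d}\log^{10}(d)}d)$ time overall. The preprocessing of $\mu$ is dominated by this, provided $\mu$ is given as a lookup from vertices to their required nodes. Otherwise I would charge the preprocessing to the input size, as is done elsewhere in the paper.

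The only genuinely new point, and the one I expect to need care, is the observation $|Z| \leq |\peb|$. Without it the base of the exponent would be $|Z|$, which is unrelated to $d$. With it, everything else is a copy of \cref{lem:numberofpartitions}.
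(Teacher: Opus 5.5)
Your proposal is correct and follows essentially the same route as the paper. You identify coloring functions with surjective maps $\peb \to Z$, use $|Z| \leq |\peb|$ to bound their number by $|Z|^{|\peb|} \leq |\peb|^{|\peb|}$, and reuse the numerical estimate from \cref{lem:numberofpartitions}. The only difference is cosmetic: you fix the colors forced by $\mu$ (and exit early when $|Z| > |\peb|$) before enumerating, whereas the paper enumerates all assignments and then checks in $\Oh(d)$ time whether each one extends $\mu$ and assigns a non-empty set to every node of $Z$.
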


\begin{proof}
    Since~$(B,x)$ is almost measured with respect to~$d$ and~$\alpha$, it holds that~$x \leq 200\log^7(d)$.
    Moreover, $\peb \subseteq (B_{\he} \cup B_{\li})$, $|B_{\he}| \leq 40x\alpha\sqrt{d}$ and~$|\peb \cap B_{\li}| \leq 20x\alpha \sqrt{d} \log(d)$ because~$\peb$ is a pebble set for~$(B,d,\alpha,x)$.
    Thus, $|\peb| \leq 60 (200 \log^7(d)) \alpha \sqrt{d} \log(d) = 12000 \alpha \sqrt{d} \log^8(d)$.

    We next show that the number of coloring functions from~$Z$ to~$\peb$ is in~$\Oh(2^{66000\sqrt{d}\log^{10}(d)})$.
    This clearly also upper bounds the number of coloring functions that are extensions of~$\mu$.
    First, note that~$|Z| \leq |\peb|$ as otherwise there is no coloring function from~$Z$ to~$\peb$ as each vertex in~$Z$ is assigned a non-empty subset of~$\peb$ and each vertex in~$\peb$ is assigned exactly one vertex in~$Z$.
    Recall that an equivalent interpretation of coloring functions is to assign each vertex in~$\peb$ to exactly one vertex in~$Z$.
    Thus, the number of coloring functions is at most~$|Z|^{|\peb|} \leq |\peb|^{\peb}$.
    If~$d < 256 \cdot 2^{\alpha}$, then this number is constant, that is, in~$\Oh(1)$.
    So assume that~${d \geq 256 \cdot 2^{\alpha}}$.
    Then,
    \begin{align*}
        |\peb|^{|\peb|} &\leq (12000 \alpha \sqrt{d} \log^8(d))^{12000 \alpha \sqrt{d} \log^8(d)}\\
        &\leq (d^2 \sqrt{d} \ d^3)^{12000 \alpha \sqrt{d} \log^8(d)}\\
        &\leq 2^{66000 \sqrt{d} \log^{10}(d)}.
    \end{align*}
    The second inequality is true because~$(256\alpha)^2 > 12000\alpha$ and~$\log^8(d) \leq d^3$ for all~$d \geq 256$.
    The third inequality is due to the fact that~$\alpha \leq \log(d)$.

    Note that we can enumerate all coloring functions in~$\Oh(|Z|^{|\peb|}d) \subseteq \Oh(2^{66000 \sqrt{d} \log^{10}(d)}d)$ time by iterating over all possible choices of assigning a ``color'' to each node in~$\peb$.
    This takes~$\Oh(|Z|^{|\peb|})$ time.
    We then check in~$\Oh(d)$ time whether the function extends~$\mu$ and assigns a non-empty set to each node~$u \in Z$.
    This concludes the proof.
\end{proof}

The last ingredient for the proof of the subexponential parameterized algorithm for \minor{} is a restriction of partitions of~$\peb$ for coloring functions.
Formally, we say that a partition~$P \in \mathcal{B}(\peb)$ \emph{\textbf{refines}} a coloring function~$c \colon Z \rightarrow 2^{\peb}$ if the following holds for each pair~$v_1,v_2 \in \peb$ of vertices in~$\peb$.
Let~$u_1,u_2 \in Z$ be the nodes such that~$v_1 \in c(u_1)$ and~$v_2 \in c(v_2)$.
If~$u_1 \neq u_2$, then~$u_1$ and~$u_2$ are contained in different blocks in~$P$.
Intuitively, $P$ refines~$c$ if whenever~$c$ assigns~$v_1$ and~$v_2$ different colors, then~$v_1$ and~$v_2$ are contained in different blocks in~$P$.
Note that we \emph{do not} require that whenever~$c$ assigns two vertices the same color, then they also appear in the same block in~$P$.
It will be convenient for us to interpret such a partition as a collection of partitions, one for each node~$u$.
For a partition, we denote the subset of~$P$ containing vertices in~$c(u)$ for some~$u \in Z$ by~$P^u$.
Note that~$P = \bigcup_{u \in Z}P^u$.
We next show how to use the above to design a subexponential parameterized algorithm for \minor{} parameterized by both~$k$ and the number~$\sigma_t(H)$ of non-isomorphic separations of~$H$.
Afterwards, we show how to use this result to get a subexponential parameterized algorithm for \minor{} parameterized by only~$k$.

\begin{theorem}
    \label{thm:minor}
    Let~$t = 12000\sqrt{k}\log^8(k)$.
    \minor{} can be solved in~$2^{\tilde{\Oh}(\sqrt{k})}(\sigma_t(H))^2n^{2.49}$ time for planar graphs, where~$\sigma_t(H)$ denotes the number of non-isomorphic separations of~$H$ of size at most~$t$.
\end{theorem}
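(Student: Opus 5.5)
We follow the six-step recipe from \cref{sec:fragment}, modelled on the proof for \wif{}. We choose $\alpha_{\pi}=1$ and $\rho=2$. For $\gamma$ we take a function of the form $\gamma(d)=2^{c\sqrt{d}\log^{10}(d)}\cdot(\sigma_t(H))^2\cdot d$ for a suitable constant $c$. Since $\sigma_t(H)$ does not depend on $G$, it can be treated as part of $\gamma$; monotonicity in $d$ is harmless because $H$ is fixed. As a preprocessing step we reject immediately if $H$ is non-planar or $|U|>k$.

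We set $S_{\pi}=(w,H,\eta)$, restricted to the current subgraph. A request is a tuple $(\peb,(X,Y),F',c,P,\ell)$ with the following components:
\begin{itemize}
\item $(X,Y)$ is one of the $\sigma_t(H)$ representative separations of $H$ of size at most $t$, enumerated once via \cref{lem:sigmasimple};
\item $c\colon X\cap Y\to 2^{\peb}$ is a coloring function extending $\eta$ on $\peb$;
\item $P\in\mathcal{B}(\peb)$ refines $c$;
\item $F'\subseteq F(H[X\cap Y])$ records which edges between separator nodes are already realised;
\item $\ell\le d$.
\end{itemize}
The answer to a request is the minimum weight of a subgraph $G'$ with at most $\ell$ vertices that meets $B$ exactly in $\peb$. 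In addition, $G'$ must contain a model of $H[X]$ minus the edges of $H[X\cap Y]\setminus F'$, in which the vertices in $c(u)$ belong to the branch set of $u$, every root in $\eta(u)$ for $u\in X$ lies in $G'$ within branch set $u$, and the components of the branch sets restricted to $\peb$ are described by $P$. Branch sets of nodes in $X\setminus Y$ must be connected, whereas branch sets of separator nodes may be disconnected as dictated by $P^u$.

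We then verify the requirements one by one.

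\begin{itemize}
\item \emph{R1.} This holds with $\alpha_{\pi}=1$ exactly as for \wif{}. A model on at most $k$ vertices, together with all edges it uses, lies in at most $k$ BFS layers. Deleting a layer never creates new models, since the problem is monotone under deletion.
\item \emph{R2 and R3.} These follow from \cref{lem:numberofpebbles,lem:numberofpartitions,lem:coloringfunctions}. For $|\peb|\le t$ we have $|X\cap Y|\le|\peb|\le t$, because every separator node needs a nonempty color class, and the number of sets $F'$ is $2^{|F(H[X\cap Y])|}\le 2^{3t}$ by planarity. Together this gives $2^{\tilde{\Oh}(\sqrt d)}\sigma_t(H)$ requests per pebble set. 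The value $t=12000\sqrt{k}\log^8(k)$ is chosen precisely as the bound on $|\peb|$ obtained in \cref{lem:numberofpartitions}.
\item \emph{R4.} This is the core step. For a target request $(\peb,(X,Y),F',c,P,\ell)$, we guess the following data:
\begin{itemize}
\item a pebble set $\peb'$ of $C$;
\item interior and exterior requests $(\peb_{\inte}\cup\peb',(X_1,Y_1),F_1,c_1,P_1,\ell_1)$ and $(\peb_{\ex}\cup\peb',(X_2,Y_2),F_2,c_2,P_2,\ell_2)$, which are compatible in that $c_1,c_2$ agree with each other on $\peb'$ and with $c$ on $\peb$, $X_1\cup X_2=X$, $X_1\setminus Y_1$ and $X_2\setminus Y_2$ are disjoint, and $X_1\cap X_2$ consists of nodes colored on $\peb'$;
\item $\ell_1+\ell_2-|\peb'|\le\ell$.
\end{itemize}
We then accept the combination if three conditions hold. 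First, $P_1\sqcup P_2=P$ on $\peb$, and every node whose branch set leaves the current interface, i.e.\ $u\in X\setminus Y$, ends up as a single block of $(P_1\sqcup P_2)^u$, using \cref{lem:connectivity}. Second, every edge of $H[X]$ is realised either in $F_1$, in $F_2$, or in $F'$'s complement. Third, the weight of the combination is $W_1+W_2$ minus the weight of the edges of $G[\peb']$ used on both sides. Edge double-counting is handled as in \cref{lem:acyclic}, by charging edges inside $\peb'$ to one fixed side only.

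Enumeration works up to isomorphism. We compare $H[X_1]\cup H[X_2]$ with $H[X]$ through the identity on the separator nodes that $c$ pins down, and use the fact that separations are enumerated as representatives. The running time is (number of $\peb'$) $\times$ $\sigma_t(H)^2$ $\times$ $2^{\tilde{\Oh}(\sqrt d)}$ $\times$ $\mathrm{poly}(n)$, which explains the factor $(\sigma_t(H))^2$ in the statement.
\end{itemize}

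Finally, as in the previous proofs, we apply \cref{lem:baker} to obtain $k+1$ instances that are $k$-outerplanar. For each we call \cref{thm:main-fpt} with $B=(\emptyset,\emptyset,\emptyset)$ and $x=1$, and query the request $(\emptyset,(U,\emptyset),\emptyset,\emptyset,\emptyset,k)$; the answer is then compared with $W$.

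The main obstacle I expect is the correctness of R4 for minors. One difficulty is matching isomorphism classes of separations: the interior and exterior partial models must glue to a model of $H[X]$ itself, not merely of an isomorphic copy, and the combined interface may have size up to $t$ while the new interface $X\cap Y$ must be recovered. My first attempt is to let the combined request's separator be the set of nodes colored on $\peb$, together with the nodes of $X_1\cap X_2$ whose branch sets are still incomplete. The other difficulty is the direction of the proof where an optimal model is split along $C$. There one must argue that the interior part induces a separation of $H$ of size at most $|\peb_{\inte}\cup\peb'|\le t$, namely the nodes whose branch sets meet the portal. This is where the choice of $t$ is essential.
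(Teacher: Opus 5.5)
Your plan is the paper's proof. The request tuple is the same; your set $F'$ of realised separator edges is just the complement of the paper's set $\mathcal{E}$ of not-yet-realised edges. The verification of R1--R3, the target-driven R4, the Baker reduction and the final query also match. In R4 the paper likewise iterates over $\peb'$, pairs of representative separations, coloring functions, refining partitions and edge sets. The isomorphism issue you call the main obstacle is handled by the gluing test you already state; the paper checks that $H[X_1\cup X_2]$ is isomorphic to $H[X]$ with a linear-time planar isomorphism test. Your split of an optimal model is also right: the interior separator consists of the nodes whose branch sets meet $\peb_{\inte}\cup\peb'$, so it has size at most $t$. However, three details of your R4 are wrong as written.

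\begin{enumerate}
\item Requiring that $X_1\cap X_2$ consist of nodes colored on $\peb'$ rejects valid splits.
\begin{itemize}
\item A target separator node $u$ can have one branch-set component in the strict interior meeting only $\peb_{\inte}$, and another in the strict exterior meeting only $\peb_{\ex}$. This happens when the branch set is connected only through the region outside the current graph.
\item Since $c_1$ and $c_2$ must agree with $c$ on $\peb$, this $u$ lies in $X_1\cap X_2$ for every compatible pair. It is not colored on $\peb'$, so the optimum is never found.
\item Use the paper's condition $X_1\cap X_2\subseteq Y_1\cap Y_2$: the node must be pinned on both sides, via $\peb'$ or via $\peb$.
\end{itemize}
\item Drop the subtraction in the weight.
\begin{itemize}
\item The tables do not record which edges of $G[\peb']$ a partial solution uses, so ``edges used on both sides'' is not computable.
\item Partial solutions are arbitrary, not induced, subgraphs, so the paper simply reports $W_1+W_2$.
\item When splitting an optimal $G'$, every edge with both endpoints in $V_{\inte}$ goes to the interior part; this includes every edge inside $C$. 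The two parts are then edge-disjoint and their weights sum exactly to $w(G')$.
\item When gluing, the union has weight at most $W_1+W_2$.
\end{itemize}
\item Abandon the tentative fix in your last paragraph.
\begin{itemize}
\item R4 is target-driven, so $(X,Y)$ is given and nothing needs to be recovered.
\item A node of $X_1\cap X_2$ whose branch set is incomplete but no longer meets $\peb$ cannot enter the new separator, because every separator node needs a nonempty color class in $\peb$. Such combinations must be discarded.
\item Your own conditions already discard them: every $u\in X\setminus Y$ must form a single block of $(P_1\sqcup P_2)^u$, and all its edges must be realised on some side.
\item For the single-block test to certify connectivity, the request must say explicitly that every component of a separator node's branch set meets $\peb$. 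It is not enough that $P$ describes the components restricted to $\peb$. The paper enforces this through the auxiliary graph $G'_P$.
\end{itemize}
\end{enumerate}
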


\begin{proof}
    We follow the same recipe as usual.
    We choose~$\rho = 2$ and~$\alpha_{\pi}=1$.
    The function~$\gamma$ depends on the pattern graph~$H=(F,U)$ and we choose~${\gamma_H(d) = (\sigma_t(H))^2 2^{5t} |U|^3 2^{264080\sqrt{d} \log^{10}(d)}d}$.
    The set~$S_{\pi}$ contains the weight function~$w$ and the function~$\mu$ defining the roots of the input graph~$G$ (both restricted to the current subgraph).
    A request~$r \in R_{\pi}(d,G,B,\peb,S_{\pi})$ will be a 6-tuple~$(\peb,(X,Y),c,P,\mathcal{E},\ell)$.
    Let us give a description of the different entries.
    The set~$\peb$ is a pebble set and~$(X,Y)$ is a separation of~$H$ of size at most~$t$.
    Let~$Z = X \cap Y$ be the separator and~$H[Z] = (Z,E_Z)$ be the graph induced by~$Z$.
    The entry~$c \colon Z \rightarrow 2^{\peb}$ is a coloring function that is an extension of~$\mu$ and~$P \in \mathcal{B}(\peb)$ is a partition of~$\peb$ that refines~$c$.
    The set~$\mathcal{E} \subseteq E_Z$ is a subset of the edges in~$H[Z]$ which are not yet realized by the model, and~$\ell \leq d$ is an upper bound on the number of vertices in a partial solution.
    The output for a request~$r=(\peb,(X,Y),c,P,\mathcal{E},\ell)$ is defined as follows.
    Therein, we use~$G_P = (V, E \cup \{\{u,v\} \mid c(u) = c(v) \land \forall p \in P.\ \{u,v\} \not\subseteq p\})$ as a shorthand for the graph~$G$ where we additionally add all edges between two vertices that are mapped to the same node in~$H$ but do not appear together in a block in~$P$.
    Recall that for a set~$E'$ of edges, $G-E'$ denotes the graph where all edges in~$E'$ have been removed.
    We can now state the conditions for when a subgraph~$G'=(V',E')$ of~$G$ fulfills~$r$:
    \begin{itemize}
        \item $|V'| \leq \ell$, and
        \item there exists a function~$f \colon X \rightarrow 2^{V'}$ such that
        \begin{enumerate}
            \item for each~$u \in X$, $c(u) \subseteq f(u)$,
            \item when contracting all edges~$e$ in~$G'_P$ for which there exists a~$u \in X$ with~$e \subseteq f(u)$, then the resulting graph is isomorphic to~$H[X] - \mathcal{E}$ and for each node~$u \in X$, the isomorphism maps the result of contracting all vertices in~$f(u)$ to~$u$, and
            \item for each~$u \in X$, the graph~$G'[f(u)]$ has connected components for~$c(u)$ as described by~$P^u$.
        \end{enumerate}
    \end{itemize}
    The weight of~$G'$ is~$\sum_{e\in E'}w(e)$ and the output for~$r$ is the minimum weight of any graph fulfilling~$r$ (or~$\infty$ if no such graph exists).
    
    We next show that~$S_{\pi}[G']$ for any subgraph~$G'$ and~$R_{\pi}(d,G,B,\peb,S_{\pi})$ can be computed in~$\Oh(\gamma_H(d)n^{\rho})$ time.
    Since~$S_{\pi}$ only contains the weight function and the root function~$\mu$, it can be maintained trivially in~$\Oh(n^2) = \Oh(n^\rho)$ time.
    For any given 5-tuple~$(d,G,B,\peb,S_{\pi})$, we can compute~$R_{\pi}(d,G,B,\peb,S_{\pi})$ as follows.
    Since~$\peb$ is part of the input for~$R_{\pi}$, we only need to copy the vertices, which takes~$\Oh(n)$ time.
    By \cref{lem:sigmasimple}, we can enumerate all separations~$(X,Y)$ of~$H$ of size at most~$t$ in~$\Oh(|U|^3\sigma_t(H))$ time.
    All coloring functions that are an extension of~$\mu$ can be computed in~$\Oh(2^{66000\sqrt{d}\log^{10}(d)}d)$ time by \cref{lem:coloringfunctions}.
    All possible partitions of~$P$ can be enumerated in~$\Oh(2^{66000\sqrt{d}\log^{10}(d)})$~time by~\cref{lem:numberofpartitions} and we can check for each of them whether they refine~$c$ in~$\Oh(n)$ time.
    Finally, since~$\mathcal{E}$ is a subset of the edges in~$H[Z]$, we can enumerate all possible choices in~$\Oh(2^{3t})$ time because~$H[Z]$ contains at most~$t$ vertices and is planar and therefore contains at most~$\nicefrac{5t}{2} < 3t$ edges.
    Iterating over all possible values~$\ell \leq d$ takes~$\Oh(d)$ time and iterating over all possible combinations of all of the above takes~$\Oh(|U|^3\sigma_t(H)2^{66000\sqrt{d}\log^{10}(d)}2^{66000\sqrt{d}\log^{10}(d)}2^{3t}dn^2) \subseteq \Oh(\gamma_H(d)n^{\rho})$ time.
    
    For the third step of our recipe, we define \minorr{} as follows.

    \problemdef{\minorr}
    {An integer~$d$, a $d$-outerplanar graph~$G=(V,E)$, a portal set~${B=(B_{\he},B_{\li},B_{\di})}$, an integer~$x$, an edge-weight function~$w \colon E \rightarrow \mathds{R}$, a planar graph~$H=(U,F)$, and a function~$\mu \colon U \rightarrow 2^V$, such that~$(B,x)$ is almost measured with respect to~$d$ and~$1$.}
    {Output a lookup table that returns for each request~$r=(\peb,(X,Y),c,P,\mathcal{E},\ell)$, where~$r \in R_{\pi}(d,G,B,\peb,\{w,\mu\})$ for any pebble set~$\peb$ for~$(B,d,1,x)$, the minimum weight of a subgraph~$G'$ of~$G$ fulfilling~$r$ (or~$\infty$ if no such graph exists).}

    Next, we show that Requirement R1 of \cref{thm:main-fpt} holds, that is, there is an index~$i$ such that we can remove each layer~$i+j(k+1)$ of a BFS layering for any~$j$.
    To this end, note that whether subgraph~$G'$ fulfills a request~$r$ or not only depends on the subgraph itself and not on the rest of the graph.
    Thus, all vertices and edges in a partial solution of size at most~$k$ are incident to at most~$k$ layers in any BFS layering and if there are more than~$k$ layers, then there is an index~$i$ such that removing all layers~$i + j(k+1)$ for any~$j$ does not invalidate the partial solution.
    Moreover, no new solutions are created in the process, so Requirement R1 holds for~$\alpha_{\pi} = 1$ for \minorr{} and \minor.

    The penultimate step is to show that Requirement R4 holds, that is, there is an algorithm that given a separator~$C=(C_{\he},C_{\li},C_{\di})$ such that~$(C,1)$ is measured for~$d$ and~$1$, a pebble set~$\peb$ for~$(B,d,1,x)$, and the solutions for instances
    \begin{align*}
        (d,C_{\inte},((B_{\he} \cup C_{\he})[C_{\inte}],(B_{\li} &\cup C_{\li})[C_{\inte}],(B_{\di} \cup C_{\di})[C_{\inte}]),x+1,S_{\pi}[C_{\inte}]) \text{ and}\\
        (d,C_{\ex},((B_{\he} \cup C_{\he})[C_{\ex}],(B_{\li} &\cup C_{\li})[C_{\ex}],(B_{\di} \cup C_{\di})[C_{\ex}]),x+1,S_{\pi}[C_{\ex}]),
    \end{align*}
    can compute the solutions for all requests~${r \in R_\pi(d,G,B,\peb,S_{\pi})}$ that do not intersect~$C_{\di}$ and intersect~$C_{\li}$ in at most~$20\sqrt{d}\log(d)$ vertices in~$\Oh(\gamma_H(d)n^2)$ time.
    So assume that we are given~$C$, solutions~$\res_1$ for the interior instance and~$\res_2$ for the exterior instance, and a pebble set~$\peb$ for~$(B,d,1,x)$.
    
    We first present the algorithm, then prove its correctness and finally analyze its running time.
    Let~$\peb_{\inte}$ and~$\peb_{\ex}$ be the subsets of~$\peb$ which appear in~$C_{\inte}$ and~$C_{\ex}$, respectively.
    We iterate over all possible pebble sets~$\peb'$ for~$(C,d,1,1)$, that is, sets which do not contain any vertices in~$C_{\di}$ and at most~$20 \sqrt{d} \log(d)$ vertices in~$C_{\li}$.
    We also iterate over all possible values~$|\peb'| \leq \ell' \leq \ell$.
    Let~${\peb'_{\inte} = \peb_{\inte} \cup \peb'}$ and~${\peb'_{\ex} = \peb_{\ex} \cup \peb'}$.
    In the following, we iterate over all possible requests for~$\res_1$ and~$\res_2$ and check whether they can be combined.
    Whenever one of the following conditions does not hold, we simply disregard the iteration and continue with the next.
    Hence, we implicitly assume that all of the following conditions hold.
    We start by iterating over all pairs of non-isomorphic separations~$(X_1,Y_1)$ and~$(X_2,Y_2)$ of~$H$ of size at most~$t$ and check whether~$H[X_1 \cup X_2]$ is isomorphic to~$H[X]$ and whether~$X_1 \cap X_2 \subseteq Y_1 \cap Y_2$.
    Let~$Z_1 = X_1 \cap Y_1$, $Z_2  = X_2 \cap Y_2$, $H[Z_1] = (V_1,E_1)$, and~$H[Z_2] = (V_2,E_2)$.
    Next, we iterate over all possible pairs of coloring functions~$c_1 \colon Z_1 \rightarrow 2^{\peb'_{\inte}}$ and~$c_2 \colon Z_2 \rightarrow 2^{\peb'_{\ex}}$ that are extensions of~$\mu$.
    We then check whether~${c(u) \cap \peb_{\inte} \subseteq c_1(u)}$ for all~$u \in Z$, $c(u) \cap \peb_{\ex} \subseteq c_2(u)$, and whether for all~$u \in U$ and all~$v \in \peb'$ it holds that~$v \in c_1(u)$ if and only if~$v \in c_2(u)$.
    Intuitively, the first two requirements ensure that whenever~$c$ assigns a vertex to the model of a particular node~$u$ in~$H$, then~$c_1$ and~$c_2$ also assign the vertex to the model of~$u$ whenever the vertex~$v$ is part of their domain.
    The last check ensures that both~$c_1$ and~$c_2$ agree on the vertices in~$\peb'$.
    We also iterate over all pairs of partitions~$P_1 \in \mathcal{B}(\peb'_{\inte})$ and~$P_2 \in \mathcal{B}(\peb'_{\ex})$ that refine~$c_1$ and~$c_2$, respectively.
    Here, we also perform two checks.
    First, we check for all nodes~$u \in Z$ whether~$P^u = P_1^u \sqcup P_2^u$.
    Second, for all nodes~$u \in (X_1 \cap X_2) \setminus Y$, we check whether~$P_1^u \sqcup P_2^u$ consists of a single set.
    Finally, we iterate over all pairs of set~$\mathcal{E}_1 \subseteq E_1$ and~$\mathcal{E}_2 \subseteq E_2$ and check whether~$\mathcal{E}_1 \cap \mathcal{E}_2 = \mathcal{E}$.
    If all of the above conditions hold, then we compute~$\res_1(\peb'_{\inte},(X_1,Y_1),c_1,P_1,\mathcal{E}_1,\ell') + \res_2(\peb'_{\ex},(X_2,Y_2),c_1,P_1,\mathcal{E}_2,\ell-\ell'+|\peb'|)$ and update the result for~$r$ if the computed value is smaller than the current entry.
    This concludes the description of the algorithm.

    We next show that the algorithm for Requirement R4 is correct.
    To this end, we first show that an optimal solution is found by the algorithm and then show that whenever the algorithm updates a solution, then this corresponds to a subgraph fulfilling~$r$.
    So consider the minimum-weight subgraph~$G'=(V',E')$ of~$G$ such that~$G'$ fulfills request~$r=(\peb,(X,Y),c,P,\mathcal{E},\ell)$, $G'$ does not contain any vertices in~$C_{\di}$, and~$G'$ contains at most~$20\sqrt{d}\log(d)$ vertices from~$C_{\li}$.
    Moreover, let~$f$ be a function that maps~$X$ to subsets of~$V'$ that witnesses that~$G'$ fulfills~$r$.
    We will show that the value for request~$r$ is updated with the weight of~$G'$.
    To this end, we first describe the iteration that leads to this update and then show that all checks from our algorithm are met and therefore the value is updated.

    Let~$\peb_{\inte}$ and~$\peb_{\ex}$ be the subsets of~$\peb$ that appear in~$C_{\inte}$ and~$C_{\ex}$, respectively.
    Let~$V_{\inte}$ and~$V_{\ex}$ be the vertices in~$G'$ in~$C_{\inte}$ and~$C_{\ex}$, respectively, and let~$\peb' = V_{\inte} \cap V_{\ex}$ be the vertices in~$G'$ in~$C$.
    Note that~$\peb' \cap C_{\di} = \emptyset$ and~$|\peb' \cap C_{\li}| \leq 20\sqrt{d}\log(d)$ by assumption (since Requirement~R4 only requires us to find solutions that satisfy these two properties).
    Let~$\peb'_{\inte} = \peb_{\inte} \cup \peb'$ and~${\peb'_{\ex} = \peb_{\ex} \cup \peb'}$.
    Let~$E_{\inte} \subseteq E'$ be the set of edges in~$G'$ with both endpoints in~$V_{\inte}$ and let~$E_{\ex} = E' \setminus E_{\inte}$ be the set of edges with at least one endpoint in~$V_{\ex} \setminus V_{\inte}$.
    Let~$G'_{\inte} = (V_{\inte},E_{\inte})$ and~$G'_{\ex} = (V_{\ex},E_{\ex})$.
    Let~$X_{\inte}$ and~$X_{\ex}$ be the set of nodes~$u \in U$ such that~$f(u) \cap V_{\inte} \neq \emptyset$ and~$f(u) \cap V_{\ex} \neq \emptyset$, respectively.
    Let~$P_{\inte}$ be the partitions of~$\peb'_{\inte}$ such that two nodes~$v_1,v_2$ appear in the same block in~$P_{\inte}$ if and only if there exists a~$u\in X_{\inte}$ with~$v_1,v_2 \in f(u)$ and~$v_1$ and~$v_2$ are in the same connected component in~$G'_{\inte}[f(u)]$.
    Let~$\mathcal{E}_{\inte}$ be the set of edges~$\{u_1,u_2\}$ in~$H[X_{\inte}]$ such that there is no edge~$\{v_1,v_2\}$ in~$G'$ with~$v_1 \in f(u_1)$ and~$v_2 \in f(u_2)$.
    Let~$Y_{\inte}$ be the set of nodes~$u \in U$ such that (i)~$u \notin X_{\inte}$, (ii)~$P^u_{\inte}$ contains more than one set, or~(iii)~at least one edge in~$\mathcal{E}_{\inte}$ is incident to~$u$.  
    Let~$c_{\inte} \colon X_{\inte} \rightarrow 2^{V_{\inte}}$ be defined as~$c(u) = f(u) \cap \peb'_{\inte}$.
    Let~$P_{\ex},\mathcal{E}_{\ex},Y_{\ex}$ and~$c_{\ex}$ be defined analogously.
    Finally, let~$\ell' = |V_{\inte}|$.
    Note that since~$|V'| \leq \ell$ and $|V_{\inte} \cap V_{\ex}| = |\peb'|$, it holds that~$|V_{\ex}| \leq \ell - \ell' + |\peb'|$.
    Let~$r_{\inte} = (\peb'_{\inte},(X_{\inte},Y_{\inte}),c_{\inte},P_{\inte},\mathcal{E}_{\inte},\ell')$ and~$r_{\ex} = (\peb'_{\ex},(X_{\ex},Y_{\ex}),c_{\ex},P_{\ex},\mathcal{E}_{\ex},\ell-\ell'+|\peb'|)$.
    
    We next show that the above values satisfy all checks from our algorithm.
    In particular, we will show that~$G'_{\inte}$ and~$G'_{\ex}$ fulfill requirements~$r_{\inte}$ and~$r_{\ex}$.
    Some checks are done for the interior and exterior separately and others check combinations of both.
    Since the arguments for the former will always be identical, we only show them for~$r_{\inte}$.
    The first check is whether~$\peb'$ is a pebble set for~$(C,d,1,1)$.
    Since~$\peb' \cap C_{\di} = \emptyset$ and~$|\peb' \cap C_{\li}| \leq 20\sqrt{d}\log(d)$, the set~$\peb'$ is indeed considered as a pebble set for~$(C,d,1,1)$.
    The next check in our algorithm is whether~$H[X_{\inte} \cup X_{\ex}]$ is isomorphic to~$H[X]$ and whether~$X_{\inte} \cap X_{\ex} \subseteq Y_{\inte} \cap Y_{\ex}$.
    Note that~$f$ assigns each node in~$X$ to a non-empty subset of~$V'$ by definition and does not assign any vertices to nodes~$u \in U \setminus X$.
    Since~$V_{\inte} \cup V_{\ex} = V'$, it holds that~$X_{\inte} \subseteq X$ and any node~$u \in X \setminus X_{\inte}$ is assigned to at least one vertex~$V_{\ex}$ and is therefore contained in~$X_{\ex}$, that is, $X_{\inte} \cup X_{\ex} = X$ and therefore~$H[X_{\inte} \cup X_{\ex}] = H[X]$.
    Now assume towards a contradiction that~$X_{\inte} \cap X_{\ex}\not\subseteq Y_{\inte} \cap Y_{\ex}$.
    Then, there exists a node~$u \in (X_{\inte} \cap X_{\ex}) \setminus (Y_{\inte} \cap Y_{\ex})$.
    By the construction of~$Y_{\inte}$ it holds that~$P^u_{\inte}$ contains at most one set and no edge in~$\mathcal{E}_{\inte}$ is incident to~$u$.
    Thus, $G'_P[f(u)] = G'[f(u)]$ and we can contract all these vertices into a single node, in particular,~$G'[f(u)]$ is connected.
    However, $u$ is not in the domain of~$c_{\inte}$ or~$c_{\ex}$ and~$f(u)$ therefore contains no vertices in~$C$.
    Since~$C$ is a separator and~$u \in X_1 \cap X_2$, it holds that~$f(u)$ contains vertices in both~$V_{\inte} \setminus V_{\ex}$ and in~$V_{\ex} \setminus V_{\inte}$, a contradiction to~$G'[f(u)]$ being connected.
    Hence, $X_{\inte} \cap X_{\ex} \subseteq Y_{\inte} \cap Y_{\ex}$.
    Our algorithm then checks whether~$c_{\inte}$ is an extension of~$\mu$ and whether~$c(u) \cap M_{\inte} \subseteq c_{\inte}(u)$.
    Note that the former holds as~$f$ is an extension of~$\mu$ by definition and~$c_{\inte}$ is an extension of~$f$ (note that~$c$ and~$c_{\inte}$ are both extended by~$f$ and extensions of~$f$ as they assign each node in~$U$ to the same set of vertices just restricted to their respective domains).
    The latter holds as~$f$ is an extension of~$c$.
    We also check that for all~$u \in U$ and all~$v \in \peb'$ that~$v \in c_{\inte}(u)$ if and only if~$v \in c_{\ex}(u)$.
    Note that this trivially holds as both~$c_1$ and~$c_2$ are extensions of~$f$, that is, it holds for all~$u \in U$ and~$v \in \peb'$ that~$v \in c_{\inte}(u)$ if and only if~$v \in f(u)$ if and only if~$v \in c_{\ex}(u)$. 
    
    Then, we check whether~$P_{\inte}$ refines~$c_{\inte}$, whether~$P_{\inte}^u \sqcup P_{\ex}^u = P^u$ for all~$u \in X \cap Y$, and~$P_{\inte}^u \sqcup P_{\ex}^u$ consists of a single set for each~$u \in (X_{\inte} \cap X_{\ex})\setminus Y$.
    The first one is true as by definition we never add two vertices~$v_1,v_2$ to a common block if there are~$u_1 \neq u_2$ such that~$v_1 \in f(u_1)$ and~$v_2 \in f(u_2)$.
    Since~$c_{\inte}$ refines~$f$, the same also holds for~$c_{\inte}$ and therefore~$P$ refines~$c_{\inte}$.
    For the second check, consider any~$u \in X \cap Y$ and the subgraph~$G'[f(u)]$.
    By assumption this graph has connected components for~$c(u)$ as described by~$P^u$.
    By definition of~$P_{\inte}$ two nodes~$v_1,v_2$ appear in the same block in~$P_{\inte}$ if and only if there exists a~$u\in X_{\inte}$ with~$v_1,v_2 \in f(u)$ and~$v_1$ and~$v_2$ are in the same connected component in~$G'_{\inte}[f(u)]$.
    Thus~$G'_{\inte}[f(u)]$ has connected components for~$c_{\inte}(u)$ as described by~$P^u_{\inte}$ and the same for~$c_{\ex}(u)$ and~$P^u_{\ex}$.
    Since~$G'[f(u)]$ is the union of~$G'_{\inte}[f(u)]$ and~$G'_{\ex}[f(u)]$, \cref{lem:connectivity} yields that~$P_{\inte}^u \sqcup P_{\ex}^u = P^u$.
    For the third check, note that since~$f$ witnesses that~$G'$ fulfills~$r$, it holds for all nodes~$u \in X \setminus Y$ that~$G'[f(u)]$ only has one connected component as~$u$ is not in the domain of~$P$ and therefore~$G'_P$ does not add any edges between vertices in~$f(u)$ but they are still contracted to a single vertex.
    As shown above~$G'_{\inte}[f(u)]$ has connected components for~$c_{\inte}(u)$ as described by~$P_{\inte}^u$ and the same for~$c_{\ex}(u)$ and~$P_{\ex}^u$.
    By \cref{lem:connectivity} $P_{\inte}^u \sqcup P_{\ex}^u$ describes a connected graph, that is, a partition containing only a single set.
    The final check of our algorithm is whether~$\mathcal{E}_{\inte} \cap \mathcal{E}_{\ex} = \mathcal{E}$.
    Recall that~$\mathcal{E}_{\inte}$ is the set of edges~$\{u_1,u_2\}$ in~$H[X_{\inte}]$ such that there is no edge~$\{v_1,v_2\}$ in~$G'$ with~$v_1 \in f(u_1)$ and~$v_2 \in f(u_2)$.
    For any edge~$\{u_1,u_2\} \in \mathcal{E}$, it holds that there is no edge~$\{v_1,v_2\}$ in~$G'$ with~$v_1 \in f(u_1)$ and~$v_2 \in f(u_2)$ and thus the same holds for~$G'_{\inte}$, that is, $\{u_1,u_2\} \in \mathcal{E}_{\inte}$.
    As the same argument also applies to~$\mathcal{E}_{\ex}$, it holds that~$ \mathcal{E} \subseteq \mathcal{E}_{\inte} \cap \mathcal{E}_{\ex}$.
    In the other direction, for any edge contained in~$\mathcal{E}_{\inte} \cap \mathcal{E}_{\ex}$ it holds that there is no edge~$\{v_1,v_2\}$ in~$G'_{\inte}$ or in~$G'_{\ex}$ with~$v_1 \in f(u_1)$ and~$v_2 \in f(u_2)$.
    Since the edges in~$G'$ are the union of the edges in~$G'_{\inte}$ and~$G'_{\ex}$, it also holds that there is no edge in~$G'$ satisfying the above.
    Thus, the edge has to be contained in~$\mathcal{E}$ as~$f$ does not yet model the edge but~$f$ witnesses that~$G'$ satisfies~$r$.
    All checks are therefore successful.
        
    Finally, we show that the correct value for the weight of~$G'$ is computed.
    To this end, observe that~$G'_{\inte}$ and~$G'_{\ex}$ do not share any edges and the union of both graphs is by definition~$G'$.
    As~$G'_{\inte}$ and~$G'_{\ex}$ fulfill requirements~$r_{\inte}$ and~$r_{\ex}$ as shown above and they do not contain any vertices from~$C_{\di}$ and at most~$20\sqrt{d}\log(d)$ vertices from~$C_{\li}$ as~$G'$ has the same property, their weight (or something smaller) is returned by the two queries and their sum is stored.
    Thus, we return a result that is at most the weight of~$G$, concluding the first direction of the correctness proof.
    
    Next, we show the other direction of the correctness proof, that is, we show that whenever we update a solution, then this corresponds to a subgraph of the given weight that fulfills request~$r$.
    To this end, let~$r_1=(\peb_1,(X_1,Y_1),c_1,P_1,\mathcal{E}_1,\ell_1)$ and~$r_2=(\peb_2,(X_2,Y_2),c_2,P_2,\mathcal{E}_2,\ell_2)$ be the requests that are used to update a solution for request~$r=(\peb,(X,Y),c,P,\mathcal{E},\ell)$.
    Let~${G_1 = (V_1,E_1)}$ and~${G_2=(V_2,E_2)}$ be the graphs of minimum weight that fulfill these requests and let~$f_1$ and~$f_2$ be the functions that witness that the graphs fulfill the requests.
    We show that~${G^* = (V_1 \cup V_2, E_1 \cup E_2)}$ fulfills requirement~$r$.
    Since the weight of the union is clearly at most the sum of the two individual weights (which is the result we report), this will conclude the proof of correctness.
    To this end, let~$\peb' = \peb_1 \cap (C_{\he} \cup C_{\li} \cup C_{\di})$.
    Note that since we queried~$r_1$ and~$r_2$ for cycle~$C$, it holds that~$\peb' = \peb_2 \cap \peb_1$ and both~$G_1$ and~$G_2$ contain precisely the vertices in~$\peb$ from~$C$, that is, no vertices from~$C_{\di}$ and at most~$20 \sqrt{d} \log(d)$ vertices from~$C_{\li}$.
    Moreover, note that~$\ell_2 = \ell - \ell_1 + |\peb'|$ and hence~$|V_1 \cup V_2| \leq \ell_1 + \ell_2 - |\peb'| = \ell$.
    It remains to show that there is a function~$f$ that witnesses that~$(V_1 \cup V_2, E_1 \cup E_2)$ fulfills request~$r$.
    To this end, we will consider functions~$f'_1$ and~$f'_2$ which both have domain~$X$ and~$f'_1(u) = \emptyset$ if~$u \notin X_1$ and~$f'_1(u) = f_1(u)$ otherwise.
    The function~$f'_2$ is defined analogously and~$f(u) = f'_1(u) \cup f'_2(u)$ for all~$u \in X$.
    Note that each vertex in~$G^*$ appears in exactly one set~$f(u)$ as the vertices in~$X_1 \cap X_2$ are exactly the vertices in~$\peb'$ and~$c_1$ and~$c_2$ agree on those vertices by construction.
    
    The three requirements we need to show are (i) for each~$u \in X$ it holds that~$c(u) \cup \mu(u) \subseteq f(u)$, (ii) when contracting all edges~$e$ in~$G^*_P$ for which there exists a~$u \in X$ with~$e \subseteq f(u)$, then the resulting graph is isomorphic to~$H[X] - \mathcal{E}$ and the isomorphism maps for each node~$u \in X$ the vertex that is the result of contracting all vertices in~$f(u)$ to~$u$, and (iii) for each~$u \in X$, the graph~$G'[f(u)]$ has connected components for~$c(u)$ as described by~$P^u$.
    Let~$u \in X$ be any node.
    For the first requirement, note that each node~$v \in \mu(u)$ is contained in~$f(u)$ as both~$c_1$ and~$c_2$ are extensions of~$\mu$ and therefore~$v \in f_1(u)$ if~$u \in X_1$ and otherwise~$v \in f_2(u)$ as~$X = X_1 \cup X_2$ by construction.
    In either case, it holds that~$v \in f(u)$ and the first requirement is fulfilled.

    We next show the third requirement.
    This follows from \cref{lem:connectivity} as follows.
    We make a case distinction whether~$u \in X_1 \cap X_2$ or not.
    If~$u \notin X_1 \cap X_2$, then assume without loss of generality that~$u \in X_1 \setminus X_2$.
    Then, the following holds as~$G_2$ does not contain any vertices in~$f(u)$ and, in particular, $\peb'$ also does not contain any vertices from~$c_1(u)$: $G^*[f(u)] = G_1[f(u)]$, $c(u) = c_1(u)$, and~$P^u = P_1^u$.
    Note that~$f_1$ guarantees that~$G_1[f(u)] = G^*[f(u)]$ has connected components for~$c_1(u) = c(u)$ as described by~$P^u_1 = P^u$.
    So assume next that~$u \in X_1 \cap X_2$.
    Then, $u$ is part of the domain of both~$c_1$ and~$c_2$ and~$c_1(u) \cap c_2(u) = f(u) \cap \peb'$.
    We have that~$G_1[f(u)]$ has connected components for~$c_1(u)$ as described by~$P_1^u$ for each~$u \in X_1$ and~$G_2[f(u)]$ has connected components for~$c_2(u)$ as described by~$P_2^u$ for each~$u \in X_2$.
    Moreover, $P^u = P_1^u \sqcup P_2^u$ by construction.
    Thus, \cref{lem:connectivity} states that the union of the two graphs (which is equal to~$G^*[f(u)]$) has connected components for~$c(u) \subseteq c_1(u) \cup c_2(u) \setminus \peb'$ as described by~$P_1^u \sqcup P_2^u$.
    This shows that the third requirement is fulfilled.

    Finally, we prove that the second requirement holds.
    To this end, notice that~$f_1$ witnesses that~${G_1}_{P_1}$ is a model for~$H[X_1] - \mathcal{E}_1 = (X_1,E'_1)$ and~$f_2$ is a witness for the fact that~${G_2}_{P_2}$ is a model for~$H[X_2] - \mathcal{E}_2 = (X_2,E'_2)$.
    We show that~$f$ witnesses that~$G^*_P$ is a model for~$H[X] - \mathcal{E}$.
    First, we show that~$G^*_P[f(u)]$ is connected for each~$u \in X$.
    Afterwards, we show that each edge in~$H[X]-\mathcal{E}$ is modeled by~$G^*_P$.
    Towards the former, note that~$G^*_P[f(u)]$ is only potentially disconnected if~$G^*[f(u)]$ contains a connected component that does not contain any vertex in~$\peb$ as all these vertices appear in~$P^u$ and are either contained in the same block or not.
    In the former case, the vertices appear in the same connected component of~$G^*[f(u)]$ as given by the third requirement.
    In the latter case, an edge is added in~$G^*_P$ and thus the graph is trivially connected.
    So assume that~$G^*[f(u)]$ contains a connected component that does not contain any vertex in~$\peb$.
    Note that this implies that~$u \notin Y$ by the definition of~$c$ and~$f$.
    If~$u \notin X_1 \cap X_2$, then assume without loss of generality that~$u \in X_1 \setminus X_2$.
    Then, $P^u = P_1^u$ holds as~$G_2$ does not contain any vertices in~$f(u)$ and therefore~$G^*_P[f(u)] = {G_1}_{P_1}[f(u)]$ and since~$f_1$ witnesses that~${G_1}_{P_1}[f(u)]$ is connected, the same holds for~$G^*_P[f(u)]$.
    If~$u \in X_1 \cap X_2$, then~$u \in (X_1 \cap X_2) \setminus Y$ as~$u \notin Y$.
    Then, we require that~$P_1^u \sqcup P_2^u$ consists only of a single set.
    Since~$G_1[f(u)]$ and~$G_2[f(u)]$ have connected components for~$c_1(u) = c_2(u) \subseteq \peb'$ as described by~$P_1^u$ and~$P_2^u$ and since~$P_1^u \sqcup P_2^u$ only consists of a single set, \cref{lem:connectivity} states that the union~$G^*[f(u)]$ has only one connected component.
    Thus, $G^*_P[f(u)]$ indeed has only one connected component.
    
    Next, we show that each edge in~$H[X] - \mathcal{E}$ is modeled.
    Since~$H[X_1 \cup X_2]$ is isomorphic to~$H[X]$, it holds for all edges in~$H[X_1] - (\mathcal{E}_1 \cup \mathcal{E}_2)$ that they are modeled in~$G^*$.
    It remains to analyze the edges in~$\mathcal{E}_1 \setminus \mathcal{E}_2$ (the argument for edges in~$\mathcal{E}_2 \setminus \mathcal{E}_1$ is symmetric).
    Note that each such edge is modeled in~$G_2$ and therefore also by~$G^*$.
    Hence, $G^*_P$ is a model for~$H[X] - \mathcal{E}$ as witnessed by~$f$ and the second requirement is fulfilled.
    All three requirements are fulfilled and~$G^*$ therefore fulfills request~$r$, concluding the proof of correctness for our algorithm for Requirement R4.
    
    We next analyze the running time of the algorithm for requirement R4.
    The number of possible pebble sets~$\peb'$ for~$(C,1)$ is by \cref{lem:numberofpebbles} in~$\Oh(2^{80\sqrt{d}\log^2(d)})$ and they can be compute in~$\Oh(2^{80\sqrt{d}\log^2(d)}d)$ time.
    Iterating over all~$\ell \leq d$ adds an additional factor of~$d$.
    Computing~$\peb_{\inte}, \peb_{\ex}, \peb'_{\inte},$ and~$\peb'_{\ex}$ can all easily be done in~$\Oh(n^2)$ time.
    We then iterate over all pairs of non-isomorphic separations of size at most~$t$ of~$H$, which can be computed in~$\Oh(|U|^3\sigma_t(H))$ time by \cref{lem:sigmasimple}.
    The number of possible pairs is in~$\Oh((\sigma_t(H))^2)$ and for each pair, we can check whether they are isomorphic in~$\Oh(|U|)$ time~\cite{HW74}.
    Computing~$Z_1,Z_2,H[Z_1]$, and~$H[Z_2]$ take~$\Oh(|U|)$ time.
    All coloring functions~$c_1$ and~$c_2$ can be computed in~$\Oh(2^{66000\sqrt{d}\log^{10}(d)}d)$~time by \cref{lem:coloringfunctions}.
    The number of pairs is in~$\Oh((2^{66000\sqrt{d}\log^{10}(d)})^2) = \Oh(2^{132000\sqrt{d}\log^{10}(d)})$ and the check whether they satisfy our three requirements for them takes~$\Oh(n)$ time.
    Next, we iterate over all pairs~$(P_1,P_2)$ of partitions for~$\peb'_{\inte}$ and~$\peb'_{\ex}$ (which can be computed in~$\Oh(2^{66000\sqrt{d}\log^{10}(d)})$~time by \cref{lem:numberofpartitions}), respectively, and compute~$P^u_1 \sqcup P^u_2$ for all~$u \in Z$ and perform two simple checks.
    The number of pairs is in~$\Oh(2^{132000\sqrt{d}\log^{10}(d)})$.
    Note that~$P^u_1 \sqcup P^u_2$ for all~$u \in Z$ can be computed in one step by computing~$P_1 \sqcup P_2$ and therefore in~$\Oh(n^2)$ time.
    Finally, we iterate over all pairs of subsets~$\mathcal{E}_1,\mathcal{E}_2$ of edges in~$E_1$ and~$E_2$, respectively.
    Since~$\peb'_{\inte}$ and~$\peb'_{\ex}$ both have size at most~$t$, the number of edges in~$E_1$ and~$E_2$ is at most~$\nicefrac{5t}{2}$, each.
    Thus, the number of possible combinations is at most~$2^{5t}$.
    Checking whether~$\mathcal{E}_1 \cap \mathcal{E}_2 = \mathcal{E}$ can clearly be done in~$\Oh(n^2)$ time and computing the sum of two numbers and checking whether it is smaller than the current entry takes~$\Oh(1)$ time.
    Thus, the overall running time for our algorithm for Requirement R4 is in
    \begin{align*}
    &\Oh(2^{80\sqrt{d}\log^2(d)}d (n^2 + (|U|^3 \sigma_t(H) + ((\sigma_t(H))^2 (|U| + 2^{66000\sqrt{d}\log^{10}(d)}n \\
    &\quad + 2^{132000\sqrt{d}\log^{10}(d)} (n + 2^{66000\sqrt{d}\log^{10}(d)} + 2^{132000\sqrt{d}\log^{10}(d)}(n^2+2^{5t}(n^2 + 1))))))))\\
    =\ &\Oh(2^{80\sqrt{d}\log^2(d)}d 2^{132000\sqrt{d}\log^{10}(d)} 2^{132000\sqrt{d}\log^{10}(d)} |U|^3 (\sigma_t(H))^2 2^{5t} n^2)\\
    \subseteq\ &\Oh(2^{264080\sqrt{d}\log^{10}(d)}d |U|^3 (\sigma_t(H))^2 2^{5t} n^2)\\
    =\ &\Oh(\gamma_H(d) n^{\rho}).
    \end{align*}
    
    The final step is to present a reduction from \minor{} to \minorr{} and analyze the running time.
    To this end, we apply \cref{lem:baker} with~$d=k+1$ to compute~$k+1$ $k$-outerplanar graphs.
    We then solve \minorr{} for each of these graphs using \cref{thm:main-fpt} with~$d=k$, portal set~${B=(\emptyset,\emptyset,\emptyset)}$, $x=1$, and the input weight and root functions.
    For each such iteration, we query the optimal solution for request~$(\emptyset,(U,\emptyset),c,\emptyset,\emptyset,k)$, where~$c \colon \emptyset \rightarrow 2^{V}$ is an empty function.
    If any iteration finds a solution with weight at most~$W$, then we output yes and otherwise we output no.
    Note that any subgraph fulfilling the request~$(\emptyset,(U,\emptyset),c,\emptyset,\emptyset,k)$ corresponds to a subgraph~$G'$ with at most~$k$ vertices that contains~$H[U] - \emptyset = H$ as a rooted minor.
    Since the weight of~$G'$ is at most~$W$, this shows that there is a subgraph of~$G$ of weight at most~$W$ that contains~$H$ as a rooted minor and hence we indeed have a yes-instance.
    In the other direction, any subgraph of weight at most~$W$ that contains~$H$ as a rooted minor also fulfills the given request and thus the algorithm is correct.
    Using \cref{thm:main-fpt}, the arguments above, the fact that we may assume\footnote{If~$k < |U|$, then the instance is trivially a no-instance as contracting edges can never increase the number of vertices and if~$k > n$, then setting~$k = n$ yields an equivalent instance.}~$|U| \leq k \leq n$, and~$t = 12000\sqrt{k}\log^8(k)$, solving each instance of \minorr{} takes time
    \[2^{\tilde{\Oh}(\sqrt{k})} 2^{264080\sqrt{k}\log^{10}(k)}k |U|^3 (\sigma_t(H))^2 2^{5t} n^{2.49} \subseteq 2^{\tilde{\Oh}(\sqrt{k})}(\sigma_t(H))^2 n^{2.49}.\]
    The time for computing all inputs is in~$\Oh(kn)$ by \cref{lem:baker}, because induced subgraphs can be computed in~$O(n+m)$ time, and since~$m \in \Oh(n)$ for planar graphs.
    The overall running time is therefore in~${\Oh(nk) + \Oh((k+1) 2^{\tilde{\Oh}(\sqrt{k})} (\sigma_t(H))^2 n^{2.49}) = 2^{\tilde{\Oh}(\sqrt{k})}(\sigma_t(H))^2 n^{2.49}}$.
    This concludes the proof.
\end{proof}

We next show how to get a subexponential parameterized algorithm when only parameterizing by~$k$ and discuss relevant special cases of \minor.
Note that we can assume without loss of generality that~$k \geq |U|$ as otherwise the answer is no as any graph with~$k$ vertices cannot be a model for~$H$.
Combined with \cref{lem:sigmasimple} this immediately implies a subexponential algorithm for the parameter~$k$.

\wsi{} is a special case of \minor{} where~$k = |U|$ and \wsp{} is the special case where~$H$ consists of an independent set~$I=\{u_1,u_2\ldots,u_t\}$ of size~$t$ and~$\mu(u_i)=X_i$ for each~$i \in [t]$.
The number of non-isomorphic separations~$(X,Y)$ of size at most~$t$ for any graph with~$t$ vertices is at most~$3^{t}$ as the separation selects for each vertex whether it is contained in~$X, Y$, or both.
Hence, we get the following corollary from \cref{thm:minor}.

\begin{corollary}
    Let~$t = 12000 \sqrt{k} \log^8(k)$.
    \minor{} can be solved in~$2^{\Oh(\nicefrac{k}{\log(k)})}n^{2.49}$ time for planar graphs,
    \wsi{} for planar graphs can be solved in~$2^{\tilde{\Oh}(\sqrt{|U|})}\sigma_t(H)n^{2.49}$ time and in~$2^{\Oh(\nicefrac{k}{\log(k)})}n^{2.49}$ time, where~$\sigma_t(H)$ denotes the number of non-isomorphic separations of~$H$ of size at most~$t$, and \wsp{} can be solved in~$2^{\tilde{\Oh}(\sqrt{k})}n^{2.49}$ time for planar graphs.
\end{corollary}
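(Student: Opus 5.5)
The corollary has three parts, and each comes from \cref{thm:minor} by bounding $\sigma_t(H)$ in a suitable way.

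\textbf{\minor.} First I will argue that we may assume $H$ is planar and $|U|\le k\le n$. Planar graphs are closed under minors, so if $H$ is not planar the answer is no. We can test planarity of $H$ in linear time and answer no in that case. If $k<|U|$ the answer is also no, and if $k>n$ we may replace $k$ by $n$.

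With these assumptions, $t=12000\sqrt{k}\log^8(k)$. I will check that $t\in\Oh(|U|^{2/3})$ in the relevant regime, so that \cref{lem:sigmasimple} gives $\sigma_t(H)\in 2^{\Oh(|U|/\log|U|)}\subseteq 2^{\Oh(k/\log k)}$.

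This check is the main obstacle. The bound $t\in\Oh(|U|^{2/3})$ holds when $|U|\ge k^{3/4}\operatorname{polylog}(k)$. In the remaining case $|U|$ is much smaller than $k$, and I will instead use the trivial bound $\sigma_t(H)\le 3^{|U|}$. Here $|U|\le k^{3/4}\operatorname{polylog}(k)$, which lies in $\Oh(k/\log k)$.

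In both cases $(\sigma_t(H))^2=2^{\Oh(k/\log k)}$. The factor $2^{\tilde{\Oh}(\sqrt{k})}$ from \cref{thm:minor} is absorbed into this. The total running time is therefore $2^{\Oh(k/\log k)}n^{2.49}$.

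\textbf{\wsi.} This is the special case of \minor{} with $k=|U|$ and no roots. In that case a model with exactly $|U|$ vertices is a subgraph isomorphic to $H$: every node is modelled by a single vertex and no contractions are made. Applying \cref{thm:minor} gives time $2^{\tilde{\Oh}(\sqrt{|U|})}(\sigma_t(H))^2n^{2.49}$.

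The statement has $\sigma_t(H)$ rather than its square. I will obtain this either by noting that the pairs of separations enumerated in the merge step can be restricted: once $(X_1,Y_1)$ and the target $(X,Y)$ are fixed, $X_2$ is determined up to the choice of the separator. Alternatively, I will absorb the extra factor into the $\tilde{\Oh}$ exponent. This is a minor point. The second bound, $2^{\Oh(k/\log k)}n^{2.49}$, then follows exactly as in the \minor{} part.

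\textbf{\wsp.} This is \minor{} with $H$ an independent set on $u_1,\dots,u_t$ and roots $\mu(u_i)=X_i$. We may assume $t\le k$, since each terminal set must contain at least one vertex of the solution.

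For an edgeless $H$, two separations are isomorphic exactly when they have the same separator and the same number of nodes in $X\setminus Y$, because all nodes outside the separator are interchangeable. But the roots distinguish the nodes, so this quotient argument does not apply directly. I will instead use the $3^{t}$ bound stated just before the corollary.

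The concern is that $3^{t}$ is exponential in $t$, not subexponential in $k$. I expect the resolution to go as follows. In an independent-set pattern there are no edges to model, so the request only has to carry the coloring $c$ and the partition $P$ restricted to the nodes met by the pebble set. That means separations with $X\cap Y$ equal to the set of nodes touched by $\peb$, of which there are at most $|\peb|$. The rest of $X$ only has to be recorded as a set of already-completed terminal groups. This is determined implicitly, so I will argue that the separation can be dropped from the request and the count stays $2^{\tilde{\Oh}(\sqrt{k})}$. The resulting running time is $2^{\tilde{\Oh}(\sqrt{k})}n^{2.49}$.
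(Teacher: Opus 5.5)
Your \minor{} part is correct and follows the paper's argument: assume $|U|\le k$, then apply \cref{lem:sigmasimple}. Your case split, which guarantees $t\in\Oh(|U|^{2/3})$ and falls back on the trivial bound $3^{|U|}$ when $|U|\le k^{3/4}\operatorname{polylog}(k)$, fills in a detail the paper skips.

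For \wsi{}, specialising \cref{thm:minor} to $k=|U|$ is exactly what the paper does, but this yields $(\sigma_t(H))^2$. Neither of your fixes removes the square:
\begin{itemize}
\item Absorbing the extra factor into the exponent fails, because $\sigma_t(H)$ can be $2^{\Theta(|U|/\log|U|)}$. Take $H$ to be a disjoint union of $\Theta(|U|/\log|U|)$ pairwise non-isomorphic trees on $\Theta(\log|U|)$ nodes. Already the separations with empty separator are pairwise non-isomorphic, and there are $2^{\Theta(|U|/\log|U|)}$ of them.
\item Pinning down $X_2$ does not help either. The merge still ranges over pairs consisting of a target separation and a first part, and that enumeration is where the square comes from.
\end{itemize}
The paper obtains this part only by specialising \cref{thm:minor}, so what actually follows is the squared bound. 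The unsquared bound would need a genuinely new merge argument, so this is not a minor point.

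The main gap is in \wsp{}. The paper justifies it with the bound $3^{t}$ on the separations of an independent set on $t$ nodes (it reuses $t$ here for the number of terminal sets). You are right that, read this way, the bound alone is not $2^{\tilde{\Oh}(\sqrt{k})}$. However, you then discarded the observation that does the job. \cref{thm:minor} is stated in terms of $\sigma_t(H)$ for the \emph{unrooted} pattern $H$; the roots are handled inside the requests through coloring functions extending $\mu$. Taking the theorem at face value, as a corollary must, your own remark applies: for edgeless $H$, the isomorphism class of a separation is determined by its separator and $|X\setminus Y|$. Hence $\sigma_t(H)\le (t_0+1)^{t+1}\le (k+1)^{t+1}=2^{\tilde{\Oh}(\sqrt{k})}$, where $t_0\le k$ is the number of terminal sets and $t=12000\sqrt{k}\log^8(k)$. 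Plugging this into \cref{thm:minor} gives $2^{\tilde{\Oh}(\sqrt{k})}n^{2.49}$ immediately.

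Your alternative, dropping the separation from the requests, is not a corollary of \cref{thm:minor}. It defines a new request variant, so Requirements R2--R4 would have to be re-verified. In particular, the merge would have to recognise when a colour class closes off, check that it then forms a single block containing all of $X_i$, and forbid index $i$ from being used anywhere else. Your proposal only announces this, so as written this part is a plan rather than a proof.
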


\subsection{Parallel Induced Minors}
We next investigate induced minors with parallel contractions.
Studying induced minors rather than minors only makes a small difference.
We simply replace a subgraph by an induced subgraph and whenever we combine two solutions, we compute all induced edges in the separator and subtract them from the total.
Dealing with a maximization problem rather than a minimization problem also poses no noticeable hurdle.
The main obstacle are the parallel contractions for two reasons.
First and foremost, the number of non-isomorphic separations of a planar multigraph is exponential in its number of vertices even for separations of size zero as we will show.
Second, if one tries to keep track of which edges are already part of the model, the number of possibilities is exponential even for just a pair of vertices with many edges between them.
Instead, we will keep track for each underlying edge (including self loops) how many parallel edges are already modeled using \emph{\textbf{counting functions}}.

Let us now discuss why the number of non-isomorphic separations of size zero of a planar multigraph~$H=(U,F)$ is exponential in~$|U|$ and how to resolve this issue.
Let~$H$ consist of~$\nicefrac{|U|}{2}$ pairs of vertices, where the~$i$\textsuperscript{th} pair is connected by~$i$ parallel edges.
Any separation of size zero partitions these pairs into two subsets and since each such partition is a non-isomorphic separation, the number of such separations is~$2^{\nicefrac{|U|}{2}}$.
Instead of using~$|U|$ as an upper bound on the number of non-isomorphic separations and a lower bound for~$k$, we show that the parameter~$\psi(H)=\max(|U|,\frac{2|F|}{5})$ can be used in the case of planar multigraphs.
To this end, we first show that we can always assume that~$k \geq \psi(H)$.

\begin{lemma}
    \label{lem:psi}
    Let~$G$ be a simple planar graph and~$H$ be a planar multigraph. Any model for~$H$ in~$G$ contains at least~$\psi(H)$ vertices.
\end{lemma}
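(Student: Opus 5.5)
We need to show that any model $G'$ of $H$ in $G$ has at least $\max(|U|,\nicefrac{2|F|}{5})$ vertices. The bound $|V(G')| \geq |U|$ is immediate: contractions never increase the number of vertices, and the model must be contracted onto exactly $|U|$ nodes, each represented by a non-empty set of vertices (the branch sets $f(u)$). So the work is to show $|V(G')| \geq \nicefrac{2|F|}{5}$, i.e.\ $|F| \leq \nicefrac{5}{2}\,|V(G')|$.

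The plan is to charge every edge of $H$ to a distinct edge of $G'$ and then use the edge bound for simple planar graphs. Since we use parallel contractions, contracting an edge keeps all other edges, including those that become parallel edges or self loops; only the contracted edge disappears. Thus every edge of the contracted multigraph corresponds to a unique edge of $G'$. The isomorphism to $H$ then gives an injection from $F$ into $E(G')$, so $|F| \leq |E(G')|$. This holds whether $H$ counts self loops or not, since loops arise from edges inside a branch set; in the induced-minor variant $G'$ is induced, which changes nothing here.

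Next, $G'$ is a subgraph of the simple planar graph $G$, so it is simple and planar. Euler's formula gives $|E(G')| \leq 3|V(G')|-6$ once $|V(G')| \geq 3$, and $|E(G')| \leq 1$ for smaller graphs. Chaining the bounds gives $|F| \leq 3|V(G')|$.

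The main obstacle is matching the constant $\nicefrac{2}{5}$ in $\psi$. The argument above only gives $|V(G')| \geq \nicefrac{|F|}{3}$, which is strictly weaker than $\nicefrac{2|F|}{5}$. So the plan must exploit that the $|V(G')|-|U|$ contracted edges are lost: $|F| \leq |E(G')| - (|V(G')| - |U|)$. Counting the edges inside the branch sets more carefully (each branch set of size $s$ spans at least $s-1$ edges, and these are contracted or become loops), I would first aim for
\[
|F| \leq 3|V(G')| - 6 - (|V(G')|-|U|) \leq 2|V(G')| + |U| \leq 3|V(G')|.
\]
This bound alone still does not reach the constant $\nicefrac{2}{5}$; I cannot verify that the paper's intended constant holds under this reading. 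If the model excludes self loops, the loop edges inside branch sets are deleted rather than counted, which sharpens the estimate to $|F| \leq 2|V(G')| + |U| - 6$. Combined with $|V(G')| \geq |U|$, this gives $|V(G')| \geq (|F|+6)/3$. If loops and parallel edges in $H$ are counted under a different convention (for example, if each edge of $G'$ contributes to at most one edge of $H$ and the induced setting forces further losses), I would redo this accounting to extract the factor $\nicefrac{5}{2}$. That final constant-matching step is where I expect the real difficulty.
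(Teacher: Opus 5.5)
Your first steps are correct and follow the same structure as the paper's proof. Contractions never increase the number of vertices, so $|V(G')|\ge |U|$. A parallel contraction deletes only the contracted edge, so $|F| = |E(G')|-(|V(G')|-|U|)\le |E(G')|\le 3|V(G')|-6$. But you stop at $|V(G')|\ge (|F|+6)/3$ and leave the factor $\frac{2}{5}$ open, so the proposal does not prove the statement.

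No further accounting can close this gap, because the statement is false. Let $G=H$ be a triangulation on $13$ vertices, which has $3\cdot 13-6=33$ edges. Then $G'=G$ is a model of $H$: no contractions, no loops, no parallel edges, and it is induced, so every convention applies. Yet $\psi(H)=\max(13,\frac{66}{5})=13.2>13=|V(G')|$. More generally, every triangulation on $n\ge 13$ vertices satisfies $\frac{2}{5}(3n-6)>n$.

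The paper's proof fails exactly where you got stuck. It claims that a simple planar graph, being $5$-degenerate, has at most $\frac{5}{2}|V'|$ edges. Degeneracy $5$ only gives $|E'|\le 5|V'|$; the bound $\frac{5}{2}|V'|$ holds for average degree at most $5$. The correct planar bound $3|V'|-6$ exceeds $\frac{5}{2}|V'|$ once $|V'|\ge 13$. The same false claim is used in the proof of \cref{lem:sigmasimple} to conclude $\psi(H)=|U|$ for simple planar $H$.

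So instead of searching for a counting convention that produces $\frac{5}{2}$, you should have presented this counterexample. No such convention exists, since the counterexample involves no loops, parallel edges, or contractions. Then record what your argument actually proves: every model has at least $\max(|U|,|F|/3)$ vertices, and indeed at least $(|F|+6)/3$ once it has three vertices. Triangulations show this is tight.

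Redefining $\psi(H):=\max(|U|,|F|/3)$ repairs the lemma and what depends on it:
\begin{itemize}
\item \cref{lem:sigma} only needs $|U|+|F|\in\mathcal{O}(\psi(H))$ from $\psi$.
\item The identity $\psi(H)=|U|$ for simple planar $H$ becomes true, since $|F|<3|U|$.
\item The rejection test ``$\psi(H)>k$'' in \cref{cor:inducedminor} becomes sound. With $\frac{2}{5}$ it wrongly answers no for, e.g., $H=G$ the $13$-vertex triangulation and $k=13$.
\end{itemize}
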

\begin{proof}
    Assume towards a contradiction that a model~$G'=(V',E')$ with~$|V'| < \psi(H)$ vertices exists.
    We make a case distinction whether~$\psi(H)=|U|$ or~$\psi(H) = \nicefrac{2|F|}{5}$.
    In the former case, note that~$G'$ contains~$|V'| \leq |U|$ vertices.
    Since~$G'$ is a model of~$H$, a sequence of (parallel) edge contractions transforms~$G'$ into~$H$.
    However, the number of vertices reduces with each edge contraction.
    Thus, any graph~$H'$ for which~$G'$ is a model contains at most~$|V'| < |U|$ vertices, contradicting the fact that~$G'$ is a model for~$H$.

    Now consider the case where~$\psi(H) = \nicefrac{2|F|}{5}$.
    Since~$G'$ is a subgraph of~$G$, it holds that~$G'$ is a simple planar graph and hence 5-degenerate.
    This implies~${|E'| \leq \nicefrac{5|V'|}{2} < \nicefrac{5\psi(H)}{2} = \nicefrac{5\frac{2|F|}{5}}{2} = |F|}$.
    Similar to the first case, observe that the number of edges decreases (by one) with each (parallel) edge contraction.
    This contradicts the fact that~$G'$ is a model for~$H$.
\end{proof}

We next show that for a planar multigraph~${H = (U,F)}$, the number of non-isomorphic separations of size~$t \in \tilde{\Oh}(\sqrt{|U|})$ is in~$2^{\Oh(\nicefrac{\psi(H)}{\log(\psi(H))})}$.
Afterwards, we show that this also directly implies that the number of such separations is in~$2^{\Oh(\nicefrac{|U|}{\log(|U|)})}$ when~$H$ is a simple graph (which we relied on in the previous subsection).

\begin{lemma}
    \label{lem:sigma}
    Given a planar multigraph~$H=(U,F)$ and a non-negative integer~$t \in \Oh(\psi(H)^{\frac{2}{3}})$, the number~$\sigma_t(H)$ of non-isomorphic separations of size at most~$t$ of~$H$ is in~$2^{\Oh(\nicefrac{\psi(H)}{\log(\psi(H))})}$.
    The separations can be enumerated in~$\Oh(\psi(H)^3 \sigma_t(H))$~time.
\end{lemma}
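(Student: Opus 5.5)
Following Bodlaender, Nederlof, and van der Zanden~\cite{BNZ16}, the plan is to write a separation $(X,Y)$ of size at most $t$ as $H[X]$ with a distinguished (labelled) separator $Z=X\cap Y$. The number of non-isomorphic separations is then at most the number of choices of the labelled set $Z$ (up to the identity on $Z$) times the number of isomorphism types of the remaining part. Concretely, $H[X]-Z$ splits into connected components, each attached to $Z$ through some neighbourhood. So a separation is determined up to isomorphism by three things: (a) the labelled graph $H[Z]$ together with its edge multiplicities, (b) a multiset of ``attached components'', that is, isomorphism classes of pairs (a component $K$ of $H-Z$, with its attachment pattern to $Z$ including edge multiplicities), and (c) for each such class, how many of its copies go to $X$. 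I will bound the number of choices in each part.

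For (a), the set $Z$ has at most $t$ nodes, and $Z$ ranges over subsets of $U$, giving at most $\binom{|U|}{\le t}\le 2^{\Oh(t\log |U|)}=2^{o(\psi/\log\psi)}$ choices since $t\in\Oh(\psi^{2/3})$. For (b) and (c), components of $H-Z$ fall into two groups. The large components, those with more than $s=c\log\psi$ nodes or edges (counted via $\psi$-weight $|V(K)|+\frac{2}{5}|E(K)|$), number at most $\Oh(\psi/\log\psi)$, so I choose freely whether each goes to $X$, for $2^{\Oh(\psi/\log\psi)}$ options. The small components have weight at most $s$. For these I would argue that their isomorphism types, including attachment to $Z$, are few. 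Planarity implies that a component of size $\le s$ attaches to $Z$ in a way that, across all components, produces at most $\Oh(|Z|)$ distinct neighbourhood sets of size $\ge3$; this is a standard planar fact, using that the bipartite incidence graph between components and $Z$ is planar. Small-neighbourhood attachments give $t^{\Oh(1)}$ options. Multiplicities are bounded by $s$ because the weight includes edges. Hence the number of types is $T\le \Oh(t)\cdot 2^{\Oh(s\log s)}$, which is $\psi^{1/2}$-ish after choosing $c$ small. For each type we only need the count of copies placed in $X$, which is at most $\psi+1$ options, for $(\psi+1)^T=2^{\Oh(T\log\psi)}=2^{o(\psi/\log\psi)}$. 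Multiplying everything gives $2^{\Oh(\psi/\log\psi)}$.

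For enumeration, I would enumerate candidate $Z$ explicitly, compute the components of $H-Z$, and classify small ones via a canonical form (planar isomorphism in linear time~\cite{HW74}). I would then iterate over the choices above, outputting each separation with polynomial overhead $\Oh(\psi^3)$ per output after deduplication via canonical labelling. Since $|F|\le \frac52\psi$ and $|U|\le\psi$, all graph operations cost $\mathrm{poly}(\psi)$.

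The main obstacle is step (b): bounding the number of attachment types of small components to a separator of size $t\approx\psi^{2/3}$ so that $T\log\psi=o(\psi/\log\psi)$. Multiplicities make this delicate, and that is exactly why $\psi$ rather than $|U|$ is used. I would first try setting $s=\varepsilon\log\psi$ and applying the planar bound on distinct neighbourhoods, absorbing the multiplicity information into the weight $s$.
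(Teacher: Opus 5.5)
Your skeleton is the same as the paper's: enumerate $Z$, place the few large components freely, and for the small ones choose only how many copies of each isomorphism type go to $X$. However, the step you flag as the main obstacle fails as written.

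With $s=c\log\psi$, the bound $2^{\Oh(s\log s)}$ equals $\psi^{\Theta(c\log\log\psi)}$, which is superpolynomial for every constant $c>0$. So $T$ is not ``$\psi^{1/2}$-ish''. The number of occurring types is then bounded only by the number of components ($\le\psi$), and $\prod_i(m_i+1)$ gives merely $2^{\Oh(\psi)}$.

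No choice of $s$ repairs this. Making $2^{\Oh(s\log s)}$ polynomial forces $s=\Oh(\log\psi/\log\log\psi)$. But then there can be $\Theta(\psi\log\log\psi/\log\psi)$ large components, and you only get $2^{\Oh(\psi\log\log\psi/\log\psi)}$.

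Two smaller points. Allowing $t^{\Oh(1)}$ neighbourhoods of size at most $2$ is too generous, since $t^2$ can be $\psi^{4/3}$. You need $\Oh(|Z|)$ distinct neighbourhoods of \emph{all} sizes; planarity gives this, and the paper uses the $37|Z|$ bound. Also, your weight must count the edges between $K$ and $Z$; otherwise multiplicities on attachment edges are not bounded by $s$.

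What is missing is a \emph{singly} exponential bound $2^{\Oh(s)}$ on the number of types per neighbourhood class. This needs two ideas you do not use.

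First, bound the neighbourhood size of small components by a constant. Components with at least $6$ neighbours in $Z$ number only $\Oh(|Z|)$, by planarity of the contracted bipartite graph. Place them freely at cost $2^{\Oh(t)}$ instead of counting their types. This is unavoidable: a path on $s/2$ vertices matched to $s/2$ separator vertices already admits $2^{\Theta(s\log s)}$ non-isomorphic attachments.

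Second, count the underlying simple graph of a component as an \emph{unlabelled planar} graph. By Amini, Fomin, and Saurabh this gives $2^{36s}$ choices, whereas labelled counting gives $2^{\Theta(s\log s)}$.

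A type is then fixed by four pieces of data:
\begin{itemize}
\item this unlabelled graph;
\item one of at most $37|Z|$ neighbourhoods;
\item a $6$-bit pattern per vertex, recording adjacency to each of the at most $5$ neighbours plus a self-loop bit;
\item the multiplicities, encoded as a composition of at most $s$, giving $2^{s}$ choices.
\end{itemize}
With $s=\log\psi/180$ this yields at most $37t\psi^{1/4}=\Oh(\psi^{11/12})$ types, hence $\psi^{\Oh(\psi^{11/12})}$ choices for the small part. Together with $2^{450\psi/\log\psi}$ choices for the heavy components, the claimed bound follows.
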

\begin{proof}
    We follow the proof strategy for simple planar graphs by Bodlaender, Nederlof, and van der Zanden~\cite{BNZ16} to count the number of non-isomorphic separations~$(X,Y)$ of~$H$ of size at most~$t$.
    There are~$\binom{|U|}{t}\leq \psi(H)^t \in \psi(H)^{\Oh(\psi(H)^{\nicefrac{2}{3}})}$ possible choices for the separator~$Z = X \cap Y$ and all subsets of~$U$ of size at most~$t$ can be enumerated in~$\Oh(\binom{|U|}{t}) \subseteq \Oh(\sigma_t(H))$ time as~$\sigma_t(H) \geq \binom{|U|}{t}$.
    Next, we compute the connected components of~$H[U \setminus Z]$ in~$\Oh(|U| + |F|) \subseteq \Oh(\psi(H))$ time.
    Let the \emph{\textbf{weight}} of a connected component~$C$ in~$H[U \setminus S]$ denote the number of edges with at least one endpoint in~$C$, that is, the number of edges between two vertices in~$C$ plus the number of edges between vertices in~$C$ and vertices in~$Z$.
    A connected component is \emph{\textbf{heavy}} if its weight is at least~$\nicefrac{\log(\psi(H))}{180}$ and it is \emph{\textbf{light}} otherwise.

    Next, we show how to construct a (simple, bipartite, and planar) graph~$H'$ with~$Z$ as the vertices on one side of the bipartition and a new vertex set~$B$ on the other side.
    We start with the graph~$H$, contract all connected components in~$H[U \setminus Z]$ into single vertices (the vertices in~$B$), remove all self loops and parallel edges, and remove all edges between two vertices in~$Z$.
    Note that the above can be computed in~$\Oh(\psi(H))$~time.
    It is known that there are at most~$5|Z| \leq 5t$ vertices in~$B$ of degree at least~$6$.
    Moreover, there are at most~$37|Z| \leq 37t$ subsets~$S$ of~$Z$ such that there exists a vertex in~$B$ whose neighborhood is exactly~$S$~\cite{Gaj13}.
    We say a connected component of~$H[U \setminus Z]$ is \emph{\textbf{highly connected}} if the corresponding vertex in~$H'$ has at least six neighbors in~$Z$.
    Connected components that are light and \emph{not} highly connected are \emph{\textbf{small}}.
    
    We will next show that the number of non-isomorphic small connected components is at most~$37t \psi(H)^{\nicefrac{1}{4}}$.
    Note that a small connected component~$C$ is characterized by the following five characteristics:
    (i) the simple graph~$G_C$ that is the result of removing parallel edges and self loops, (ii) the set of neighbors in~$Z$ that any vertex in~$C$ is adjacent to, (iii) for each vertex in~$C$ the information whether it contains at least one self loop, (iv) for each vertex~$v$ in~$C$ the neighbors of~$v$ in~$Z$, and (v) the multiplicity of each edge.
    A result by Amini, Fomin, and Saurabh~\cite{AFS12} states that the number of simple planar graphs with at most~$n$ vertices is at most~$2^{36n}$ and that they can be enumerated in~$\Oh(2^{36n})$ time.
    Since each considered connected component is light, it contains less than~$\nicefrac{\log(\psi(H))}{180}$ edges and since it is connected by definition, it contains at most~$\nicefrac{\log(\psi(H))}{180}$ vertices.
    The number of options for characteristic (i) is therefore at most~$2^{\nicefrac{36 \log(\psi(H))}{180}}$.
    Using the known upper bound on the number of possible neighborhoods~\cite{Gaj13}, the number of options for characteristic (ii) is at most~$37t$.
    The number of choices for (iii) and (iv) combined is then~$2^{\frac{6 \log(\psi(H))}{180}} = 2^{\nicefrac{\log(\psi(H))}{30}}$ as each vertex in~$C$ has~$2^6$ choices to be connected to each of the at most~$5$ neighbors in~$Z$ and itself (self loop).
    The number of ways to distribute the at most~$\nicefrac{\log(\psi(H))}{180}$ edges incident to vertices in~$C$ to the set of edges in~$G_C$ and the edges between vertices in~$C$ and vertices in~$Z$ such that each edge has multiplicity at least one~(the number of options for characteristic (v)) is equivalent to the number integer compositions\footnote{An integer composition of an integer~$n$ is an ordered sequence of positive integers summing to~$n$.} of the number~$\lfloor \nicefrac{\log(\psi(H))}{180} \rfloor$, which is upper-bounded by~$2^{\nicefrac{\log(\psi(H))}{180}}$~\cite{Stanley:1997vol1}.
    Combining all of the above, the number of non-isomorphic small connected components is at most~$2^{\nicefrac{36 \log(\psi(H))}{180}} \cdot 37t \cdot 2^{\nicefrac{\log(\psi(H))}{30}} \cdot 2^{\nicefrac{\log(\psi(H))}{180}} < 37t 2^{45\frac{\log(\psi(H))}{180}} = 37t\psi(H)^{\nicefrac{1}{4}} \in \Oh(\psi(H)^{\nicefrac{11}{12}})$ as~$t \in \Oh(\psi(H)^{\nicefrac{2}{3}})$.

    Given the separator~$Z = X \cap Y$ for unknown~$X$ and~$Y$, the number of highly connected connected components in~$H[U \setminus Z]$ is at most~$5t$~\cite{Gaj13}.
    We next analyze the number of heavy connected components.
    Note that the number of edges in~$H$ is at most~$\nicefrac{5\psi(H)}{2}$.
    Since each heavy connected component is incident to at least~$\nicefrac{\log(\psi(H))}{180}$ edges and there are no edges between two such connected components (because~$Z$ is a separator), there are at most~$\nicefrac{450\psi(H)}{\log(\psi(H)}$ heavy connected components in~$H[U \setminus Z]$.
    Hence, the number of possible partitions of highly connected and heavy connected components to either~$X$ or~$Y$ is at most~$2^{5t} 2^{\nicefrac{450\psi(H)}{\log(\psi(H))}} \in 2^{\Oh(\nicefrac{\psi(H))}{\log(\psi(H))})}$.

    We now combine all previous steps.
    The number of possible separators~$Z$ is in~$\psi(H)^{\Oh(\psi(H)^{\nicefrac{2}{3}})}$.
    The number of possible partitions of highly connected and heavy connected components to either~$X$ or~$Y$ is in~$2^{\Oh(\nicefrac{\psi(H))}{\log(\psi(H))})}$.
    The number of non-isomorphic connected components is in~$\Oh(\psi(H)^{\nicefrac{11}{12}})$ and we decide for each of them how many are added to~$X$.
    There are at most~$\psi(H)$ vertices in~$H$ and this also upper bounds the total number of connected components in~$H[U \setminus Z]$.
    Thus, the number of non-isomorphic separations of~$H$ is in
    \[\psi(H)^{\Oh(\psi(H)^{\nicefrac{2}{3}})} \cdot 2^{\Oh(\nicefrac{\psi(H)}{\log(\psi(H))})} \cdot \psi(H)^{\Oh(\psi(H)^{\nicefrac{11}{12}})} \subseteq 2^{\Oh(\nicefrac{\psi(H)}{\log(\psi(H))})}.\]

    It remains to analyze the running time for enumerating all non-isomorphic separations of~$H$.
    Note that~$|U| + |F| \in \Oh(\psi(H))$ by definition of~$\psi$.
    The time required to iterate over all possible separators~$Z$ of size at most~$t$ is in~$\Oh(\binom{|U|}{t})$.
    Computing the connected components of~$H[U \setminus Z]$ takes~$\Oh(\psi(H))$~time.
    Next, we show how to partition all connected components of~$H[U \setminus Z]$ such that for two connected components with vertex sets~$C_1$ and~$C_2$ it holds that~$C_1$ and~$C_2$ are in the same block of the partition if and only if~$G[Z \cup C_1]$ is isomorphic to~$G[Z \cup C_2]$.
    To this end, we first compute simple planar graphs~$G'_i$ for each connected component with vertex set~$C_i$ such that~$G[Z \cup C_i]$ and~$G[Z \cup C_j]$ are isomorphic if and only if~$G'_i$ and~$G'_j$ are isomorphic.
    For each pair of graphs, we then check whether they are isomorphic in~$\Oh(\psi(H))$~time~\cite{HW74}.
    Since the number of connected components is at most~$|U| \leq \psi(H)$, this takes~$\Oh(\psi(H)^3)$ time for all pairs combined and this immediately yields the desired partition.
    
    Note that~$\sigma_t(H)$ is precisely the sum of the number of combinations of how many connected components from each block of the partition are added to~$X$ over all separations~$Z$ of size at most~$t$.
    Moreover, for each possible combination, we can compute the separation~$(X,Y)$ in~$\Oh(\psi)$ time.
    Thus, the overall time required to compute all non-isomorphic separations of~$H$ is in~$\Oh(\psi(H)^3\sigma_t(H))$.
    This concludes the proof.
\end{proof}

Note that when~$H$ is a simple graph, then it is 5-degenerate and therefore~$\psi(H) = |U|$.
Thus, the lemma also directly implies \cref{lem:sigmasimple}, which we restate here for convenience.

\sigmasimple*

\begin{proof}
    We show that~$\psi(H) = |U|$ for any simple planar graph~$H=(U,F)$.
    The corollary then is directly implied by \cref{lem:sigma}.
    To this end, assume towards a contradiction that~$\psi(H) \neq |U|$.
    This implies that~$\psi(H) = \nicefrac{2|F|}{5}$ and $\psi(H) > |U|$.
    Note that any~$d$-degenerate simple graph with~$n$ vertices contains at most~$\nicefrac{5n}{2}$ edges.
    Hence, $|F| \leq \nicefrac{5|U|}{2}$.
    This yields~$\psi(H) \leq \nicefrac{2|F|}{5} \leq \nicefrac{2 \frac{5|U|}{2}}{5} = |U|$, a contradiction.
\end{proof}

We next define counting functions.
Intuitively, these return for each edge~$e$ in~$H[X \cap Y]$ a number~$b_e$ and we want to find a model in which exactly~$b_e$ parallel edges between the endpoints of~$e$ are modeled.
Formally, given a separations~$(X,Y)$ of a planar multigraph~$H$, let~$Z = X \cap Y$ and~$E_{Z}$ be the set of edges of~$H[Z]$ after removing parallel edges (but not self loops).
Then, a counting function~$\delta \colon E_{Z} \rightarrow \mathds{N} \cup \{0\}$ is any function such that~$\delta(e)$ is at most the number of parallel edges with endpoints~$e$ in~$H[X \cap Y]$.
We next analyze the number of possible counting functions for a planar multigraph~$H$.

\begin{lemma}
    \label{lem:countingfunctions}
    Let~$H=(U,F)$ be a planar multigraph and let~$Z \subseteq U$.
    The number of possible counting functions for~$E_Z$ is in~$2^{\Oh(\log(\psi(H))|U|)}$.
    All such counting functions can be enumerated in~$2^{\Oh(\log(\psi(H))|U|)}|U|$ time.
\end{lemma}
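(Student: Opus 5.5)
The plan is to bound the number of counting functions by a product over the edges of $E_Z$, and then to bound that product using planarity of the underlying simple graph of $H$ and the fact that $|F| \le \nicefrac{5\psi(H)}{2}$.

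A counting function $\delta$ assigns to each $e \in E_Z$ a value in $\{0,1,\dots,m_e\}$, where $m_e$ is the multiplicity of $e$ in $H[Z]$. The number of counting functions is therefore exactly $\prod_{e \in E_Z}(m_e+1)$.

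We first bound the number of factors. The graph obtained from $H[Z]$ by removing parallel edges is a simple planar graph, possibly with one self loop per vertex. Hence $|E_Z| \le 3|Z| + |Z| \le 4|U|$.

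We next bound each factor. Every $m_e$ is at most $|F| \le \nicefrac{5\psi(H)}{2}$, by definition of $\psi(H)=\max(|U|,\nicefrac{2|F|}{5})$. So each factor satisfies $m_e+1 \le 3\psi(H)$, and the product is at most $(3\psi(H))^{4|U|} = 2^{4|U|\log(3\psi(H))} \subseteq 2^{\Oh(\log(\psi(H))|U|)}$. The only mild subtlety is the case $\psi(H)=1$, where $\log(\psi(H))=0$. In that case $H$ has at most one vertex and no more than two edges, so the count is constant. We can absorb this either by using $\log(3\psi(H))$ or by treating this degenerate case separately.

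For enumeration, we first compute $E_Z$ together with the multiplicities $m_e$. This takes one pass over $F$ using a bucket by endpoint pair, which costs $\Oh(|F|) \subseteq \Oh(\psi(H))$. We then enumerate all tuples in $\prod_e\{0,\dots,m_e\}$ as a mixed-radix counter with $|E_Z| \in \Oh(|U|)$ digits. Each step costs $\Oh(|U|)$ time, which gives the claimed $2^{\Oh(\log(\psi(H))|U|)}|U|$ bound. The preprocessing cost $\Oh(\psi(H))$ is dominated, since $\psi(H) \le 2^{\log(\psi(H))}$ is subsumed by the exponential term.

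No real obstacle is expected. The only point requiring care is making sure that self loops are counted in $E_Z$ and that the $\log(\psi(H))=0$ edge case is handled.
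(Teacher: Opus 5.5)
Your proof is correct and follows the paper's argument: bound $|E_Z|$ linearly in $|U|$ via planarity plus at most one self loop per vertex, bound each multiplicity by $|F| \le \nicefrac{5\psi(H)}{2}$, and enumerate with $\Oh(|U|)$ work per function. Your edge bound $|E_Z| \le 3|Z| + |Z|$ is actually the safer one, since the paper's $\nicefrac{5|U|}{2}$ bound from 5-degeneracy undercounts the edges of large planar graphs (harmless here). Your remark about the degenerate case $\log(\psi(H)) = 0$ is a fair point that the paper glosses over.
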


\begin{proof}
    Since~$H$ is planar, the graph resulting from removing parallel edges and self-loops is 5-degenerate and therefore contains at most~$\nicefrac{5|U|}{2}$ edges.
    Combined with the at most~$|U|$ possible self loops, this yields~$\nicefrac{7|U|}{2}$ possible edges.
    For each of these edges, a counting function can assign an integer between~$0$ and~$|F| \in \Oh(\psi(H))$, that is, the number of possible choices is in~${(\psi(H))^{\Oh(\nicefrac{7|U|}{2})} \subseteq 2^{\Oh(\log(\psi(H))|U|)}}$.
    We can iterate over all of them efficiently and for each check whether the multiplicity is at most the multiplicity in~$H$ in~$\Oh(|U|)$ time.
\end{proof}

We can now present our algorithm for \im{}.
Since it is only a slight adaptation of the subexponential parameterized algorithm for \minor, we do not present the full (lengthy) proof.
Instead, we give a high-level description featuring the key differences between the two algorithms and proofs.

\begin{theorem}
    \label{thm:inducedminor}
    Let~$t = 12000\sqrt{k}\log^8(k)$.
    For planar graphs, \im{} can be solved in~$2^{\tilde{\Oh}(\sqrt{k}\log(\psi(H)))}(\sigma_t(H))^2n^{2.49}$ time, where~$\sigma_t(H)$ denotes the number of non-isomorphic separations of~$H$ of size at most~$t$.
\end{theorem}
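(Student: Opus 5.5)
We follow the recipe of \cref{sec:fragment} and mirror the proof of \cref{thm:minor}, applying \cref{thm:main-fpt} with $\rho=2$ and $\alpha_{\pi}=1$. The only genuinely new ingredients are \cref{lem:sigma} (to enumerate separations of the multigraph $H$) and counting functions (\cref{lem:countingfunctions}), which replace the edge sets $\mathcal{E}$. We first use \cref{lem:psi} to answer no immediately whenever $k<\psi(H)$; afterwards $t=12000\sqrt{k}\log^8(k)\in\Oh(\psi(H)^{2/3})$ for large $k$ (small $k$ is handled by brute force), so \cref{lem:sigma} applies.

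\textbf{Requests and their semantics.} A request is a tuple $(\peb,(X,Y),c,P,\delta,\ell)$. Here $\peb$, the separation $(X,Y)$ of size at most $t$, the coloring function $c\colon Z\to 2^{\peb}$ extending $\mu$, the partition $P$ refining $c$, and $\ell\le d$ are as in \cref{thm:minor}, and $\delta$ is a counting function on $E_Z$. An \emph{induced} subgraph $G'=G[V']$ fulfills the request if $|V'|\le\ell$ and there is $f\colon X\to 2^{V'}$ with $c(u)\subseteq f(u)$ such that the following hold.
\begin{itemize}
\item The connectivity condition via $P^u$ holds exactly as in \cref{thm:minor}.
\item Parallel contraction of $G'_P$ along the sets $f(u)$ yields a multigraph isomorphic to $H[X]$, except that for each $e\in E_Z$ the multiplicity equals $\delta(e)$ instead of the multiplicity in $H$. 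Self loops are counted as well, since these arise from edges inside an $f(u)$.
\end{itemize}
The value of a request is the maximum weight of $G'$. Requirements R2 and R3 follow as before. The number of requests is bounded by $\sigma_t(H)$ times the bounds of \cref{lem:numberofpebbles,lem:numberofpartitions,lem:coloringfunctions}, times $2^{\Oh(t\log\psi(H))}$ counting functions. The last factor comes from \cref{lem:countingfunctions} applied to $H[Z]$ with $|Z|\le t$, and it is the source of the $\log(\psi(H))$ in the exponent. Requirement R1 holds with $\alpha_{\pi}=1$ for the same reason as before: an induced subgraph on at most $k$ vertices touches at most $k$ BFS layers, and deleting a layer disjoint from it creates no new solutions.

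\textbf{Merging (Requirement R4).} We enumerate the same data as in \cref{thm:minor}: the pebble sets $\peb'$ for $(C,1)$, the value $\ell'$, pairs of separations, pairs of colorings agreeing on $\peb'$, and pairs of refining partitions whose join gives $P^u$ (respectively a single block for $u\notin Y$). The check $\mathcal{E}_1\cap\mathcal{E}_2=\mathcal{E}$ is replaced by enumerating pairs $(\delta_1,\delta_2)$. Let $m_{\peb'}(e)$ be the number of $G$-edges inside $G[\peb']$ that connect (or lie inside) the color classes of the endpoints of $e$; since both sides are induced, these edges are counted twice. We then require
\begin{itemize}
\item $\delta_1(e)+\delta_2(e)-m_{\peb'}(e)=\delta(e)$ for $e\in E_Z$;
\item $\delta_1(e)+\delta_2(e)-m_{\peb'}(e)$ equals the multiplicity in $H$ for edges of $H[X]$ not in $E_Z$ but in $E_{Z_1}\cup E_{Z_2}$;
\item edges of $H$ outside $E_{Z_1}\cup E_{Z_2}$ are already fully modeled on one side.
\end{itemize}
The value is combined as $\res_1+\res_2-w(E(G[\peb']))$.

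Correctness in both directions is argued as in \cref{thm:minor}, using \cref{lem:connectivity} for connectivity of the branch sets. The new point is that the edge sets of $G[V_1\cup V_2]$ decompose as $E(G[V_1])\cup E(G[V_2])$ with overlap exactly $E(G[\peb'])$, because $C$ is a separator. Parallel contraction preserves edge counts, so the multiplicities add up with exactly this correction.

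\textbf{Running time and main obstacle.} The running time is $\gamma_H(d)=2^{\tilde{\Oh}(\sqrt d)}\,2^{\Oh(t\log\psi(H))}\,\sigma_t(H)^2|U|^3$, so \cref{thm:main-fpt} together with Baker's reduction (as in \cref{thm:minor}), querying $(\emptyset,(U,\emptyset),\emptyset,\emptyset,\emptyset,k)$, gives the claimed bound. I expect the main obstacle to be the multiplicity bookkeeping in R4. One must make sure that edges realized between two branch sets through vertices of $\peb'$ are neither double counted nor missed. This matters in particular for self loops, where an edge inside $f(u)$ that lies within $\peb'$ is seen by both sides. One must also make sure that $\delta$ is defined on $E_Z$ of the current separation while the two sides use $E_{Z_1}$ and $E_{Z_2}$. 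I would first fix the convention that $\delta_i$ counts all $G$-edges of the side-$i$ induced graph between the respective branch sets, and then verify the subtraction formula case by case.
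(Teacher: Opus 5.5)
Your overall plan matches the paper's sketch: the same request tuples with counting functions, subtracting $w(E(G[\peb']))$ when merging, and the rule $\delta_1(e)+\delta_2(e)-m_{\peb'}(e)$, which is exactly the paper's $g(e)$-correction for edges between two \emph{distinct} nodes. However, your treatment of self loops has a genuine gap.

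Your justification that ``parallel contraction preserves edge counts'' is false. Every parallel contraction deletes the contracted edge, which is exactly what the proof of \cref{lem:psi} uses. Hence the number of loops at $u$ is the cycle rank $|E(G'[f(u)])|-|f(u)|+\#\mathrm{comp}$ of the branch set. The $|f(u)|-\#\mathrm{comp}$ forest edges that disappear are not additive across $C$, because vertices of $\peb'$ are shared and components merge. Concretely, let $a,b\in\peb'$ both have color $u$.
\begin{enumerate}
\item Suppose $ab\notin E(G)$ and on each side the branch set of $u$ induces an $a$--$b$ path. Both sides are trees and $\delta_1(uu)=\delta_2(uu)=m_{\peb'}(uu)=0$, so your rule predicts no loop. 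Yet the union is an induced cycle and contracts to one loop. You would accept a non-exact model and could report a too large value.
\item Suppose $ab\in E(G)$ and $f_1(u)=f_2(u)=\{a,b\}$. Your rule gives $0+0-1=-1$, although the union (a single edge) has no loop, so a valid combination is rejected.
\end{enumerate}
You might instead let $\delta_i(uu)$ count all edges inside $f_i(u)$; this restores additivity. But then $\delta_i$ is no longer a counting function, and comparing with the multiplicity in $H$ needs $|f(u)|$, which requests do not record.

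The missing idea is the one the paper points to when it adapts \cref{lem:acyclic} to compute the feedback edge number of the union of two forests given by partitions. Let $S=c_1(u)\cap\peb'$. Every component of a valid branch set meets $c_i(u)$, and \cref{lem:connectivity} gives the number of components of the union. One then obtains for the number of loops at $u$ in the union
\[
\beta^*(u)=\delta_1(uu)+\delta_2(uu)-m_{\peb'}(uu)+\bigl(|S|-|P_1^u|-|P_2^u|+|P_1^u\sqcup P_2^u|\bigr).
\]
The bracket is precisely the cycle rank of the bipartite graph $F_{P_1^u,P_2^u}$. This formula, with $|P_1^u\sqcup P_2^u|=1$ for $u\in(X_1\cap X_2)\setminus Y$, must replace your additive check for loops.

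Relatedly, you cannot reuse $G'_P$ from \cref{thm:minor} in the semantics. It inserts $\sum_{i<j}|p_i||p_j|$ auxiliary edges between the blocks $p_1,\dots,p_b$ of $P^u$. All but $b-1$ of them become spurious loops under parallel contraction, possibly exceeding the multiplicity in $H$, so that valid partial solutions have no admissible request. The paper instead adds only a forest on one representative per block, i.e.\ exactly $b-1$ edges.

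A minor point: $k\ge\psi(H)$ bounds $\psi(H)$ from above, so it does not give $t\in\Oh(\psi(H)^{2/3})$. Fortunately you only need the enumeration part of \cref{lem:sigma}, which does not use that assumption, since the claimed bound is stated in terms of $\sigma_t(H)$.
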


\begin{proof}[Proof sketch]
    A request is a tuple~$r=(\peb,(X,Y),c,P,\delta,\ell)$, where~$\peb$ is a pebble set, $(X,Y)$ is a separation of~$H$ of size at most~$12000\sqrt{d}\log^8(d)$, $c$ is a coloring function that assigns each vertex in~$\peb$ to a node in~$X \cap Y$ (formally each node in~$X \cap Y$ is assigned a non-empty subset of vertices in~$\peb$), $P$ is a partition of~$\peb$ that refines~$c$, $\delta$ is a counting function for~$H[X \cap Y]$ ($\alpha_{\pi}$ is again 1), and~$\ell \leq d$ is an upper bound on the number of vertices in a solution.
    For a request~$r$, let~$Z = X \cap Y$, let~$H^{\delta}_{Z}$ be the subgraph of~$H[Z]$ that has edge multiplicities as specified by~$\delta$, and let~$H^{\delta}_X$ be the subgraph of~$H[X]$ that has the same multiplicities as~$H$ for all edges incident to at least one node in~$X \setminus Y$ and multiplicities as in~$H^{\delta}_Z$ for all other edges.
    Then, an optimal solution for request~$r$ corresponds to a set of at most~$\ell$ vertices that induce a subgraph of maximum weight that models~$H^{\delta}_X$ exactly after adding the following edges: for each~$u \in Z$, choose one representative from each set in~$P^u$ and add a forest between all representatives for~$u$.
    We say that a graph models~$H_X^{\delta}$ exactly, if after contracting all vertices in the model of~$u$ along a spanning tree, the resulting edge multiplicities exactly coincide with~$H_X^{\delta}$.

    When combining two solutions along a separator~$C=(C_{\he},C_{\li},C_{\di})$, we iterate over all pebble sets~$\peb'$ that contain at most~$20 \sqrt{d} \log(d)$ vertices from~$C_{\li}$ and no vertices from~$C_{\di}$.
    The procedure is similar to the case of \minorr{} with two exceptions.
    First, when combining two solutions, we can no longer assume that the two subgraphs are edge-disjoint.
    Instead, we subtract the weight of all edges induced by~$\peb'$ from the total sum as these are precisely the edges that are counted twice.
    Second, we need to compute which graph the union of two graphs models exactly.
    Here, we make a case distinction between self loops and other edges.
    For self loops, we use a slight adaptation of \cref{lem:acyclic} to compute the number of self loops (the feedback edge number) of the union of two forests represented by two partitions.
    For each other edge~$e = \{u_1,u_2\}$ between two nodes in~$Z$, it is easy to verify that the union of two graphs models exactly~$\delta_1(e) + \delta_2(e) - g(e)$ parallel edges with endpoints~$e$, where~$\delta_1(e)$ is the number of parallel edges modeled by the first graph, $\delta_2(e)$ is the number of edges modeled by the second graph, and~$g(e)$ is the number of edges induced by~$\peb'$ that have one endpoint~$v_1 \in c_1(u_1)$ and one endpoint~$v_2 \in c_1(u_2)$ (note that we require that~$v \in c_1(u)$ if and only if~$v \in c_2(u)$ for all~$u \in U$ and~$v \in \peb'$).
    Finally, we only consider counting functions for separations with at most~$t$ nodes and thus \cref{lem:countingfunctions} allows us to enumerate them all in~$2^{\Oh(\log(\psi(H))\sqrt{d}\log^8(d)}$ time.
    Choosing~$\gamma_H(d) \in 2^{\Oh(\log(\psi(H))\sqrt{d} \log^{10}(d))} (\sigma_t(H))^2$ allows us to also iterate over all pairs of counting functions.
    The rest of the proof is nearly identical to the proof of \cref{thm:minor}.
\end{proof}

Combined with \cref{lem:psi,lem:sigma}, this also shows that \im{} can be solved in~$2^{\Oh(\nicefrac{k}{\log(k)})}n^{2.49}$ time.

\begin{corollary}
    \label{cor:inducedminor}
    For planar input graphs, \im{} can be solved in~$2^{\Oh(\nicefrac{k}{\log k})}n^{2.49}$ time.
\end{corollary}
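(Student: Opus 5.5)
We obtain \cref{cor:inducedminor} by combining \cref{thm:inducedminor} with \cref{lem:psi,lem:sigma}, after a preprocessing step that disposes of trivial instances.

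\textbf{Step 1: trivial instances.} First we compute $\psi(H)=\max(|U|,\nicefrac{2|F|}{5})$ in linear time. If $k<\psi(H)$, then \cref{lem:psi} shows that no model of $H$ in $G$ has at most $k$ vertices, and we answer no. Otherwise $\psi(H)\leq k$. We may also assume $k\leq n$, since otherwise we replace $k$ by $n$.

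\textbf{Step 2: invoking \cref{thm:inducedminor}.} We set $t=12000\sqrt{k}\log^8(k)$ and run the algorithm of \cref{thm:inducedminor}. Its running time is $2^{\tilde{\Oh}(\sqrt{k}\log(\psi(H)))}(\sigma_t(H))^2n^{2.49}$. Since $\psi(H)\leq k$, the first factor is $2^{\tilde{\Oh}(\sqrt{k})}\subseteq 2^{\Oh(\nicefrac{k}{\log k})}$.

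\textbf{Step 3: bounding $\sigma_t(H)$.} We want $\sigma_t(H)\in 2^{\Oh(\nicefrac{k}{\log k})}$, and the main obstacle is that \cref{lem:sigma} is stated in terms of $\psi(H)$ rather than $k$. Two issues arise.

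First, \cref{lem:sigma} requires $t\in\Oh(\psi(H)^{2/3})$, but $t$ is defined in terms of $k$. We handle this by case distinction.
\begin{itemize}
\item If $t\geq|U|$, then every separation has size at most $|U|$. Each separation is determined by assigning every node of $U$ to $X\setminus Y$, $Y\setminus X$, or $X\cap Y$, so $\sigma_t(H)\leq 3^{|U|}$. In this case $|U|\leq t=\tilde{\Oh}(\sqrt{k})$, so the bound is $2^{\tilde{\Oh}(\sqrt{k})}$.
\item If $t<|U|$, we only need the $\psi$-dependence of the proof of \cref{lem:sigma}. That proof uses $t\in\Oh(\psi(H)^{2/3})$ only to bound the number of separators, $\binom{|U|}{t}\leq \psi(H)^t$, and the count $37t\,\psi(H)^{1/4}$ of small-component types. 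With $\psi(H)\leq k$ and $t=\tilde{\Oh}(\sqrt{k})$, these become $2^{\tilde{\Oh}(\sqrt k)}$ and $k^{\tilde{\Oh}(\sqrt{k})\cdot k^{1/4}}$, which is $2^{o(k/\log k)}$.
\end{itemize}

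Second, the remaining term in the proof of \cref{lem:sigma} is $2^{\Oh(\psi(H)/\log\psi(H))}$. Since $x/\log x$ is increasing for $x\geq 3$ and $\psi(H)\leq k$, this term is at most $2^{\Oh(k/\log k)}$. For constant $\psi(H)$ it is constant.

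\textbf{Conclusion.} Combining the three steps gives $(\sigma_t(H))^2\in 2^{\Oh(k/\log k)}$. Together with Step 2, the total running time is $2^{\Oh(\nicefrac{k}{\log k})}n^{2.49}$.
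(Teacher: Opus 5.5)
Your proposal is correct and follows the paper's argument: reject via \cref{lem:psi} when $\psi(H)>k$, and otherwise bound the running time of \cref{thm:inducedminor} using \cref{lem:sigma} together with $\psi(H)\le k$. Your case distinction on $t$ versus $|U|$ is a worthwhile addition, because $\psi(H)\le k$ alone does not give the hypothesis $t\in\Oh(\psi(H)^{2/3})$ of \cref{lem:sigma}, and the paper's one-line proof passes over this point.
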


\begin{proof}
    We simply compute~$\psi(H)$ in~$\Oh(|U| + |F|)$ time.
    If~$\psi(H) > k$, then we return no.
    This is correct by \cref{lem:psi}.
    Otherwise, the running time presented in \cref{thm:inducedminor} is upper bounded by~$2^{\Oh(\nicefrac{k}{\log k})}n^{2.49}$ by \cref{lem:sigma}.
\end{proof}
    
Similar to the previous subsection, \wisi{} is the special case of \im{} where~$k = |U|$.
Note that if~$H$ is not a simple graph, then the answer is always no in this case.
Moreover, \wim{} is a special case of \wisi{} where the number of non-isomorphic separations of size at most~$12000 \sqrt{k} \log^8(k)$ is in~$2^{\tilde{\Oh}(\sqrt{k})}$.
This yields the following.

\begin{corollary}
    For planar input graphs, \wisi{} can be solved in~$2^{\Oh(\nicefrac{|U|}{\log(|U|)})}n^{2.49}$ time and in~$2^{\Oh(\nicefrac{k}{\log(k)})}n^{2.49}$ time.
    \wim{} can be solved in~$2^{\tilde{\Oh}(\sqrt{k})}n^{2.49}$ time for planar input graphs.
\end{corollary}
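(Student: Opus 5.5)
The corollary has three parts, and each follows from \cref{thm:inducedminor} together with \cref{lem:psi,lem:sigma}, as in \cref{cor:inducedminor}.

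For \wisi, I reduce to \im{} with parameter $k=|U|$ and root function $\eta(u)=\emptyset$ for every node. If $H$ has a self loop or parallel edges, I answer no immediately. An induced subgraph of a simple graph on exactly $|U|$ vertices that models $H$ exactly with parallel contractions cannot contract anything: a contraction would lower the vertex count below $|U|$. So such a model is an induced copy of $H$, and conversely every induced copy is a model. Hence the instances are equivalent.

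For simple planar $H$ we have $\psi(H)=|U|$, by the proof of \cref{lem:sigmasimple}. \cref{lem:sigma} then gives $\sigma_t(H)\in 2^{\Oh(|U|/\log|U|)}$ for $t=12000\sqrt{k}\log^8(k)$. This requires $t\in\Oh(|U|^{2/3})$, which holds when $k=|U|$ is large; small $|U|$ is constant-size and trivial. The factor $2^{\tilde{\Oh}(\sqrt{k}\log\psi(H))}$ from \cref{thm:inducedminor} is $2^{\tilde{\Oh}(\sqrt{|U|})}$, so it is absorbed. This gives both stated bounds, since $k=|U|$.

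For \wim, take $H$ to be a perfect matching on $k$ vertices, so $k/2$ disjoint edges, with no roots. An induced matching with $k/2$ edges is exactly an induced copy of $H$. The main step is bounding $\sigma_t(H)$ by $2^{\tilde{\Oh}(\sqrt{k})}$.

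\begin{itemize}
\item Since all edges of $H$ are isomorphic, a separation $(X,Y)$ of size at most $t$ is determined up to isomorphism by a few counts.
\item Each edge contributes 0, 1, or 2 endpoints to $Z=X\cap Y$. The edges with at least one endpoint in $Z$ number at most $t$.
\item For each such edge, the isomorphism type is given by which endpoint pattern lies in $Z$ and on which side the other endpoint lies. Besides that, only the number of edges entirely in $X\setminus Y$ matters.
\item So up to isomorphism the data is a constant number of counts, each at most $k$, giving $\sigma_t(H)\le \Oh(k^{c})$ for a constant $c$. If one insists on distinguishing separators $Z$ by labels, the bound becomes $\binom{k}{\le t}\cdot k^{\Oh(1)}=2^{\tilde{\Oh}(\sqrt{k})}$.
\end{itemize}

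Since $\psi(H)=k$, plugging this into \cref{thm:inducedminor} gives $2^{\tilde{\Oh}(\sqrt{k})}n^{2.49}$. To handle matchings of size at most $k/2$ rather than exactly $k/2$, I iterate over all matching sizes; this costs a factor $k$, which is absorbed.

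The only delicate point is the counting of isomorphism types for matchings. It must match the notion of isomorphism used by the algorithm, which fixes $Z$ pointwise. Under that notion, the bound $\binom{k}{\le t}\cdot\mathrm{poly}(k)=2^{\tilde{\Oh}(\sqrt{k})}$ suffices.
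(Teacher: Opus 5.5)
Your proposal is correct and follows the paper's route: \wisi{} is the case $k=|U|$ of \im{} (answering no for non-simple $H$), handled by \cref{thm:inducedminor} and \cref{lem:sigma}, and \wim{} is the matching case, for which you explicitly verify $\sigma_t(H)\in 2^{\tilde{\Oh}(\sqrt{k})}$ where the paper only asserts it. There are two harmless slips: because $Z$ is fixed pointwise, each edge with exactly one endpoint in $Z$ contributes an independent factor $2$, so the count is $\binom{k}{\le t}\cdot 2^{t}\cdot k^{\Oh(1)}$ rather than $\binom{k}{\le t}\cdot k^{\Oh(1)}$; and for simple planar $H$ one only has $|U|\le\psi(H)\le\frac{6}{5}|U|$ (such graphs can have up to $3|U|-6$ edges, not at most $\frac{5}{2}|U|$), but neither changes the stated bounds.
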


\section{Conclusion}

We introduced a general framework for developing subexponential parameterized algorithms for problems on planar graphs that ask for a set of at most~$k$ vertices satisfying some property.
The framework is applicable to non-bidimensional problems where the solution can induce a graph with many connected components and it can handle directed graphs and weighted problems as well.
We showed how to apply the framework to a selection of problems that were previously unknown to have subexponential parameterized algorithms for planar input graphs.
In the case of (unweighted) \den, this resolves an open problem from the literature.
Motivated by our positive result for \wif, we conclude with two open problems of our own.
Note that \wif{} can also be defined as finding a maximum-weight induced subgraph of size at most~$k$ that does not contain a triangle as a minor.
We ask the following.
\begin{quote}
Can our algorithm be extended to find an (induced) subgraph of size~$k$ that does not contain an arbitrary but fixed graph~$H$ as a minor?    
\end{quote}
\begin{quote}
Is there a subexponential parameterized algorithm to find an (induced) subgraph of size~$k$ that excludes all graphs from a fixed family~$\mathcal{F}$ as minors?    
\end{quote}
Note that if we replace minor in the above questions by subgraph, then the question was recently settled in the affirmative~\cite{LPSXZ25}.

\medskip
{\noindent\textbf{Acknowledgements.} We thank anonymous reviewers of previous versions of this paper for many helpful comments that helped significantly improve the clarity of the paper. They also pointed us to some additional related work like the existing subexponential parameterized algorithm for \textsc{Weighted Independent Set}~\cite{masterthesis}.} 

\bibliographystyle{alphaurl}
\bibliography{ref}

\end{document}